\newtheorem{theorem}{Theorem}
\newtheorem{corollary}{Corollary}
\newtheorem{assumption}{Assumption}
\newtheorem{lemma}[theorem]{Lemma}
\newtheorem{definition}{Definition}
\newtheorem{proposition}[theorem]{Proposition}
\newcommand{\naive}{na\"{\i}ve}
\newcommand{\Naive}{Na\"{\i}ve}
\newcommand{\cS}{{\mathcal S}}
\newcommand{\dif}{\mathrm{d}}
\newcommand{\fwd}{\mathtt{f}}
\newcommand{\bck}{\mathtt{b}}
 \newcommand{\defeq}{\stackrel{\textup{\tiny def}}{=}}
\def\E{E}
\newcommand{\mW}{{|W^\downarrow}|}
\newcommand{\assign}{\doteq}
\def\naive{na\"{\i}ve}
\newcommand{\PP}{\text{PoissProc}}
\newcommand{\Mx}[1]{{\max_s |A_s(#1)|}}
\newcommand{\SM}{B}
\newcommand\numberthis{\addtocounter{equation}{1}\tag{\theequation}}
\newcommand{\lb}{\ell}
\newcommand{\ub}{u}
\newcommand{\algname}{Rao-Teh}
\def\ptheta{\nu}
\newcommand{\prior}{P}
\newcommand{\blind}{0}
\begin{document}

\def\spacingset#1{\renewcommand{\baselinestretch}%
{#1}\small\normalsize} \spacingset{1}


\if0\blind
{
  \title{\bf Efficient Parameter Sampling for Markov Jump Processes}
  \author{Boqian Zhang and Vinayak Rao, \\
          Department of Statistics, Purdue University, USA \\
          email: \texttt{zhan1977@purdue.edu, varao@purdue.edu}
  }
  \maketitle
} \fi

\if1\blind
{
  \bigskip
  \bigskip
  \bigskip
  \begin{center}
    {\LARGE\bf Efficient parameter sampling for Markov jump processes}
\end{center}
  \medskip
} \fi
\bigskip
\begin{abstract}
Markov jump processes  are continuous-time stochastic processes widely used in a variety of applied disciplines. 
Inference typically proceeds via Markov chain Monte Carlo, the state-of-the-art being a uniformization-based auxiliary variable Gibbs sampler. 
This was designed for situations where the process parameters are known, and Bayesian inference over unknown parameters is typically carried out by incorporating it into a larger Gibbs sampler.
This strategy of sampling parameters given path, and path given parameters can result in poor Markov chain mixing. 
In this work, we propose a simple and efficient algorithm to address this problem. 
Our scheme brings Metropolis-Hastings approaches for discrete-time hidden Markov models to the continuous-time setting, resulting in a complete and clean recipe for parameter and path inference in Markov jump processes. 
In our experiments, we demonstrate superior performance over Gibbs sampling, a more \naive\ Metropolis-Hastings algorithm, as well as another popular approach, particle Markov chain Monte Carlo.
We also show our sampler inherits geometric mixing from an `ideal' sampler that is computationally much more expensive.
Supplementary material for the article is available online.
\end{abstract}

\noindent%
{\it Keywords:}  Continuous-time Markov chain, Markov chain Monte Carlo, 
Metropolis-Hastings, Uniformization, Geometric Ergodicity 
\vfill

\newpage
\spacingset{1.5} 

\section{Introduction}
\label{sec:intro}
Markov jump processes (MJPs) are continuous-time stochastic processes widely used in fields like computational chemistry~\citep{gillespie97}, molecular genetics~\citep{FearnSher2006}, mathematical finance~\citep{Elliott06}, queuing theory~\citep{Breuer2003}, artificial intelligence~\citep{XuShe10} and social-network analysis~\citep{pan2016markov}. 
MJPs have been used as realistic, mechanistic and interpretable models of a wide variety of phenomena, among others, the references above have used them to model temporal evolution of the state of a chemical reaction or queuing network, segmentation of a strand of DNA, and user activity on social media.
Their continuous-time dynamics however raise computational challenges when, given noisy measurements, one wants to make inferences 
over the latent MJP trajectory as well as any process parameters. 
In contrast to {discrete-time} hidden Markov models, one cannot 
{\em a priori} bound the number of trajectory state transitions, and the transition times themselves are continuous-valued. 
The state-of-the-art approach is an auxiliary variable Gibbs sampler from~\cite{RaoTeh13}, we will refer to this as the {\algname} algorithm. 
This Markov chain Monte Carlo (MCMC) algorithm was designed to simulate paths when the MJP parameters are known. 
Parameter inference is typically carried out by incorporating it into a Gibbs sampler that also conditionally simulates parameters given the currently sampled trajectory. 

In many situations, the MJP trajectory and parameters exhibit strong coupling, so that alternately sampling path given parameters, and parameters given path can result in poor mixing.  
To address this, we propose an efficient Metropolis-Hastings (MH) sampler (algorithm~\ref{alg:MH_improved}). 
In our experiments, we demonstrate superior performance over Gibbs sampling, a more \naive\ MH sampler (algorithm~\ref{alg:MH_naive}), as well as particle Markov chain Monte Carlo~\citep{Andrieu10}. 
We also prove that under relatively mild conditions, our sampler inherits geometric ergodicity from an `ideal' sampler that is computationally much more expensive.

\section{Markov jump processes (MJPs)} 
\label{sec:mjp}
A Markov jump process~\citep{Cinlar1975} is a right-continuous piecewise-constant stochastic process taking values in a state space $\cS$. 
We assume a finite number of states $N$, with $\cS = \{1,\ldots,N\}$. 
Then, the MJP is parameterized by two quantities, an $N$-component probability vector $\pi_0$ and a rate-matrix $A$. 
The former gives the distribution over states at the initial time (we assume this is $0$), while the latter is an $N \times N$-matrix governing the dynamics of the system.  
An off-diagonal element $A_{ij}$ gives the rate of transitioning from state $i$ to $j$. 
The rows of $A$ sum to $0$, so that $A_{ii}=-\sum_{j \neq i} A_{ij}  $. 
We write $A_i$ for the negative of the $i$th diagonal element $A_{ii}$, so that $A_i = -A_{ii}$ gives the total rate at which the system leaves state $i$ for any other state.
To simulate an MJP over an interval $[0,t_{end})$, one follows Gillespie's algorithm~\citep{gillespie97}: 
first sample an initial state $s_0$ from $\pi_0$, and defining $t_0 = t_{curr} = 0$ and $k = 0$, repeat the following while $t_{curr} < t_{end}$:
\begin{itemize}
  \item Simulate a wait-time $\Delta t_k$ from an exponential distribution with rate $A_{s_k}$.  
    Set $t_{k+1} = t_{curr} = t_{k} + \Delta t_k$. 
    The MJP remains in state $s_k$ until time $t_{k+1}$.
  \item Jump to a new state $s_{k+1} \neq s_k$ with probability equal to $A_{s_ks_{k+1}}/A_{s_k}$. Set $k=k+1$.
\end{itemize}
The times $T=(t_1, \dotsc, t_{k - 1})$ and states $S=(s_1, \dotsc, s_{k - 1 })$, along with the initial state $s_0$, define the MJP path. 
We use both $(s_0,S,T)$ and  $\{S(t), t \in [0,t_{end})\}$ (and sometimes just $S(\cdot)$) to refer to the MJP path.
See the top-left panel in figure~\ref{fig:MH_improved} for a sample path.

\vspace{-.15in}
\subsection{Structured rate matrices}
While the rate matrix $A$ can have $N(N-1)$ independent elements, in typical applications, especially with large state-spaces, it is determined by a much smaller set of parameters. 
We will write these as $\theta$, with $A$ a deterministic function of these parameters: $A \equiv A(\theta)$. 
The parameters $\theta$ are often more interpretable than the elements of $A$, and correspond directly to physical, biological or environmental parameters of interest. 
For example:
\begin{description}
  \item[Immigration-death processes] 
    Here, $\theta = (\alpha,\beta)$, with $\alpha$ the arrival-rate and $\beta$ the death-rate. 
    The state represents the size of a population or queue. 
    New individuals enter with rate $\alpha$, so off-diagonal elements $A_{i,i+1}$ equal $\alpha$.
    Each individual dies at a rate $\beta$, so that $A_{i,i-1}=i\beta$ for each $i$.
    All other transitions have rate $0$. 
  \item[Birth-death processes] 
    This variant of the earlier MJP moves from state $i$ to $i+1$ with rate $i\alpha$, with growth-rate proportional to population size. 
    The death-rate is $\beta$, so that $A_{i,i-1}=i\beta$ for each $i$.
    Other off-diagonal elements are $0$, and again $\theta=(\alpha,\beta)$.
  \item[Codon substitution models] 
    These characterize transitions between codons at a DNA locus over evolutionary time. 
    There are $61$ codons, and in the simplest case, all transitions have the same rate~\citep{jukescantor69}: $A_{ij} = \alpha\ \forall i \neq j$. 
    Thus the $61\times 61$ matrix $A$ is determined by a single $\alpha$. 
    Other models~\citep{goldman1994codon} group transitions as `synonymous' and `nonsynonymous', based on whether old and new codons encode the same amino acid. 
    Synonymous and nonsynonymous transitions have their own rates, so $A$ is now determined by 2 parameters $\alpha$ and $\beta$. 
\end{description}

 \section{Bayesian modeling and inference for MJPs}
\label{sec:bayes_model}
We first set up our Bayesian model of the data generation process. 
We model a latent piecewise-constant path $S(\cdot)$ over $[0,t_{end})$ as an $N$-state MJP with rate matrix $A(\theta)$ and prior $\pi_0$ over $s_0 = S(0)$, the state at time $0$. 
We place a prior $\prior(\theta)$ over the unknown $\theta$. 
For simplicity, we assume $\pi_0$ is known (or we set it to a uniform distribution over the $N$ states). 
We have noisy measurements $X$ of the latent process, with likelihood $P(X|\{S(t),\ t \in [0,t_{end})\})$.
Again, for clarity we ignore any unknown parameters in the likelihood, else we can include them in $\theta$.
We assume the observation process has the following structure: for fixed $X$, for any partition $\tilde{W} = \{\tilde{w}_1 = 0, \dotsc, \tilde{w}_{|\tilde{W}|}=t_{end}\}$ of the interval $[0,t_{end})$ (where $|\cdot|$ denotes cardinality), there exist known functions $\ell_i$ such that the likelihood factors as:
\begin{align}
  \label{eq:lik_factor}
  P(X|\{S(t),\ t \in [0,t_{end})\}) = \prod_{i=1}^{|\tilde{W}|-1} \ell_i(\{S(t),\ t \in [\tilde{w}_{i},\tilde{w}_{i+1})\})
\end{align}
A common example is a finite set of independent observations $X = \{x_1,\dotsc,x_{|X|}\}$ at times $T^X = \{t^X_1,\dotsc, t^X_{|X|}\}$, each observation depending on the state of the MJP at that time:
\vspace{-.1in}
\begin{align}
  \label{eq:lik_iid}
  P(X|\{S(t),\ t \in [0,t_{end})\}) = \prod_{i=1}^{|X|} P(x_i|S(t^X_i)).
\end{align}
Other examples are an inhomogeneous Poisson process~\citep{FearnSher2006}, renewal process~\citep{rao2011gaussian} or even another MJP~\citep{Nodelman+al:UAI02,RaoTeh13}, modulated by $(s_0, S, T)$.
The first example, called a Markov modulated Poisson process (MMPP)~\citep{scottmmpp03}, associates a positive rate $\lambda_s$ with each state $s$, with $\ell_i(\{S(t),\ t \in [w_{i},w_{i+1})\})$ equal to the likelihood of the Poisson events within $[w_{i},w_{i+1})$ under an inhomogeneous Poisson process with piecewise-constant rate $\{\lambda_{S(t)},\ t \in [w_{i},w_{i+1})\}$.

With $A(\cdot)$ and $\pi_0$ assumed known, the overall Bayesian model is then 
\vspace{-.1in}
\begin{align}
  \label{eq:bayes_model}
  \theta \sim P(\theta), \quad (s_0, S, T) \sim \text{MJP}(\pi_0, A(\theta)), \quad X \sim P(X|s_0,S,T).
\end{align}
Given $X$, one is interested in the posterior distribution over the latent quantities, $(\theta,s_0, S, T)$. 

\vspace{-.1in}
\subsection{Trajectory inference given the MJP parameters $\theta$}
This was addressed in~\citet{RaoTeh13}  and extended to a broader class of jump processes in~\cite{RaoTeh12}~\citep[also see][]{FearnSher2006, Hobolth09, Elhaygibbssampling}).
\citet{RaoTeh13,RaoTeh12} both involve MJP path representations with auxiliary {\em candidate} jump times that are later {\em thinned}.  
We focus on the former, a simpler and more popular algorithm, based on the idea of {\em uniformization}~\citep{Jen1953}. 

Uniformization involves a parameter $\Omega(\theta) \ge \max_i A_i(\theta)$; \cite{RaoTeh13} suggest $\Omega(\theta) = 2 \max_i A_i(\theta)$. 
Define $B(\theta) = \left(I +\frac{1}{\Omega(\theta)}A(\theta)\right)$; this is a stochastic matrix with nonnegative elements, and rows adding up to $1$.
Unlike the sequential wait-and-jump Gillespie algorithm, uniformization first simulates a random grid of candidate transition-times $W$ over $[0,t_{end})$, and then assigns these state values:
\begin{itemize}
  \item Simulate $W$ from a Poisson process with rate $\Omega(\theta) \ge \max_i A_i(\theta)$: 
    $W \sim \text{PoissProc}(\Omega(\theta))$.
  \item Assign states $(v_0,V)$ to the times $0 \cup W$, with $v_0 \sim \pi_0$, and $P(v_{i+1}=s|v_i) = B_{v_is}(\theta)$.
\end{itemize}
Setting $\Omega(\theta) > \max_i A_i(\theta)$ results in more candidate-times than actual MJP transitions, at the same time, unlike $A(\theta)$, the matrix $B(\theta)$ can thin these through self-transitions. 
Write $U$ for the elements $W$ with self-transitions, and $T$ for the rest.
Define $s_0=v_0$, and $S=\{v_i \in V \text{ s.t.\ } v_i \neq v_{i-1}\}$ as the elements in $V$ corresponding to $T$, then $(s_0,S,T)$ sampled this way for any $\Omega(\theta) \ge \max_i A_i(\theta)$
has the same distribution as under Gillespie's algorithm~\citep{Jen1953,RaoTeh13}. The third panel in figure~\ref{fig:MH_improved} shows these sets.

Introducing the thinned variables allowed~\cite{RaoTeh13} to develop an efficient MCMC sampler (algorithm~\ref{alg:Unif_gibbs}). 
At a high-level, each MCMC iteration simulates a new grid $W$ conditioned on the path $(s_0,S,T)$, and then a new path conditioned on $W$. 
\cite{RaoTeh13} show that the resulting Markov chain targets the desired posterior distribution over trajectories, and is ergodic for any $\Omega(\theta)$ strictly greater than all the $A_i(\theta)$'s. 
\begin{algorithm}[H]
  \caption{The~\cite{RaoTeh13} MCMC sampler for MJP trajectories}
   \label{alg:Unif_gibbs}
  \begin{tabular}{l l}
   \textbf{Input:  } & \text{Prior $\pi_0$, observations $X$}, 
                       \text{the previous path $(s_0, S, T)$}.\\ 
                     & \text{Parameter $\Omega(\theta) > \max_i A_i(\theta)$}, where
                     $A(\theta)$ is the MJP rate-matrix.\\
   \textbf{Output:  }& \text{New MJP trajectory $(s'_0, S', T')$}.\\
   \hline
   \end{tabular}
   \begin{algorithmic}[1]
\State \textbf{ Simulate the thinned candidate times $U$ given the MJP path $(s_0, S,T)$ } 
from a piecewise-constant rate-$(\Omega(\theta)-A_{S(\cdot)}(\theta))$ Poisson process: 

$ U \sim \text{PoissProc}(\Omega(\theta) - A_{S(t)}(\theta)), \quad t \in [0,t_{end}).$ 
\State \textbf{
  Discard the states $(s_0,S)$, and write 
  $W = T \cup U$}.

  \State \textbf{ Simulate states $(v_0,V)$ on $0 \cup W$ from a discrete-time HMM} 
  with initial distribution over states $\pi_0$ and transition matrix $B(\theta) = \left(I+\frac{1}{\Omega(\theta)}A(\theta)\right)$.
Following equation~\eqref{eq:lik_factor}, between two consecutive times $(\tilde{w}_i,\tilde{w}_{i+1})$ in $\tilde{W} \defeq 0 \cup W \cup t_{end}$, state $s$ has 
likelihood $\ell_i(s) \equiv \ell_i(\{S(t) = s,\ t \in [\tilde{w}_i,\tilde{w}_{i+1})\})$. The simulation involves two steps: 
\begin{description}
  \item[Forward pass:] 
    Set $\fwd_0(\cdot) = \pi_0$.
    Sequentially update $\fwd_i(\cdot)$ at time ${w}_i \in {W}$ given $\fwd_{i-1}$: 
        \vspace{-.1in}
        $$\textbf{for } i=1\rightarrow |{W}|\textbf{ do:} \quad \fwd_i(s') = \sum_{s \in \cS} \fwd_{i-1}(s)\cdot \ell_{i}(s) \cdot B_{ss'}(\theta), \quad \forall s' \in \cS.\qquad\qquad\quad $$
  \item[Backward pass:]
    Simulate ${v}_{|{W}|} \sim \bck_{|{W}|}(\cdot)$, where $\bck_{|{W}|}(s) \propto \fwd_{|{W}|}(s)\cdot\ell_{|{W}|+1}(s) \quad \forall s \in \cS.$ 
    $$ \textbf{for } i=(|{W}|-1)\rightarrow 0\textbf{ do:} \quad {v}_i \sim \bck_i(\cdot),\ \  \text{where } 
    \bck_i(s) \propto \fwd_i(s)\cdot B_{sv_{i+1}}(\theta) \cdot \ell_{i+1}(s)  \quad \forall s \in \cS.$$
\end{description}
        \vspace{-.1in}
\State \textbf{Discard self-transitions}: Set $s'_0 = v_0$. Let $T'$ be the set of times in ${W}$ when $V$ changes state. Define $S'$ as the corresponding set of state values. Return $(s'_0, S', T')$.
\end{algorithmic}
\end{algorithm}

\subsection{Joint inference over MJP path $(s_0, S, T)$ and parameters $\theta$}
For fixed parameters $\theta$, the efficiency of the Rao-Teh algorithm has been established, both empirically~\citep{RaoTeh13} and theoretically~\citep{miasojedow2017}.
In practice, the parameters are typically unknown, and often, these are of primary interest. 
One then has to characterize the complete posterior $P(\theta, s_0, S, T|X)$ of the Bayesian model of equation~\eqref{eq:bayes_model}. 
This is typically carried out by incorporating the previous algorithm into a Gibbs sampler that targets the joint $P(\theta, s_0, S, T|X)$ by conditionally simulating $(s_0, S, T)$ given $\theta$ and then $\theta$ given $(s_0, S, T)$. 
Algorithm~\ref{alg:MJP_gibbs}~\citep[see also][]{RaoTeh13} outlines this:
\begin{algorithm}[H]
  \caption{Gibbs sampling for path and parameter inference for MJPs}
   \label{alg:MJP_gibbs}
  \begin{tabular}{l l}
   \textbf{Input:  } 
                      & \text{The current MJP path $(s_0, S, T)$, the current MJP parameters $\theta$}.\\ 
   \textbf{Output:  }& \text{New MJP trajectory $(s'_0, S', T')$ and 
                            parameters $\theta'$}.\\
   \hline
   \end{tabular}
   \begin{algorithmic}[1]
  \State  Simulate a new path from the conditional 
  $P(s'_0, S', T'|X,s_0,S,T,\theta)$ by 
  algorithm~\ref{alg:Unif_gibbs}.
  \State Simulate a new parameter $\theta'$ from the conditional 
  $P(\theta'|X,s'_0,S',T')$ (see equation~\eqref{eq:param_cond}).
   \end{algorithmic}
\end{algorithm} 
\vspace{-.1in}
The distribution $P(\theta'|X,s'_0,S',T')$ depends on 
the amount of time $\tau_i$ spent in each state $i$, and the number of transitions $c_{ij}$ between each pair of states $i,j$: 
\begin{align}
  \label{eq:param_cond}
  P(\theta'|X,s'_0,S',T') \propto P(\theta') \prod_{i \in \cS} \exp(-A_i(\theta')\tau_i) 
  \prod_{j \in \cS} \left(\frac{A_{ij}(\theta')}{A_i(\theta')}\right)^{c_{ij}}.
\end{align}
In some circumstances, this can be directly sampled from, otherwise, one has to use a Markov kernel like Metropolis-Hastings to update $\theta$ to $\theta'$. 
In any event, this introduces no new technical challenges.
  \begin{figure}
  \centering
  \begin{minipage}[hp]{0.3\linewidth}
  \centering
    \vspace{-0 in}
    \includegraphics [width=0.98\textwidth, angle=0]{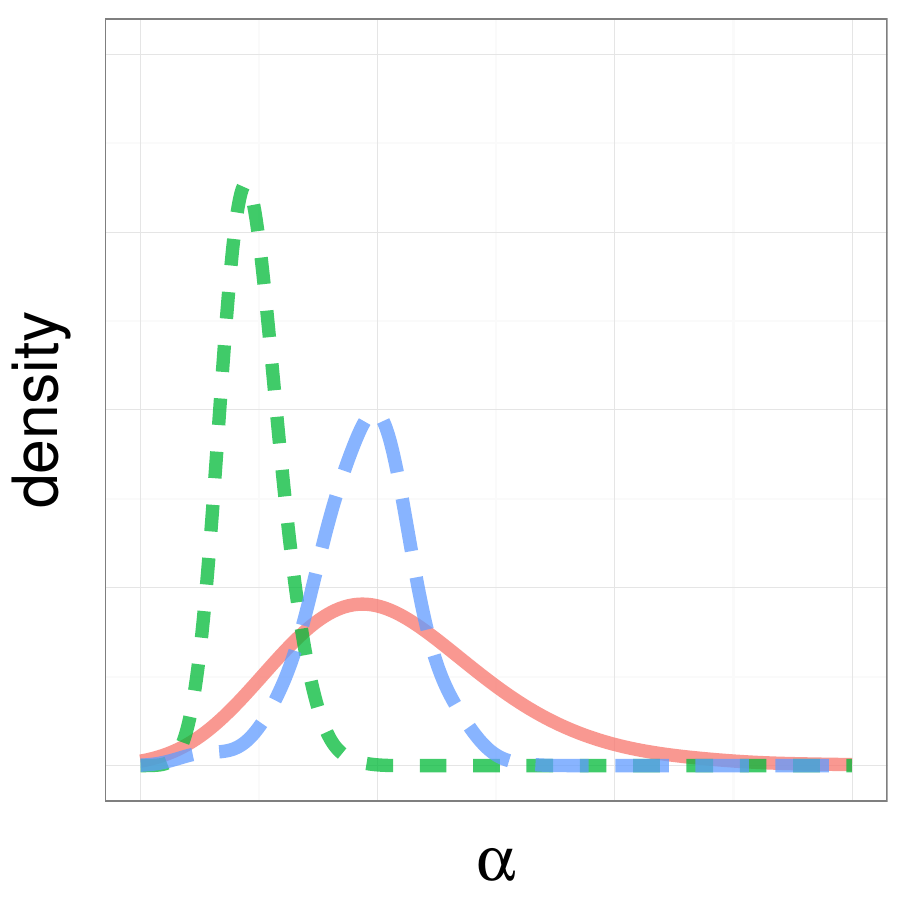}
   \vspace{0.06 in}
  \end{minipage}
  \begin{minipage}[!hp]{0.3\linewidth}
  \centering
    \includegraphics [width=0.98\textwidth, angle=0]{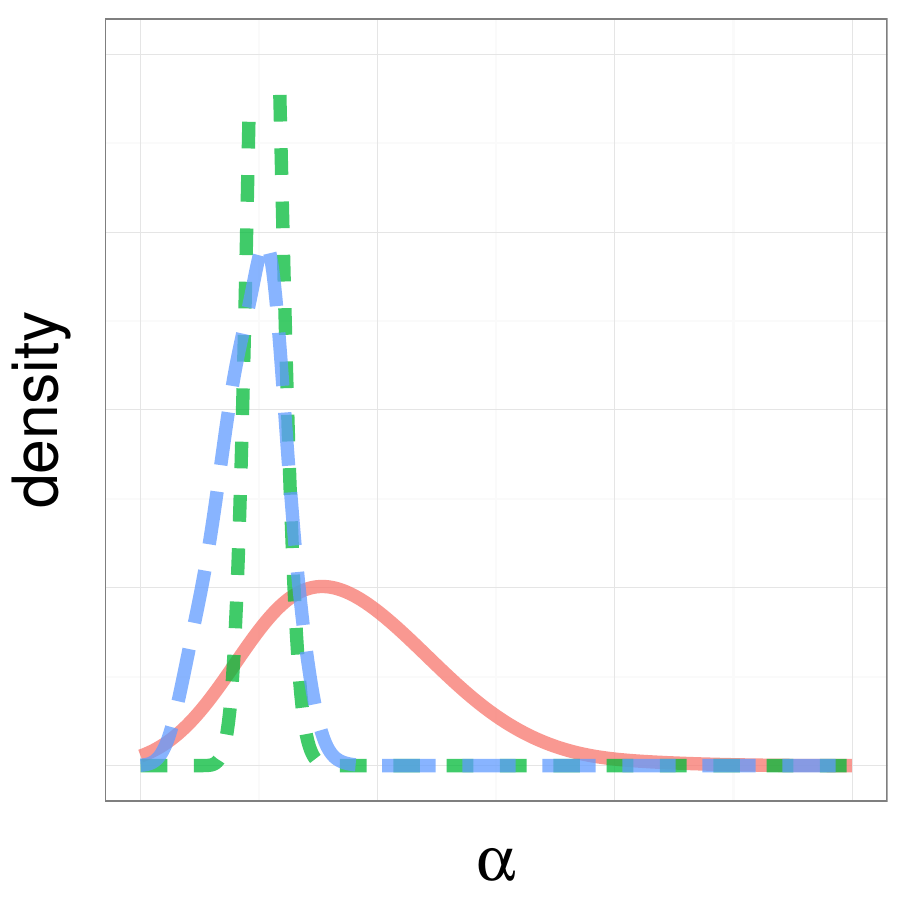}
    \vspace{-0 in}
  \end{minipage}
  \vspace{-.3in}
  \caption{Prior density over an MJP parameter (solid curve), along with two conditionals: given observations only (long-dashes), and given observations and MJP path (short-dashes). 
    As $t_{end}$ increases from $10$ (left) to $100$ (right), the conditionals become more concentrated, implying stronger path-parameter coupling. 
  The plots are from section~\ref{sec:immig} with 3 states.}
     \label{fig:hist}
  \end{figure}
  However, the resulting Gibbs sampler can mix very poorly because of coupling between path and parameters.
  We illustrate this in figure~\ref{fig:hist}~\citep[inspired by][]{papaspiliopoulos2007general}, which shows the posterior distribution of an MJP parameter (long-dashes) is less concentrated than the distribution conditioned on both observations as well as path (short-dashes). 
  The coupling is strengthened as the trajectory grows longer (right panel), and the Gibbs sampler can mix very poorly with long observation periods, even if the observations themselves are only mildly informative about the parameters. Before we describe our actual algorithm, we outline a \naive\ attempt around this coupling.

 \vspace{-.2in}
\section{\Naive\ parameter inference via Metropolis-Hastings}
For discrete-time HMMs, path-parameter coupling can be circumvented by marginalizing out the Markov trajectory, and directly sampling from the marginal posterior $P(\theta|X)$.
In its simplest form, this involves a Metropolis-Hastings (MH) scheme that proposes a new parameter $\vartheta$ from a proposal distribution $q(\vartheta|\theta)$, accepting or rejecting according to the usual MH probability.
The marginal probabilities over $X$ given parameters are computed using the forward-filtering backward-sampling (FFBS) algorithm~\citep{fruhwirth1994data,carter1996markov, RaoTeh13}.
The Rao-Teh algorithm, which recasts posterior simulation for continuous-time models as discrete-time simulation on a random grid, then provides a simple way to incorporate such an MH-scheme into continuous-time settings: directly update $\theta$, conditioning on the random grid $W$, but marginalizing out $(v_0, V)$.

Specifically, given $\theta$ and the Poisson grid $W$, rather than simulating new path values (the backward pass in algorithm~\ref{alg:Unif_gibbs}), and then conditionally updating $\theta$ (the second step in algorithm~\ref{alg:MJP_gibbs}), we {\em first} propose a parameter $\vartheta$ from $q(\vartheta|\theta)$. This is accepted with probability 
$$ \texttt{acc} = \min\left(1, 
\frac{P(X|W,\vartheta) P(W|\vartheta)P(\vartheta)q(\theta|\vartheta)} {P(X|W,\theta) P(W|\theta)P(\theta)q(\vartheta|\theta)}\right),$$ 
thereby targeting the distribution $P(\theta|W,X)$.
In the equation above, $P(X|W,\theta)$ is the probability of the observations $X$ given $W$ with $(v_0,V)$ marginalized out. 
Uniformization says this is the marginal probability of $X$ under a discrete-time HMM on $W$, with transition matrix $B(\theta)$. This can be computed using the forward pass of FFBS algorithm (steps 4 and 6 of algorithm~\ref{alg:MH_naive} below). 
The term $P(W|\theta)$ is the probability of $W$ under a rate-$\Omega(\theta)$ Poisson process. 
These, and the corresponding terms for $\vartheta$ allow the acceptance probability to be computed.
Only {\em after} accepting or rejecting $\vartheta$ do we simulate new states $(v'_0,V')$, using the new parameter $\theta'$ in a backward pass over $W$. 
The new trajectory and parameter are used to simulate a new grid $W'$, and the process is repeated.
Algorithm~\ref{alg:MH_naive} includes all details of this algorithm (see also figure~\ref{fig:naive_mh} in the supplementary material).

\begin{algorithm}[H]
   \caption{\Naive\  MH for parameter inference for MJPs }
   \label{alg:MH_naive}
  \begin{tabular}{l l}
   \textbf{Input:  } & \text{Observations $X$}, 
                       \text{the MJP path $(s_0, S, T)$, the  MJP parameters $\theta$ }and $\pi_0$.\\ 
   \textbf{Output:  }& \text{A new MJP trajectory $(s'_0, S', T')$, 
                            new MJP parameter $\theta'$}.\\
   \hline
   \end{tabular}
   \begin{algorithmic}[1]
     \State Set $\Omega(\theta) > \max_s{A_s(\theta)}$ for
     some function $\Omega(\cdot)$, e.g.\ $\Omega(\theta) = 
      2\max_s A_s(\theta)$.
      \State \textbf{Simulate the thinned times $U$} from a rate-$(\Omega(\theta)-A_{S(\cdot)}(\theta))$ Poisson process: 

      $\qquad \qquad \qquad \qquad U \sim \text{PoissProc}(\Omega(\theta) - A_{S(t)}(\theta)), \quad t\in[0,t_{end})$.

      \State 
      Set $W = T \cup U$ and discard $(s_0,S)$. Define $\tilde{W} = 0 \cup W \cup t_{end}$.
    \State 
    \textbf{Forward pass:}
    Set $B(\theta) = I + \frac{1}{\Omega(\theta)}A(\theta)$ and
    $\fwd^\theta_0(\cdot) = \pi_0$. Recall $\ell_i(\cdot)$ from equation~\eqref{eq:lik_factor}.
\vspace{-.25in}
$$\textbf{for } i=1\rightarrow |{W}|\textbf{ do:} \quad \fwd^{\theta}_i(s') = \sum_{s \in \cS} \fwd^\theta_{i-1}(s)\cdot \ell_{i}(s) \cdot B_{ss'}(\theta), \quad \forall s' \in \cS.\qquad\qquad\quad $$
    \State \textbf{Propose $\vartheta \sim q(\cdot| \theta)$.}
    For elements of ${W}$, calculate $\fwd^\vartheta_i(\cdot)$ similar to above.
      \State \textbf{Accept/Reject:} 
      For $\theta$ (and similarly for $\vartheta$), set  
      $P(W|\theta) = \Omega(\theta)^{|W|}\exp(-\Omega(\theta)t_{end})$, 
      $P(X|W,\theta) = \sum_{s \in \cS} \fwd_{|{W}|}^\theta(s)\cdot\ell_{|{W}|+1}(s)$.
      With probability $\texttt{acc}$, set $\theta' = \vartheta$, else $\theta'=\theta$; 
          \begin{align}
            \label{eq:ncp_acc}
            \texttt{acc} &=  1 \wedge \frac{P(\vartheta|W, X)}{P(\theta|W, X)} \frac{q(\theta|\vartheta)}{q(\vartheta|\theta)}
          =  1 \wedge \frac{P(X| W,\vartheta) P(W | \vartheta)P(\vartheta)}
            {P(X|W, \theta)P(W | \theta)P(\theta)} \frac{q(\theta|\vartheta)}{q(\vartheta|\theta)}.
          \end{align}
    \State 
    \textbf{Backward pass:}
    Simulate $v_{|W|} \sim \bck^{\theta'}_{|W|}(\cdot)$, where $\bck^{\theta'}_{|W|}(s) \propto \fwd^{\theta'}_{|W|}(s)\cdot\ell_{|W|+1}(s) \quad \forall s \in \cS.$ 
\vspace{-.25in}
    $$ \textbf{for } i=(|W|-1)\rightarrow 0\textbf{ do:} \quad v_i \sim \bck^{\theta'}_i(\cdot),\ \ \text{where } 
    \bck^{\theta'}_i(s) \propto \fwd^{\theta'}_i(s)\cdot B_{sv_{i+1}}(\theta') \cdot \ell_{i+1}(s)  \ \forall s \in \cS.$$
    \State Set $s'_0=v_0$. Let $T'$ be the set of times in $W$ when $V$ changes state. Define $S'$ as the corresponding set of state values. Return $(s'_0, S', T', \theta')$.
\end{algorithmic}
\end{algorithm}
\vspace{-.1in}
The resulting MCMC algorithm updates $\theta$ with the MJP trajectory 
integrated out, and by instantiating less `missing' information, can be expected to mix better. 
This can be quantified by the so-called Bayesian fraction of missing information~\citep{liu1994fraction, papaspiliopoulos2007general}. 
%
We note that even with the state values $(v_0,V)$ marginalized out, $\theta$ is updated {\em conditioned on $W$}. 
The distribution of $W$ depends on $\theta$: $W$ follows a rate-$\Omega(\theta)$ Poisson process. This dependence manifests in the $P(W|\theta)$ and $P(W|\vartheta)$ terms in equation~\eqref{eq:ncp_acc}. 
The fact that the MH-acceptance involves the probability of the observations  $X$ is inevitable, however the $P(W|\theta)$ term is an artifact of the computational algorithm of Rao-Teh. 
In our experiments, we show that this term significantly hurts acceptance probabilities and mixing. 
For a given $\theta$, $|W|$ is Poisson distributed with mean and variance $\Omega(\theta)$. 
If the proposed $\vartheta$ is such that $\Omega(\vartheta)$ is half $\Omega(\theta)$, then the ratio $P(W|\vartheta)/P(W|\theta)$ will be small, and $\vartheta$ is unlikely to be accepted. 
The next section describes our main algorithm that gets around this.

 \section{An improved Metropolis-Hasting algorithm}
The algorithm we propose symmetrizes the probability of $W$ under the old and new parameters, so that $P(W|\theta)$ disappears from the acceptance ratio. 
Now, the probability of accepting a proposal $\vartheta$ will depend only on the prior probabilities of $\theta$ and $\vartheta$, as well as how well they both explain the data given $W$.
This is in contrast to the previous algorithm, where one must also factor in how well each parameter explains the current value of the grid $W$.
This results in a MCMC sampler that mixes significantly more rapidly. 
Since we need not account for the probabilities $P(W|\theta)$, we also have a simpler MCMC scheme.

As before, the MCMC iteration begins with $(s_0, S, T, \theta)$. 
Instead of simulating the thinned events $U$ like earlier algorithms, we {\em first} generate a new parameter $\vartheta$ from some distribution $q(\vartheta|\theta)$. 
Treat this as an auxiliary variable, so that the augmented space now is $(s_0,S, T, \theta,\vartheta)$. 
Define a function $\Omega(\theta,\vartheta) > \max_s A_s(\theta)$ that is symmetric in its arguments (the number of arguments will distinguish $\Omega(\cdot,\cdot)$ from $\Omega(\cdot)$ of the earlier sections).
Two examples are $\Omega(\theta,\vartheta) = \kappa \max_s A_s(\theta) + \kappa \max_s A_s(\vartheta)$, for $\kappa \ge 1$, and 
$\Omega(\theta,\vartheta) = \kappa \max\left(\max_s A_s(\theta), \max_s A_s(\vartheta)\right)$, for $\kappa > 1$.

We will treat the path $(s_0,S,T)$ as simulated by  uniformization, but now with the dominating Poisson rate equal to $\Omega(\theta,\vartheta)$  instead of $\Omega(\theta)$. 
The transition matrix $B(\theta,\vartheta)$ of the embedded Markov chain is $B(\theta,\vartheta) = I + \frac{1}{\Omega(\theta,\vartheta)}A(\theta)$, so that the resulting trajectory $(s_0,S,T)$ will still be a realization from a MJP with rate-matrix $A(\theta)$.

Following the Rao-Teh algorithm, the conditional distribution of the thinned events $U$ given $(s_0,S,T,\theta,\vartheta)$ is a piecewise-constant Poisson with rate $\Omega(\theta, \vartheta) - A_{S(t)}(\theta), t \in [0,t_{end})$. 
This reconstructs the set $W = U \cup T$,  and as we saw~\citep[see also][]{RaoTeh13}, $P(W|\theta,\vartheta)$ is a homogeneous Poisson process with rate $\Omega(\theta, \vartheta)$. 
Having imputed $W$, discard the state values, so that the MCMC state space is $(W, \theta, \vartheta)$.
Now, propose swapping $\theta$ with $\vartheta$. 
From the symmetry of $\Omega(\cdot,\cdot)$, the Poisson grid $W$ has the same probability both
before and after this proposal, so unlike the previous scheme, the ratio 
equals $1$.  
This simplifies computation, and as suggested in the previous section, can significantly improve mixing.
An acceptance probability of
$ 
  \min\left(1, \frac{P(X|W,\vartheta,\theta) P(\vartheta) q(\theta|\vartheta)}
   {P(X|W,\theta,\vartheta) P(\theta)q(\vartheta|\theta)}\right)
   $ 
   targets the conditional $P(W,\theta,\vartheta|X) \propto P(\theta)q(\vartheta|\theta)P(W,X|\theta,\vartheta)$.
   The terms $P(X|\vartheta)$ and  $P(X|\theta)$ can be calculated from the forward pass of FFBS, and after
   accepting or rejecting the proposal, a new trajectory is sampled by
   completing the backward pass. Finally, the thinned events and auxiliary parameter are
   discarded. Algorithm~\ref{alg:MH_improved} and 
   figure~\ref{fig:MH_improved} outline the details of these steps. 
\begin{algorithm}[H]
   \caption{Symmetrized MH for parameter inference for MJPs }
   \label{alg:MH_improved}
  \begin{tabular}{l l}
   \textbf{Input:  } & \text{The observations $X$,}
                      \text{the MJP path $(s_0, S, T)$, MJP parameters $\theta$} and $\pi_0$.\\ 
   \textbf{Output:  }& \text{A new MJP trajectory $(s'_0, S', T')$, 
                            new MJP parameters $\theta'$}.\\
   \hline
   \end{tabular}
   \begin{algorithmic}[1]
     \State \textbf{Sample $\vartheta \sim q(\cdot| \theta)$}, and 
      set 
	$\Omega \assign \Omega(\theta,\vartheta)$ 
    for some symmetric $\Omega(\theta,\vartheta) > \max_s A_s(\theta)$.
    \State \textbf{Simulate the thinned times $U$} from a rate-$(\Omega-A_{S(\cdot)}(\theta))$ Poisson process:

    \vspace{-.05in}
    $\qquad \qquad \qquad \qquad U \sim \text{PoissProc}(\Omega - A_{S(t)}(\theta)), \quad t \in [0,t_{end})$.
    \State Set $W = T \cup U$ and discard $(s_0,S)$. Define $\tilde{W} = 0 \cup W \cup t_{end}$.
    \State \textbf{Forward pass:} Set $B(\theta,\vartheta) = I + \frac{A(\theta)}{\Omega(\theta, \vartheta)}$ and $\fwd^{\theta, \vartheta}_0(\cdot) = \pi_0$. Recall $\ell_i(\cdot)$ from equation~\eqref{eq:lik_factor}.
    \vspace{-.25in}
        $$\textbf{for } i=1\rightarrow |{W}|\textbf{ do:} \quad \fwd^{\theta,\vartheta}_i(s') = \sum_{s \in \cS} \fwd^{\theta,\vartheta}_{i-1}(s)\cdot \ell_{i}(s) \cdot B_{ss'}(\theta,\vartheta), \quad \forall s' \in \cS.\qquad\qquad\quad
    \vspace{-.1in}
        $$
        Similarly, for $B(\vartheta,\theta) = I + \frac{A(\vartheta)}{\Omega(\vartheta, \theta)}$, calculate $\fwd^{\vartheta,\theta}_i(\cdot)$ for all elements of ${W}$.
    \State \textbf{Swap} 
    $(\theta, \vartheta)$ with probability
     $
     1 \wedge \frac{P(X| W,\vartheta,\theta)P(\vartheta)q(\theta|\vartheta)}
        {P(X| W,\theta, \vartheta)P(\theta) q(\vartheta|\theta)}.
        $
    Write the new parameters as $(\theta',\vartheta')$.
        Here $P(X|W,\theta,\vartheta) = \sum_{s \in \cS} \fwd^{\theta,\vartheta}_{|W|}(s)\ell_{|W|+1}(s)$, $P(X|W,\vartheta,\theta) = \sum_{s \in \cS} \fwd^{\vartheta,\theta}_{|W|}(s)\ell_{|W|+1}(s)$.
    \State \textbf{Backward pass:}
    Simulate $v_{|W|} \sim \bck^{\theta',\vartheta'}_{|W|}(\cdot)$, where $\bck^{\theta',\vartheta'}_{|W|}\!(s) \propto \fwd^{\theta',\vartheta'}_{|W|}\!(s)\cdot\ell_{|W|+1}(s) \ \  \forall s \in \cS.$ 
    \vspace{-.25in}
    $$\hspace{-.15in} \textbf{for } i=(|W|-1)\rightarrow 0\textbf{ do:} \ v_i \sim \bck^{\theta',\vartheta'}_i\!(\cdot), \text{where } 
    \bck^{\theta',\vartheta'}_i\!(s) \propto \fwd^{\theta',\vartheta'}_i\!(s)\cdot B_{sv_{i+1}}(\theta',\vartheta') \cdot \ell_{i+1}(s)  \ \forall s \in \cS.$$
    
    \vspace{-.1in}
    \State Set $s'_0=v_0$. Let $T'$ be the set of times in $W$ when $V$ changes state. Define $S'$ as the corresponding set of state values. Return $(s'_0, S', T', \theta')$.
\end{algorithmic}
\end{algorithm}

\setlength{\unitlength}{0.8cm}
  \begin{figure}[h!]
  \centering
  \begin{minipage}[!hp]{0.32\linewidth}
  \centering
    \includegraphics [width=0.96\textwidth, angle=0]{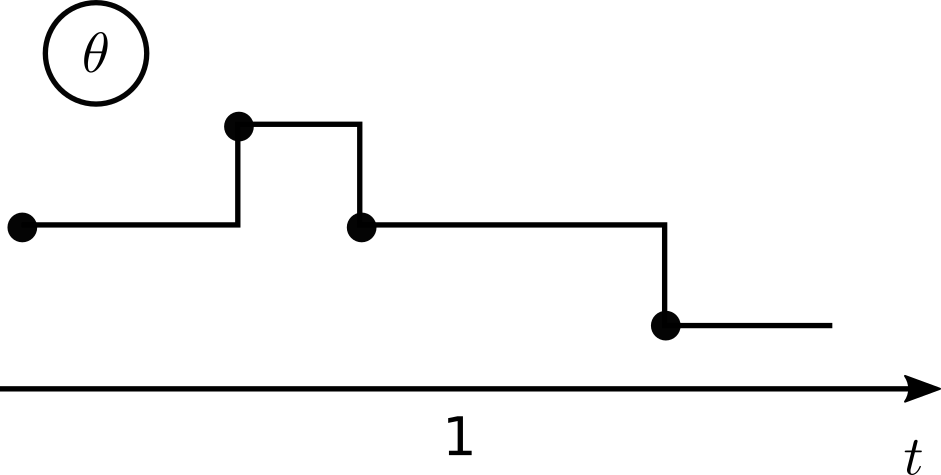}
      \end{minipage}
  \begin{minipage}[hp]{0.32\linewidth}
  \centering
    \includegraphics [width=0.96\textwidth, angle=0]{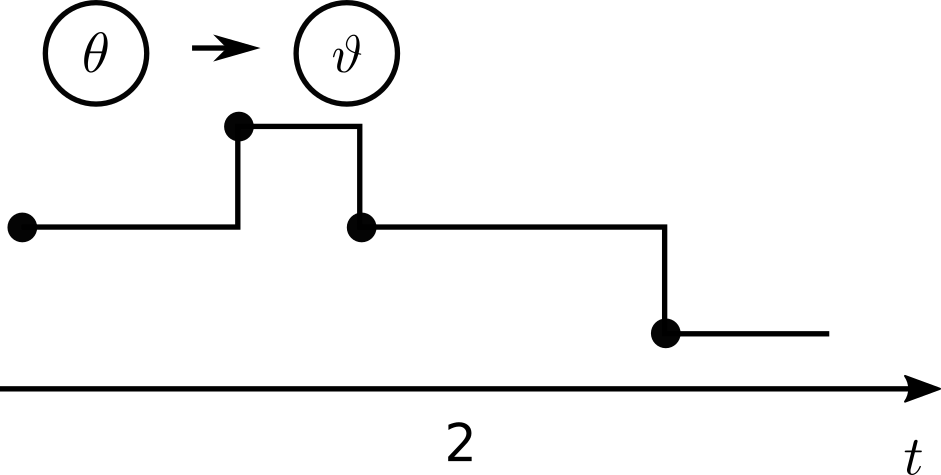}
    \vspace{-0 in}
  \end{minipage}
  \begin{minipage}[hp]{0.32\linewidth}
  \centering
    \includegraphics [width=0.96\textwidth, angle=0]{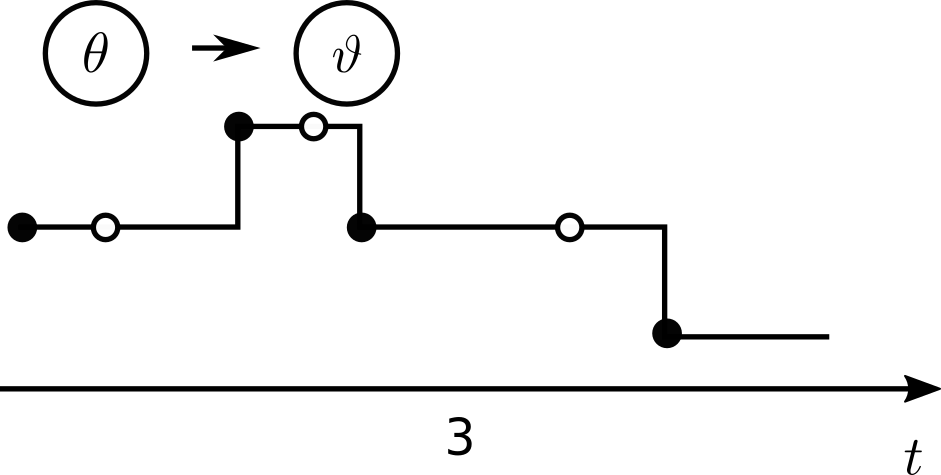}
    \vspace{-0 in}
  \end{minipage}
  \begin{minipage}[hp]{0.32\linewidth}
  \centering
    \includegraphics [width=0.96\textwidth, angle=0]{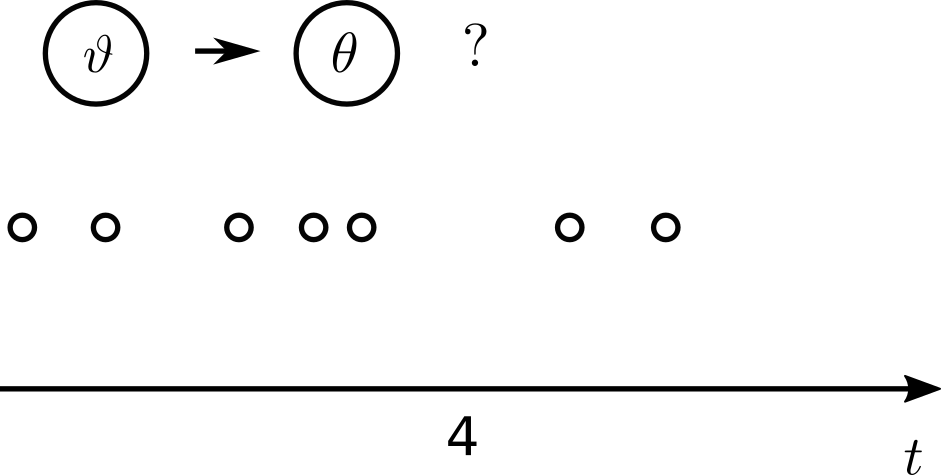}
    \vspace{-0 in}
  \end{minipage}
  \begin{minipage}[hp]{0.32\linewidth}
  \centering
    \includegraphics [width=0.96\textwidth, angle=0]{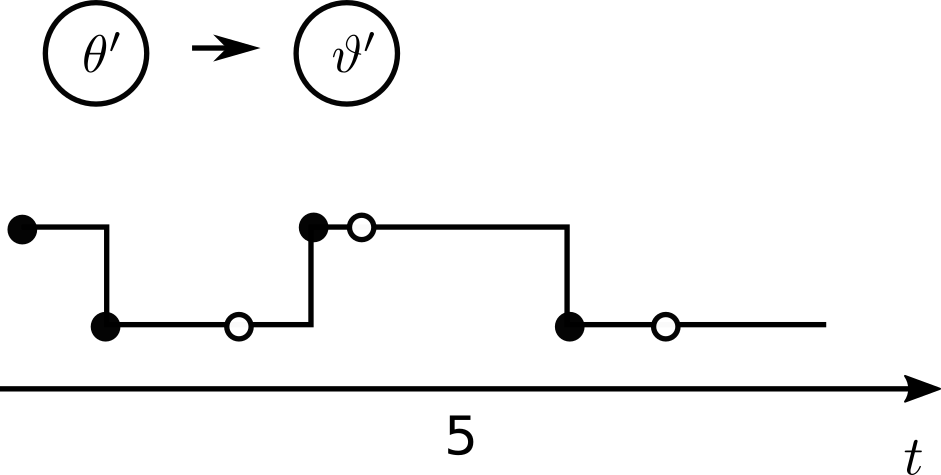}
    \vspace{-0 in}
  \end{minipage}
  \begin{minipage}[hp]{0.32\linewidth}
  \centering
    \includegraphics [width=0.96\textwidth, angle=0]{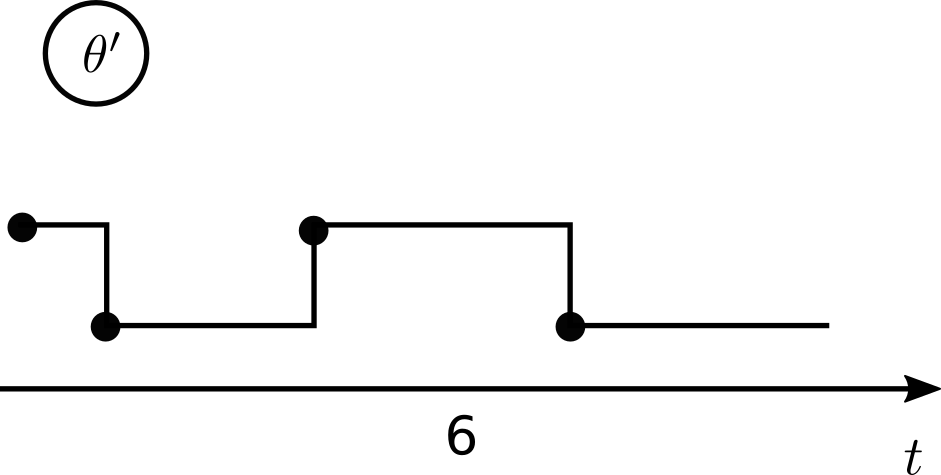}
  \end{minipage}
    \caption{Symmetrized MH algorithm: Steps 1-3: Starting with a trajectory and parameter $\theta$, simulate an auxiliary parameter $\vartheta$, and then the thinned events
      $U$ from a rate $\Omega(\theta,\vartheta) - A_{S(\cdot)}$ Poisson
      process. Step 4: Discard state values, and propose swapping $\theta$ and $\vartheta$. Step 5:
      Run a forward pass to accept or reject this proposal, calling the new parameters $(\theta',\vartheta')$. 
    Use these to simulate a new trajectory. Step 6: Discard $\vartheta'$ and the thinned events.} 
   \label{fig:MH_improved}
  \end{figure}

\begin{proposition}
  The sampler described in Algorithm~\ref{alg:MH_improved} has the posterior distribution $P(\theta,s_0,S,T|X)$ as its stationary distribution.
\end{proposition}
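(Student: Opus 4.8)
The plan is to verify that Algorithm~\ref{alg:MH_improved} is a valid Metropolis-Hastings scheme on an appropriately augmented state space, and then argue that marginalizing out the auxiliary variables recovers the target $P(\theta, S(t)\mid X)$. First I would set up the augmented space explicitly: the working variables are the quadruple $(S(t), \theta, \vartheta, U)$, equivalently (after the deterministic bijection $S(t) \leftrightarrow (S,T)$ and $W = T\cup U$) the quadruple $(W, V, \theta, \vartheta)$, where $V$ are the state labels on $\{0\}\cup W$. I would define the target density on this augmented space to be $P(\theta,S(t)\mid X)\,q(\vartheta\mid\theta)\,P(U\mid S(t),\theta,\vartheta)$, where the last factor is the inhomogeneous-Poisson density of the thinned events with rate $\Omega(\theta)+\Omega(\vartheta)-A_{S(t)}$. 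By construction, integrating out $\vartheta$ and $U$ from this joint density returns exactly $P(\theta,S(t)\mid X)$, so it suffices to show each step of the algorithm leaves this augmented target invariant.

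Next I would check the three moves in turn. Step~1--2 (draw $\vartheta\sim q(\cdot\mid\theta)$, then $U$ from its conditional Poisson) is a Gibbs-type move that samples $(\vartheta,U)$ from precisely their conditional under the augmented target, hence leaves it invariant. The crucial step is Step~3--4, the proposed swap of $\theta$ and $\vartheta$. Here I would invoke the key fact, already established in the text via uniformization~\cite{Jen1953,RaoTeh13}: for \emph{any} dominating rate $\Omega \ge \max_s A_s$, the pair $(S,T)$ together with thinned events $U$ reconstructs a $W$ whose marginal law is a homogeneous Poisson process of rate $\Omega$, and conditionally on $W$ the states $V$ form the discrete-time Markov chain with transition matrix $B = I + A/\Omega$. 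Using $\Omega = \Omega(\theta)+\Omega(\vartheta)$, which is symmetric in $\theta\leftrightarrow\vartheta$, the change of variables from $(S(t),U)$ to $(W,V)$ shows that the augmented target, re-expressed in $(W,V,\theta,\vartheta)$ coordinates, factors as
\[
P(W\mid \Omega(\theta)+\Omega(\vartheta))\;P(V\mid W,\theta)\;P(X\mid V, W)\;p(\theta)\,q(\vartheta\mid\theta)\big/P(X).
\]
Then I would note that the swap move first discards $V$ (a valid partial-resampling, since after the forward pass $V$ is drawn afresh), so on the collapsed space $(W,\theta,\vartheta)$ the density is proportional to $P(W\mid\Omega(\theta)+\Omega(\vartheta))\,P(X\mid W,\theta)\,p(\theta)\,q(\vartheta\mid\theta)$. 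Since $\Omega(\theta)+\Omega(\vartheta)$ is invariant under the swap, the factor $P(W\mid\cdot)$ cancels in the Metropolis-Hastings ratio, and the deterministic-swap proposal (its own reverse) gives acceptance probability $1\wedge\frac{P(X\mid W,\vartheta)\,p(\vartheta)\,q(\theta\mid\vartheta)}{P(X\mid W,\theta)\,p(\theta)\,q(\vartheta\mid\theta)}$, exactly as in Step~3. This is a standard MH-within-Gibbs argument, so detailed balance on the collapsed space holds. Finally, Step~5 (the backward pass) resamples $V$ from its conditional $P(V\mid W,\theta',X)$, and Step~6 discards $U$ and $\vartheta$; both are invariance-preserving conditional draws / marginalizations.

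The main obstacle, and the point deserving the most care, is the change-of-variables / measure-theoretic bookkeeping in Step~3: one is reparametrizing between the ``Gillespie'' description $(S,T,U)$ and the ``uniformization'' description $(W,V)$, and one must be sure the Jacobian-free identity of these two descriptions holds \emph{for the symmetric rate $\Omega(\theta)+\Omega(\vartheta)$ under both orderings of the pair}, so that the swap is genuinely a measure-preserving involution on the augmented space. This is exactly the content of the uniformization equivalence cited from~\cite{Jen1953,RaoTeh13}, applied with the observation-likelihood factor $P(X\mid V,W)$ appended; once that is stated cleanly, the cancellation of $P(W\mid\theta,\vartheta)$ and the resulting acceptance ratio follow immediately, and the remaining steps are routine Gibbs updates. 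I would also remark that ergodicity/irreducibility is not claimed here — only that the stated posterior is stationary — so no further mixing argument is needed for this proposition.
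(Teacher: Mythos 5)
Your proposal is correct and follows essentially the same route as the paper's proof: augment with $(\vartheta, U)$, invoke the uniformization equivalence of \cite{RaoTeh13} to show $W$ is marginally rate-$(\Omega(\theta)+\Omega(\vartheta))$ Poisson, use the symmetry of that rate to cancel $P(W\mid\theta,\vartheta)$ in the swap's acceptance ratio, and treat the backward pass and discarding steps as conditional draws and marginalizations. Your presentation as an explicit invariant-target/MH-within-Gibbs verification is somewhat more systematic than the paper's sequential ``push the distribution through each step'' argument, but the decomposition and the key facts used are identical (minor note: your factor $P(V\mid W,\theta)$ should carry $\vartheta$ as well, since the transition matrix $B$ depends on $\Omega(\theta)+\Omega(\vartheta)$).
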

\begin{proof}
  Consider a realization $(\theta,s_0,S,T)$ from the posterior distribution $P(\theta, s_0, S, T|X)$. An iteration of the algorithm first simulates $\vartheta$ from $q(\vartheta|\theta)$. By construction, the marginal distribution over all but the last variable in the set $(\theta, s_0, S, T, \vartheta)$ is still the posterior.

  The algorithm next simulates $U$ from a Poisson process with rate $\Omega(\theta,\vartheta) - A_{S(\cdot)}(\theta)$. Write $W = T \cup U$.
  The random grid $W$ consists of the actual and thinned candidate transition times, and is distributed according to a rate-$\Omega(\theta, \vartheta)$ Poisson process (Proposition 2 in~\cite{RaoTeh13}). 
  Thus, the triplet $(W,\theta,\vartheta)$ has probability proportional to $P(\theta)q(\vartheta|\theta)\text{PoissProc}(W|\Omega(\theta,\vartheta)) P(X|W,\theta,\vartheta)$.
  Next, the algorithm proposes swapping $\theta$ and $\vartheta$ with $W$ fixed (a deterministic proposal), and accepts with MH-acceptance probability 
  $$\texttt{acc} = 
  1 \wedge \frac{P(\vartheta)q(\theta|\vartheta)
P(X|W,\vartheta,\theta)}{P(\theta)q(\vartheta|\theta)
P(X|W,\theta,\vartheta)} =
1 \wedge \frac{P(\vartheta)q(\theta|\vartheta)\text{PoissProc}(W|\Omega(\vartheta,\theta)) P(X|W,\vartheta,\theta)}{P(\theta)q(\vartheta|\theta)\text{PoissProc}(W|\Omega(\theta,\vartheta))
P(X|W,\theta,\vartheta)},$$
where we exploit the symmetry of $\Omega(\cdot,\cdot)$.
Write the new parameters as $(\theta', \vartheta')$. 

This MH step has stationary distribution over $(W,\theta',\vartheta')$ proportional to
$P(\theta')q(\vartheta'|\theta)$ $\text{PoissProc}(W|\Omega(\theta',\vartheta'))
P(X|W,\theta',\vartheta')$, so that the triplet $(W,\theta', \vartheta')$ has the same distribution as $(W,\theta, \vartheta)$.
The algorithm uses $B(\theta',\vartheta')$ to make a backward pass through $W$, simulating state values on $W$ from the conditional of a Markov chain with transition matrix $B(\theta',\vartheta')$ given observations $X$. 
Dropping the self-transition times results in $(\theta', s'_0, S', T', \vartheta')$. 
From uniformization (see also Lemma 1 in~\cite{RaoTeh13}), the trajectory $(s'_0, S', T')$ is distributed according to the conditional of a rate-$A(\theta')$ MJP given observations $X$.
Finally, dropping $\vartheta'$ 
results in $(\theta',s'_0,S',T')$ from the posterior given $X$, proving stationarity.
\end{proof}

\section{Related work}\label{sec:comments}

Our paper modifies the algorithm from~\citet{RaoTeh13} to include parameter inference.
That algorithm requires a uniformization rate $\Omega(\theta) > \max_s A_s(\theta)$, and empirical results from that paper suggest $\Omega(\theta) = 2\max_s A_s(\theta)$.
The uniformization rate $\Omega(\theta,\vartheta)$ in our algorithm includes a proposed parameter $\vartheta$, must be symmetric in both arguments and must be greater than both $\max_s A_s(\theta)$ and $\max_s A_s(\vartheta)$. 
A natural and simple setting is $\Omega(\theta,\vartheta) = \max_s A_s(\theta) + \max_s A_s(\vartheta)$. 
When $\theta$ is known, our algorithm has $\vartheta$ equal to $\theta$ (i.e.\ the `proposed' $\vartheta$ equals $\theta$), and our uniformization rate reduces to $2\max_s A_s(\theta)$. 
This provides a principled motivation for the particular choice of $\Omega$ in~\citet{RaoTeh13}.

Of course, we can consider other choices, such as $\Omega(\theta,\vartheta) = \kappa(\max A_i(\theta) + \max A_i(\vartheta))$ for $\kappa > 1$.  
These result in more thinned events, and so more computation, with faster MCMC mixing. 
We study the effect of $\kappa$ in our experiments, but find the smallest setting of $\kappa=1$ performs best.
It is also possible to have non-additive settings for $\Omega(\theta,\vartheta)$, for example, $\Omega(\theta,\vartheta) = \kappa \max( \max_i A_i(\theta), \max A_i(\vartheta))$ for some $\kappa > 1$. We investigate this too.

A key idea in our paper, as well as~\cite{RaoTeh13}, is to impute the random grid of candidate transition times $W$ every MCMC iteration. 
Conditioned on $W$, the MJP trajectory follows an HMM with transition matrix $B$. 
By running the FFBS algorithm over $W$, we can marginalize out the states associated with $W$, and calculate the marginal $P(X|W,\theta)$. 
Another approach to parameter inference that integrates out state values follows~\citet{FearnSher2006}. 
 This algorithm makes a sequential forward pass through all {\em observations} $X$ (rather than $W$). 
 Unlike with $W$ fixed, one cannot a priori bound the number of transitions between two successive observations, so that~\citet{FearnSher2006} have to use matrix exponentials of $A$ (rather than just $B$) to calculate transition probabilities.
 The resulting algorithm is cubic, rather than quadratic in the number of states, and the number of expensive matrix exponentiations needed scales with the number of observations, rather than the number of transitions.
 Further, matrix exponentiation results in a dense matrix, so that~\cite{FearnSher2006} cannot exploit sparsity in the transition matrix.
 In our framework, $B=I+\frac{1}{\Omega}A$ inherits sparsity present in $A$. Thus if $A$ is tri-diagonal, our algorithm is {\em linear} in the number of states.

 A second approach to marginalizing out state information is particle MCMC~\citep{Andrieu10}. 
 This algorithm, described in section~\ref{sec:pmcmc} in the supplementary material, uses particle filtering to get an unbiased estimate of $P(X|\theta)$. 
 Plugging this estimate into the MH acceptance probability results in an MCMC sampler that targets the correct posterior, however the resulting scheme does not exploit the Markovian structure of the MJP the way FFBS can. 
 In particular, observations that are informative of the MJP state can result in marginal probability estimates that have large variance, resulting in slow mixing. 
 By contrast, given $W$, FFBS can compute the marginal probability $P(X|W,\theta)$ {\em exactly}. 

The basic idea of marginalizing out information to accelerate MCMC convergence is formalized by the idea of the Bayesian fraction of missing information~\citep{liu1994fraction}. 
In this context, papers such as~\citet{papaspiliopoulos2007general,yu2011center} have studied MCMC algorithms for hierarchical latent variable models. 
The Gibbs sampler of algorithm~\ref{alg:MJP_gibbs} can be viewed as operating on a centered parametrization~\citep{papaspiliopoulos2007general} or sufficient augmentation~\citep{yu2011center} of a hierarchical model involving the parameter $\theta$, the latent variables $(v_0, V, W)$ and the observations $X$. 
These papers then suggest noncentered parametrizations or ancillary augmentations, which in our context correspond to simulating $\theta$, $W$, and an {\em independent} set of $(|W|+1)$ i.i.d.\ uniform random variables $Q$. 
Through a sequence of inverse-cdf transforms, the state values $(v_0,V)$ are then written as a deterministic function of $Q$ and $\theta$: $(v_0,V) = f_\theta(Q)$, after which the observations $X$ are produced. 
Now, proposing a new parameter $\vartheta$ automatically proposes a new set of state variables $(v_0',V') = f_\vartheta(Q)$, so that problem of path-parameter coupling is avoided.
A similar idea could also be used to avoid couplng between $\theta$ and the Poisson process $W$.
However now, updating $Q$ given $\theta$ and $(v_0, V, W)$ raises significant challenges to mixing.
By contrast, our approach marginalizes out the variables $(v_0,V)$ (or $Q$), and will mix significantly faster.
Nevertheless, results from the literature on NCPs can suggest further improvements to our approach, and give guidance about conditions under which 
approaches like ours outperform centered parametrisations like algorithm~\ref{alg:MJP_gibbs}, or when a mixture of centered and non-centered
updates could be useful~\citep{yu2011center}.

%
 %
Our approach of first simulating $\vartheta$, and then simulating $W$ from a Poisson process whose rate is symmetric in $\theta$ and $\vartheta$ is related to~\citet{Neal04Drag}. In that work, to simulate from an `energy' model $P(x,y) \propto \exp(-E(x,y))$, the author proposes a new parameter $x^*$, and then updates $y$ via intermediate transitions to be symmetric in $x$ and $x^*$, before proposing to swap $x$ and $x^*$. Our approach exploits the specific structure of the Poisson and Markov jump processes to do this directly, avoiding the need for any tempered transitions. 

Our algorithm is also related to work on MCMC for doubly-intractable distributions.  Algorithms like~\cite{Moller2006,murray2006,Andrieu09} all attempt to evaluate an intractable likelihood under a proposed parameter $\vartheta$ by introducing auxiliary variables, typically sampled independently under the proposed parameters. 
For MJPs, this would involve proposing $\vartheta$, generating a new grid $W^*$, and then using $P(X|W,\theta)$ and $P(X|W^*,\vartheta)$ in the MH acceptance step. 
This is more involved (with two sets of grids), and introduces additional variance that reduces acceptance rates. 
While~\cite{murray2006} suggest annealing schemes to try to address this issue, we exploit the uniformization structure to provide a cleaner solution: generate a single set of auxiliary variables that depends symmetrically on both the new and old parameters. 

 \vspace{-.2in}
\section{Experiments}\label{sec:expts}
\vspace{-.1in}
In the following, we evaluate Python implementations of a number of algorithms, focusing on our contribution, the symmetrized MH algorithm (algorithm~\ref{alg:MH_improved}), and as well as the \naive\ MH algorithm (algorithm~\ref{alg:MH_naive}).
We evaluate different variants of these algorithms, corresponding to different uniformizing Poisson rates. 
For \naive\ MH, we set $\Omega(\theta) = \kappa \max_s A_s(\theta) $ with $\kappa$  equal to $1.5, 2$ and $3$ (here $\kappa$ must be greater than $1$), 
while for symmetrized MH, where the uniformizing rate depends on both the current and proposed parameters, we consider
 $\Omega(\theta, \vartheta) = \kappa (\max A(\theta) + \max A(\vartheta))$ 
 ($\kappa = 1$ and $1.5$), and 
$\Omega(\theta, \vartheta) = 1.5 \max(\max A(\theta), \max A(\vartheta))$.
We evaluate two other baselines: Gibbs sampling (algorithm~\ref{alg:MJP_gibbs}), 
and particle MCMC~\citep[][see also section~\ref{sec:pmcmc} in the appendix]{Andrieu10}. 
Gibbs sampling involves a uniformization step to update the MJP trajectory (step 1 in algorithm~\ref{alg:MJP_gibbs}), for which we use $\Omega(\theta) = \kappa \max_s A_s(\theta)$ for $\kappa=1.5,2,3$. 
Unless specified, our results were obtained from $100$ independent MCMC runs, each of $10000$ iterations.
We found particle MCMC to be more computationally intensive, and limited each run to $3000$ iterations, the number of particles being $5, 10$ and $20$.


For each run of each MCMC algorithm, we calculated the effective sample size (ESS) of the posterior samples of the MJP parameters using the R package \texttt{rcoda}~\citep{Rcoda2006}. 
This estimates the number of independent samples returned by the MCMC algorithm, and dividing this by the runtime of a simulation gives the ESS per unit time (ESS/sec). 
We used this to compare different samplers and different parameter settings.

\subsection{A simple synthetic MJP}
\begin{wrapfigure}{r}{.4\textwidth}
    \vspace{-.5in}
  \begin{minipage}[!hp]{.05\linewidth}
    \hspace{.1in}
  \end{minipage}
  \begin{minipage}[!hp]{.9\linewidth}
  \centering
    \includegraphics [width=\textwidth, angle=0]{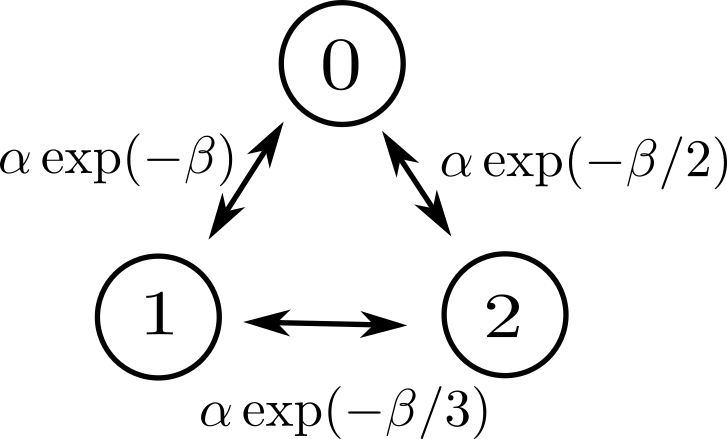}
    \caption{A 3-state MJP with exponentially decaying rates}
    \label{fig:exp_model}
      \end{minipage}
    \vspace{-.1in}
  \end{wrapfigure}
  Consider an MJP with a uniform distribution over states at time $0$, and with transitions between states $i$ and $j$ having rate $\alpha \exp(-\beta/(i+j))$, for two parameters $(\alpha,\beta) \defeq \theta$. 
We consider three settings: $3$ states (figure~\ref{fig:exp_model}), $5$ states, and $10$ states.
We place Gamma$(\alpha_0,\alpha_1)$, and Gamma$(\beta_0, \beta_1)$ priors on the parameters $\alpha$ and $\beta$, with $(\alpha_0,\alpha_1,\beta_0,\beta_1)$ having values $(3,2,5,2)$ respectively. 
For each run, we draw random parameters from the prior to construct a transition matrix $A$, and simulate an MJP trajectory.
We simulate observations uniformly at integer values on the time interval $[0, 20]$. 
Each observation is Gaussian distributed with mean equal to the state at that time, and variance equal to $1$.  
For the MH proposal, we used a lognormal distribution centered at the current parameter value, with variance $\sigma^2$ whose effect we study.  

\noindent \textbf{Results:}
Figure~\ref{fig:POST_EXP} shows the MCMC estimates of the posterior distribution over $\alpha, P(\alpha|X)$ from the Gibbs sampler as well as our symmetrized MH sampler. 
Visually these agree, and we quantify this by running a Kolmogorov-Smirnov two-sample test using $1000$ samples from each algorithm: this returns a p-value of $0.5085$, clearly failing to reject the null hypothesis that both samples come from the same distribution. 
The same is true for $\beta$, though we do not include it here.
The figure also shows the average acceptance probabilities for the two MH samplers: we see that for the same proposal distribution, symmetrization significantly improves acceptance probability. This shows the benefit of eliminating the $P(W|\theta)$ terms from the acceptance probability (we will investigate this further). Figure~\ref{fig:TRACE_EXP} shows traceplots and autocorrelation plots for $\alpha$ from the symmetrized MH and Gibbs samplers. Clearly, our sampler mixes much more efficiently than Gibbs, with \naive\ MH (included in the supplementary material) worse than both.

  \begin{figure}[H]
  \begin{minipage}[!hp]{0.49\linewidth}
    \includegraphics [width=0.49\textwidth, angle=0]{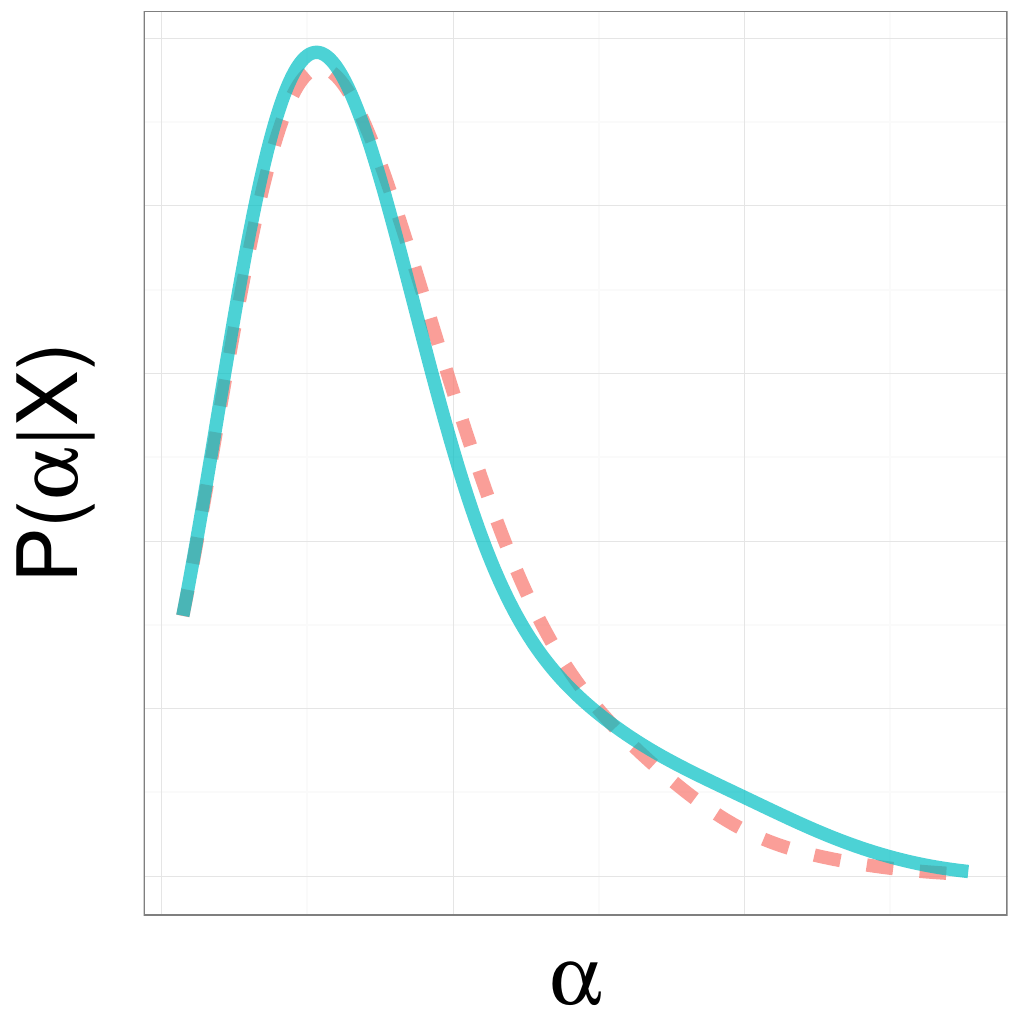}
    \includegraphics [width=0.49\textwidth, angle=0]{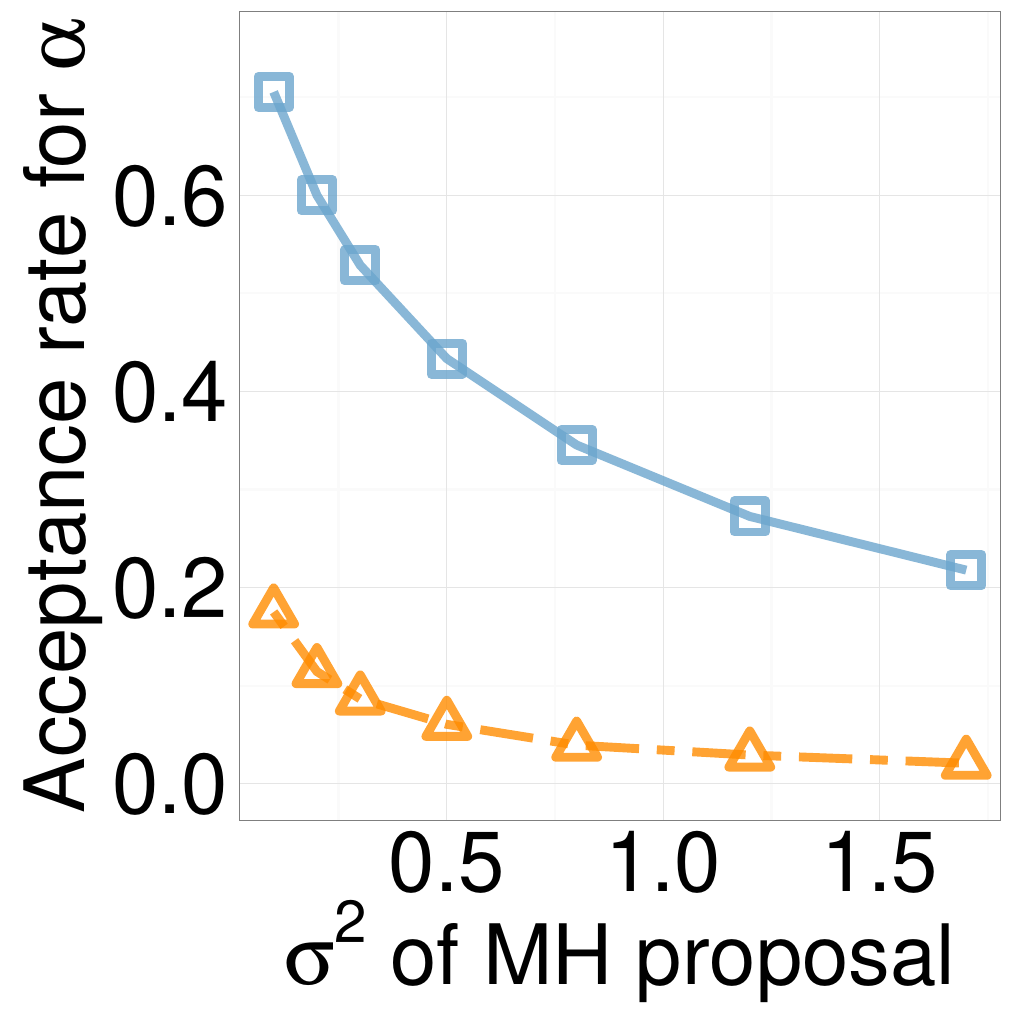}
  \end{minipage}
  \begin{minipage}[!hp]{0.49\linewidth}
    \vspace{-.2in}
    \caption{(Left) posterior $P(\alpha|X)$ from Gibbs (dashed line) and symmetrized MH (solid line) for the synthetic model. (Right) acceptance probabilities of $\alpha$ for symmetrized (squares) and \naive\ (triangles) MH.} 
    \label{fig:POST_EXP}
  \end{minipage}
  \end{figure}
  \begin{figure}[H]
  \centering
  \begin{minipage}[!hp]{0.97\linewidth}
    \includegraphics [width=0.24\textwidth, angle=0]{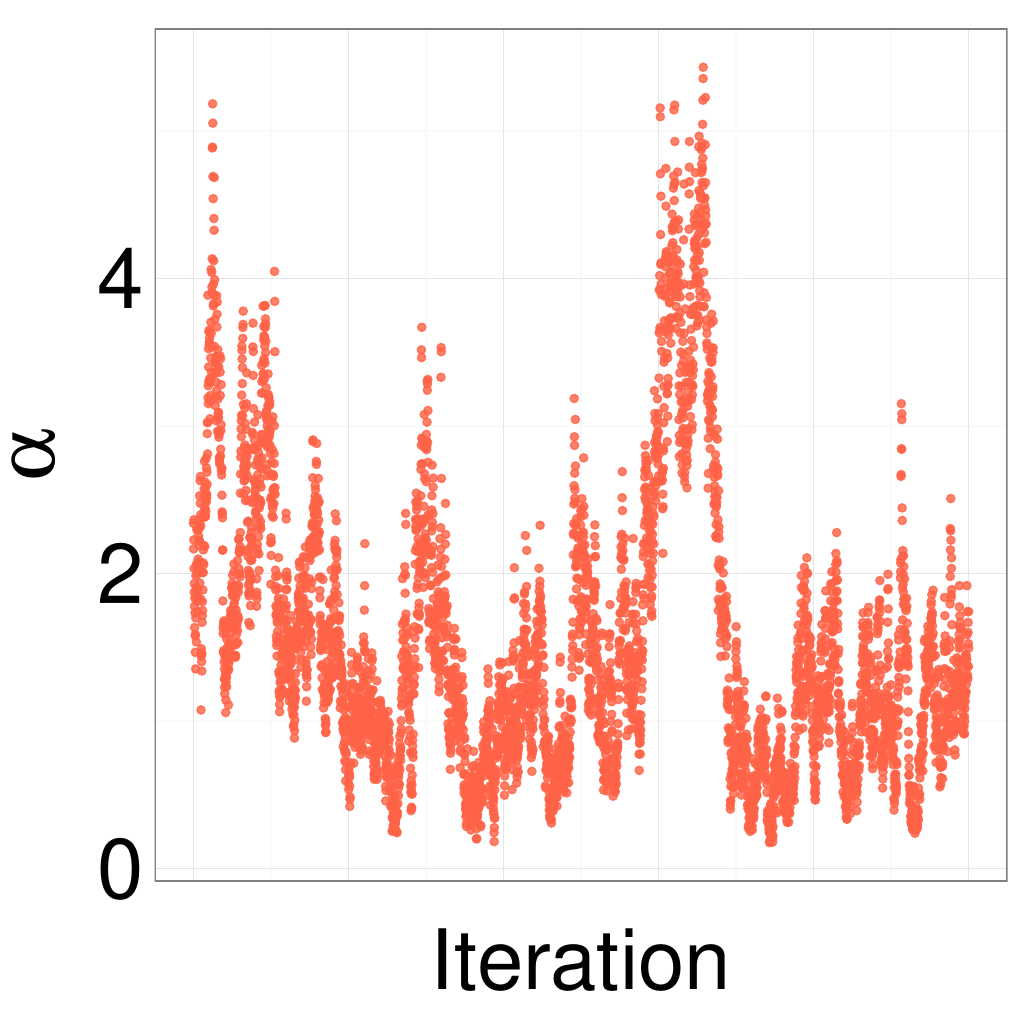}
    \includegraphics [width=0.24\textwidth, angle=0]{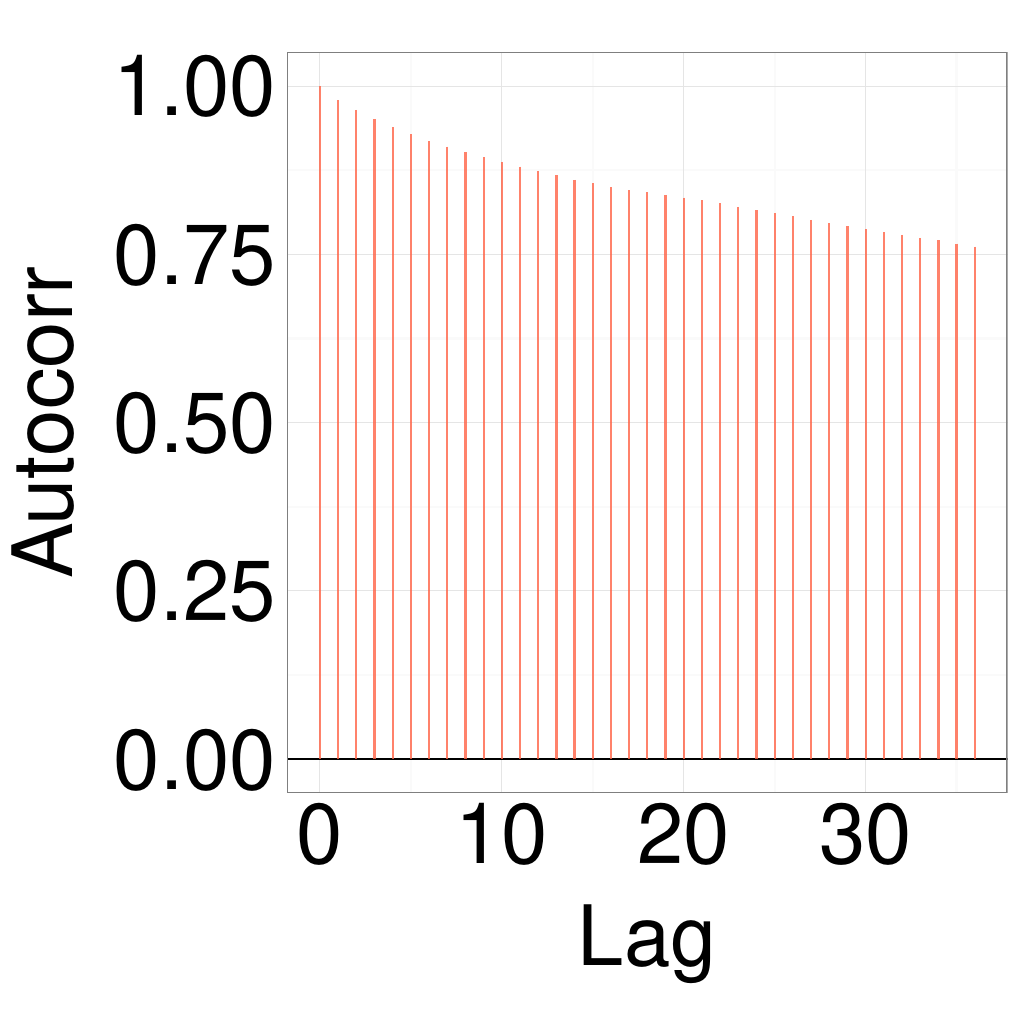}
    \includegraphics [width=0.24\textwidth, angle=0]{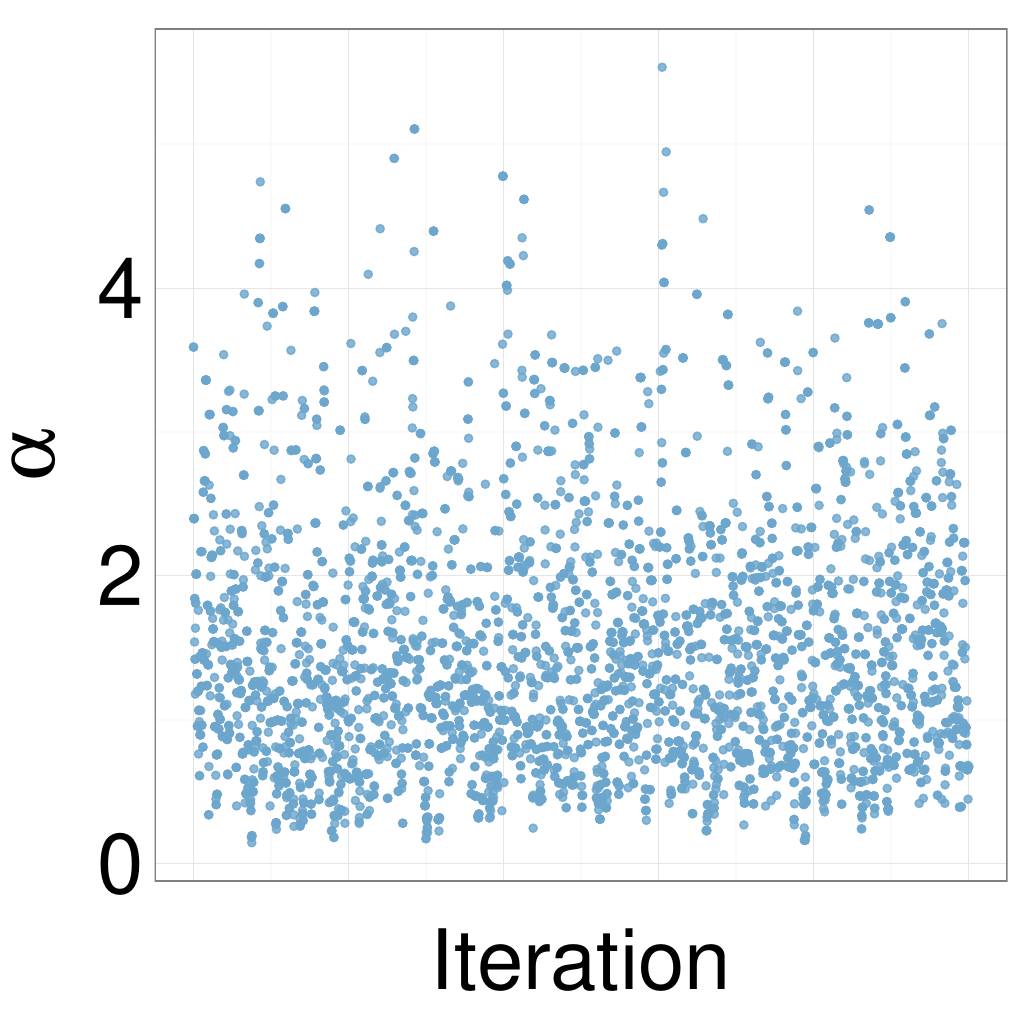}
    \includegraphics [width=0.24\textwidth, angle=0]{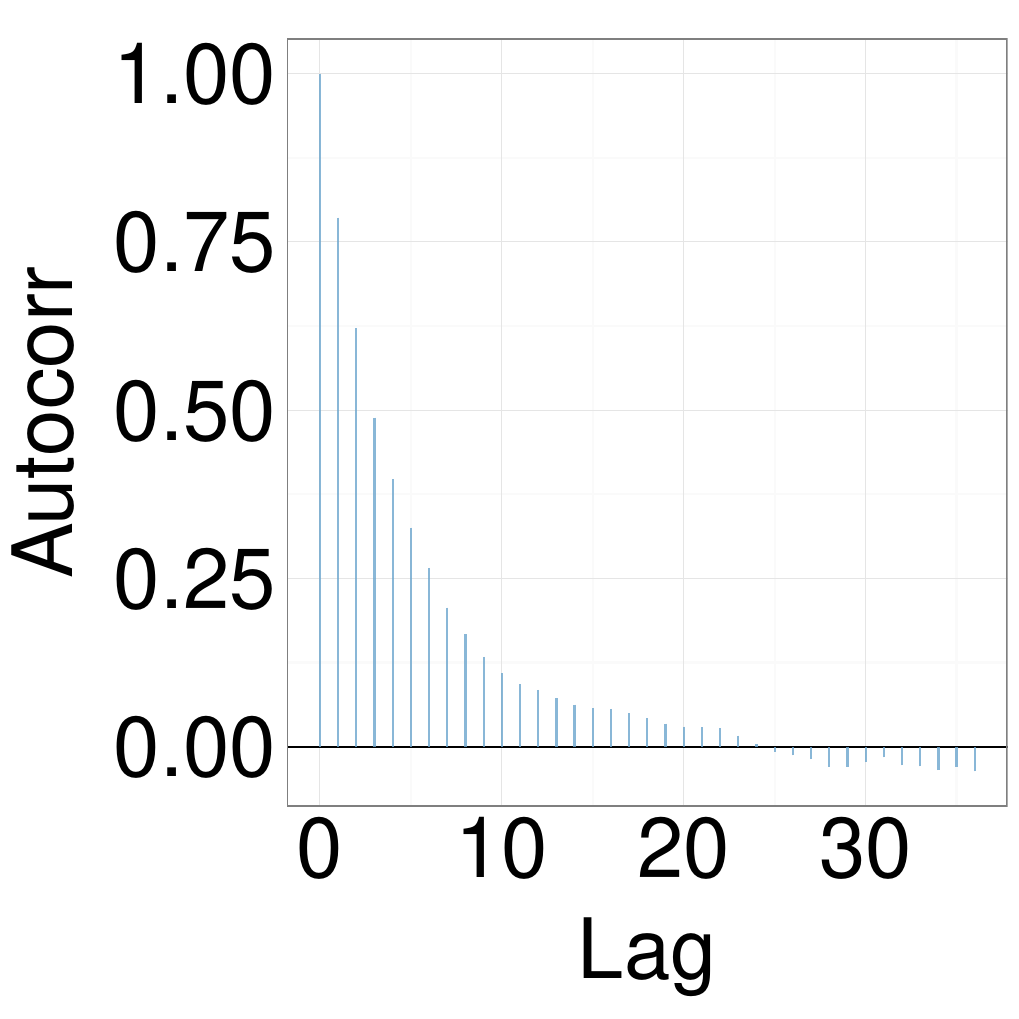}
  \end{minipage}
  \caption{Trace and autocorrelation plots for Gibbs (left two panels) and symmetrized MH (right two panels). All plots are for the synthetic model with $10$ states.}
     \label{fig:TRACE_EXP}
  \end{figure}
  To quantify performance, figure~\ref{fig:ESS_EXP_D10} plots the ESS/sec in the top row, and ESS per 1000 samples in the bottom row for $\alpha$  and $\beta$. 
  The left two columns consider $\alpha$ and $\beta$ for MJPs with $3$ states, and the right two, with $10$ states. 
  We include results for $5$ states in the supplementary material, the conclusions are the same. 
  For each plot, we vary the scale-parameter $\sigma^2$ of the log-normal proposal $q(\vartheta|\theta)$, and look at its effects on ESS/s and ESS. 
  Note that the conditional over parameters given trajectory is not conjugate, so that the Gibbs sampler is really a Metropolis-within-Gibbs (MWG) sampler with an associated lognormal proposal distribution parameterized by $\sigma^2$.

  We see that our symmetrized MH algorithm, shown with blue squares, is significantly more efficient than the baselines over a wide range of $\sigma^2$ values, including the natural choice of $1$.
  Among the baselines, Gibbs (red circles) does better than \naive\ MH (yellow triangles), confirming that the dependency of the Poisson grid on the MJP parameters (as indicated in figure~\ref{fig:POST_EXP}) does indeed slow down mixing. 
 This, coupled with the fact that MWG tends to have higher MH acceptance than \naive\ MH results in Gibbs having superior performance. 
  Our symmetrized MH avoids this problem at no additional computational cost.
  Particle MCMC (black diamonds) has the worst performance. 

  \begin{figure}[H]
  \centering
  \begin{minipage}[!hp]{0.24\linewidth}

  \centering
    \includegraphics [width=0.99\textwidth, angle=0]{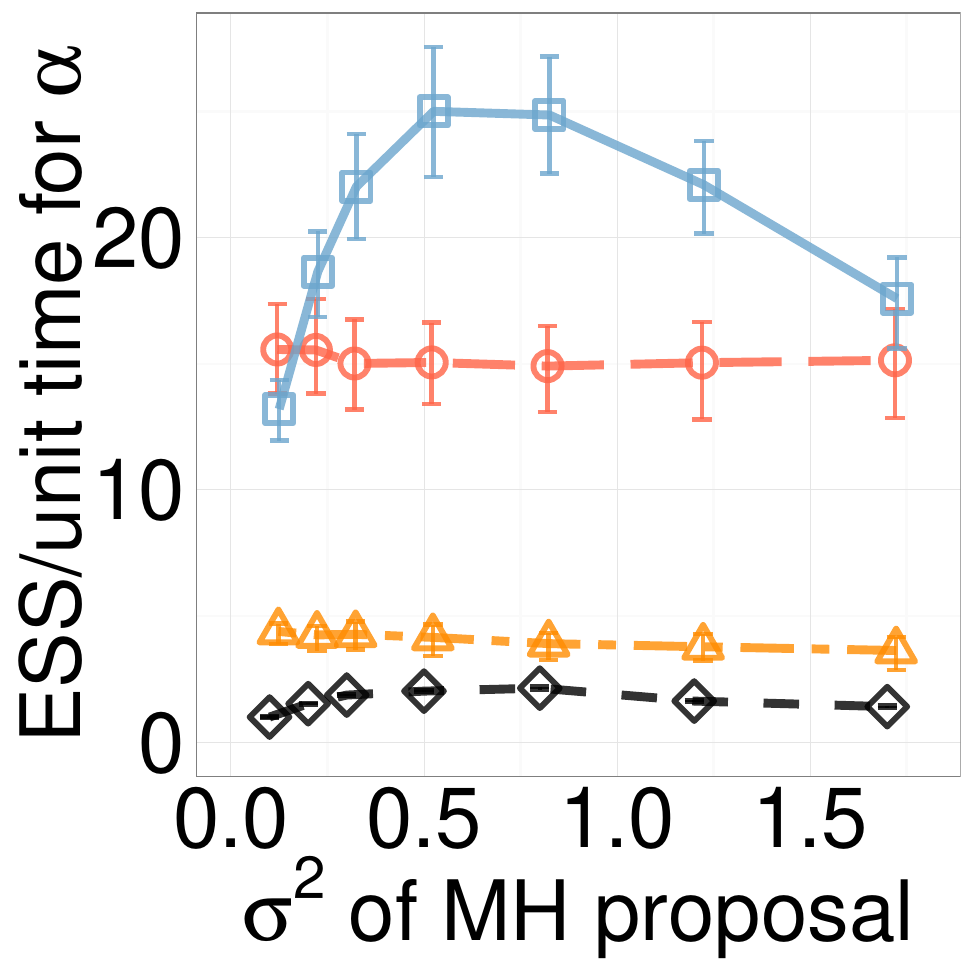}
\end{minipage}
  \begin{minipage}[hp]{0.24\linewidth}
  \centering
    \includegraphics [width=0.99\textwidth, angle=0]{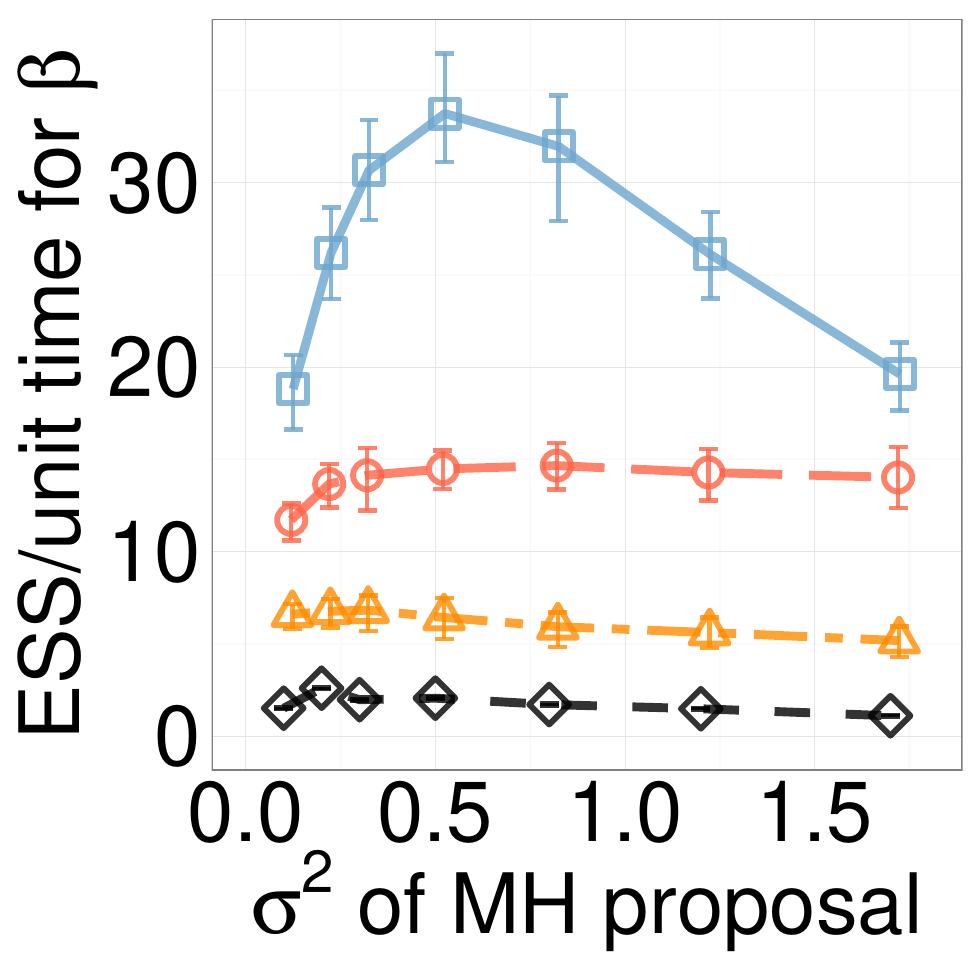}
	\end{minipage}
  \begin{minipage}[hp]{0.24\linewidth}
  \centering
    \includegraphics [width=0.99\textwidth, angle=0]{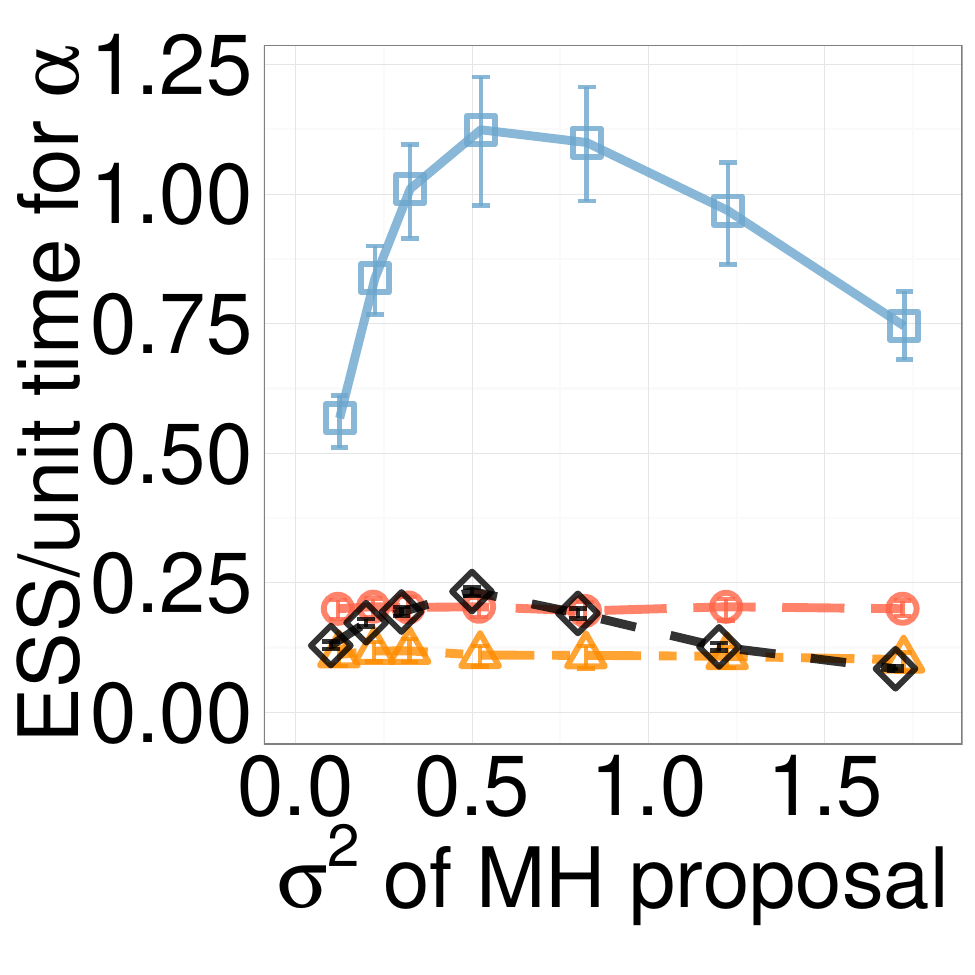}
	\end{minipage}
  \begin{minipage}[hp]{0.24\linewidth}
  \centering
    \includegraphics [width=0.99\textwidth, angle=0]{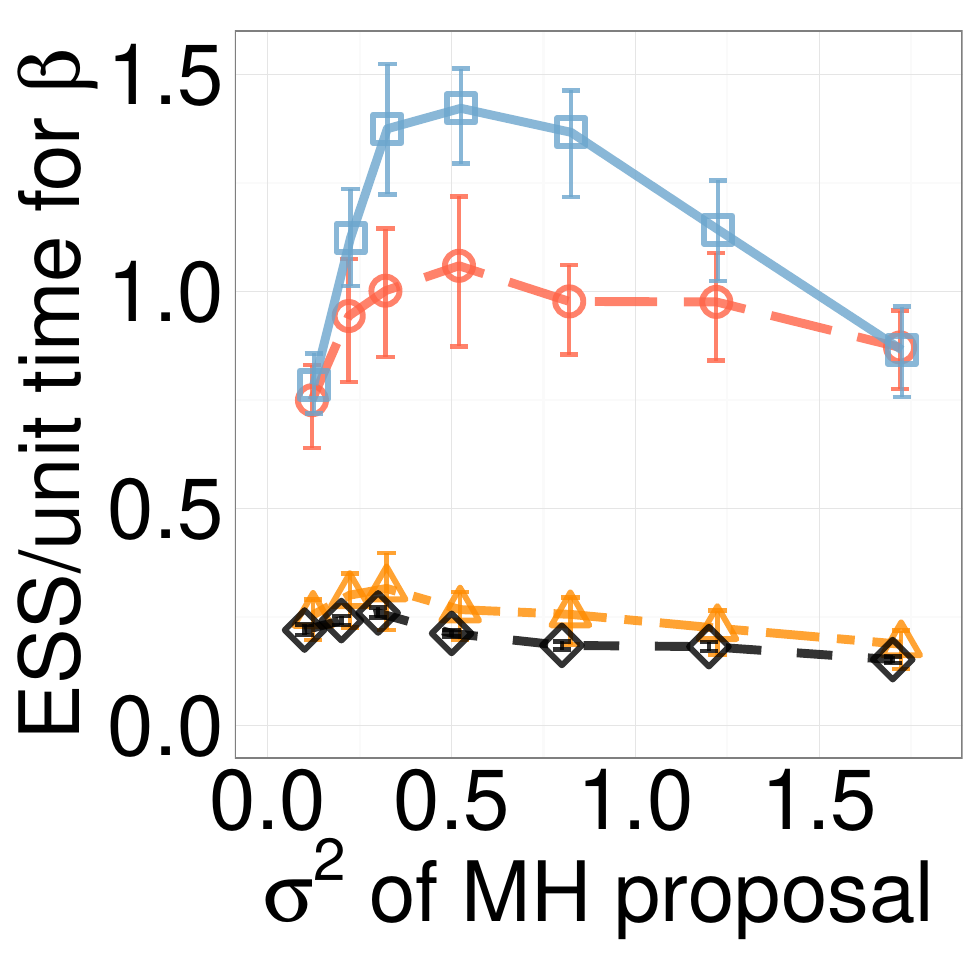}
	\end{minipage}
  \centering
  \begin{minipage}[!hp]{0.24\linewidth}
  \centering
    \includegraphics [width=0.99\textwidth, angle=0]{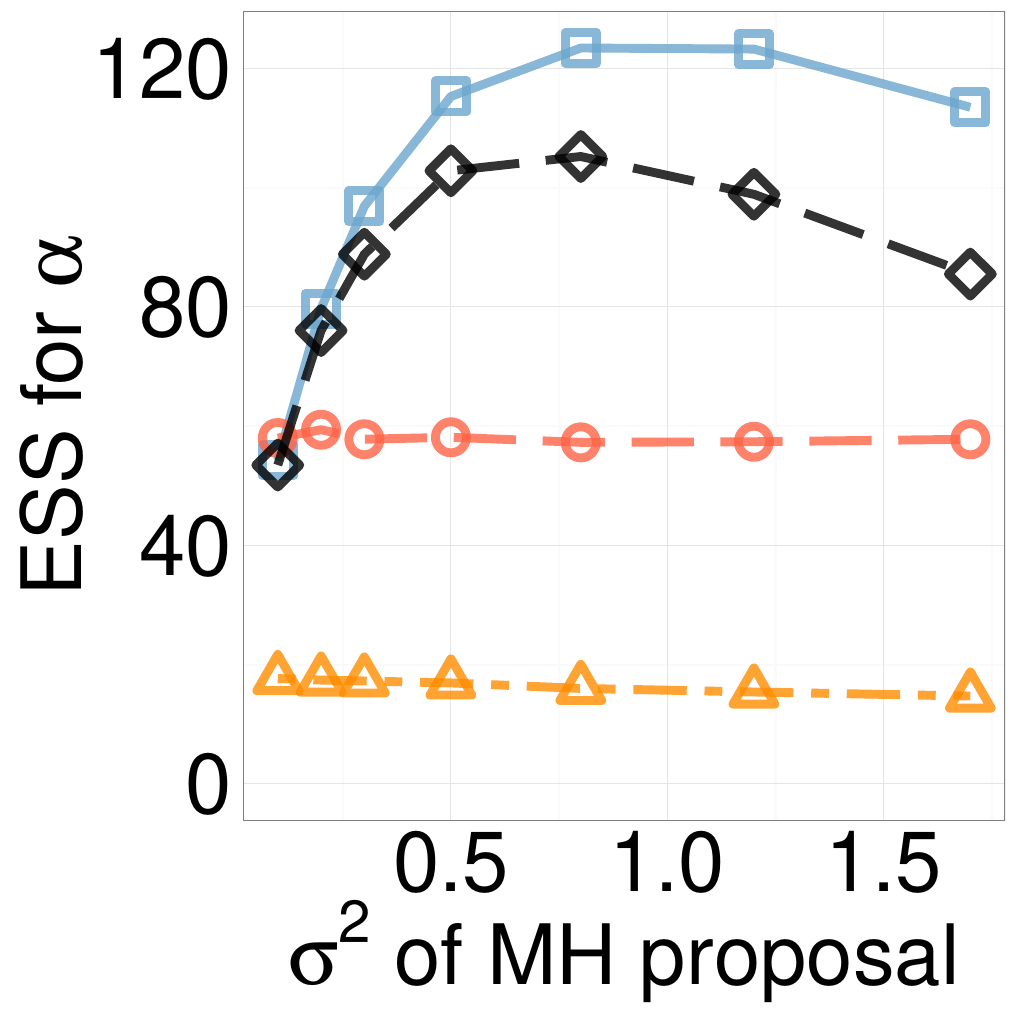}
\end{minipage}
  \begin{minipage}[hp]{0.24\linewidth}
  \centering
    \includegraphics [width=0.99\textwidth, angle=0]{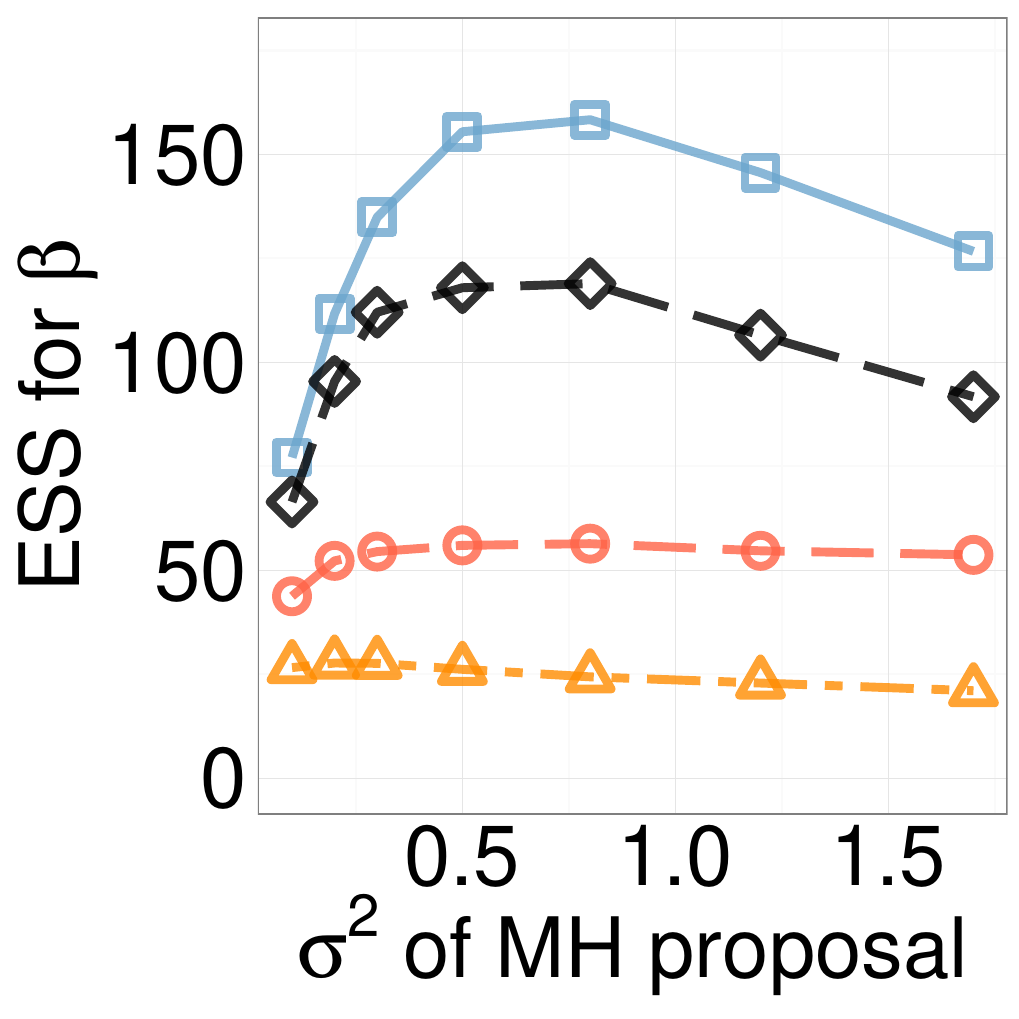}
	\end{minipage}
  \begin{minipage}[hp]{0.24\linewidth}
  \centering
    \includegraphics [width=0.99\textwidth, angle=0]{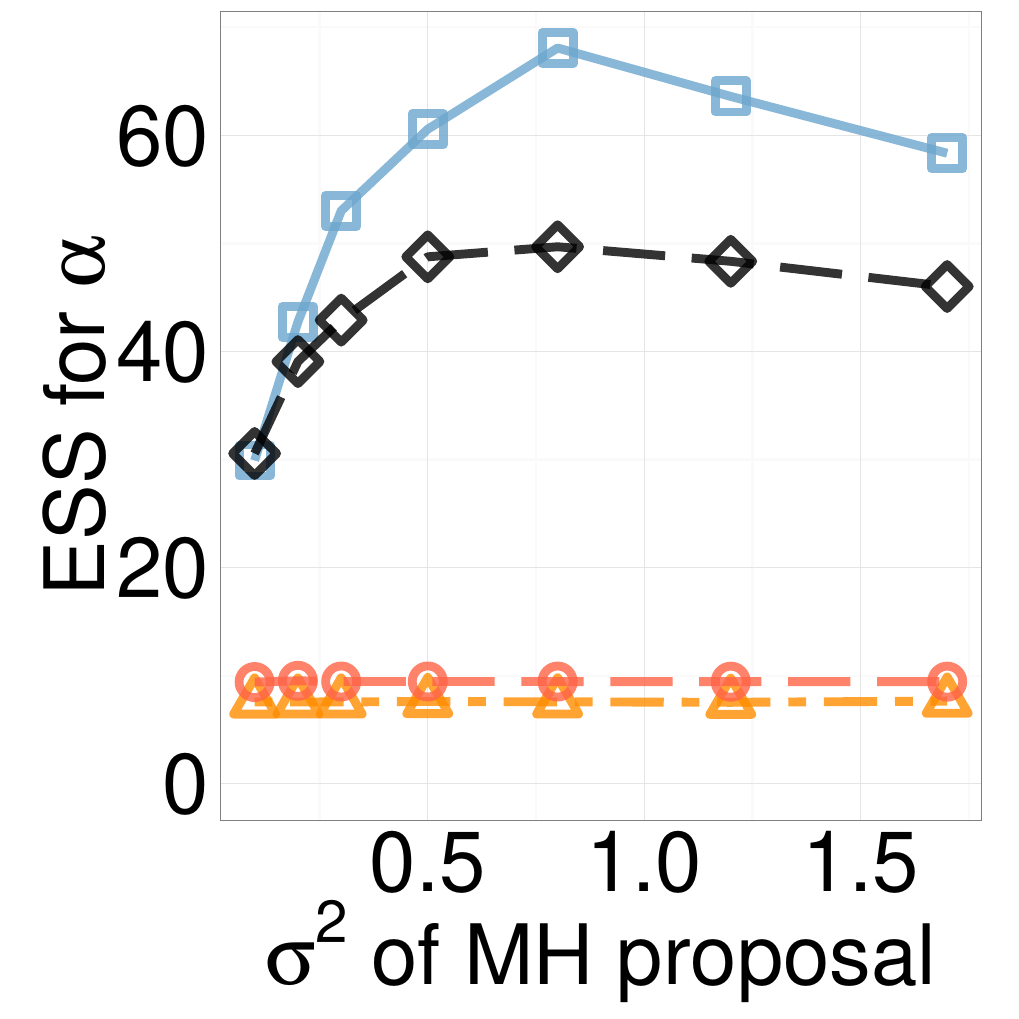}
	\end{minipage}
  \begin{minipage}[hp]{0.24\linewidth}
  \centering
    \includegraphics [width=0.99\textwidth, angle=0]{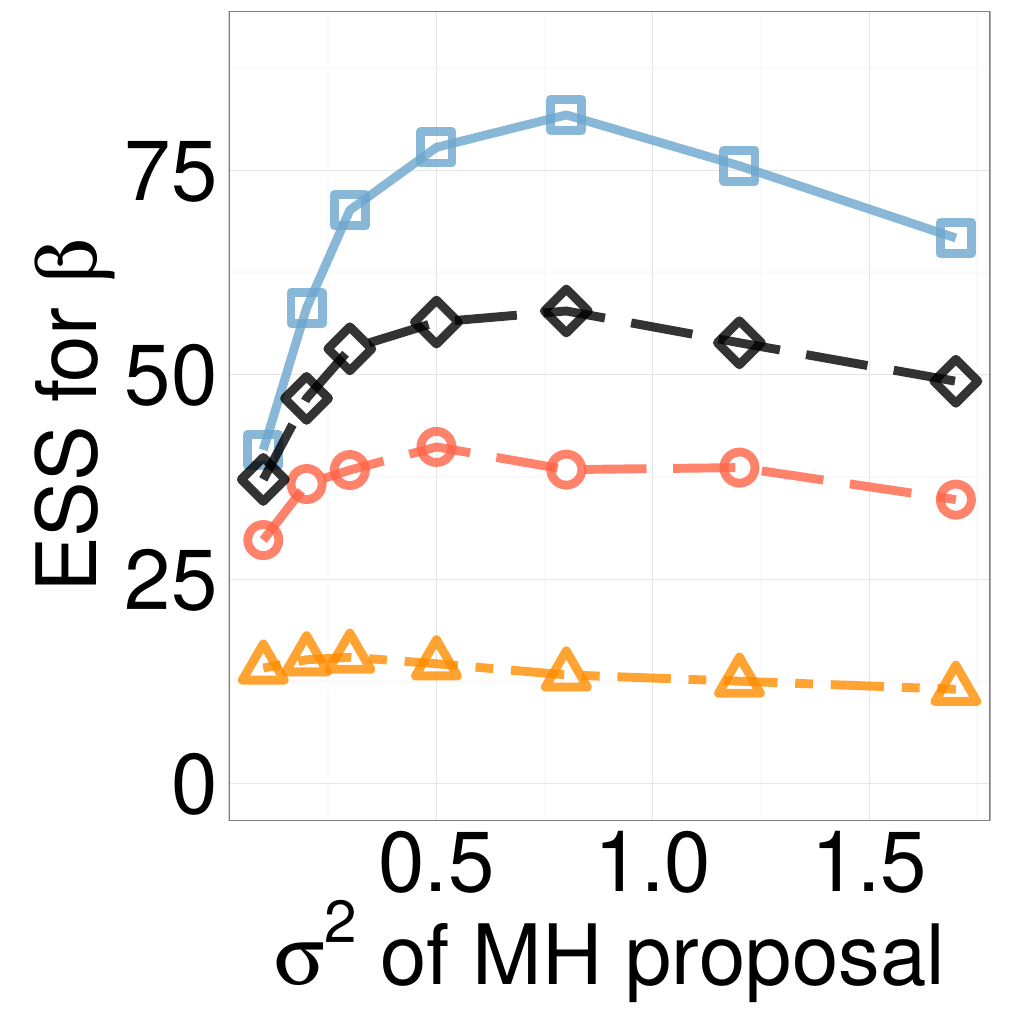}
	\end{minipage}
  \caption{ESS/sec (top row) and ESS per 1000 samples (bottom row) of different algorithms on the synthetic  model. The left two panels are $\alpha$ and $\beta$ for 3 states, the right two, for 10 states. Blue squares, yellow triangles, red circles and black diamonds are the symmetrized MH, \naive\ MH, Gibbs and particle MCMC algorithm.} 
     \label{fig:ESS_EXP_D10}
  \end{figure}

Among the three setting of our algorithm, the simple additive setting
 ({squares}) does best, slightly better than the {max-of-max} setting (circles). 
The {additive setting with a multiplicative factor of $1.5$} ({triangles}) does worse than both the {additive choice with $\kappa=1$ and the max-of-max choice} but still better than the other algorithms. 
The results in figure~\ref{fig:ESS_EXP_D10} for 10 states shows that ESS is slightly lower, and thus mixing is slightly poorer for all samplers. This, coupled with greater computational cost per iteration results in a drop in ESS/s across all algorithms, compared with 3 states. Our symmetrized MH algorithm still outperforms the other samplers, and we observe the same pattern of relative performance for different settings of our sampler (figure~\ref{fig:mhESS_EXP}), with a uniformization rate of  $\Omega(\theta,\vartheta) = \max_s A_s(\theta) + \max_s A(\vartheta)$ giving the best performance. 

  \begin{figure}[H]
  \centering
  \begin{minipage}[!hp]{0.24\linewidth}
  \centering
    \includegraphics [width=0.99\textwidth, angle=0]{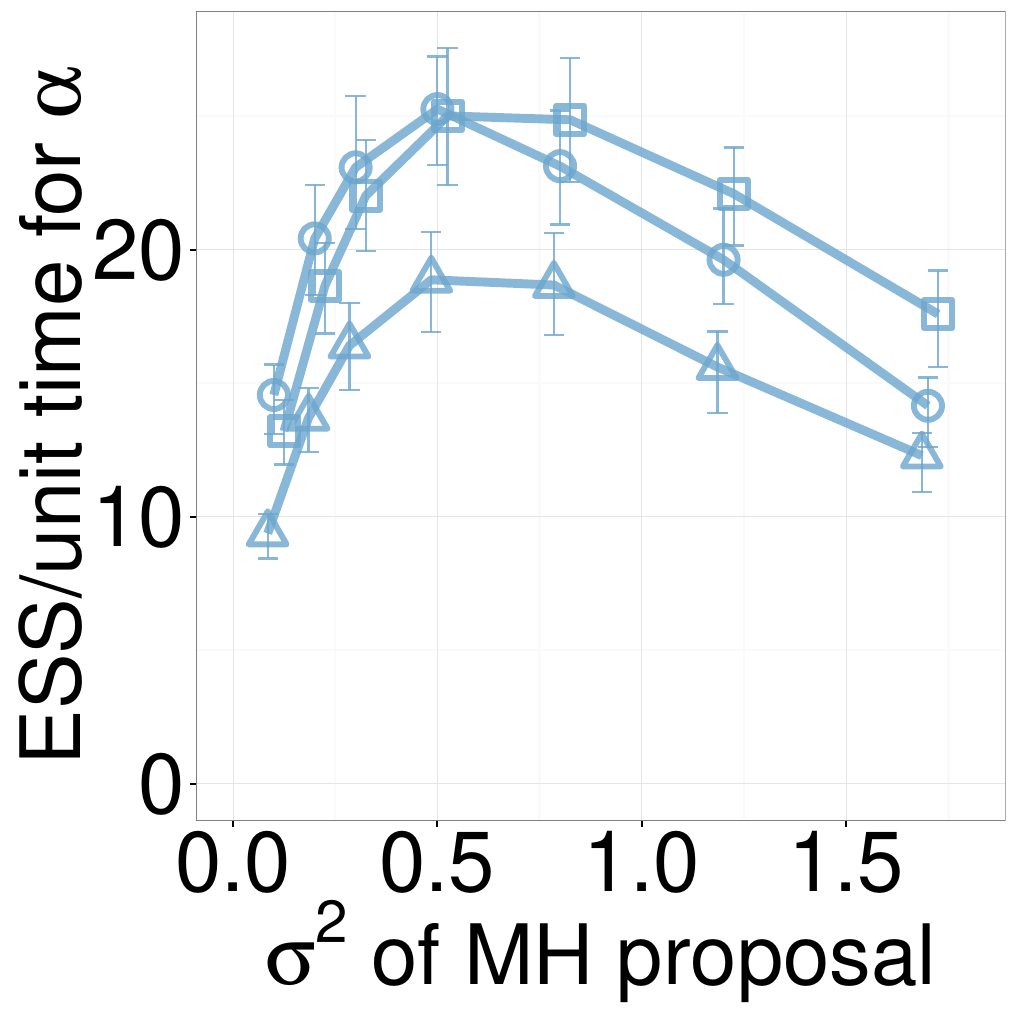}
\end{minipage}
  \begin{minipage}[hp]{0.24\linewidth}
  \centering
    \includegraphics [width=0.99\textwidth, angle=0]{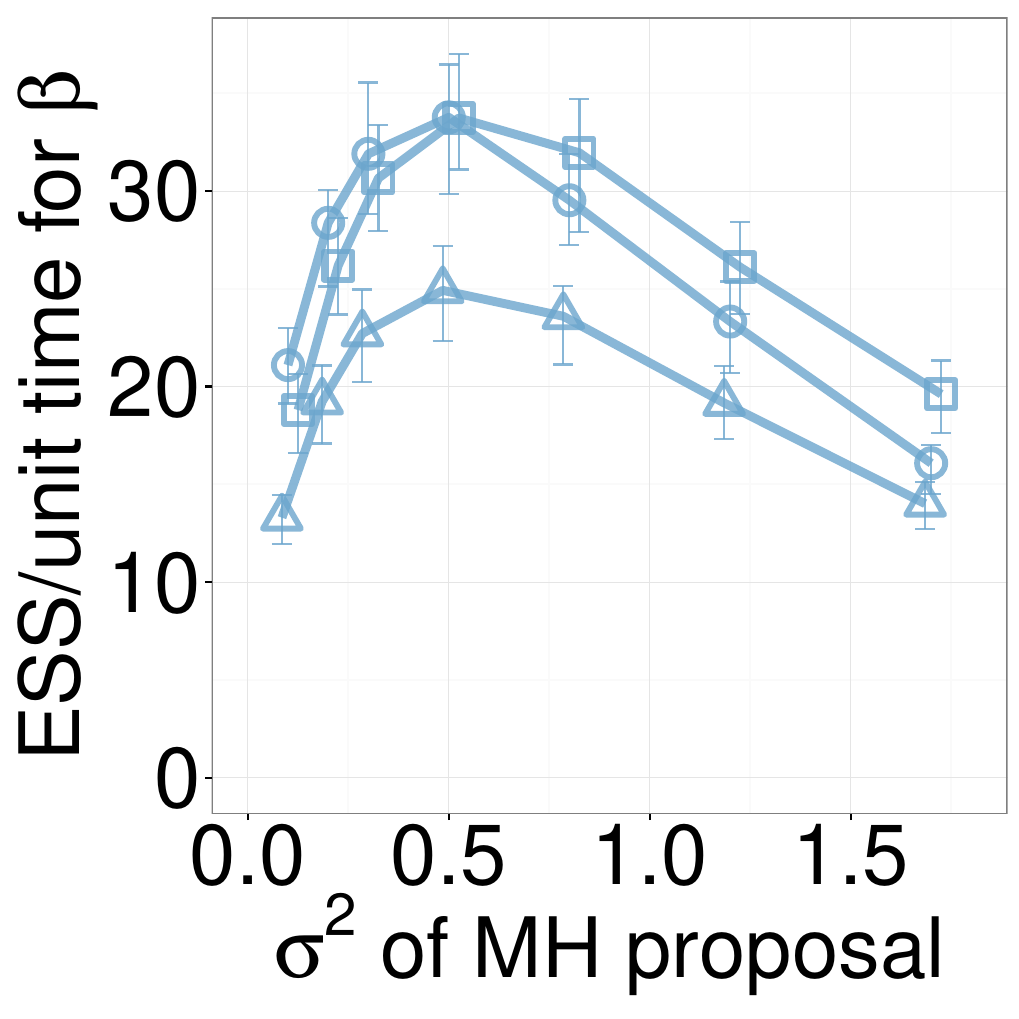}
	\end{minipage}
  \begin{minipage}[hp]{0.24\linewidth}
  \centering
    \includegraphics [width=0.99\textwidth, angle=0]{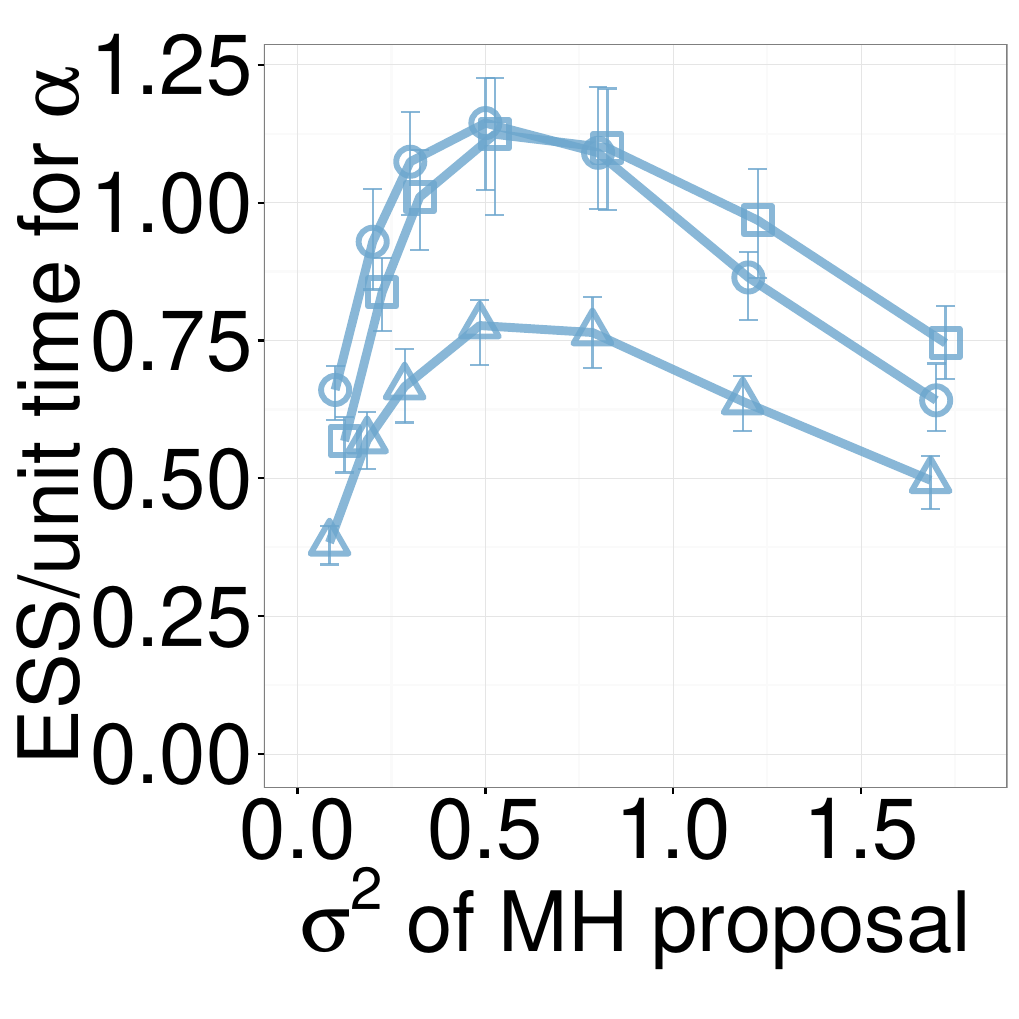}
	\end{minipage}
  \begin{minipage}[hp]{0.24\linewidth}
  \centering
    \includegraphics [width=0.99\textwidth, angle=0]{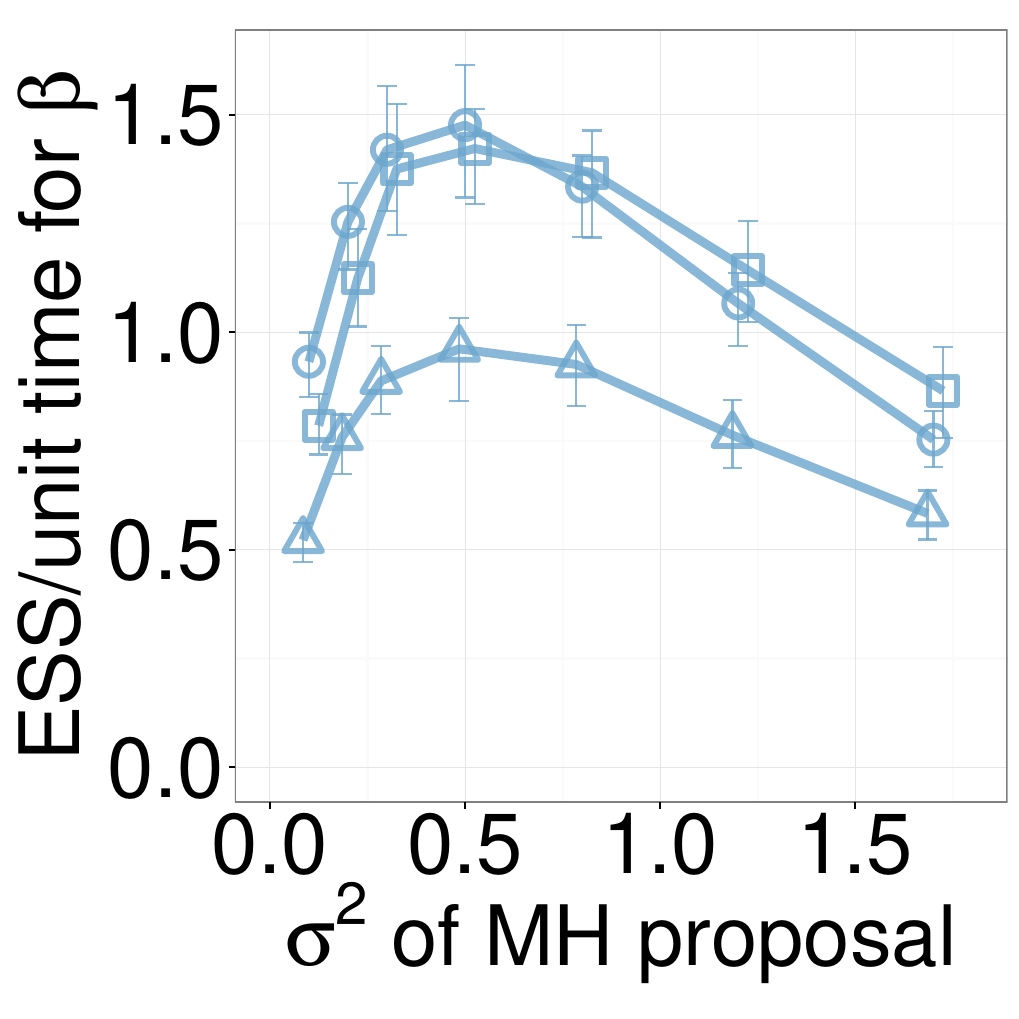}
	\end{minipage}
    \caption{ESS/sec of symmetrized MH for different choices of $\Omega(\theta,\vartheta)$ for the synthetic model. The left two panels are $\alpha$ and $\beta$ for 3 states, and the right two for 10 states. Squares, circles and trianges correspond to $\Omega(\theta,\vartheta)$ set to $(\max_s A_s(\theta) + \max_s A_s(\vartheta))$, $\max(\max_s A_s(\theta), \max_s A_s(\vartheta))$ and  $1.5(\max_s A_s(\theta) + \max_s A_s(\vartheta))$.}
     \label{fig:mhESS_EXP}
  \end{figure}



  \begin{figure}[H]    
  \centering
  \begin{minipage}[hp]{0.24\linewidth}
  \centering
    \includegraphics [width=0.99\textwidth, angle=0]{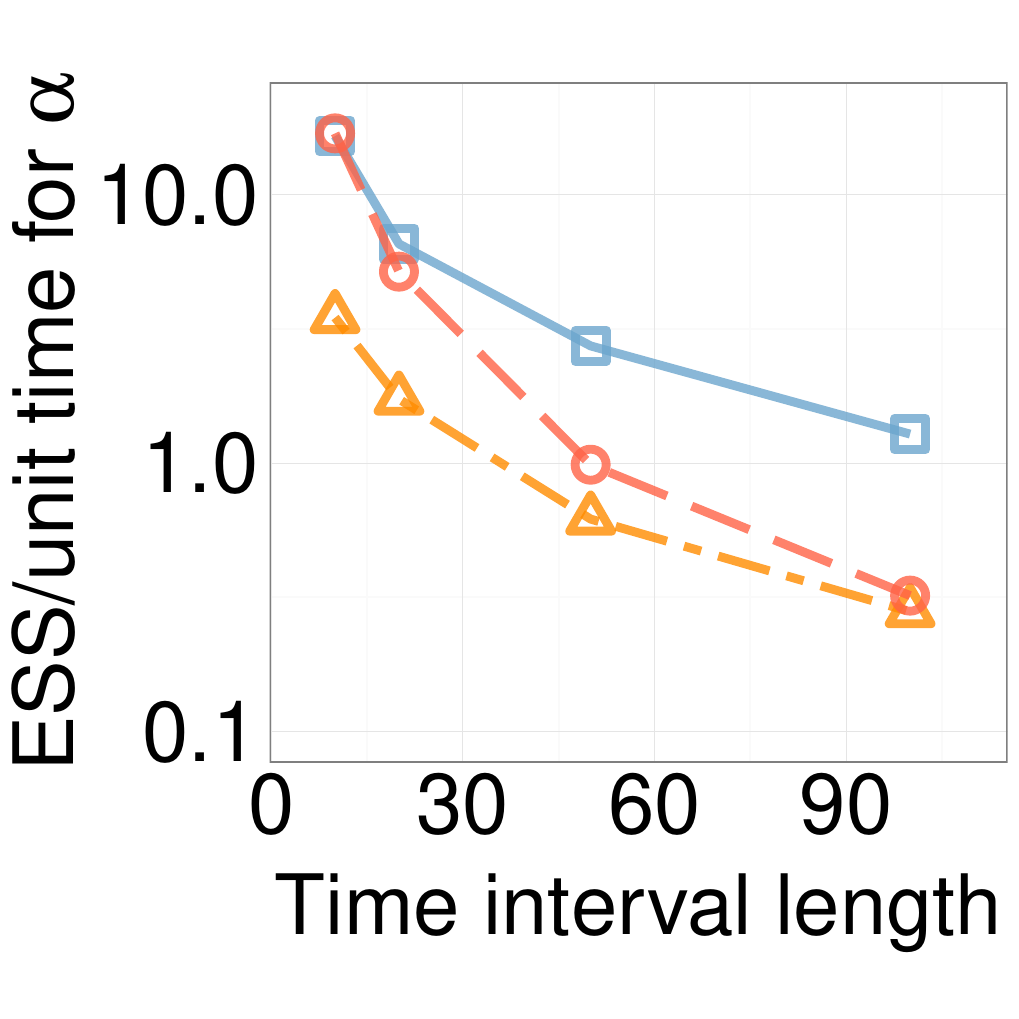}
    \end{minipage}
  \begin{minipage}[hp]{0.24\linewidth}
  \centering
    \includegraphics [width=0.99\textwidth, angle=0]{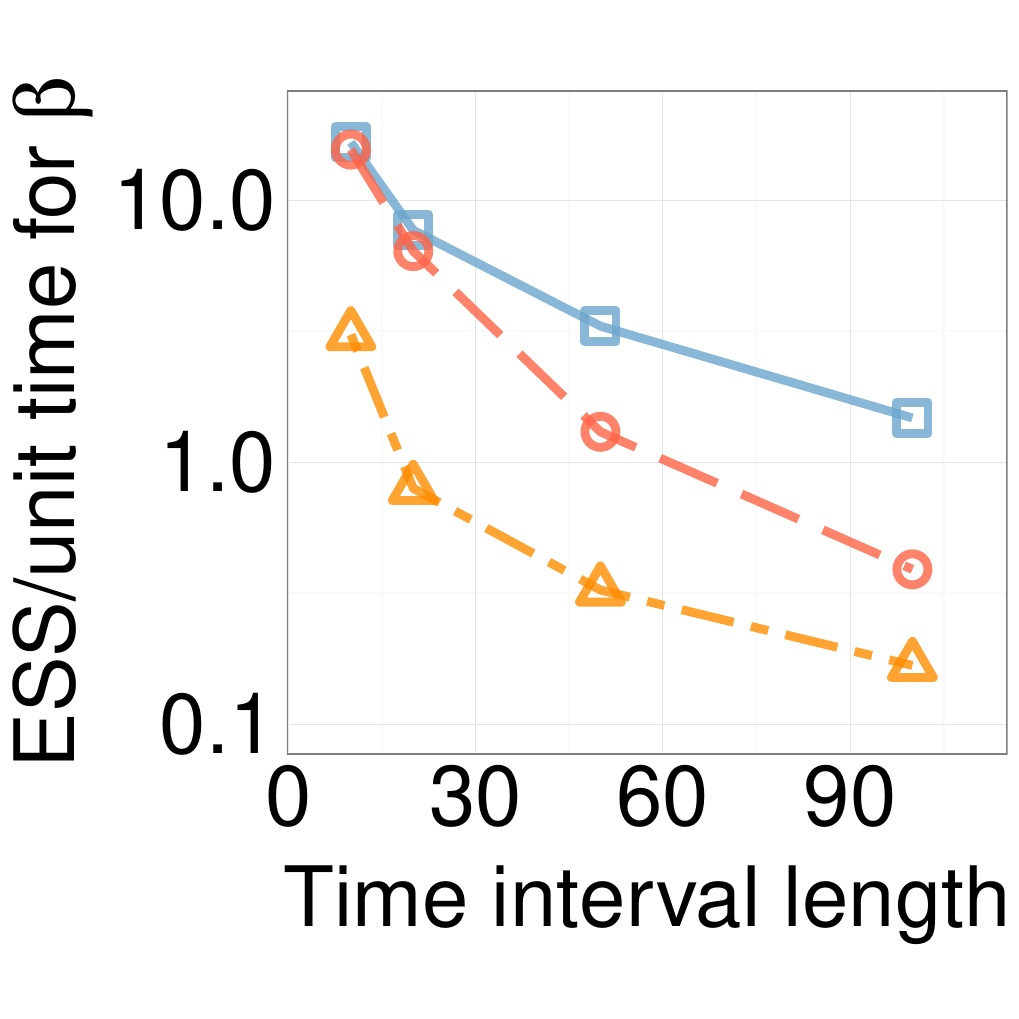}
  \end{minipage}
  \begin{minipage}[hp]{0.24\linewidth}
  \centering
    \includegraphics [width=0.99\textwidth, angle=0]{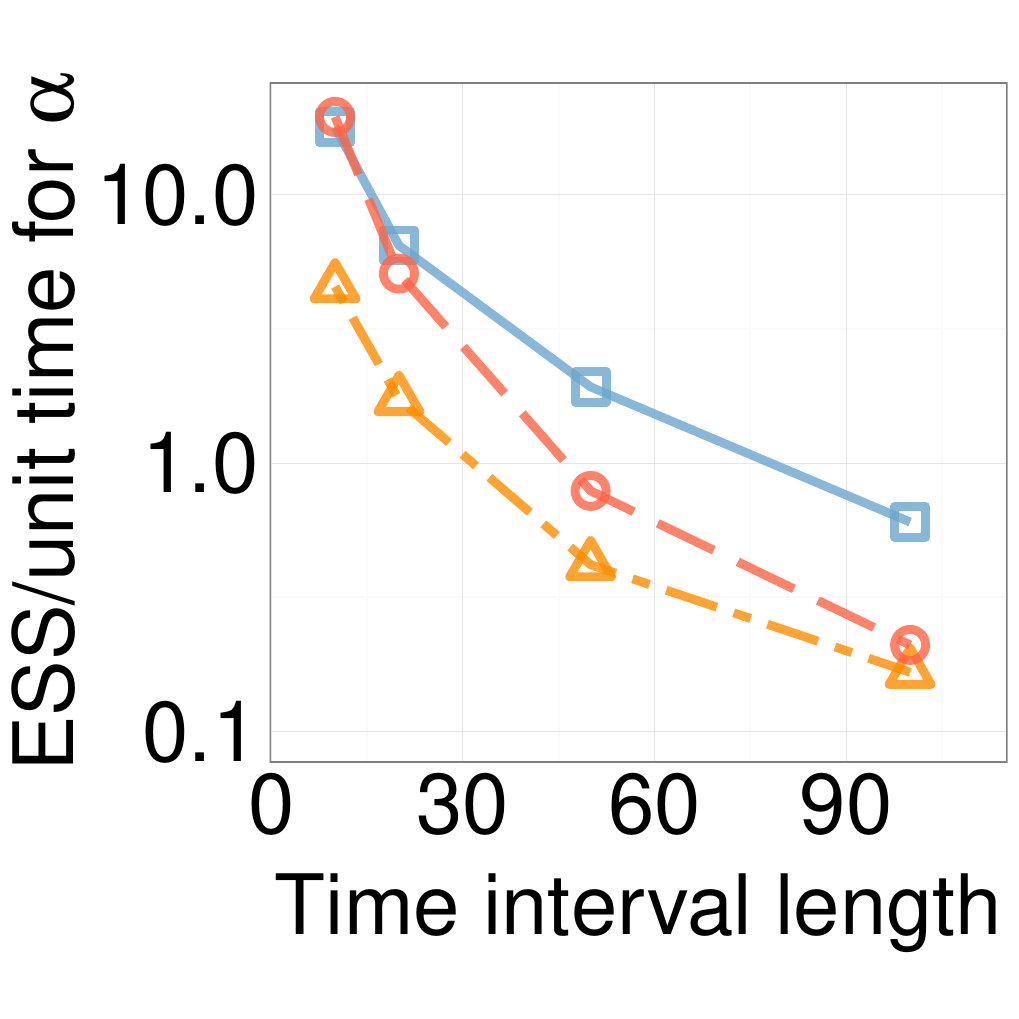}
      \end{minipage}
  \begin{minipage}[hp]{0.24\linewidth}
  \centering
    \includegraphics [width=0.99\textwidth, angle=0]{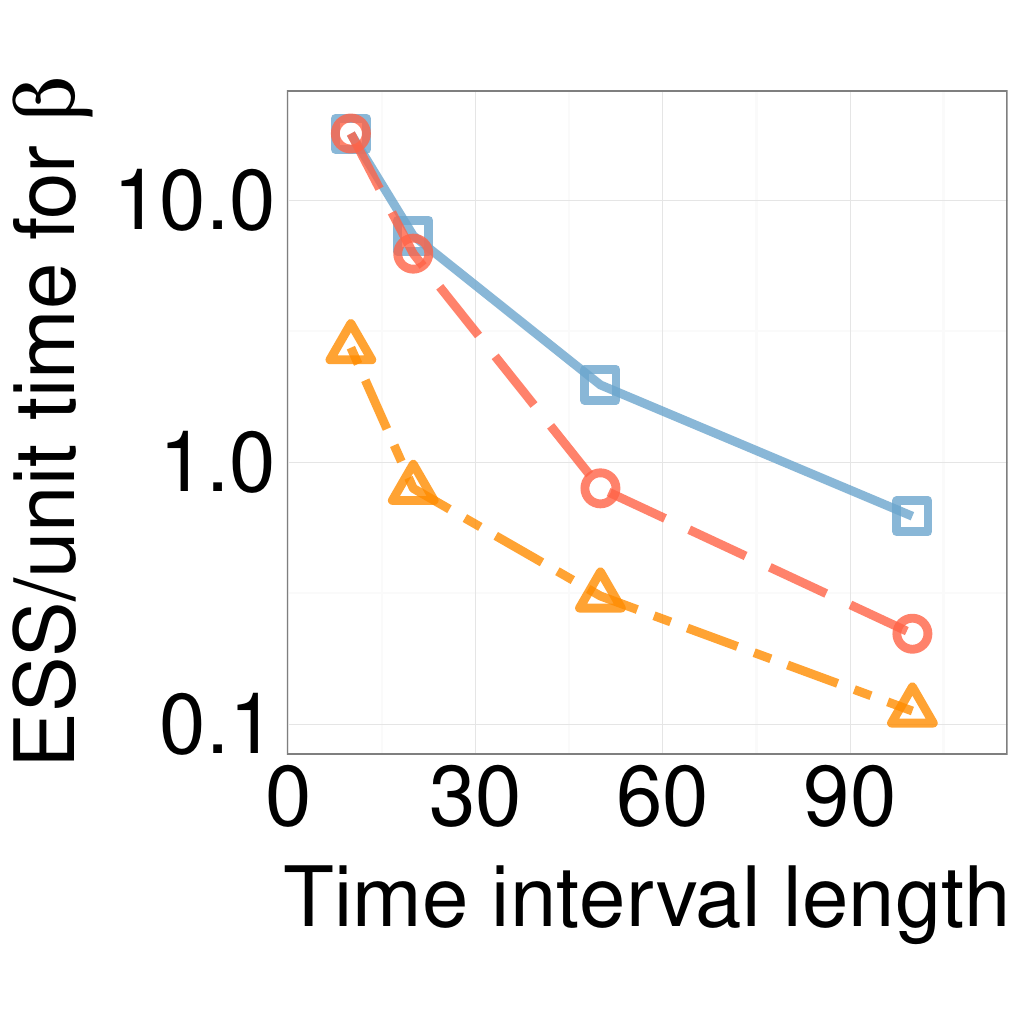}
  \end{minipage}
  \caption{Time interval vs ESS/sec for the synthetic MJP. The left two plots are for $\alpha$ and $\beta$, with the number of observations fixed; in the right two, this grows linearly with the interval length. {Blue squares, yellow triangles and red circles curves} are the symmetrized MH, \naive\ MH and Gibbs algorithm.
  }
     \label{fig:TSS}
  \end{figure}
In figure~\ref{fig:TSS}, we plot ESS per unit time as the observation interval $t_{end}$ increases. 
We consider the 3-state MJP, and as before there are $19$ observations uniformly located over a time interval $(0,t_{end})$.
We consider four settings, with $t_{end}$ equal to $10, 20, 50, 100$. 
For each, we compare our symmetrized MH sampler (with $\kappa$ set to $1$) with the \naive\ MH and Gibbs samplers (with $\kappa$ set to $2$). 
While the performance of the Gibbs sampler is comparable with our symmetrized algorithm for the smallest value of $t_{end}$, its performance is considerably worse for longer time-intervals.  
This is the limitation of Gibbs sampling that motivated this work: when updating $\theta$ conditioned on the MJP trajectory, longer time intervals result in stronger coupling between MJP path and parameters (figure~\ref{fig:hist}), and thus poorer mixing. 
The performance of the \naive\ sampler demonstrates that it is not sufficient just to integrate out the state values of the trajectory, we also have to get around the coupling between the Poisson grid and the parameters. 
Our symmetrized MH-algorithm allows this. 

To the right of figure~\ref{fig:TSS}, we plot results from a similar experiment. 
Now, instead of keeping the number of measurements fixed as we increase the observation interval, we keep the observation {\em rate} fixed at one observation every unit interval of time, so that longer observation intervals have larger number of observations. 
The results are similar to the previous case: Gibbs sampling performs well for small observation intervals, with performance degrading sharply for larger intervals. 

  \subsection{The Jukes and Cantor (JC69) model}~
  The Jukes and Cantor (JC69) model~\citep{jukescantor69} is a popular model of DNA nucleotide substitution.  
  We write its state space as $\{0, 1, 2, 3\}$, representing the four nucleotides $\{A, T, C, G\}$.  
  The model has a single parameter $\alpha$, representing the rate at which the system transitions between any pair of states. 
  Thus, the rate matrix $A$ is given by $A_i = -A_{i,i} = 3\alpha, A_{i, j} = \alpha,i \neq j.$
We place a Gamma$(3,2)$ prior on the parameter $\alpha$.
Figures~\ref{fig:ESS_JC}(b) and (c) compare different samplers: we again see that the symmetrized MH samplers comprehensively outperforms all others.
Part of the reason why the difference is so dramatic here is because now a {\em single} paramter $\alpha \defeq \theta$ defines the transition matrix, implying a stronger coupling between MJP path and parameter. 
We point out that for Gibbs sampling, the conditional distribution over $\theta$ is conjugate to the Gamma prior. We can thus simulate directly from this distribution without any MH proposal (hence its performance remains fixed along the x-axis). 
Despite this, its performance is worse than our symmetrized algorithm.
Particle MCMC performs worse than all the algorithms, and we do not include it in our plots.
Figure~\ref{fig:ESS_JC}(d) compares different settings of $\Omega(\theta,\vartheta)$ for our sampler: again, the simple additive setting $\Omega(\theta,\vartheta) = \max_s A_s(\theta) + \max_s A_s(\vartheta)$ does best.

   \begin{figure}
  \begin{minipage}[!hp]{0.24\linewidth}
	\centering
    \includegraphics[width=0.99\textwidth, angle=0]{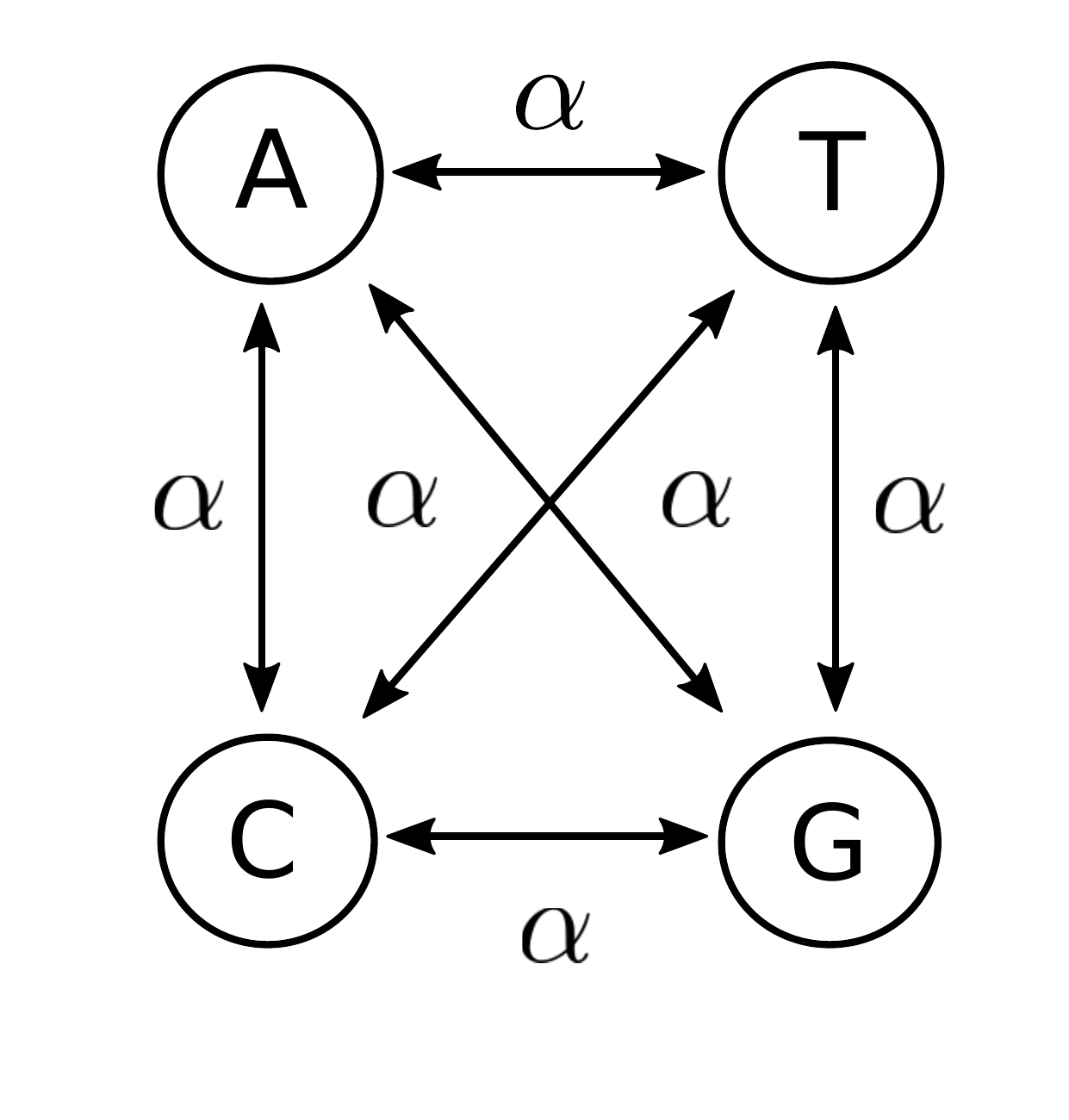} 
	\end{minipage}
	  \begin{minipage}[!hp]{0.24\linewidth}
	\centering
    \includegraphics[width=0.99\textwidth, angle=0]{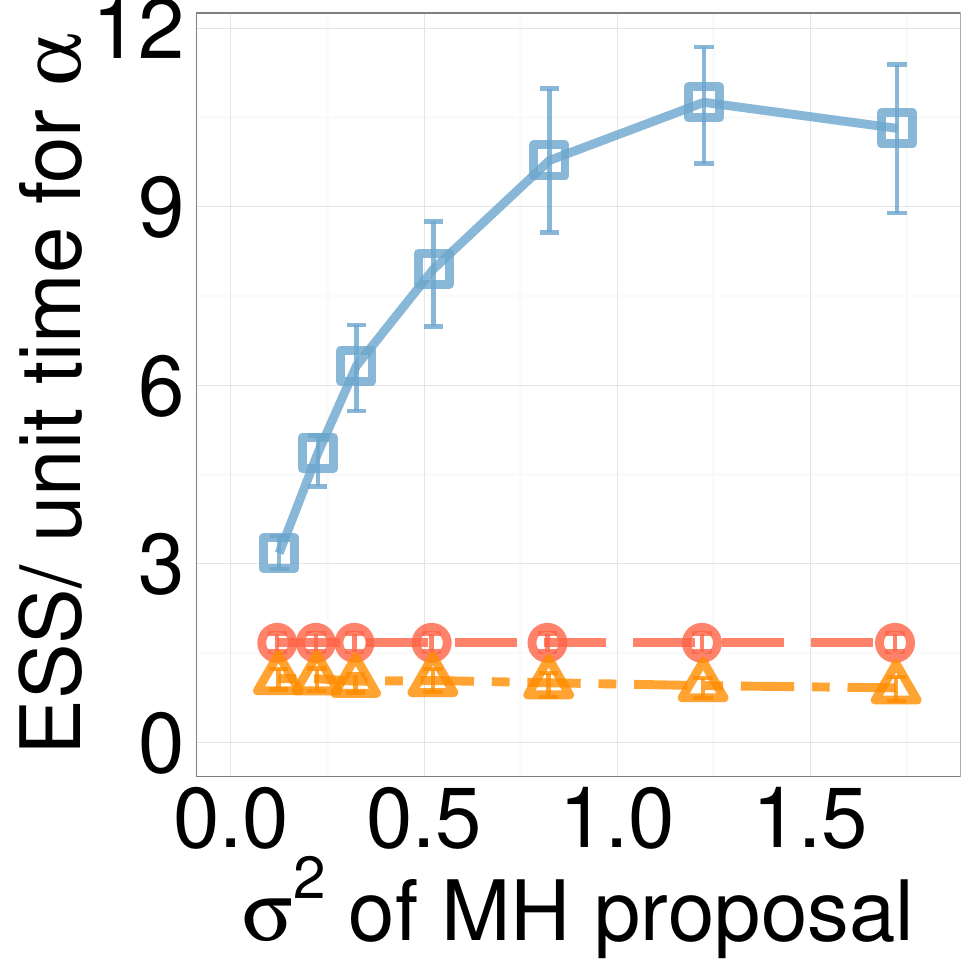}
  \end{minipage}
  \begin{minipage}[!hp]{0.24\linewidth}
	\centering
    \includegraphics[width=0.99\textwidth, angle=0]{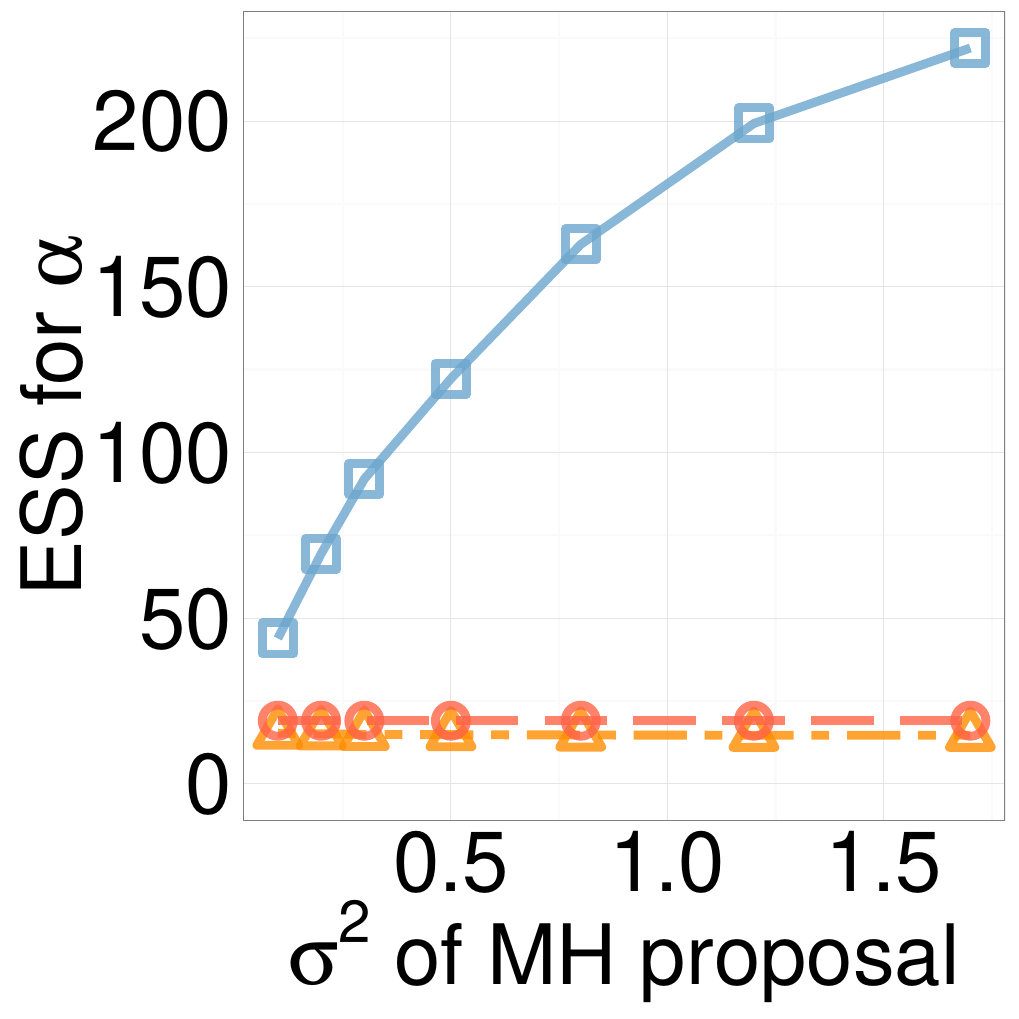}
\end{minipage}
  \begin{minipage}[!hp]{0.24\linewidth}
	\centering
    \includegraphics[width=0.99\textwidth, angle=0]{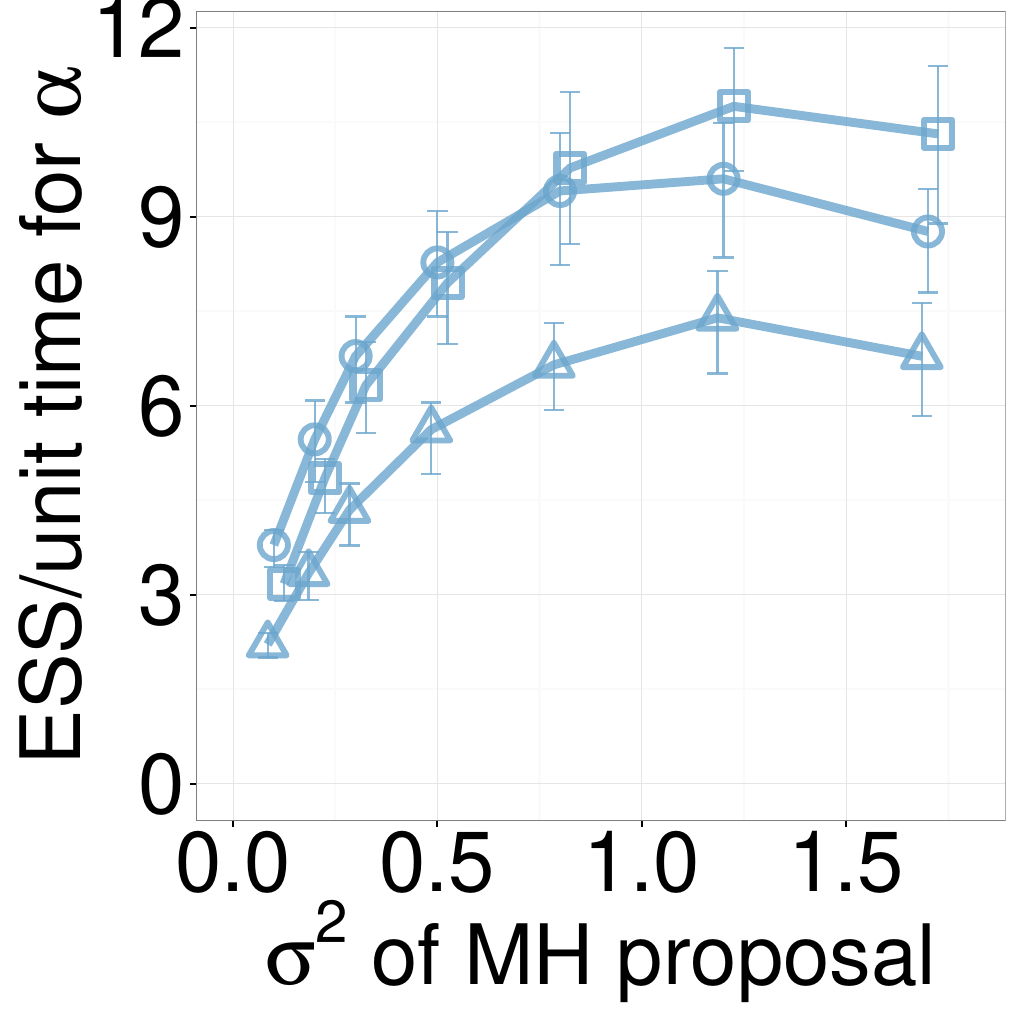}
\end{minipage}
  \caption{The leftmost panel is the Jukes-Cantor (JC69) model. The next two panels from left to right are ESS/sec and ESS per 1000 samples for this. 
    Blue squares, yellow triangles and red circles are the symmetrized MH, \naive\ MH and Gibbs algorithm.
    The rightmost panel looks at different settings of the symmetrized MH algorithm, with squares, circles and triangles corresponding to 
$\Omega(\theta,\vartheta)$ set to $(\max_s A_s(\theta) + \max_s A_s(\vartheta))$, $\max(\max_s A_s(\theta), \max_s A_s(\vartheta))$ and  $1.5(\max_s A_s(\theta) + \max_s A_s(\vartheta))$.
     \label{fig:ESS_JC}
   }
 \end{figure}

  \begin{figure}[H]
  \centering

  \begin{minipage}[!hp]{0.99\linewidth}
  	\centering
    \includegraphics [width=0.24\textwidth, angle=0]{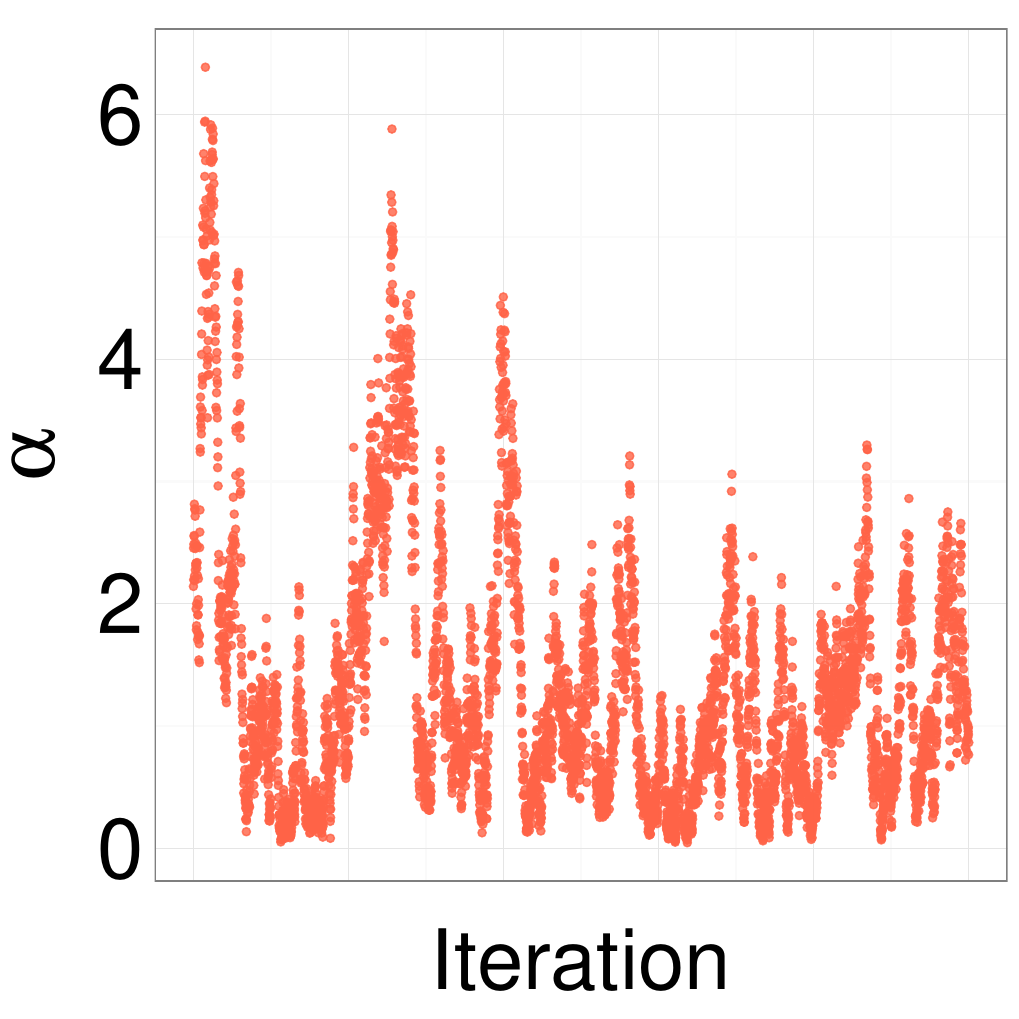}
    \includegraphics [width=0.24\textwidth, angle=0]{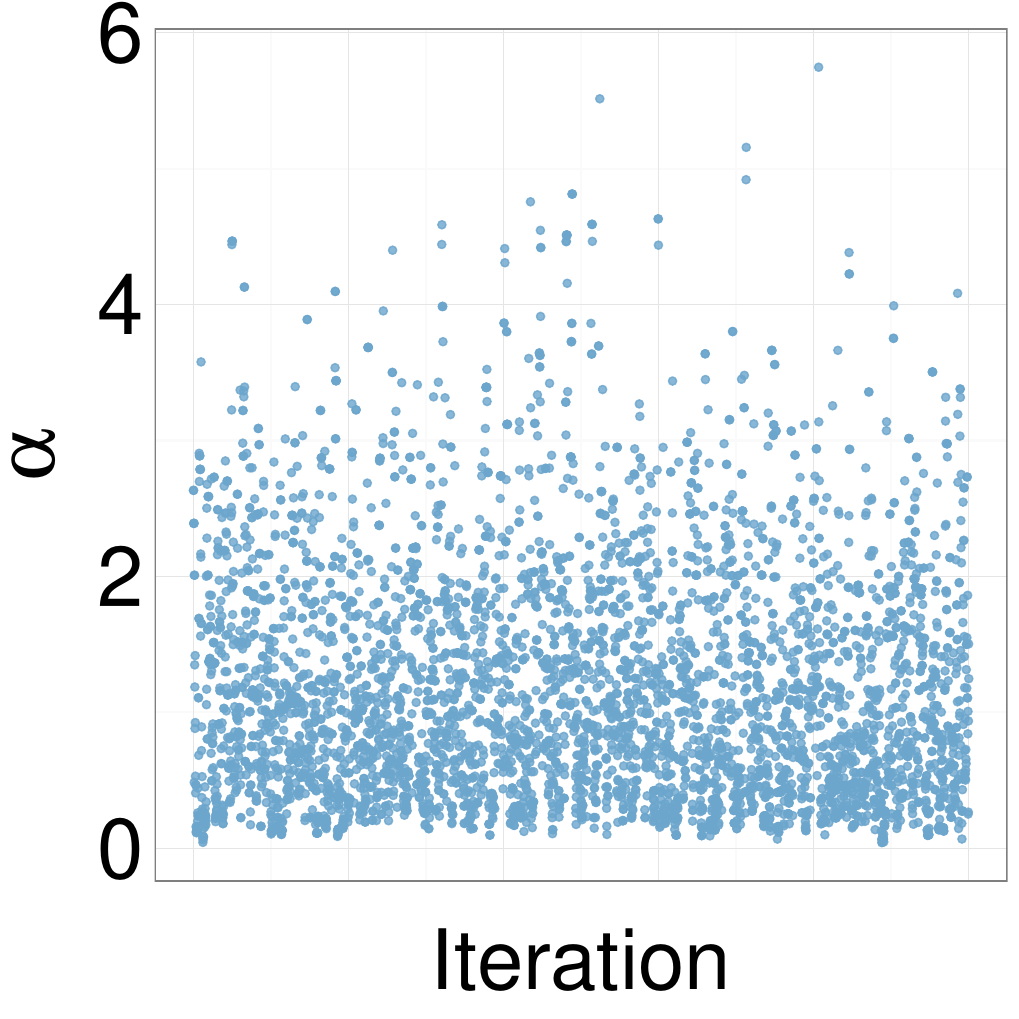}
    \includegraphics [width=0.24\textwidth, angle=0]{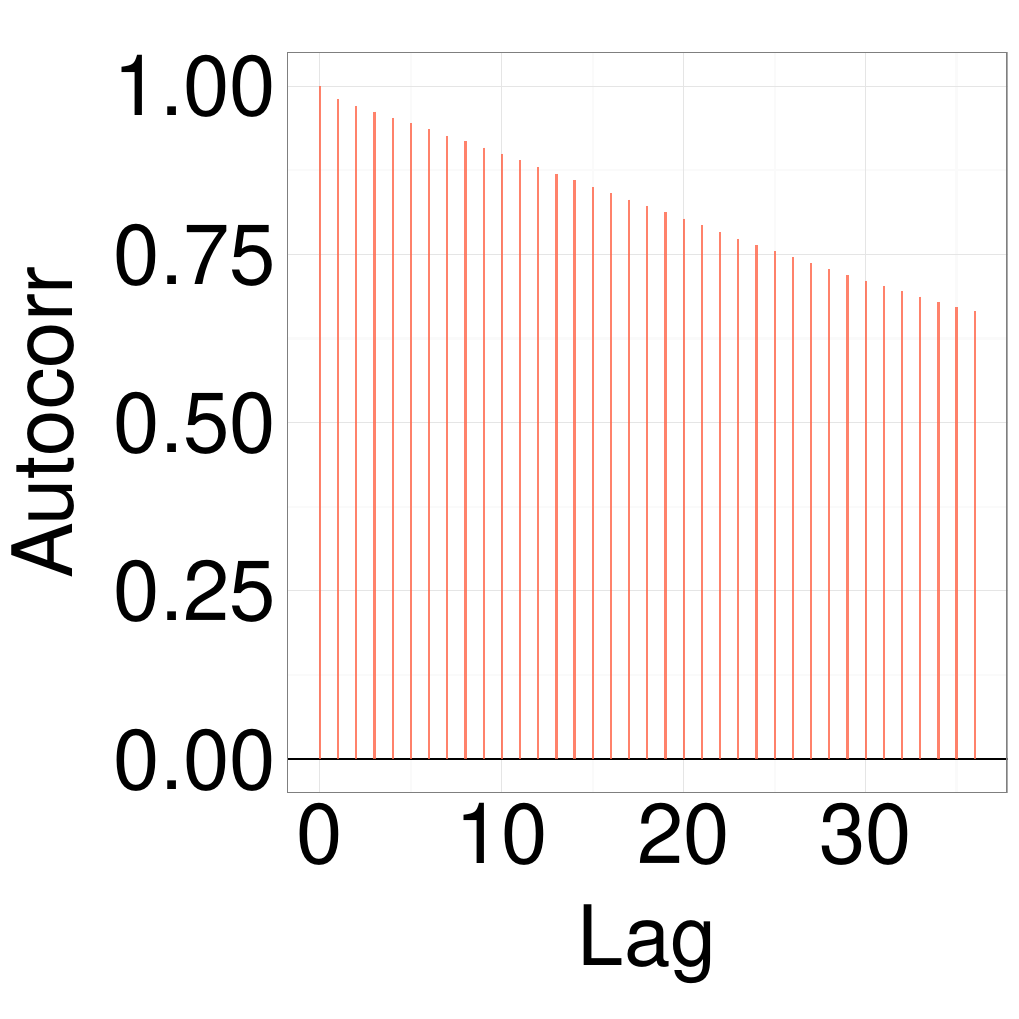}
    \includegraphics [width=0.24\textwidth, angle=0]{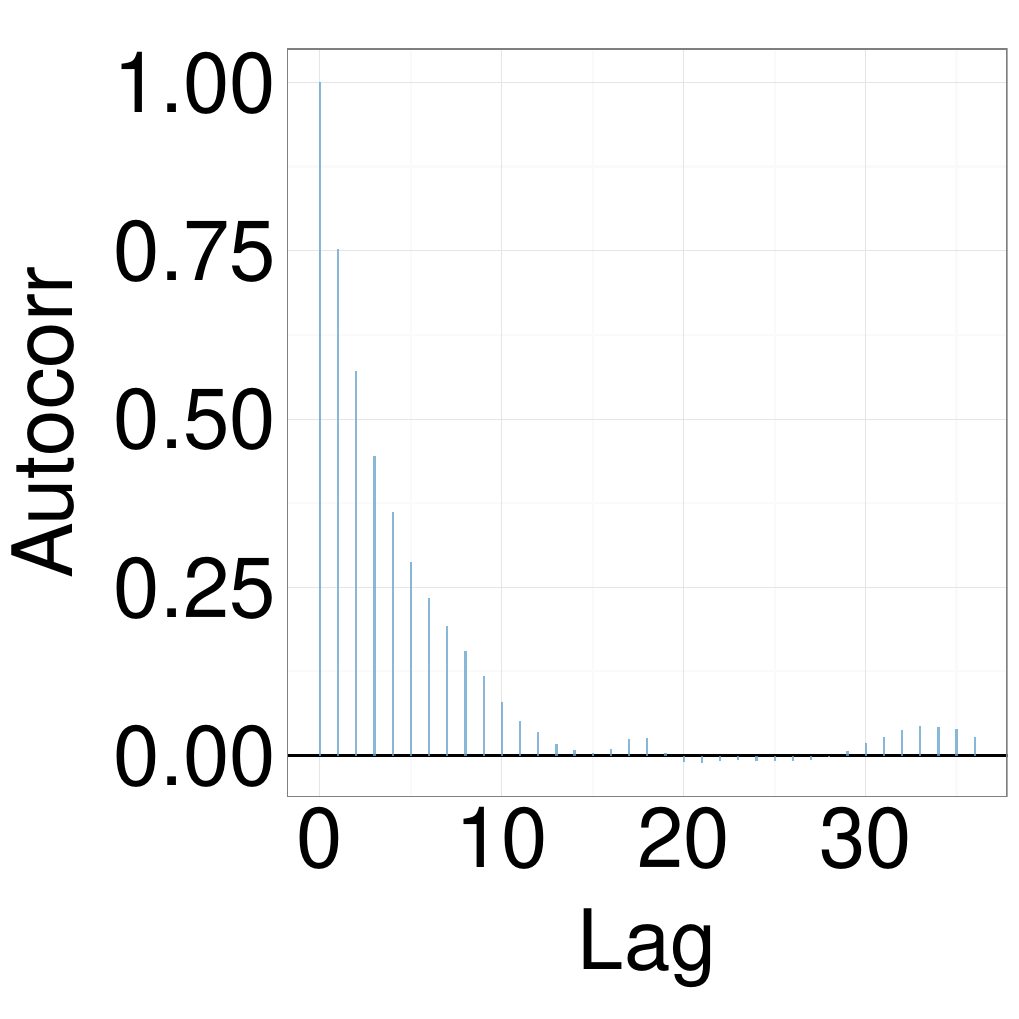}
  \end{minipage}
    \caption{Trace and autocorrelation plots of $\alpha$ for the JC69 model. Left two panels are for Gibbs and the right two for the symmetrized MH algorithm.}
    \label{fig:ACC_JC_m}
    \vspace{-.15in}
  \end{figure}
  Figure~\ref{fig:ACC_JC_m} plots MCMC diagnostics for the Gibbs and symmetrized MH sampler, confirming the previous findings. 
  Both agree on the posterior $P(\alpha|X)$ (figure~\ref{fig:jc_model_vs_t}(a)), with a two-sample Kolmogorov-Smirnov test giving a p-value of $0.97$. 
  Figure~\ref{fig:jc_model_vs_t}(b) plots the average MH acceptance probabilities for the \naive\ and symmetrized MH samplers for different settings of the proposal distribution, again we see that the former has lower acceptance rates because of the $P(W|\theta)$ terms.
  \begin{figure}
  \centering
  \begin{minipage}[!hp]{0.99\linewidth}
  \centering
    \includegraphics [width=0.24\textwidth, angle=0]{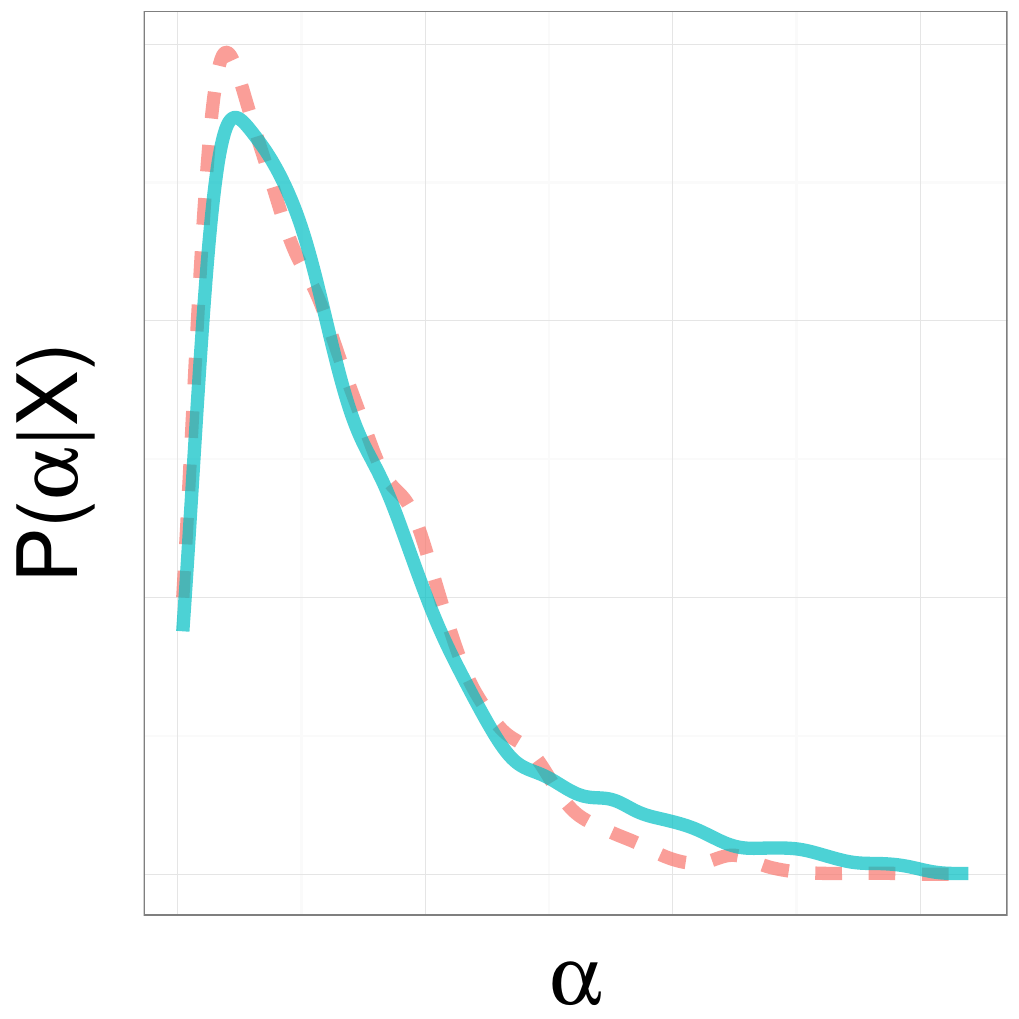}
    \includegraphics [width=0.24\textwidth, angle=0]{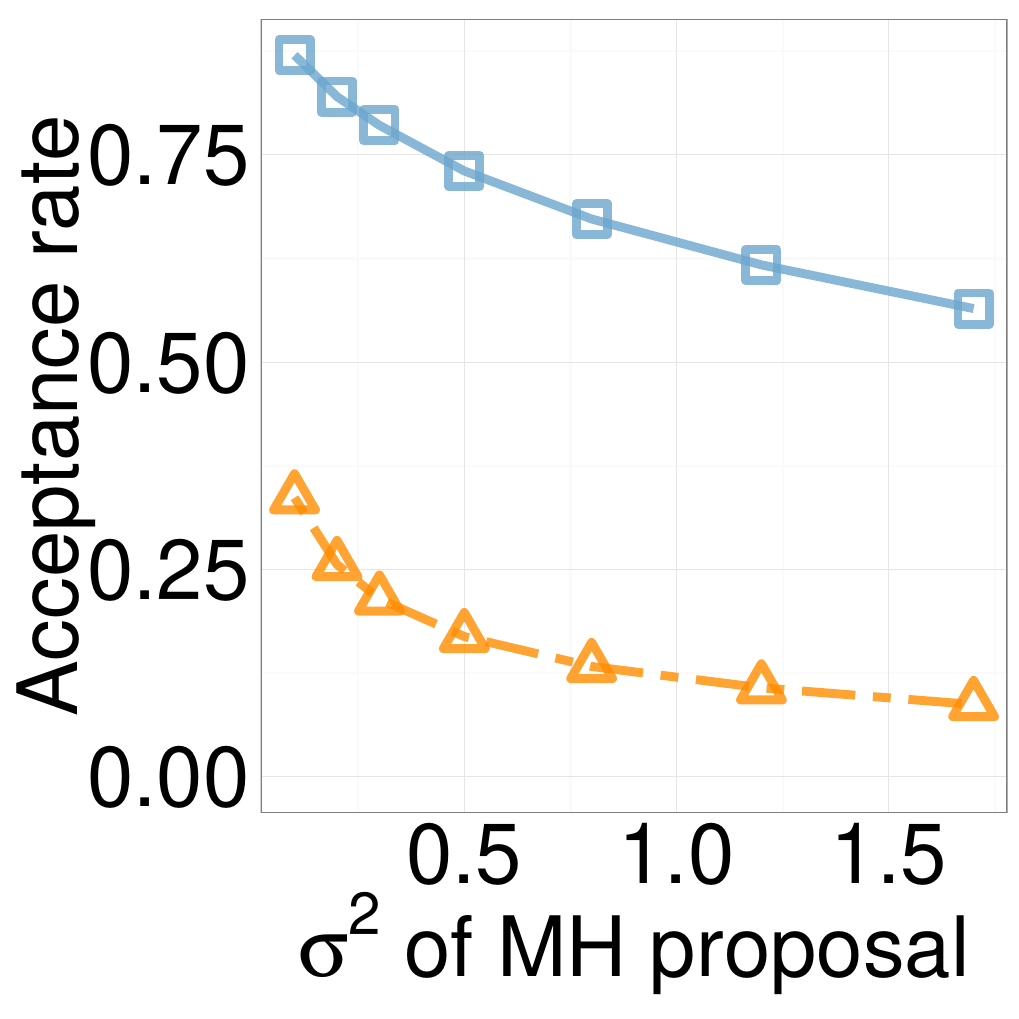}
    \includegraphics [width=0.24\textwidth, angle=0]{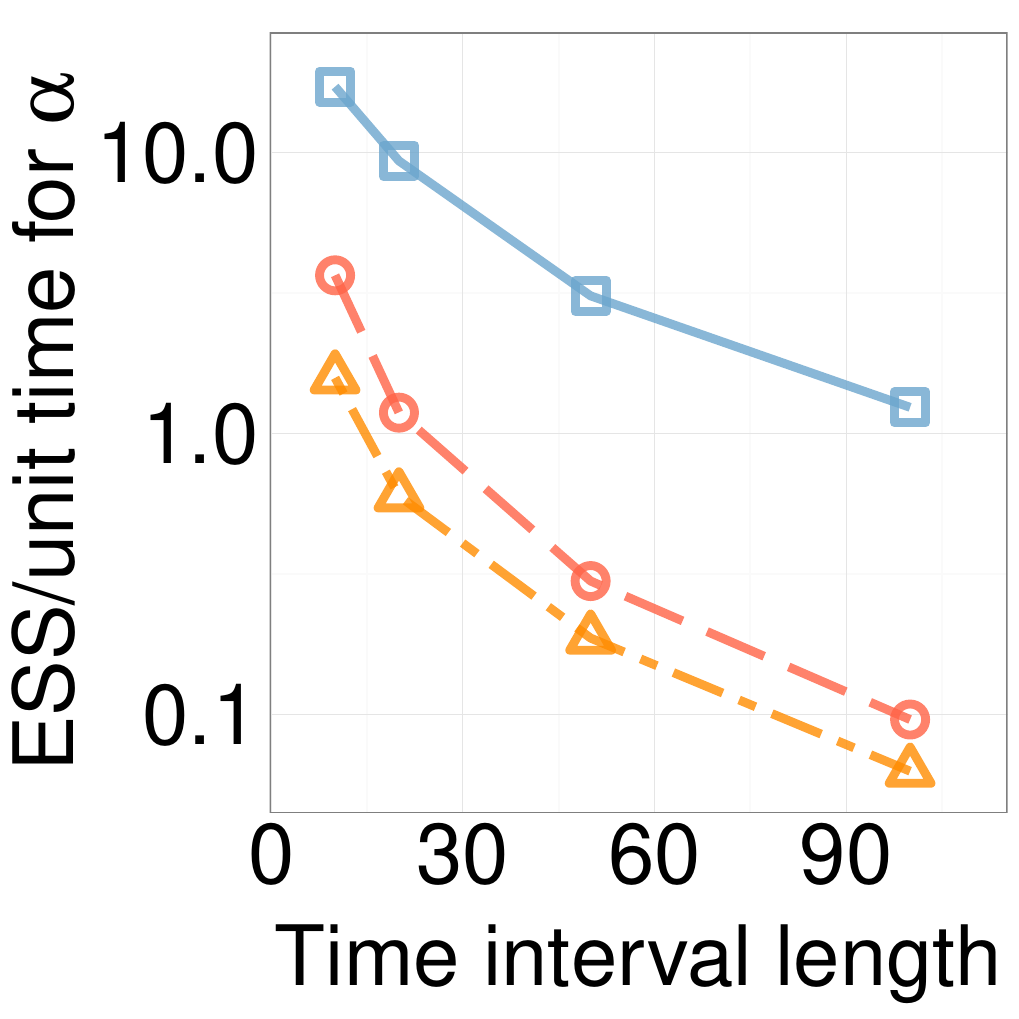}
    \includegraphics [width=0.24\textwidth, angle=0]{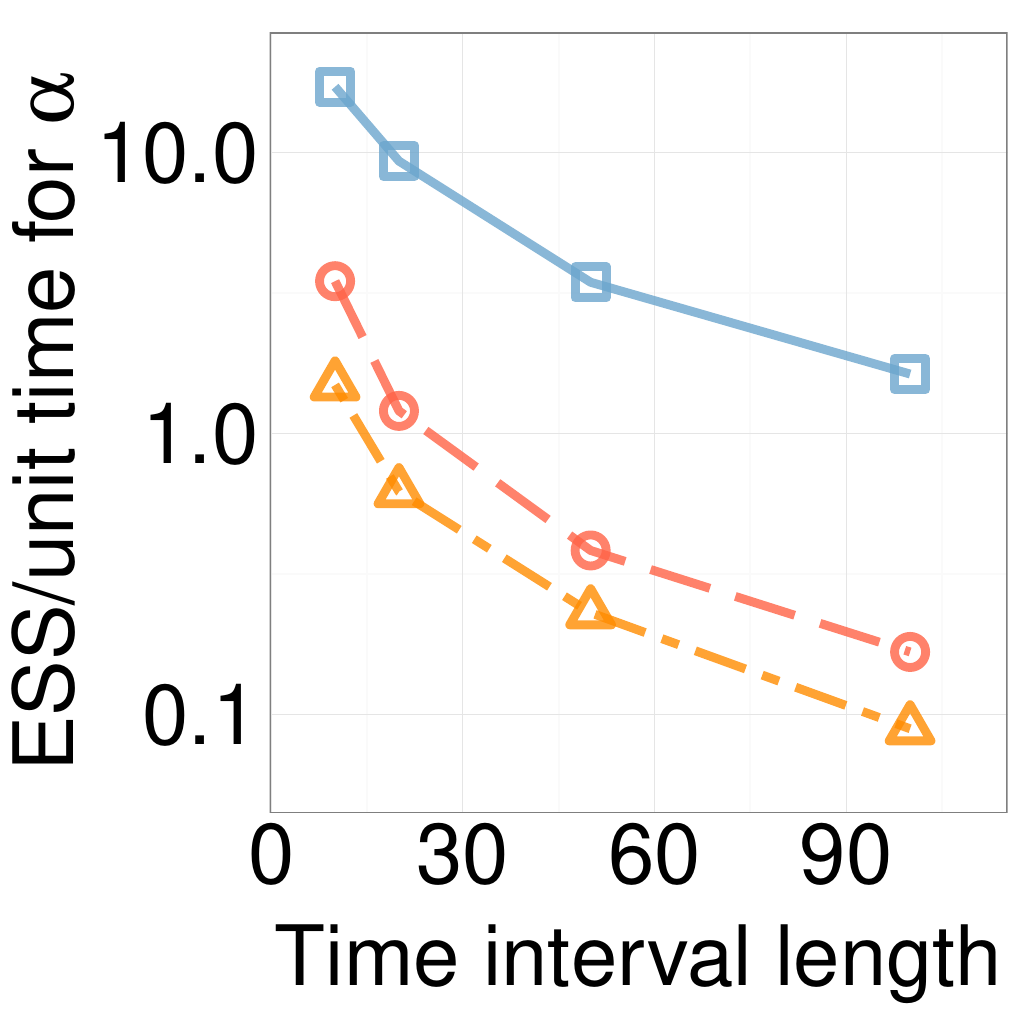}
  \end{minipage}

  \caption{(a) Posterior $P(\alpha|X)$ in the JC69 model for Gibbs (dashed) and symmetrized MH (continuous). (b) MH acceptance rates for \naive\  and symmetrized MH. (c) and (d):
  ESS/sec against $t_{end}$ for $\kappa=2$ with: (c) number of observations fixed, and (d) observation rate fixed. {Squares, triangles and circles} are symmetrized MH, \naive\ MH and Gibbs. }
	\label{fig:jc_model_vs_t}
  \end{figure}
  Figures~\ref{fig:jc_model_vs_t}~(c) and (d) plot the ESS per unit time for the
  different samplers as $t_{end}$ increases. 
  The left plot keeps the number of observations fixed, while the right keeps the observation rate fixed. 
Once again we see that our proposed algorithm
1) performs best over all interval lengths, and 2) suffers a performance
degradation with interval length that is much milder than the other algorithms.

\subsection{An immigration model with finite capacity}\label{sec:immig}~
  \begin{figure}[H]
  \centering
  \begin{minipage}[!hp]{0.24\linewidth}
    \includegraphics [width=0.99\textwidth, angle=0]{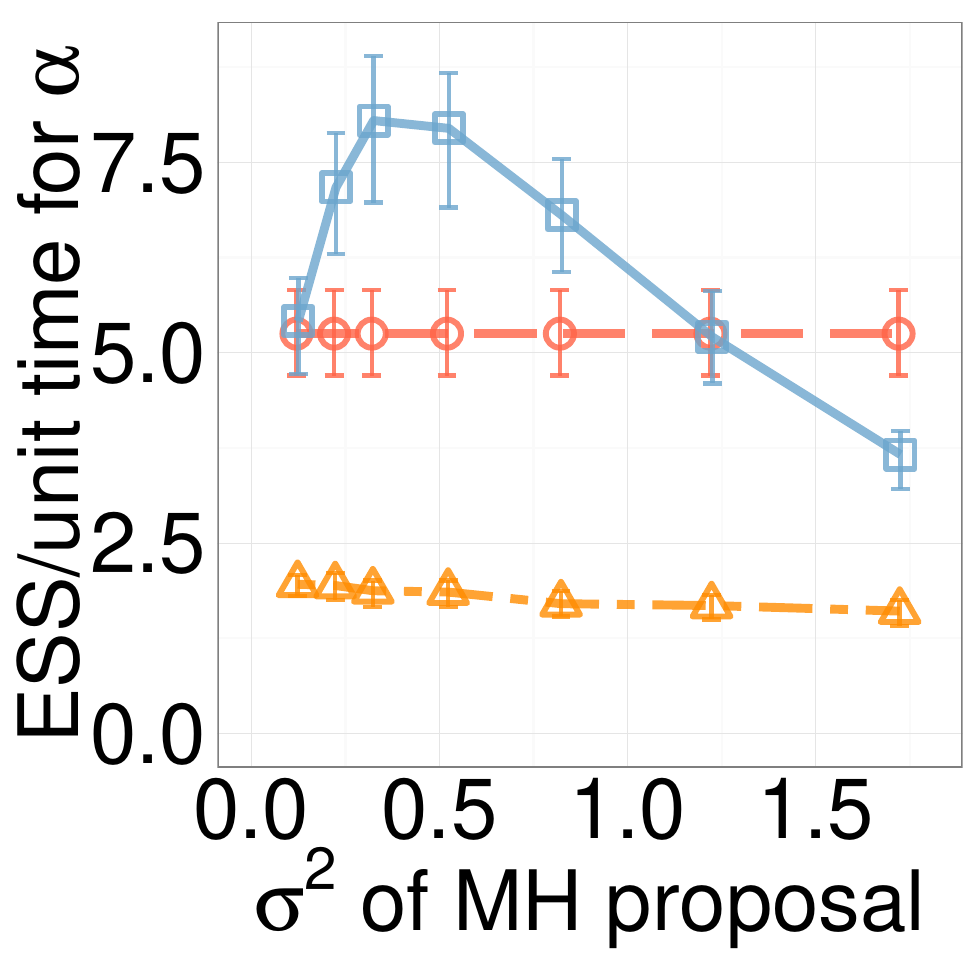}
\end{minipage}
  \begin{minipage}[hp]{0.24\linewidth}
  \centering
    \includegraphics [width=0.99\textwidth, angle=0]{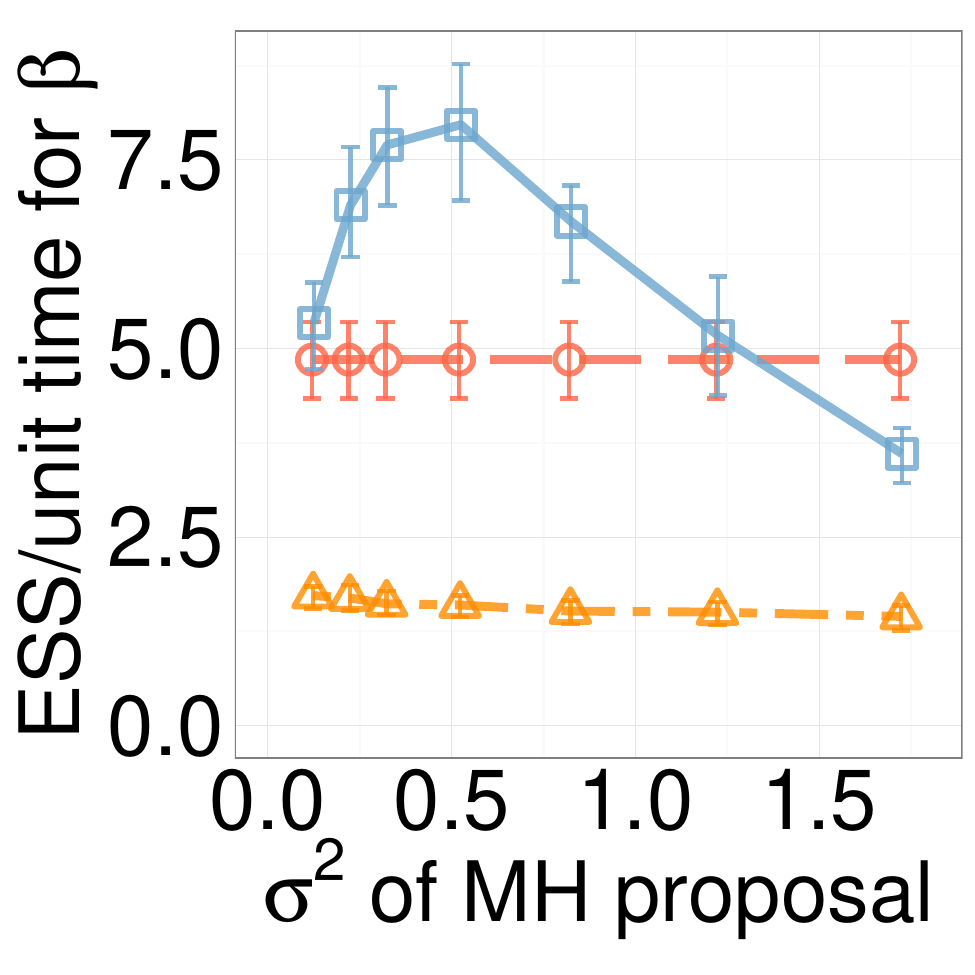}
	\end{minipage}
  \begin{minipage}[hp]{0.24\linewidth}
  \centering
    \includegraphics [width=0.99\textwidth, angle=0]{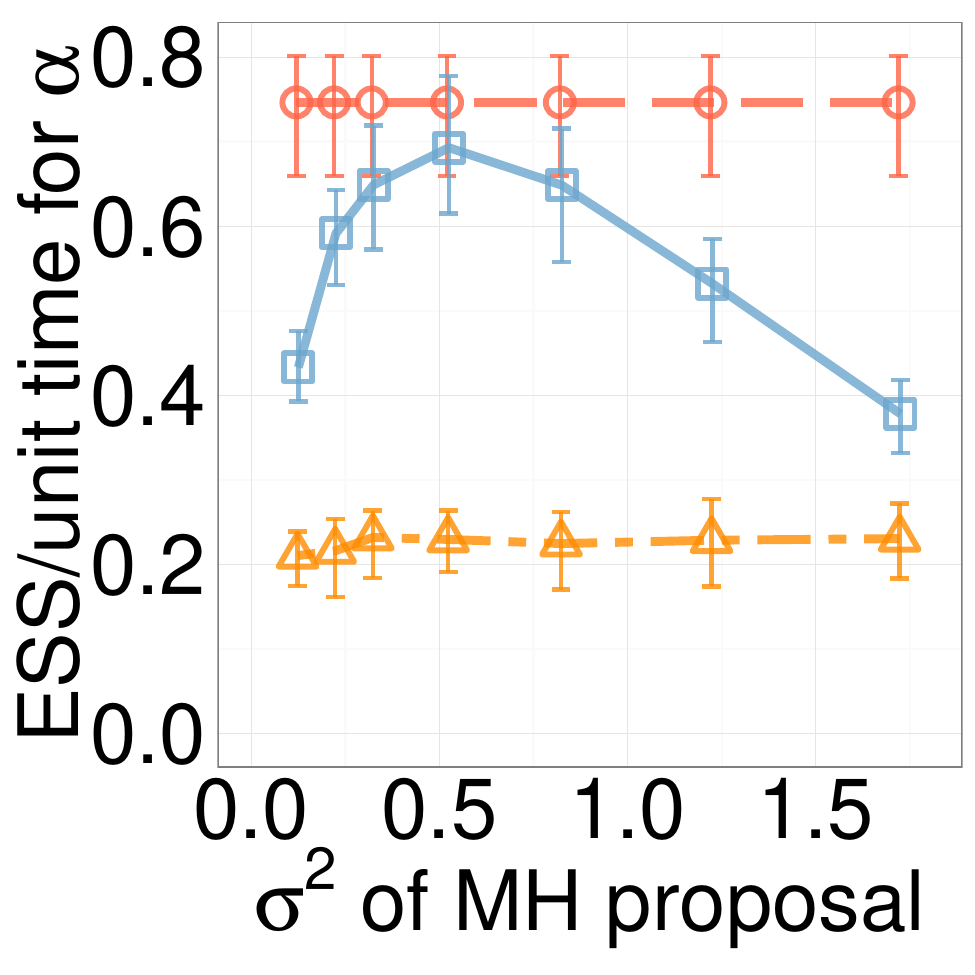}
	\end{minipage}
  \begin{minipage}[hp]{0.24\linewidth}
  \centering
    \includegraphics [width=0.99\textwidth, angle=0]{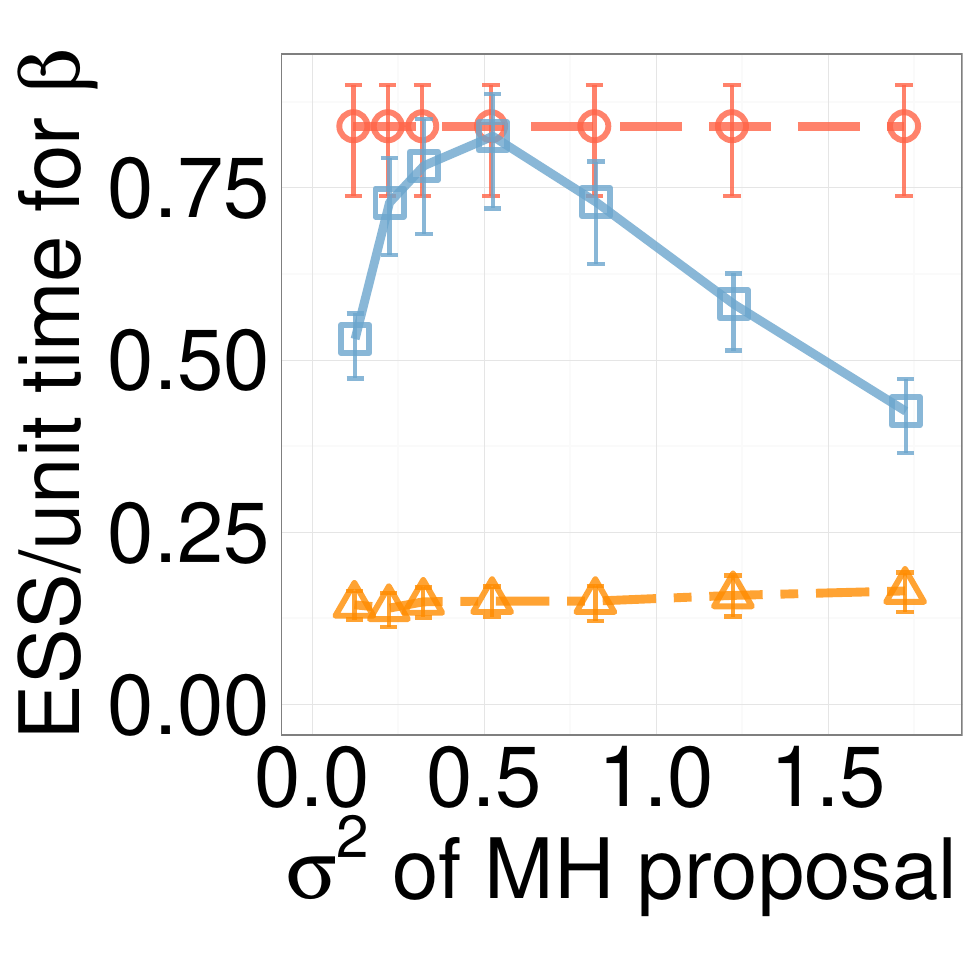}
	\end{minipage}
  \centering
  \begin{minipage}[!hp]{0.24\linewidth}
  \centering
    \includegraphics [width=0.99\textwidth, angle=0]{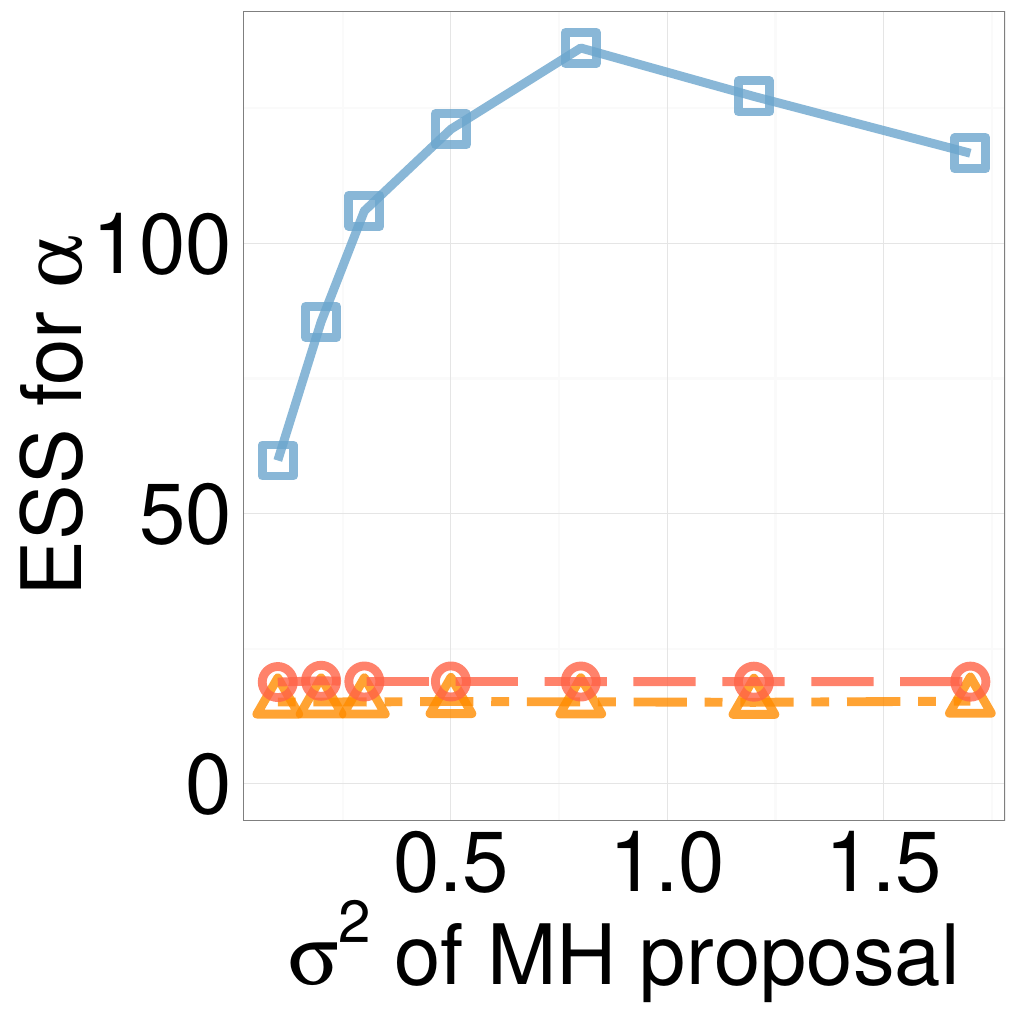}
\end{minipage}
  \begin{minipage}[hp]{0.24\linewidth}
  \centering
    \includegraphics [width=0.99\textwidth, angle=0]{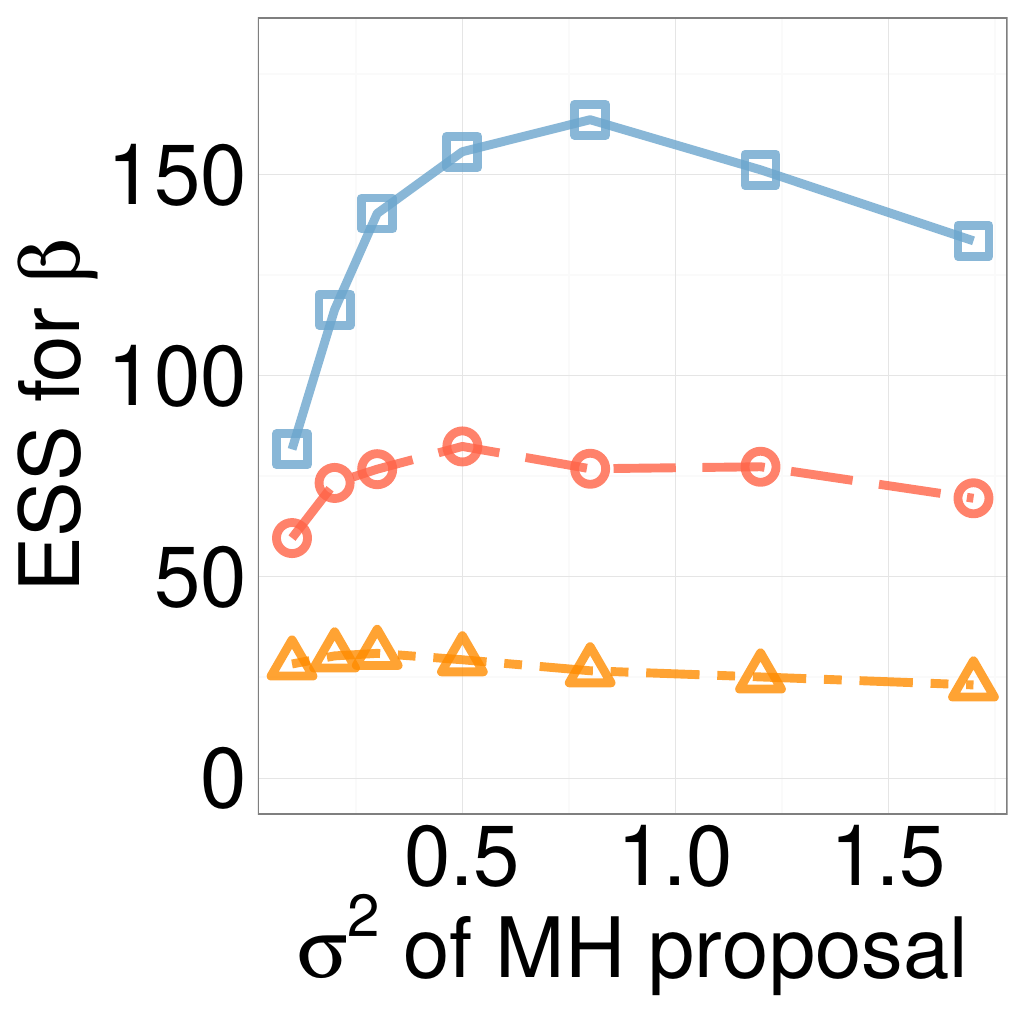}
	\end{minipage}
  \begin{minipage}[hp]{0.24\linewidth}
  \centering
    \includegraphics [width=0.99\textwidth, angle=0]{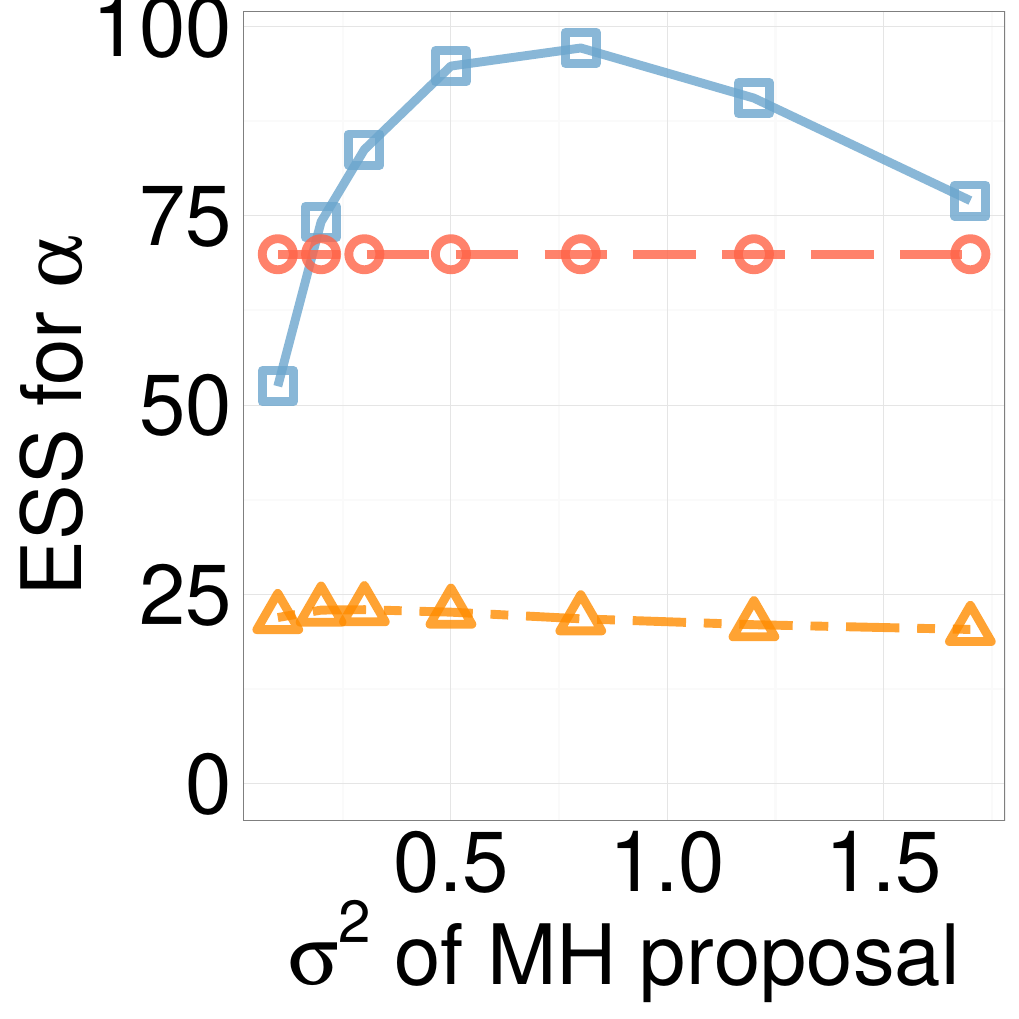}
	\end{minipage}
  \begin{minipage}[hp]{0.24\linewidth}
  \centering
    \includegraphics [width=0.99\textwidth, angle=0]{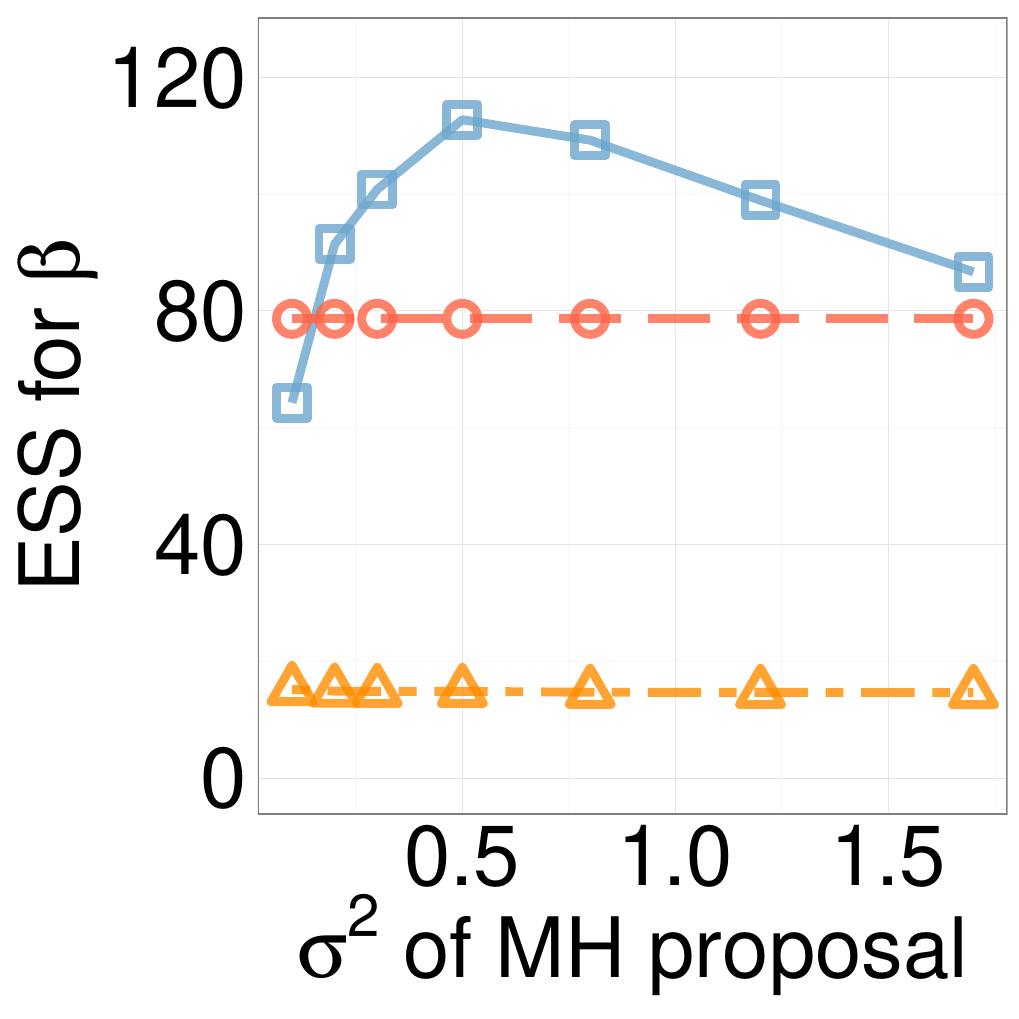}
	\end{minipage}
    \caption{ESS/sec (top row) and ESS per 1000 samples (bottom row) for the immigration model. The left two columns are $\alpha$ and $\beta$ for 3 states, and the right two, for 10 states.
    Squares, triangles and circles are symmetrized MH, \naive\ MH, and Gibbs algorithm. }
     \label{fig:ESS_Q_D10}
  \end{figure}

  \begin{figure}[H]
  \centering
  \begin{minipage}[!hp]{0.24\linewidth}
  \centering
    \includegraphics [width=0.99\textwidth, angle=0]{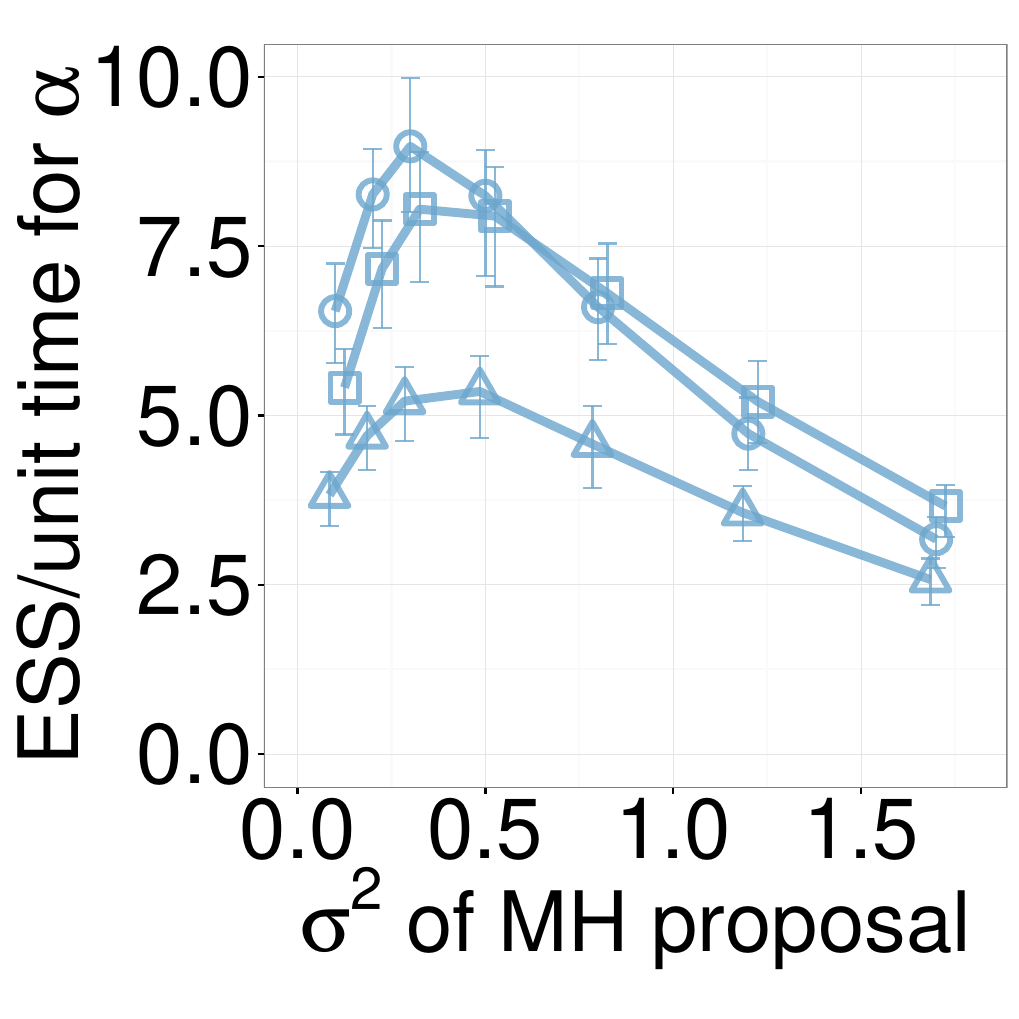}
\end{minipage}
  \begin{minipage}[hp]{0.24\linewidth}
  \centering
    \includegraphics [width=0.99\textwidth, angle=0]{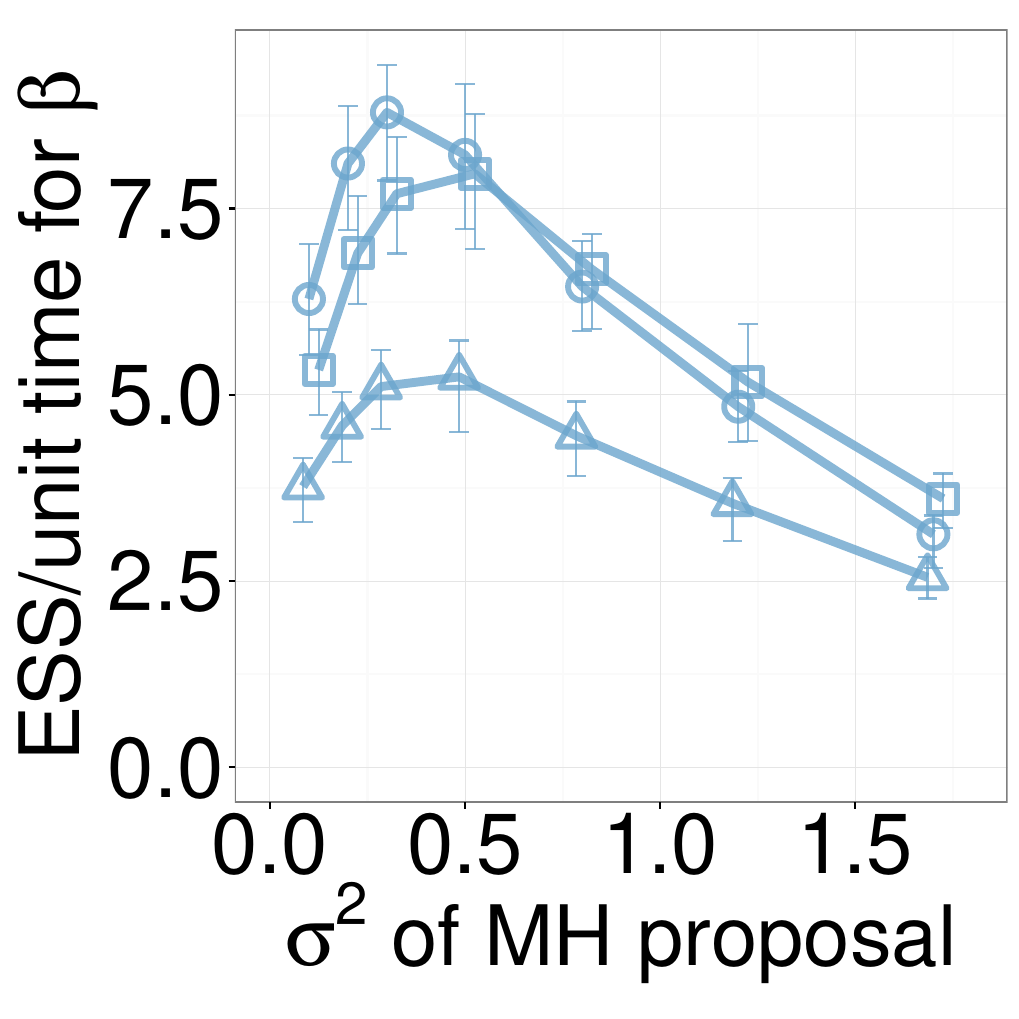}
	\end{minipage}
  \begin{minipage}[hp]{0.24\linewidth}
  \centering
    \includegraphics [width=0.99\textwidth, angle=0]{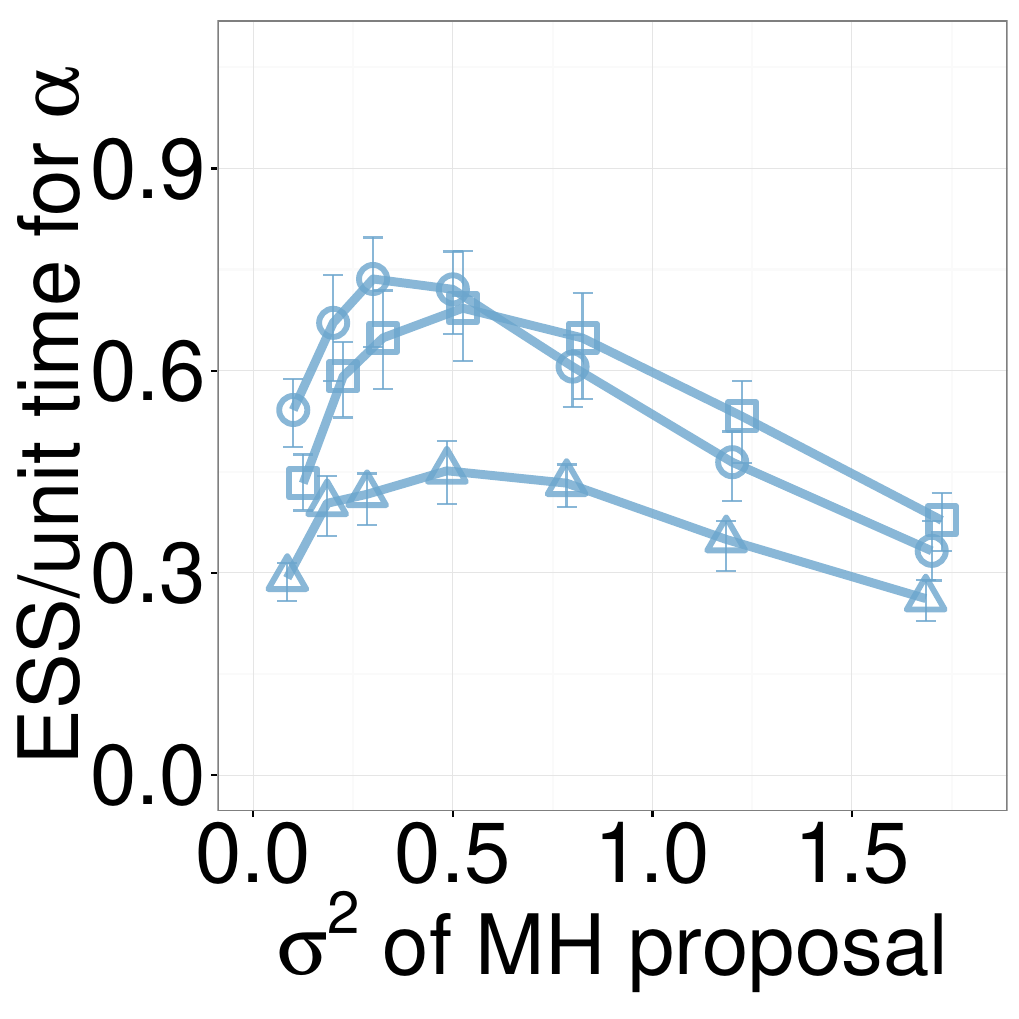}
	\end{minipage}
  \begin{minipage}[hp]{0.24\linewidth}
  \centering
    \includegraphics [width=0.99\textwidth, angle=0]{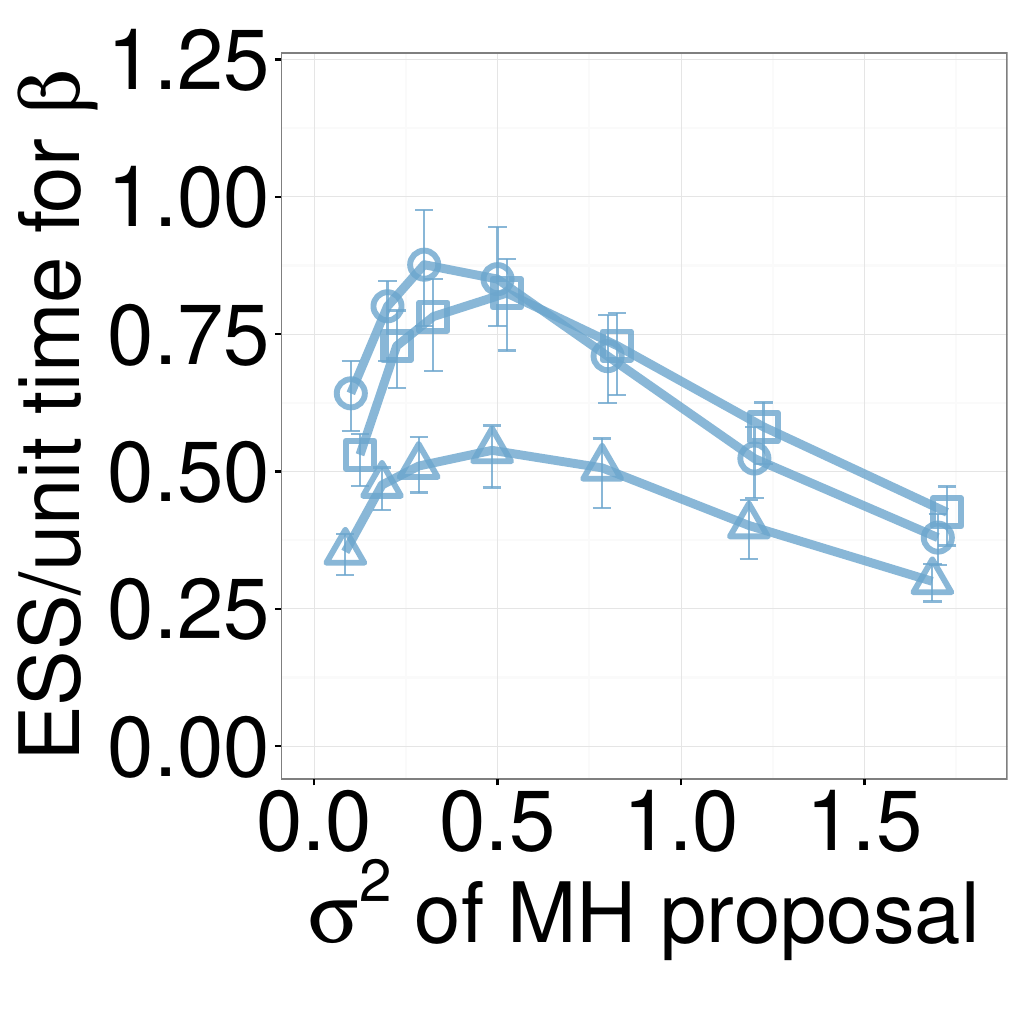}
	\end{minipage}
    \caption{ESS/sec for symmetrized MH for the immigration model for different settings of $\Omega(\theta,\vartheta)$. The left two columns are for $\alpha$ and $\beta$ with 3 states, and the right two, with 10. 
    Squares, circles and triangles correspond to $\Omega(\theta,\vartheta)$ set to $(\max_s A_s(\theta) + \max_s A_s(\vartheta))$, $\max(\max_s A_s(\theta), \max_s A_s(\vartheta))$ and  $1.5(\max_s A_s(\theta) + \max_s A_s(\vartheta))$.
  }
    \label{fig:mhESS_Q}
  \end{figure}

Next, we consider an M/M/N/N queue~\citep{gross2011fundamentals}. 
The state space of this stochastic process is $\{0, 1, 2, 3, \cdots, N - 1\}$ giving the number of customers/jobs/individuals in a system/population. 
Arrivals follow a rate-$\alpha$ Poisson process, moving the process from state 
$i$ to $i+1$ for $i<N$. The system has a capacity of $N$, so any arrivals when 
the current state is $N$ are discarded.  Service times or deaths are 
exponentially distributed, with a rate that is now state-dependent:
the system moves from $i$ to $i - 1$ with rate $i\beta$. 


We follow the same setup as the first experiment:
for $(\alpha_0,\alpha_1,\beta_0,\beta_1)$ equal to $(3,2,5,2)$,
we place Gamma$(\alpha_0,\alpha_1)$, and Gamma$(\beta_0, \beta_1)$ priors on $\alpha$, $\beta$. 
These prior distributions are used to sample transition matrices $A$, which, along with a uniform distribution over initial states, are used to generate MJP trajectories. 
We observe these at integer-valued times according to a Gaussian likelihood.
We again consider three settings: $3, 5$ and $10$ states, with results from $5$ steps included in the supplementary material. 

Figure~\ref{fig:ESS_Q_D10} plots the ESS per unit time (top row) as well as ESS per 1000 samples (bottom row) for the parameters $\alpha$ and $\beta$, again as we change the variance of the proposal kernel. 
The left two columns show these for $\alpha$ and $\beta$ for the MJP state-space having size $3$, while the right two columns show these for size $10$.
Our symmetrized  MH algorithm does best for dimensions $3$ and $5$ (shown in the supplement), although now Gibbs sampling performs best for dimensionality $10$ (although there is no significant different between the best proposal variance for our sampler and the Gibbs sampler).
The Gibbs sampler performs so well partly because the conditionals over $\alpha$ and $\beta$ are conjugate, following simple Gamma distributions. Also, unlike the earlier problem, the rate matrix is tri-diagonal, and governed by two parameters, so that path-parameter coupling is now milder.

  \begin{figure}[H]
  \centering
  \begin{minipage}[!hp]{0.24\linewidth}
    \includegraphics [width=0.99\textwidth, angle=0]{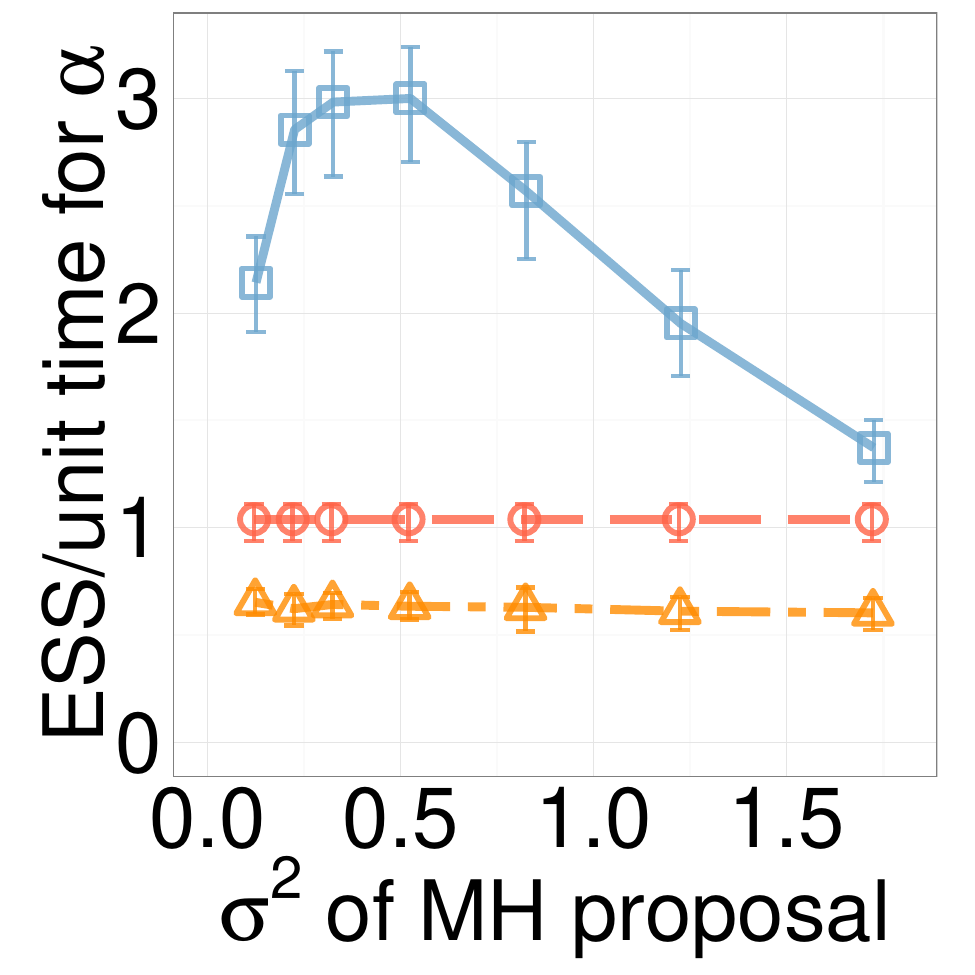}
\end{minipage}
  \begin{minipage}[hp]{0.24\linewidth}
  \centering
    \includegraphics [width=0.99\textwidth, angle=0]{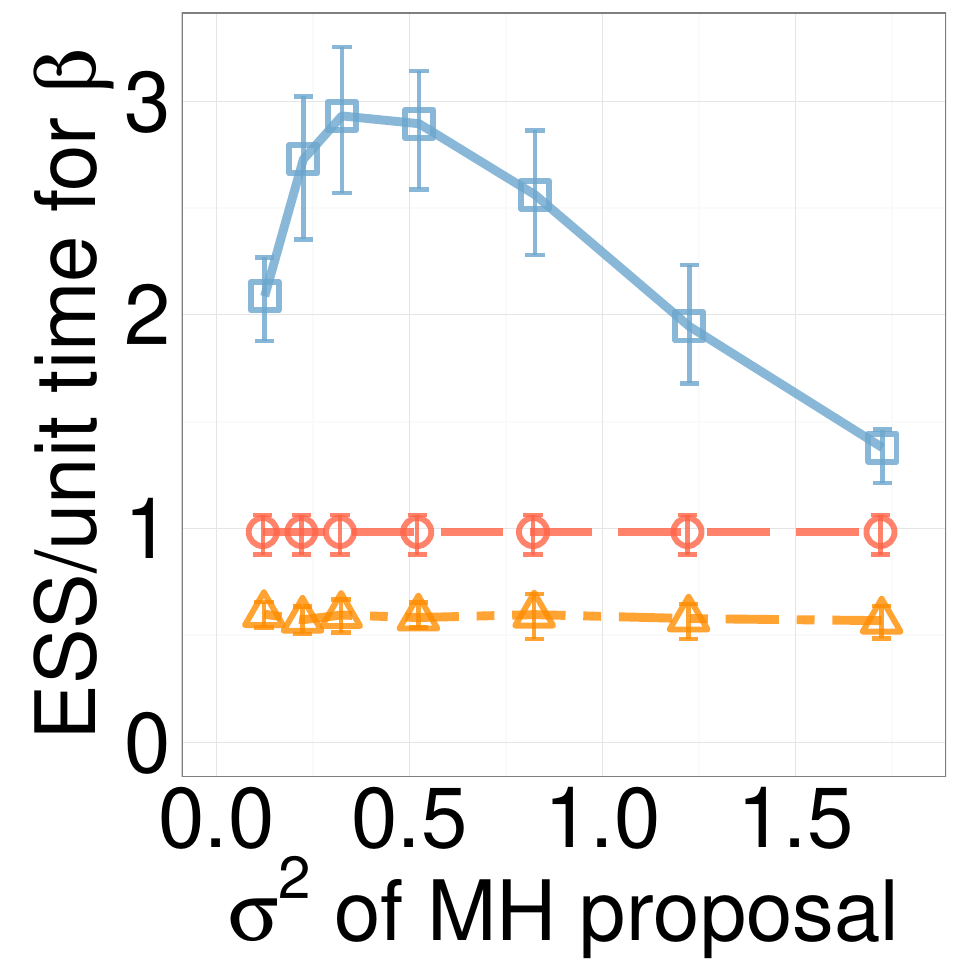}
	\end{minipage}
  \begin{minipage}[hp]{0.24\linewidth}
  \centering
    \includegraphics [width=0.99\textwidth, angle=0]{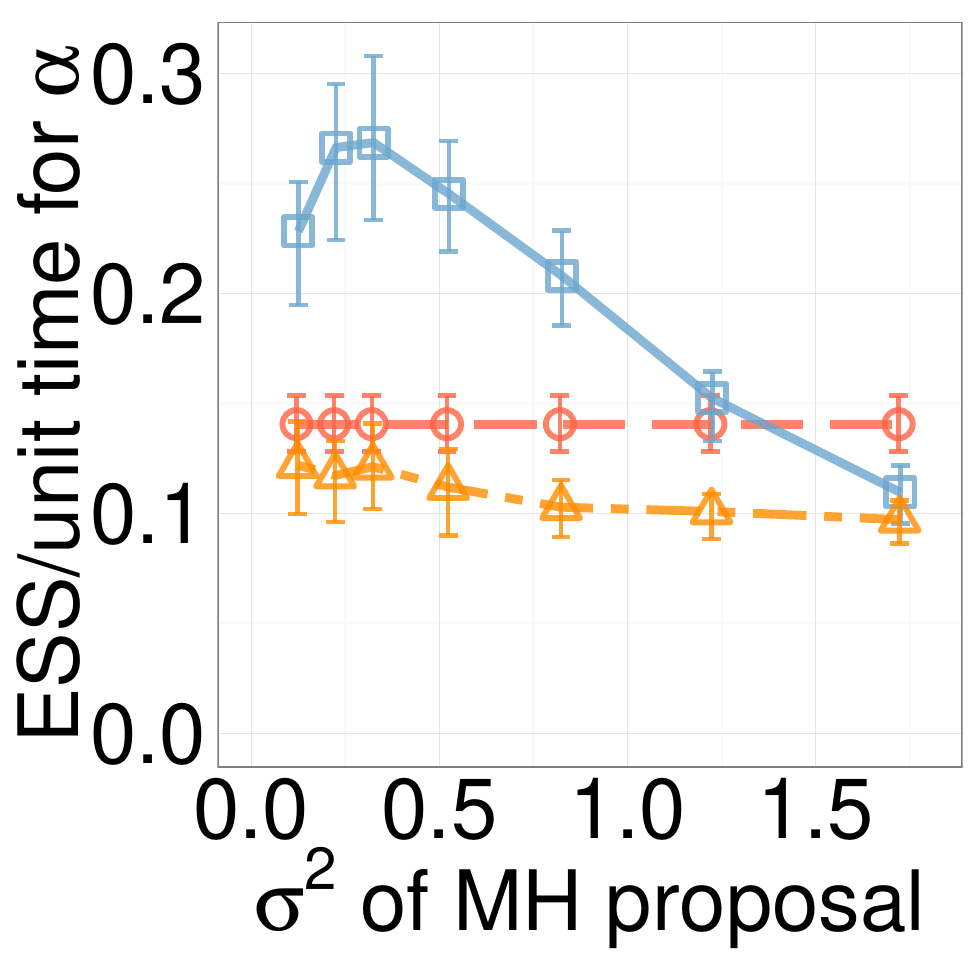}
	\end{minipage}
  \begin{minipage}[hp]{0.24\linewidth}
  \centering
    \includegraphics [width=0.99\textwidth, angle=0]{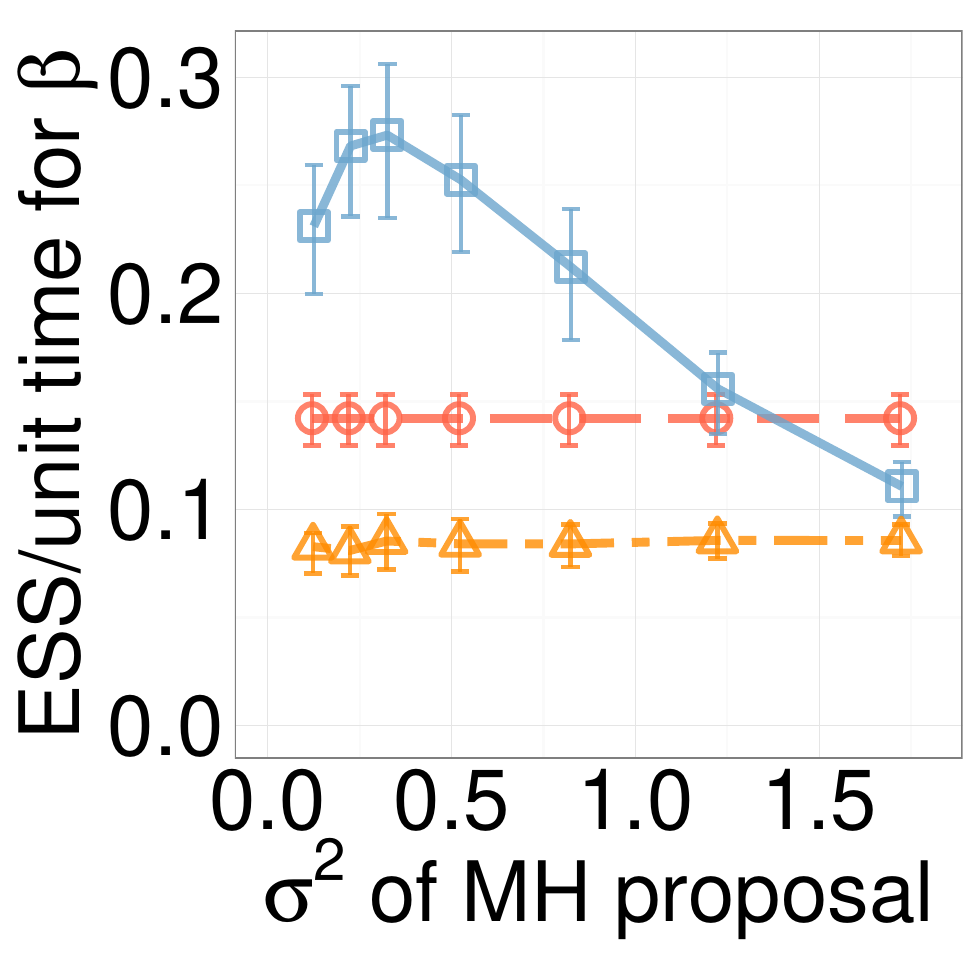}
	\end{minipage}
  \centering
  \begin{minipage}[!hp]{0.24\linewidth}
  \centering
    \includegraphics [width=0.99\textwidth, angle=0]{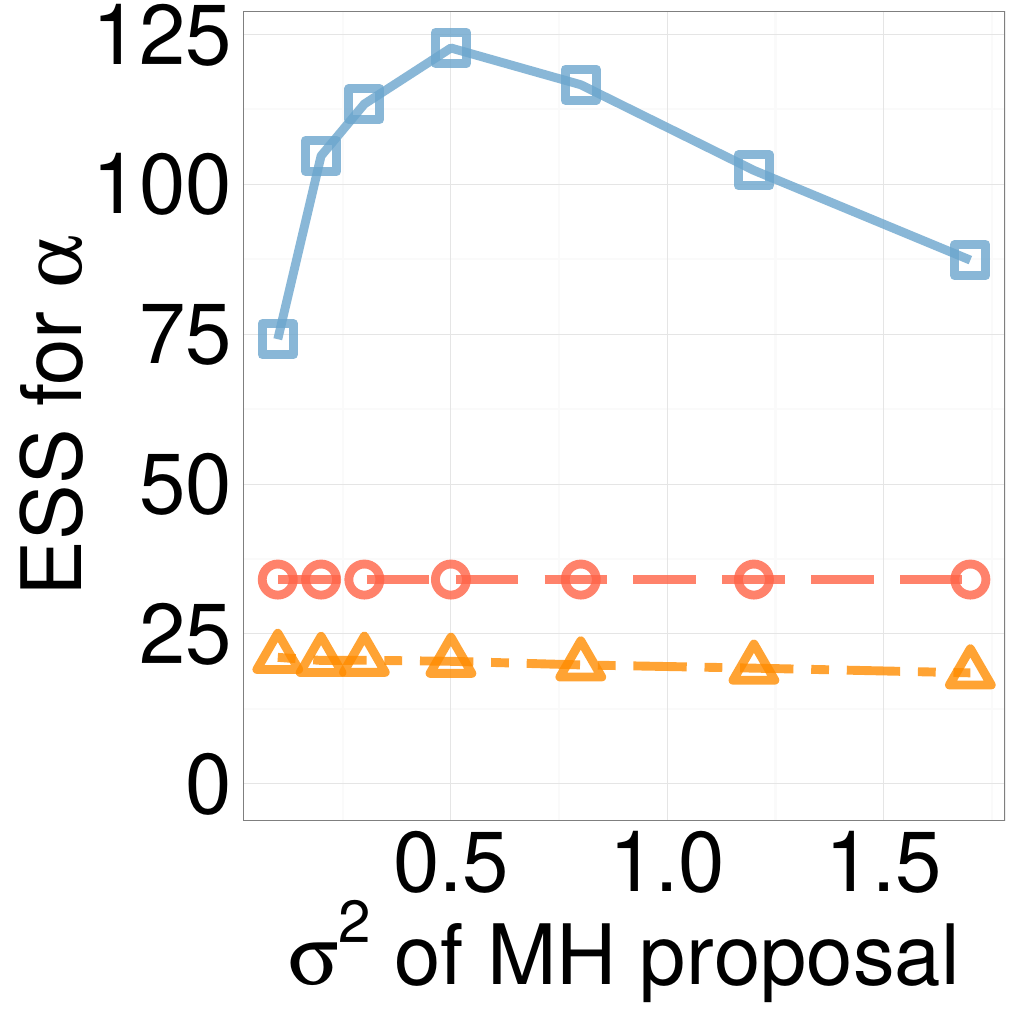}
\end{minipage}
  \begin{minipage}[hp]{0.24\linewidth}
  \centering
    \includegraphics [width=0.99\textwidth, angle=0]{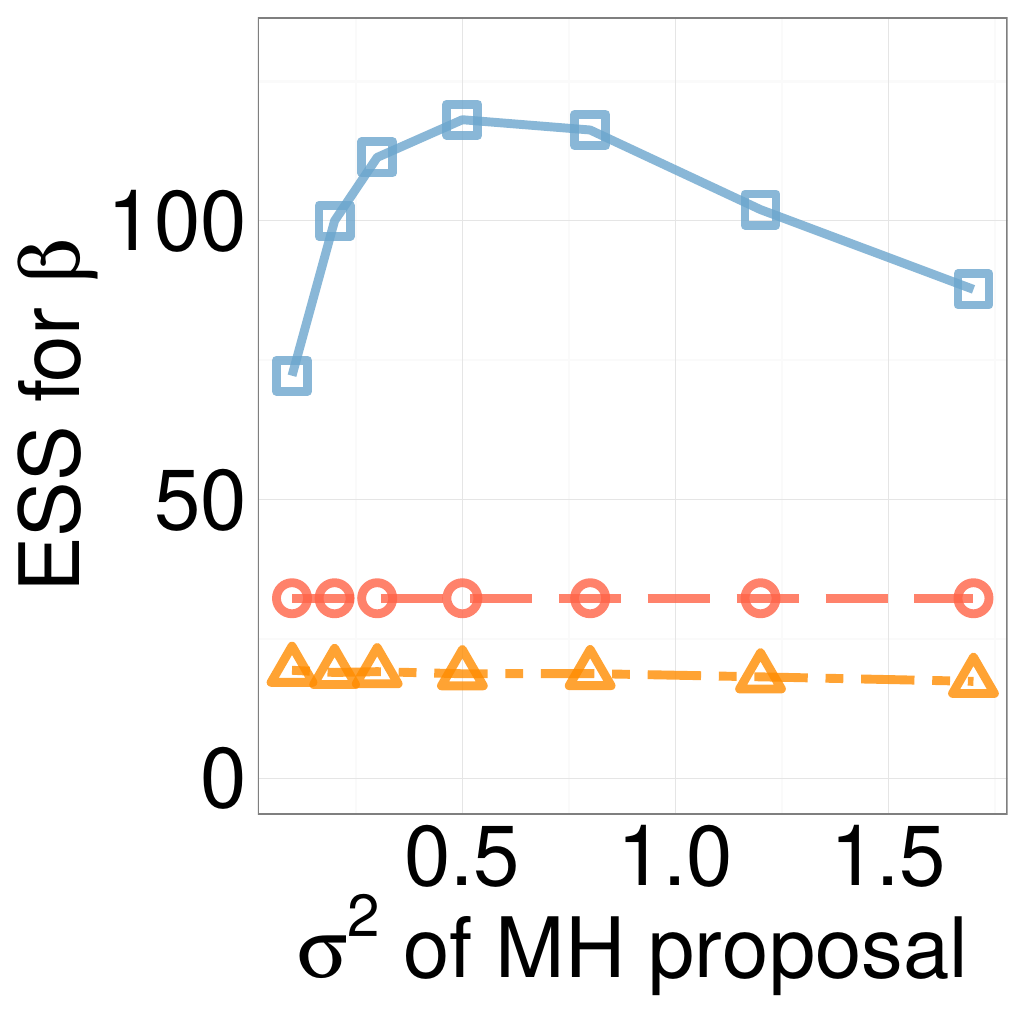}
	\end{minipage}
  \begin{minipage}[hp]{0.24\linewidth}
  \centering
    \includegraphics [width=0.99\textwidth, angle=0]{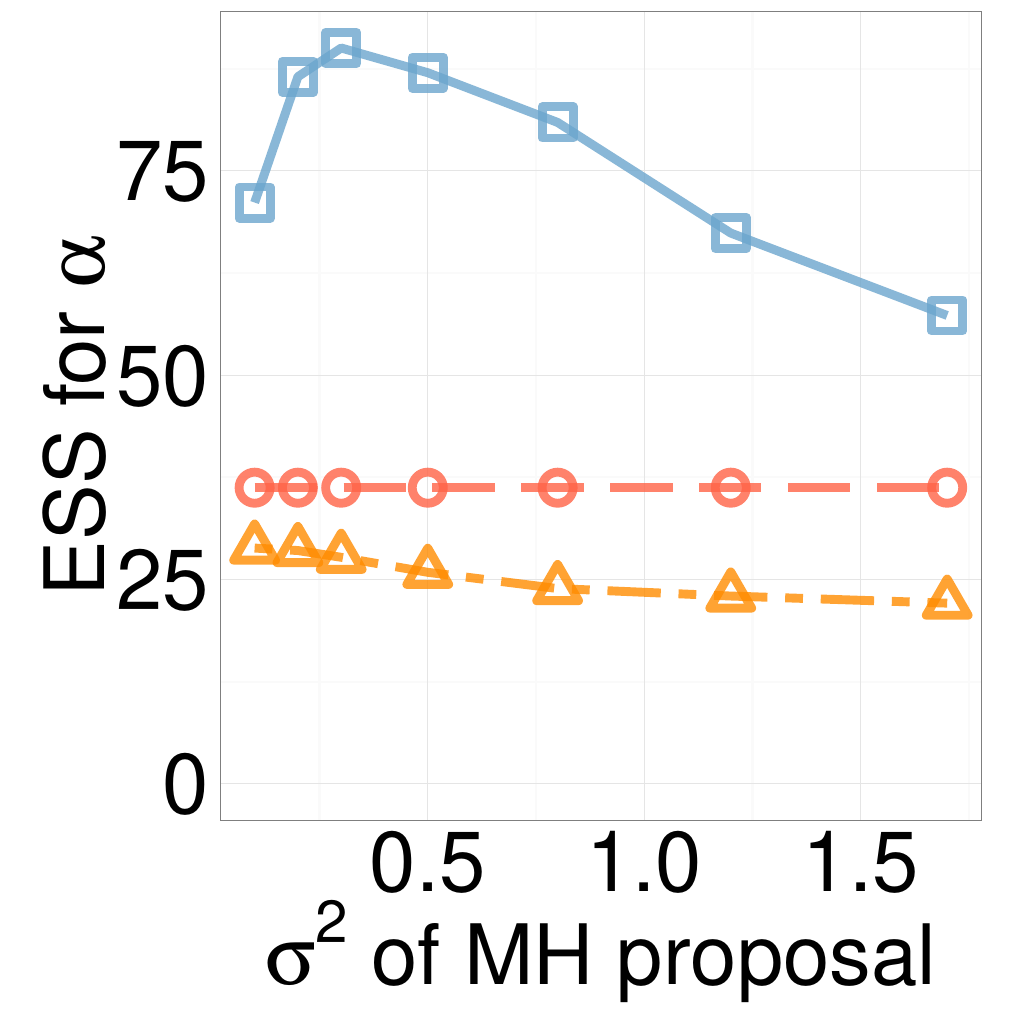}
	\end{minipage}
  \begin{minipage}[hp]{0.24\linewidth}
  \centering
    \includegraphics [width=0.99\textwidth, angle=0]{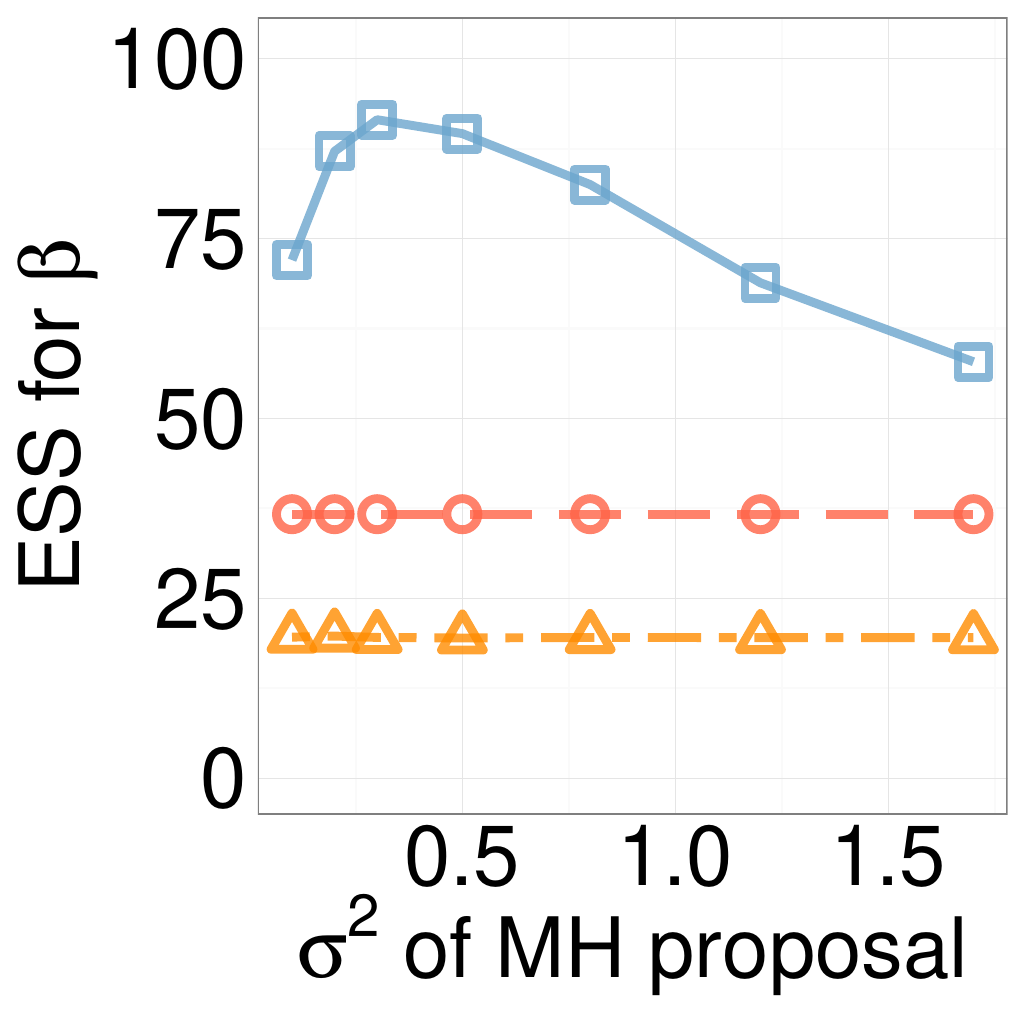}
	\end{minipage}
    \caption{ESS/sec (top row) and ESS per 1000 samples (bottom row) for the time-inhomogeneous immigration model. The left columns are $\alpha$ and $\beta$ for 3 states, and the right two for 10. {Blue squares, yellow triangles and red circles} are the symmetrized MH, \naive\ MH, and Gibbs algorithm. }
     \label{fig:ESS_pc_10}
  \end{figure}
\vspace{-.1in}
  \begin{figure}[H]
  \centering
  \begin{minipage}[!hp]{0.24\linewidth}
  \centering
    \includegraphics [width=0.99\textwidth, angle=0]{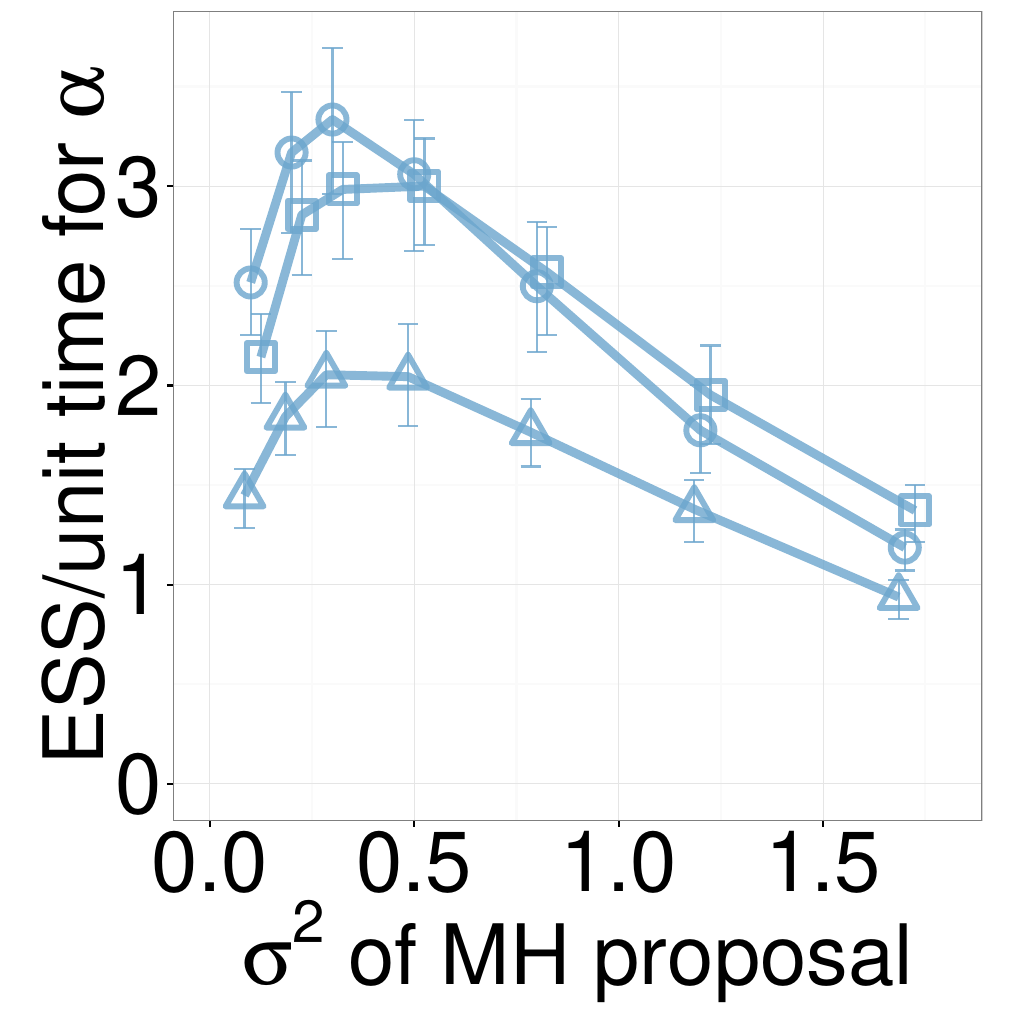}
\end{minipage}
  \begin{minipage}[hp]{0.24\linewidth}
  \centering
    \includegraphics [width=0.99\textwidth, angle=0]{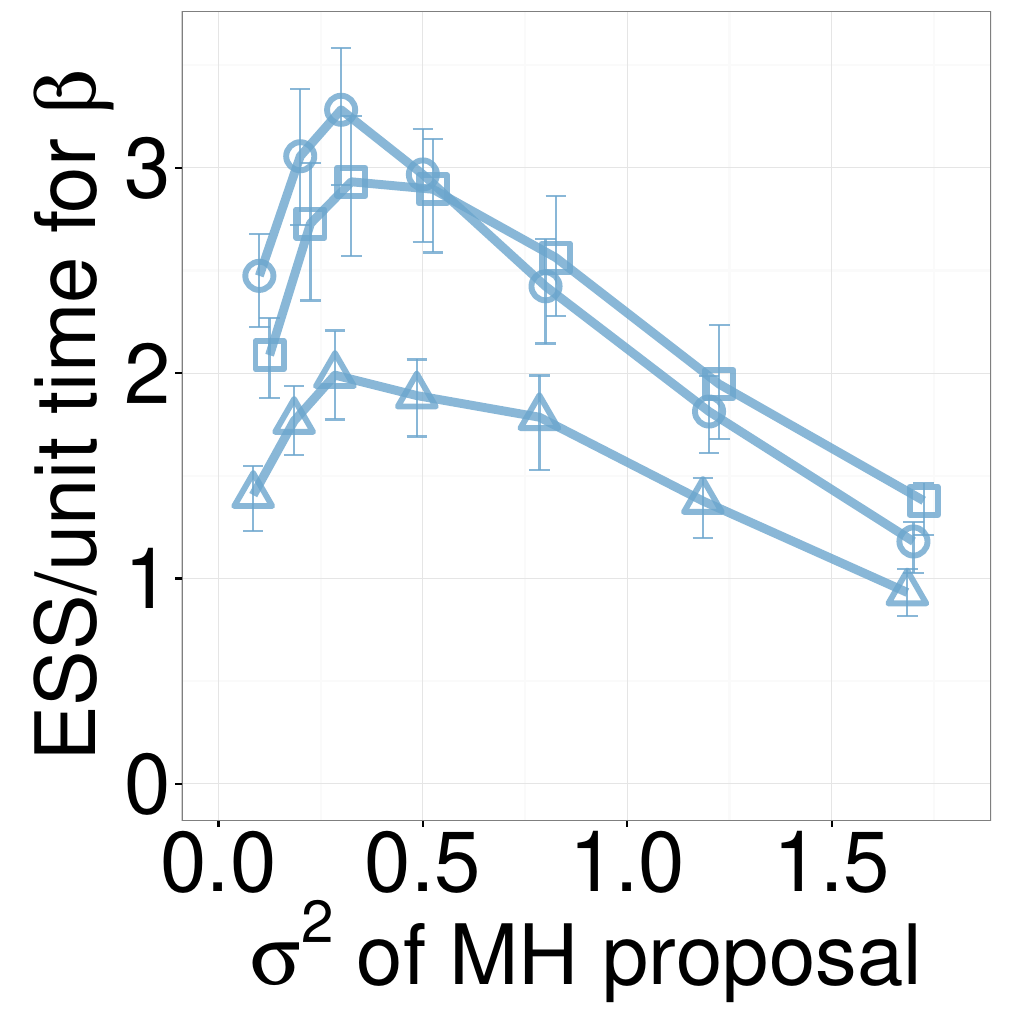}
	\end{minipage}
  \begin{minipage}[hp]{0.24\linewidth}
  \centering
    \includegraphics [width=0.99\textwidth, angle=0]{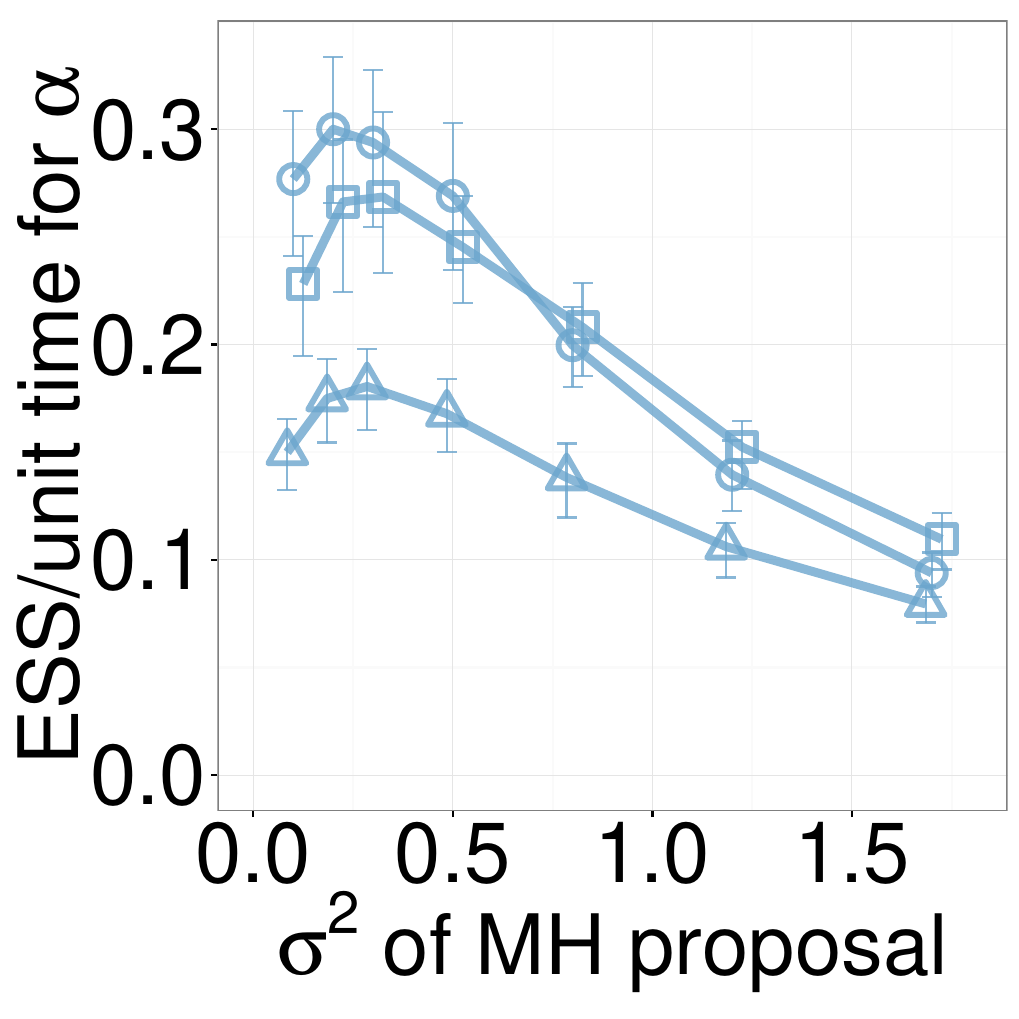}
	\end{minipage}
  \begin{minipage}[hp]{0.24\linewidth}
  \centering
    \includegraphics [width=0.99\textwidth, angle=0]{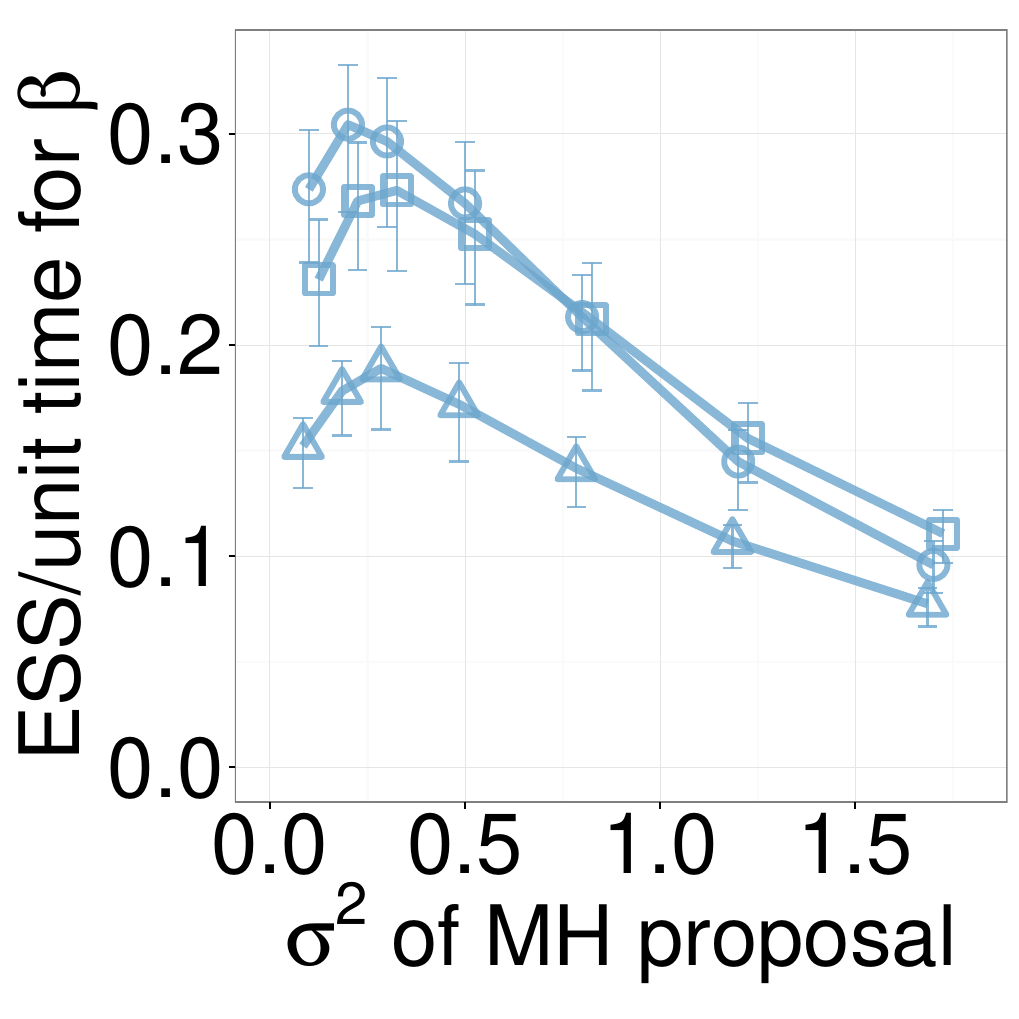}
	\end{minipage}
    \caption{ESS/sec for symmetrized MH for the time-inhomogeneous immigration model for different settings of $\Omega(\theta,\vartheta)$. The left two columns are $\alpha$ and $\beta$ for 3 states, and the right two for 10. 
    Squares, circles and trianges correspond to $\Omega(\theta,\vartheta)$ set to $(\max_s A_s(\theta) + \max_s A_s(\vartheta))$, $\max(\max_s A_s(\theta), \max_s A_s(\vartheta))$ and  $1.5(\max_s A_s(\theta) + \max_s A_s(\vartheta))$.}
     \label{fig:mhESS_CQ}
  \end{figure}

\vspace{-.1in}
\noindent \textbf{A time-inhomogeneous immigration model:}
We extend the previous model to incorporate a known time-inhomogeneity. 
The arrival and death rates are now no longer constant, and are instead given by
$A_{i, i+1}(t) = \alpha w(t) \ (i =0,1,\cdots,N-1)$ respectively.
While it is not difficult to work with sophisticated choices of $w(t)$, we limit ourselves to a simple piecewise-constant $w(t) = \left\lfloor \frac{t}{5} \right\rfloor$. 
Even such a simple change in the original model can dramatically affect the performance of the Gibbs sampler. 

Figure~\ref{fig:ESS_pc_10} plots the ESS per unit time (top row) and ESS per 1000 samples (bottom row) for the parameters $\alpha$ and $\beta$. 
The left two columns show these for this model with capacity $3$, and the right two show these for capacity $10$.  
 Now, the symmetrized MH algorithm is significantly more efficient, comfortably outperforming all samplers (including the Gibbs sampler) over a wide range of settings. 
 We note that increasing the dimensionality of the state space results in a more concentrated posterior, shifting the optimal setting of the proposal variance to smaller values. Figure~\ref{fig:hist} shows prior and conditional distributions over $\alpha$ for $t_{end}$ set to $10$ and $100$, with 3 states.

 \vspace{-.1in}
  \subsection{Chi site data for Escherichia coli}~
 \vspace{-.05in}
  Finally, we consider a dataset recording positions of a particular DNA motif on the {\em E.\ coli} genome. 
  These motifs consist of eight base pairs GCTGGTGG, and are called Chi sites~\citep{FearnSher2006}.
  The rates of occurence of Chi sites provide information about genome 
  segmentation, allowing the identification of regions with high 
  mutation or recombination rates.
  Following~\cite{FearnSher2006}, we use this dataset to infer a two-state piecewise-constant segmentation of the DNA strand. 
  We focus on Chi sites along the inner (lagging) strand of the {\em E.\ coli} genome.  
  We place an MJP prior over this segmentation, and indexing position along the 
  strand with $t$, we write this as $\{S(t), t \in [0,2319.838]\}$. 
  To each state $s \in \{1,2\}$, we assign a rate $\lambda_s$, which together with $S(\cdot)$, defines a piecewise-constant rate function $\lambda_{S(\cdot)}$. 
  We model the Chi-site positions as drawn from a Poisson process with rate 
  $\{\lambda_{S(t)}, t \in [0,2319.838]\}$, resulting in a {Markov-modulated Poisson process}~\citep{scottmmpp03} (see also section~\ref{sec:bayes_model}). 
  MJP transitions from state $1$ to state $2$ have rate $\alpha$ while transitions from state $2$ to state $1$ have rate $\beta$. 
  We place 
  Gamma$(2,2)$, Gamma$(2,3)$, Gamma$(3,2)$, Gamma$(1,2)$ priors on $\alpha$, $\beta$, $\lambda_1$, $\lambda_2$.

  We use this setup to evaluate our symmetrized MH sampler along with Gibbs sampling (other algorithms perform much worse, and we do not include them). 
  For our MH proposal distribution, we follow~\citet{gelman2013bayesian}, and first run {2000} iterations of Gibbs sampling to estimate the posterior covariance of the vector $\theta = (\alpha,\beta,\lambda_1,\lambda_2)$. Call this $\Sigma_\theta$. 
  Our MH proposal distribution is then $q(\nu|\theta) = N(\nu|\theta, \sigma^2\Sigma_\theta)$ for different settings of $\sigma^2$ (we recommend $\sigma^2 = 1$),
where we set $\Omega(\theta, \vartheta) = \max_s A_s(\theta) + \max_s A_s(\vartheta)$. 
  \begin{figure}[]
  \centering
  \begin{minipage}[!hp]{0.99\linewidth}
    \includegraphics [width=0.24\textwidth, angle=0]{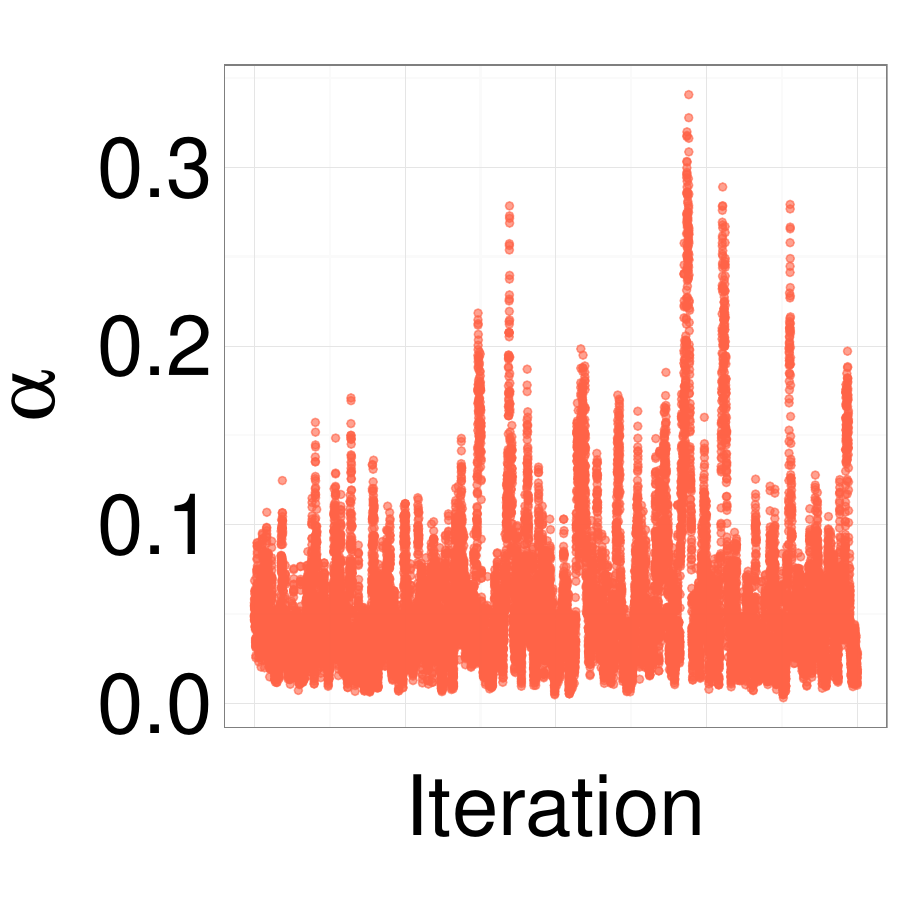}
    \includegraphics [width=0.24\textwidth, angle=0]{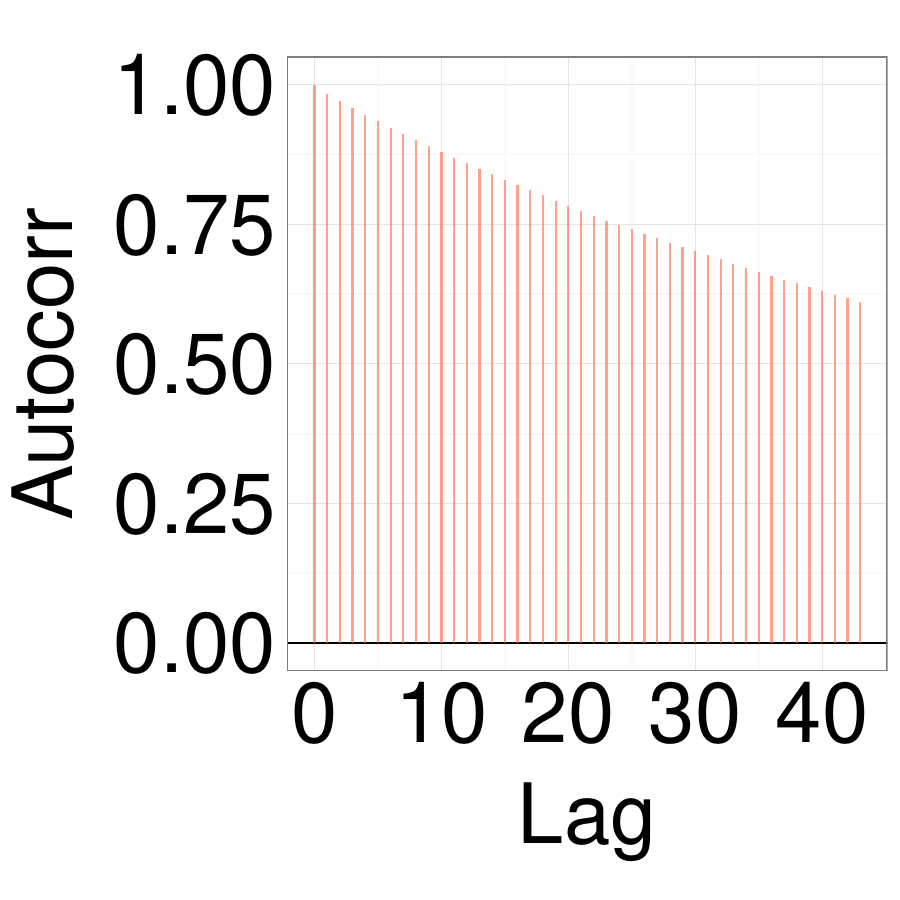}
    \includegraphics [width=0.24\textwidth, angle=0]{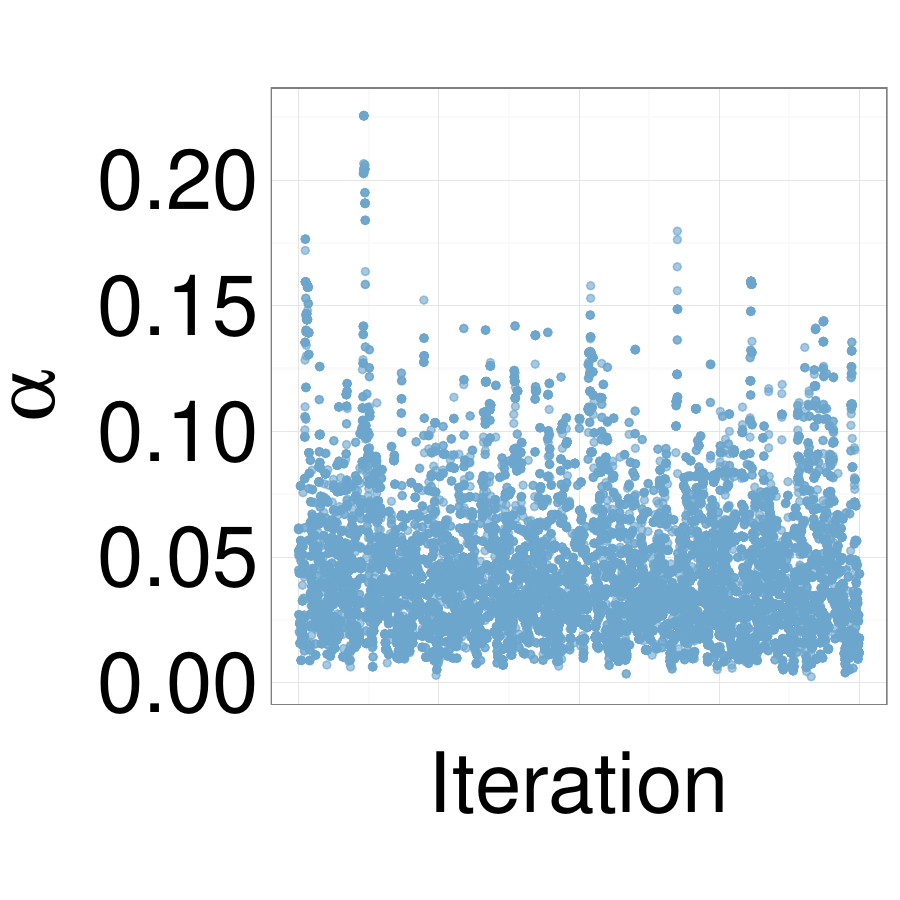}
    \includegraphics [width=0.24\textwidth, angle=0]{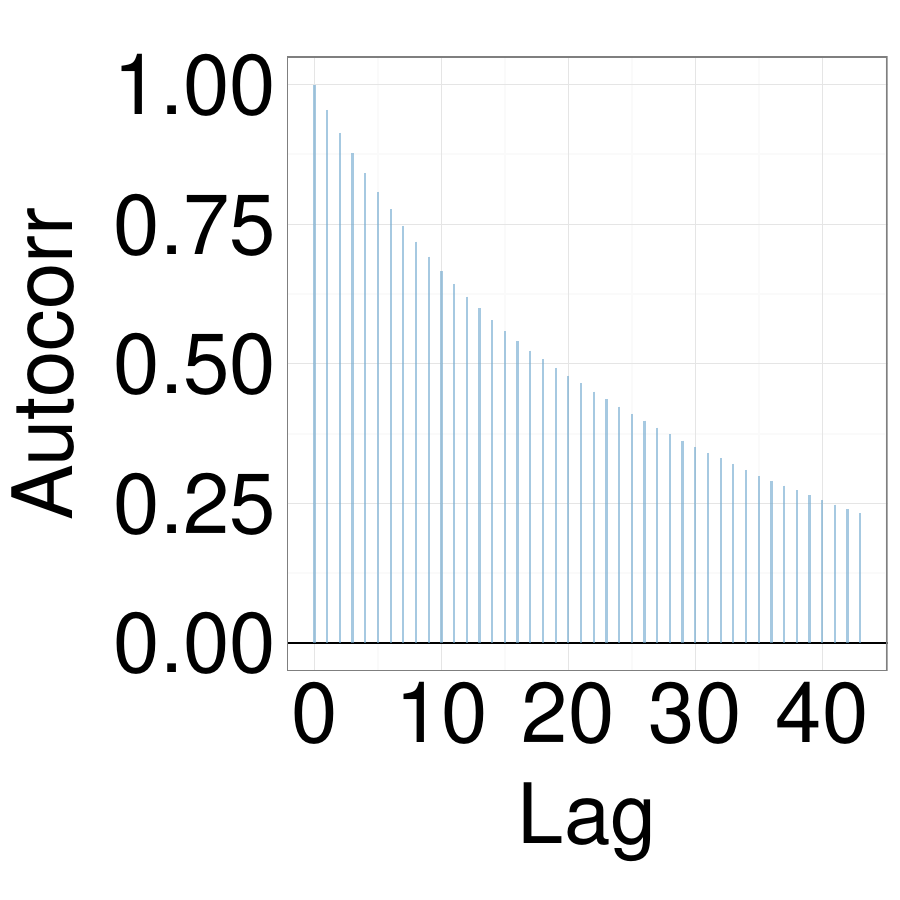}
  \end{minipage}
    \caption{Trace and autocorrelation plots of posterior samples for $\alpha$ for the E.\ Coli data. The left two plots are the Gibbs sampler and the right two are the symmetrized MH. }
     \label{fig:TRACE_ECOLI}
     \vspace{-.1in}
  \end{figure}
  \begin{figure}[]
  \centering
  \begin{minipage}[!hp]{.55\linewidth}
	\includegraphics [width=0.48\textwidth, angle=0]{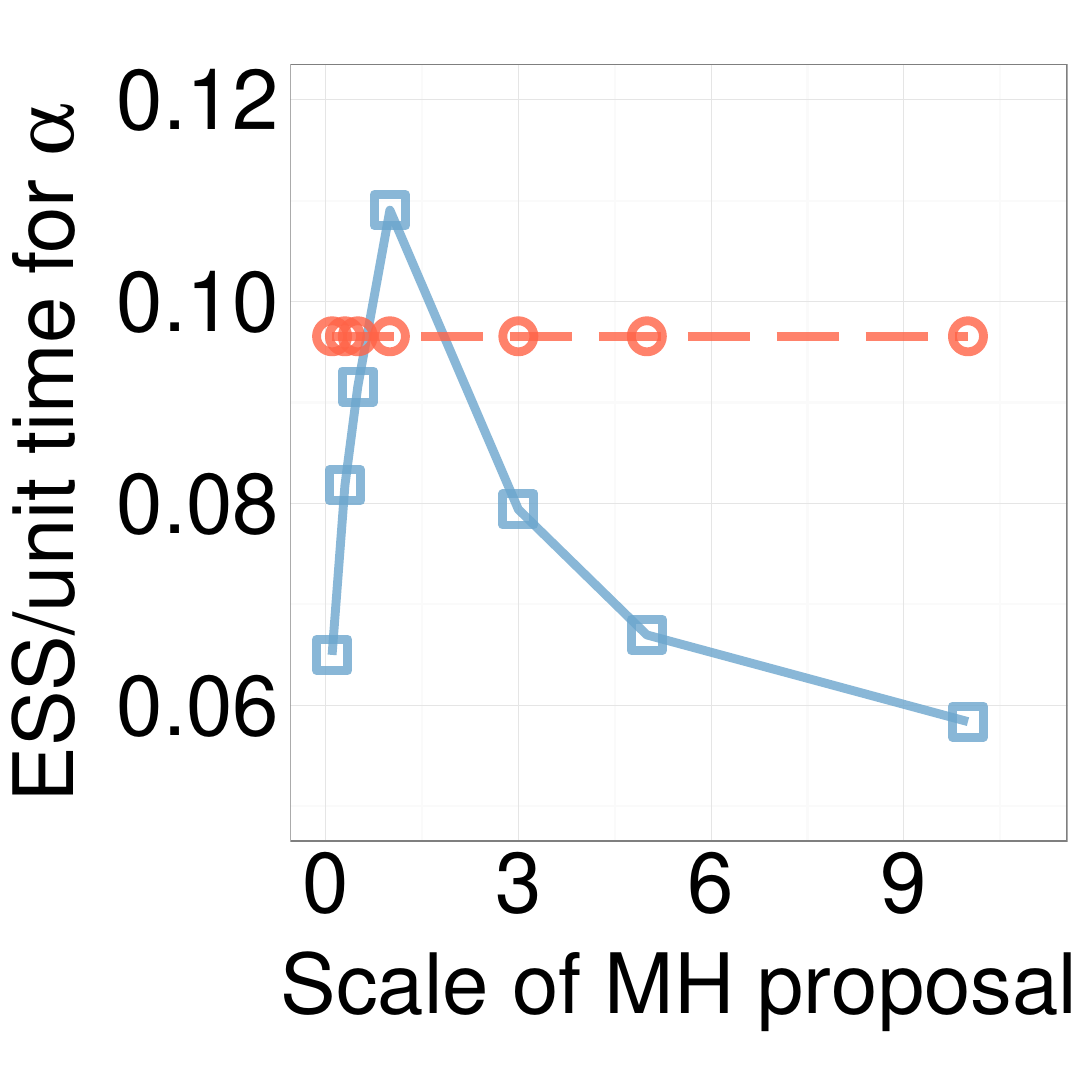}
	\includegraphics [width=0.48\textwidth, angle=0]{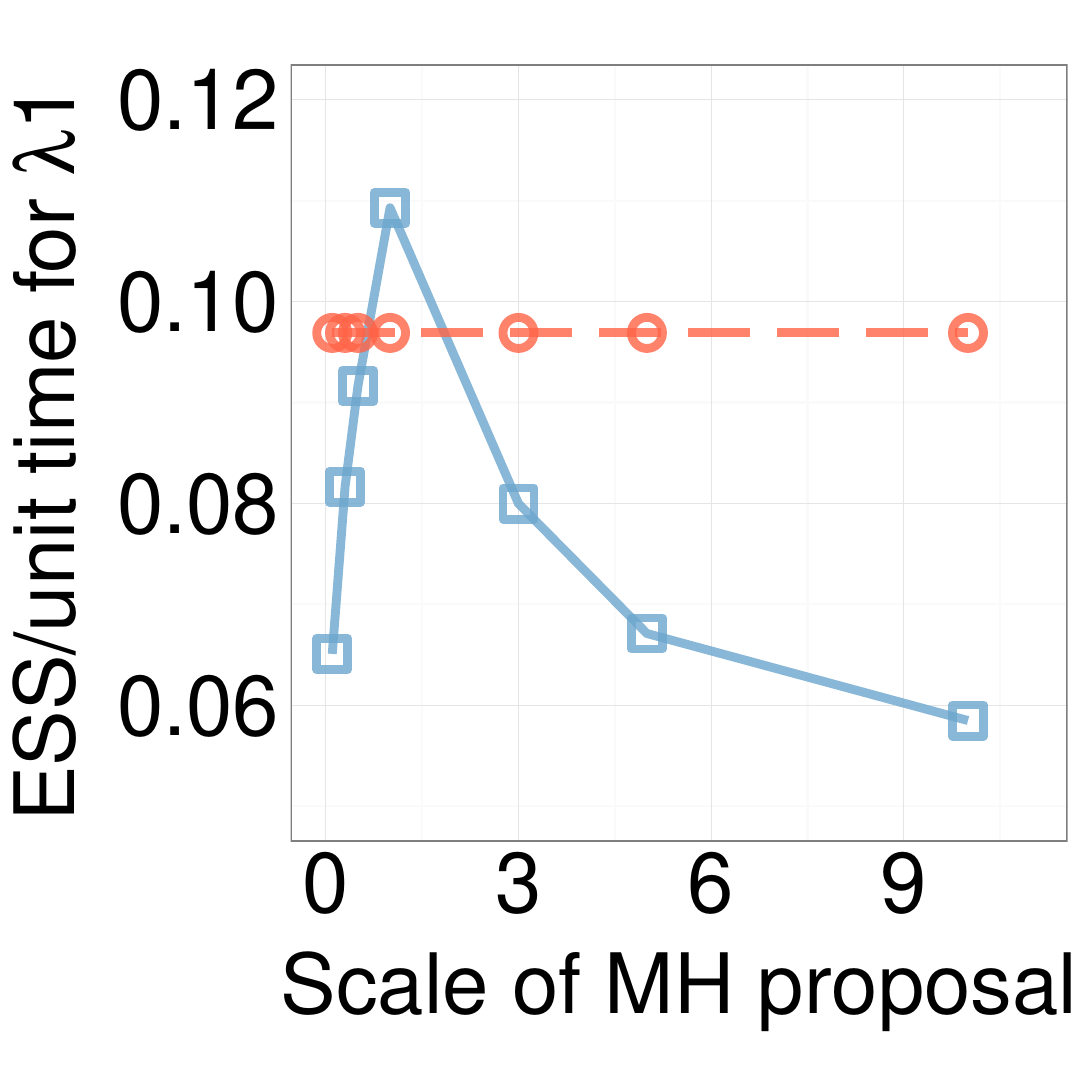}
  \end{minipage}  
  \begin{minipage}[!hp]{.01\linewidth}
  \end{minipage}
  \begin{minipage}[!hp]{.38\linewidth}
    \caption{ESS/sec for $(\alpha,\lambda_1)$ for the E.\ Coli data. 
      Blue squares are symmetrized MH as we vary the variance of the proposal distribution. 
      The red circles are Gibbs. }
     \label{fig:ECOLI}
  \end{minipage}  
\vspace{-.2in}
  \end{figure}

  Figure~\ref{fig:TRACE_ECOLI} shows trace and autocorrelation plots for the parameter $\alpha$ produced by the Gibbs sampler (left) and our proposed sampler with $\kappa$ set to $1$.
  We see that this is a fairly hard MCMC sampling problem, however our sampler clearly outperforms Gibbs, which mixes very poorly.
  Both posterior distributions agreed with each other though, with a two sample-Kolmogorov Smirnov test returning a p-value of $ 0.1641$. 
  
  Figure~\ref{fig:ECOLI} shows the ESS/s for different $\kappa$, for parameters $(\alpha, \lambda_1)$. 
  Both have similar results, and as suggested by the earlier figure, we see that for the typical setting of $\kappa=1$, our sampler ourperforms Gibbs. 
  In this problem though, Gibbs sampling does outperform our method for large or small $\kappa$. 
  This is because a) large or small $\kappa$ mean the proposal variance is too large or too small, and b) the Gibbs conditionals over the parameters are conjugate for this model. 
  We expect the improvements our method offers to be more robust to the proposal distribution for more complex models without such conditional conjugacy.



\section{Geometric ergodicity}
We derive conditions under which our symmetrized MH algorithm inherits mixing properties of an `ideal' sampler that can compute the marginal likelihood $P(X|\theta)$, with the MJP path integrated out. 
This algorithm proposes a new parameter $\vartheta$ from $q(\vartheta|\theta)$, and accepts with probability $\alpha_I(\theta,\vartheta; X) = 1 \wedge \frac{P(X , \vartheta)q(\theta| \vartheta)} {P(X , \theta)q(\vartheta|\theta)}$.  
The resulting Markov chain has transition probability
$P_I(\theta'|\theta) = q(\theta'|\theta)\alpha_I(\theta,\theta';X) + \left[1-\int \dif\vartheta
q(\vartheta|\theta)\alpha_I(\theta,\vartheta;X)\right]\delta_\theta(\theta')$, the first term corresponding to acceptance, and the second, rejection~\citep{meyn2009}.

Our main result is Theorem~\ref{thm:geom_erg}, which shows that if the ideal MCMC sampler is geometrically ergodic, then so is our sampler in Algorithm~\ref{alg:MH_improved}. 
Informally, an MCMC algorithm is geometrically ergodic when the total variation distance between the distribution over states and the stationary distribution decreases geometrically with the number of iterations.
~\citet{meyn2009} provide more details, as well as sufficient conditions that we exploit in Theorem~\ref{thm:geom_erg}.
Geometric ergodicity is an important property of an MCMC chain, guaranteeing that the central limit theorem (CLT) holds for ergodic averages calculated with MCMC samples.
Before diving into the proofs,  we first state our assumptions, 
\begin{assumption}
The uniformization rate is set as $\Omega(\theta, \vartheta) = \Omega(\theta) + \Omega(\vartheta)$, where 
$\Omega(\theta) = k_1 \max_s A_{s}(\theta) + k_0$, for some 
$k_1 > 1, k_0 > 0$.
\label{asmp:unif_rate}
\end{assumption}
\noindent Although it is possible to specify broader conditions under which our result holds, for clarity we focus on this case. 
In our experiments, one of our settings had $k_1=1$. We believe our result holds for this case too, but do not prove it.
We can drop $k_0$ if $\inf_\theta \max_s A_s(\theta) > 0$.

\begin{assumption}
 There exists a positive constant $\theta_0$ such that for any $\theta_x, \theta_y$ satisfying $\| \theta_x \| \ge \| \theta_y \| > \theta_0$, we have $\Omega(\theta_x) \ge \Omega(\theta_y)$.  
  \label{asmp:mono_tail}
\end{assumption}
\noindent 
We make this assumption to avoid book-keeping, so $\Omega(\theta)$ increases monotonically with $\theta$.

\begin{definition}
Let $\pi_\theta$ be the stationary distribution of the MJP with rate-matrix 
$A(\theta)$, and define $D_\theta = \text{diag}(\pi_\theta)$. Define 
$\tilde{A}(\theta) = D_\theta^{-1}A(\theta)D_\theta$, and the 
{\em reversibilization} of $A(\theta)$ as $R_A(\theta) = 
(A(\theta)+\tilde{A}(\theta))/2$. 
\label{def:mjp_symm}
\end{definition}
\noindent This definition is from~\cite{fill1991}, who shows that $R_A(\theta)$ is reversible with real eigenvalues, the smallest being $0$. 
The larger its second smallest eigenvalue, the faster the MJP converges to its 
stationary distribution $\pi_\theta$.
Note that if $A(\theta)$ is reversible, then $R_A(\theta) = A(\theta)$.

\begin{assumption}
  Write $\lambda^{R_A}_2(\theta)$ for the second smallest eigenvalue of
    $R_A(\theta)$. There exist $\mu > 0, \theta_1 > 0$
    such that for all $\theta$ satisfying $ \| \theta \|> \theta_1$, 
    we have $ \lambda^{R_A}_2(\theta) \geq \mu \max_s A_s(\theta)$
    (or equivalently from Assumption~\ref{asmp:unif_rate}, 
    $ \lambda^{R_A}_2(\theta) \geq \mu \Omega(\theta)$),
    and $\min_s \pi_\theta(s) > 0$. 
  \label{asmp:cond_num}
\end{assumption} 
\noindent 
This assumption is the strongest we need, requiring that 
$\lambda^{R_A}_2(\theta)$ (which sets the MJP mixing rate) grows 
at least as fast as $\max_s A_s(\theta)$. 
This is satisfied when $\theta$ is bounded, or when, as in our experiments, all elements of $A(\theta)$ grow 
with $\theta$ at similar rates, controlling the relative stability of 
the least and most stable states.
While not trivial, this is reasonable: the MCMC chain over MJP paths 
will mix well if we can control the mixing of the MJP itself.
A less restrictive assumption would also account for the tail behavior of the prior over $\theta$, though we do not do this.
To better understand this assumption, recall $B(\theta, \theta') = I+\frac{A(\theta)}{\Omega(\theta, \theta')}$
is the transition matrix of the embedded Markov chain from uniformization, which has the same stationary distribution $\pi_\theta$ as $A(\theta)$.
Define the reversibilization $R_B(\theta,\theta')$ of $B(\theta,\theta')$ 
just as we did $R_A(\theta)$ from $A(\theta)$. 
\begin{lemma}
  Consider $\|\theta\| > \max(\theta_0, \theta_1)$ and $\theta'$ such that 
$\frac{1}{K_0} \le \frac{\Omega(\theta')}{\Omega(\theta)} \le K_0 $, 
where $K_0$ 
satisfies $(1 + \frac{1}{K_0})k_1 \ge 2$. 
For all such $(\theta,\theta')$, the Markov chain with transition matrix 
$B(\theta,\theta')$ converges geometrically to stationarity at a rate 
uniformly bounded away from $0$.
  \label{lem:eig_lemma}
\end{lemma}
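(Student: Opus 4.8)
The plan is to prove the stronger quantitative statement that the $L^2(\pi_\theta)$-operator norm of $B(\theta,\theta')$, restricted to functions orthogonal to the constants, is at most $\sqrt{1-\gamma_0}$ for a single constant $\gamma_0>0$ that does not depend on the admissible pair $(\theta,\theta')$. Iterating this bound gives $\|B(\theta,\theta')^n f\|\le(1-\gamma_0)^{n/2}\|f\|$ for $f\perp_{\pi_\theta}\mathbf 1$, and converting to total variation via $\min_s\pi_\theta(s)>0$ (Assumption~\ref{asmp:cond_num}) yields $\|B(\theta,\theta')^n(x,\cdot)-\pi_\theta\|_{\mathrm{TV}}\le\tfrac12(\min_s\pi_\theta(s))^{-1/2}(1-\gamma_0)^{n/2}$, i.e.\ geometric convergence at the uniform rate $\sqrt{1-\gamma_0}$. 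Throughout I write $\Omega=\Omega(\theta)+\Omega(\theta')$ and $A=A(\theta)$, take $\langle\cdot,\cdot\rangle$, $\|\cdot\|$ to be the $L^2(\pi_\theta)$ inner product and norm, and use that $\pi_\theta$ is stationary for both $A(\theta)$ and $B(\theta,\theta')$.

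The main device is a ``laziness'' decomposition, which is exactly what the hypothesis $(1+\tfrac1{K_0})k_1\ge2$ buys. From the ratio constraint and Assumption~\ref{asmp:unif_rate}, $\Omega\ge(1+\tfrac1{K_0})\Omega(\theta)\ge(1+\tfrac1{K_0})k_1\max_sA_s(\theta)\ge2\max_sA_s(\theta)$, so the matrix $B':=I+\tfrac2\Omega A$ has non-negative entries, unit row sums, and stationary distribution $\pi_\theta$, i.e.\ it is stochastic, and $B(\theta,\theta')=\tfrac12(I+B')$. Two elementary facts about $B'$ then do the work: (a) a stochastic matrix with $\pi_\theta$ stationary is an $L^2(\pi_\theta)$-contraction, so $\|B'f\|\le\|f\|$; and (b) the skew part of $B'$ vanishes in the quadratic form, so $\langle B'f,f\rangle=\langle R_{B'}f,f\rangle$ with $R_{B'}=I+\tfrac2\Omega R_A(\theta)$ the reversibilization of Definition~\ref{def:mjp_symm}; since $R_{B'}$ is $L^2(\pi_\theta)$-self-adjoint with eigenvalues $1-\tfrac2\Omega\nu$ over the eigenvalues $\nu\ge0$ of $-R_A(\theta)$ and $\mathbf 1$ a top eigenvector, $\langle B'f,f\rangle\le\bigl(1-\tfrac2\Omega\lambda^{R_A}_2(\theta)\bigr)\|f\|^2$ for $f\perp\mathbf 1$, with $\lambda^{R_A}_2(\theta)$ the quantity of Assumption~\ref{asmp:cond_num}. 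Expanding $\|B(\theta,\theta')f\|^2=\tfrac14\bigl(\|f\|^2+2\langle B'f,f\rangle+\|B'f\|^2\bigr)$ and applying (a) and (b) gives, for $f\perp\mathbf 1$ with $\|f\|=1$, the bound $\|B(\theta,\theta')f\|^2\le1-\tfrac1\Omega\lambda^{R_A}_2(\theta)$.

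To finish I would bound $\lambda^{R_A}_2(\theta)/\Omega$ below uniformly. Assumption~\ref{asmp:cond_num} (valid since $\|\theta\|>\theta_1$) gives $\lambda^{R_A}_2(\theta)\ge\mu\max_sA_s(\theta)$; Assumption~\ref{asmp:unif_rate} together with the ratio constraint gives $\Omega\le(1+K_0)\Omega(\theta)=(1+K_0)\bigl(k_1\max_sA_s(\theta)+k_0\bigr)$; and $\max_sA_s(\theta)$ stays bounded away from $0$ for $\|\theta\|$ large (using Assumption~\ref{asmp:mono_tail}, which makes $\Omega$ and hence $\max_sA_s$ nondecreasing in $\|\theta\|$; or, adopting the simplification $k_0=0$, $\inf_\theta\max_sA_s(\theta)>0$ from the remark after Assumption~\ref{asmp:unif_rate}, in which case $\max_sA_s(\theta)/\Omega(\theta)\equiv1/k_1$ and there is nothing to check). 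Hence $\lambda^{R_A}_2(\theta)/\Omega\ge\gamma_0$ for some $\gamma_0>0$ depending only on $\mu,k_0,k_1,K_0$. Aperiodicity of $B(\theta,\theta')$ follows from its strictly positive diagonal entries $1-A_i(\theta)/\Omega$ (as $\Omega>\Omega(\theta)\ge\max_sA_s(\theta)$) and irreducibility from the full support of $\pi_\theta$, so the total-variation bound above applies and the proof is complete.

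The only genuinely non-routine step is spotting the decomposition $B(\theta,\theta')=\tfrac12(I+B')$ with $B'$ stochastic: this is where $(1+\tfrac1{K_0})k_1\ge2$ enters, and it is what lets the \emph{non-reversibility} of the MJP be handled at no cost, since a stochastic matrix is automatically an $L^2(\pi_\theta)$-contraction and the $\|B'f\|^2$ term then needs no estimate of its own. (If one wished to avoid that hypothesis, the alternative route is to bound $\|A(\theta)f\|^2\le 2\max_sA_s(\theta)\langle-R_A(\theta)f,f\rangle$ directly --- a relative-boundedness estimate special to rate matrices, obtained from Cauchy--Schwarz on $(Af)_i=\sum_{j\neq i}A_{ij}(f_j-f_i)$ --- and absorb it against the $-\tfrac2\Omega\langle-R_A(\theta)f,f\rangle$ gain using $\Omega\ge k_1\max_sA_s(\theta)$; but the laziness argument is cleaner and is what makes the stated hypothesis natural.)
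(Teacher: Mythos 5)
Your proof is correct and rests on exactly the same two ingredients as the paper's: the spectral identity $1-\lambda^{R_B}_2(\theta,\theta')=\lambda^{R_A}_2(\theta)/\Omega(\theta,\theta')\ge\mu/(1+K_0)$ and the laziness condition $\Omega(\theta,\theta')\ge 2\max_sA_s(\theta)$ (equivalently, diagonal entries of $B(\theta,\theta')$ exceeding $1/2$). The only difference is that where the paper combines these two facts by citing \cite{fill1991}, you prove that step from scratch via the decomposition $B(\theta,\theta')=\tfrac12(I+B')$ with $B'$ stochastic and the resulting $L^2(\pi_\theta)$ operator-norm bound --- a self-contained and correct substitute for the citation.
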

\begin{proof}
A little algebra gives $R_B(\theta,\theta') = I + R_A(\theta)/\Omega(\theta,\theta')$. It 
follows that both $R_A$ and $R_B$ share the same eigenvectors, with 
eigenvalues satisfying 
$\lambda_{R_B}(\theta, \theta') = 1 - \frac{\lambda_{R_A}(\theta)}{\Omega(\theta,
\theta')}$. 
The second largest eigenvalue $\lambda_2^{R_B}(\theta,\theta')$ of $R_B$ and  second smallest eigenvalue $\lambda^{R_A}_2(\theta,\theta')$ of $R_A$ then satisfy  
$\lambda^{R_B}_2(\theta,\theta') = 1 - \frac{\lambda^{R_A}_2(\theta)}{\Omega(\theta, \theta')}$.
From assumptions~\ref{asmp:unif_rate} and~\ref{asmp:cond_num}, and 
the lemma's assumptions, 
$1 - \lambda^{R_B}_2(\theta,\theta') = \frac{\lambda^{R_A}_2(\theta)}{\Omega(\theta, \theta')} 
\ge \frac{\lambda^{R_A}_2(\theta)}{(K_0+1)\Omega(\theta)} 
\ge \frac{\mu}{K_0+1} $. 
Also, since $(1 + \frac{1}{K_0})k_1 \ge 2$,
\vspace{-.1in}
\begin{align*}
\Omega(\theta, \theta') &= \Omega(\theta) + \Omega(\theta') \ge (1 + \frac{1}{K_0})\Omega(\theta)
 > (1 + \frac{1}{K_0})k_1\max_s A_{s}(\theta) \ge 2\max_s A_{s}(\theta). 
\end{align*}
So for any state $s$, the diagonal element $B_s(\theta, \theta') = 1 - 
\frac{A_s(\theta)}{\Omega(\theta, \theta')}> \frac{1}{2}$.
From~\cite{fill1991}, this diagonal property and the bound 
on $1-\lambda_2^{R_B}(\theta,\theta')$ give the result.
\end{proof}

Our overall proof strategy is to show that on a set with {$\| \theta \|$} and $|W|$ large enough, the conditions of Lemma~\ref{lem:eig_lemma} hold with high probability. 
Lemma~\ref{lem:eig_lemma} then will imply that the distribution over states for the continuous-time MJP and its discrete-time counterpart embedded in $W$ will be brought arbitrarily close to $\pi_\theta$ (and thus to each other), allowing our sampler to inherit mixing properties of the ideal sampler. 
We will exploit the boundedness of the complement of this set to establish a `small-set condition' where the MCMC algorithm forgets its state with some probability. 
These two conditions will be sufficient for geometric ergodicity. 
The next assumption states these small-set conditions for the ideal sampler.

\begin{assumption}
For the ideal sampler with transition probability $p_I(\theta'|\theta)$: \\
i) for each $M$, for the set $\SM_M=\{\theta:\Omega(\theta)\le M \}$, there exists a probability measure $\phi$ and a constant $\kappa_1 > 0$ s.t.\ 
$\alpha_I(\theta, \theta'; X) q(\theta' | \theta) \ge \kappa_1 \phi(\theta')$ for $\theta \in \SM_M$. 
Thus $B_M$ is a $1$-small set. \\
ii) for $M$ large enough, $\exists \rho < 1$ s.\ t.\
$\int \Omega(\ptheta) p_I(\ptheta|\theta) d\ptheta
\leq (1-\rho) \Omega(\theta)+L_I$, $\forall \theta \not\in \SM_M$.
  \label{asmp:ideal_geom}
\end{assumption}
\noindent 
These two conditions are standard small-set and drift conditions necessary for the ideal sampler to satisfy geometric ergodicity. 
The first implies that for $\theta$ in $\SM_M$, the ideal sampler `forgets' its current location with probability $\kappa_1$. 
The second condition ensures that for $\theta$ outside this set, the ideal sampler drifts towards $\SM_M$. 
These two conditions together imply geometric mixing with rate equal or faster than $\kappa_1$~\citep{meyn2009}.
Observe that we have used $\Omega(\theta)$ as the so-called Lyapunov-Foster function to define the drift condition for the ideal sampler. 
This is the most natural choice, though our proof can be tailored to different choices. 
Similarly, we could easily allow $\SM_M$ to be an $n$-small set for any $n\ge 1$ (so the ideal sampler needs $n$ steps before it can forget its current value in $\SM_M$); we restrict ourselves to the $1$-small case for clarity.


\begin{assumption}
$\exists$ $ \ub > \lb > 0$ s.t.
$\prod P(X | s_o, \theta) \in [\lb, \ub]$ for any state $s_o$ and $\theta$.
  \label{asmp:obs_bnd}
\end{assumption}
\noindent This assumption follows~\cite{miasojedow2017}, and holds if
$\theta$ does not include parameters of the observation process (or if so,
the likelihood is finite and nonzero for all settings of $\theta$). We can relax this assumption,
though this will introduce technicalities unrelated to our focus, which 
is on complications in parameter inference arising from the continuous-time
dynamics, rather than the observation process. 

\begin{assumption}
Given the proposal density $q(\ptheta | \theta)$, $\exists \eta_0 > 0, \theta_2 > 0$ 
such that for $\theta$ satisfying $\| \theta \|  > \theta_2$, 
$ \int_\Theta \Omega(\ptheta)^2 q(\ptheta | \theta)d\ptheta \leq \eta_0 \Omega(\theta)^2.$
\label{asmp:integ_bound}
\end{assumption}
\noindent This mild requirement can be satisfied by choosing a proposal 
distribution $q$ that does not attempt to explore large $\theta$'s too 
aggressively.
The next corollary follows from a simple application of the Cauchy-Schwarz inequality, see the supplement for the proof.
\begin{corollary}
Given the proposal density $q(\ptheta | \theta)$, $\exists \eta_1 > 0, \theta_2 > 0$ such that for $\theta$ 
satisfying $\| \theta \|  > \theta_2$, 
$ \int_\Theta \Omega(\ptheta) q(\ptheta | \theta)d\ptheta \leq \eta_1 \Omega(\theta).$
\label{corol:integ_bound}
\end{corollary}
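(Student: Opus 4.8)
This corollary is an immediate consequence of Assumption~\ref{asmp:integ_bound} together with the fact that $q(\cdot|\theta)$ is a probability density, so the whole argument is a one-line application of Jensen's inequality (equivalently Cauchy--Schwarz with the weight $q(\nu|\theta)\,d\nu$). I would take $\theta_2$ to be exactly the threshold appearing in Assumption~\ref{asmp:integ_bound}, and then show $\eta_1 = \sqrt{\eta_0}$ works.

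\textbf{Key step.} For any $\theta$ with $\|\theta\| > \theta_2$, since $\int_\Theta q(\nu|\theta)\,d\nu = 1$ and $x \mapsto x^2$ is convex, Jensen's inequality gives
\[
\left( \int_\Theta \Omega(\nu)\, q(\nu|\theta)\,d\nu \right)^2 \;\le\; \int_\Theta \Omega(\nu)^2\, q(\nu|\theta)\,d\nu \;\le\; \eta_0\, \Omega(\theta)^2,
\]
where the last inequality is Assumption~\ref{asmp:integ_bound}. Since $\Omega(\nu) \ge 0$ (indeed $\Omega(\nu) = k_1 \max_s A_s(\nu) + k_0 > 0$ by Assumption~\ref{asmp:unif_rate}), the left-hand integral is nonnegative, so we may take square roots to conclude $\int_\Theta \Omega(\nu)\, q(\nu|\theta)\,d\nu \le \sqrt{\eta_0}\,\Omega(\theta)$. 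Setting $\eta_1 = \sqrt{\eta_0}$ finishes the proof.

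\textbf{Obstacle.} There is essentially no obstacle here; the only thing to be careful about is the nonnegativity of $\Omega$ (so that taking square roots is legitimate and the bound is not vacuous), which is guaranteed by the explicit form of $\Omega$ in Assumption~\ref{asmp:unif_rate}. One could alternatively phrase this via the trivial bound $\Omega(\nu) \le \frac{1}{2}\big(1 + \Omega(\nu)^2\big)$, but that yields a weaker, additive rather than multiplicative, conclusion, so the Jensen route is the one to use.
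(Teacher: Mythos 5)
Your proof is correct and is essentially identical to the paper's: the paper applies Cauchy--Schwarz to $\left[\int_\Theta \Omega(\nu)q(\nu|\theta)\,d\nu\right]^2$ using $\int_\Theta q(\nu|\theta)\,d\nu = 1$ and Assumption~\ref{asmp:integ_bound}, arriving at the same constant $\eta_1 = \sqrt{\eta_0}$ with $\theta_2$ taken from that assumption. Your Jensen phrasing is just the weighted form of the same inequality, so there is nothing to add.
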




\noindent We need two further assumptions on $q(\theta'|\theta)$. There are satisfied in our experiments.
\begin{assumption}
For any $\epsilon>0$,  there exist finite $M_\epsilon$, $\theta_{3,\epsilon}$
such that for $\theta$ satisfying $\| \theta \|  > \theta_{3,\epsilon}$,
the condition $q(\{\theta': \frac{p(\theta')q(\theta|\theta')}{p(\theta)q(\theta'|\theta)} \le M_\epsilon\}|\theta) 
> 1 - \epsilon$ holds.
  \label{asmp:prior}
\end{assumption}
\noindent This holds, when e.g.\ $p(\theta)$ is a gamma distribution,
and $q(\theta'|\theta)$ is Gaussian.



\begin{assumption}
For any $\epsilon > 0$ and  $K > 1$, there exists $\theta_{4,\epsilon}^K$ such that 
for $\theta$ satisfying $\| \theta \|  > \theta_{4,\epsilon}^K$, the 
condition
$q(\{\theta':\frac{\Omega(\theta')}{\Omega(\theta)} \in 
  \left[\frac{1}{K}, K\right]\} | \theta) > 1 - \epsilon$ holds.
  \label{asmp:omega}
\end{assumption}
\noindent This holds when e.g.\ $q(\theta'|\theta)$ is a centered on $\theta$ and has 
finite variance.





\begin{theorem}
  Under the above assumptions, our symmetrized auxiliary variable MCMC sampler in algorithm~\ref{alg:MH_improved} is geometrically ergodic.  \label{thm:geom_erg}
\end{theorem}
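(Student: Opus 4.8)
The plan is to verify the standard drift-and-minorization conditions of \cite{meyn2009} for the transition kernel $P_{aux}$ of Algorithm~\ref{alg:MH_improved}, using $\Omega(\theta)$ as the Lyapunov function (as we do for the ideal sampler in Assumption~\ref{asmp:ideal_geom}), and then conclude geometric ergodicity. The key point is that $P_{aux}$ differs from the ideal kernel $P_I$ only because the acceptance probability $\alpha_{aux}(\theta,\vartheta;W,X)$ uses $P(X|W,\theta,\vartheta)$ (a forward pass over the discretized grid $W$) in place of the true marginal $P(X|\theta)$, with an extra averaging over the random grid $W\sim P(W|\theta,\vartheta)$. So the argument reduces to showing that $\alpha_{aux}$ is uniformly close to $\alpha_I$ for $\|\theta\|$ large, with high probability over $W$, and then transferring the drift and small-set conditions.

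\textbf{Step 1 (grid is fine enough with high probability).} For $\|\theta\|$ large, use Assumptions~\ref{asmp:mono_tail} and~\ref{asmp:omega} to show that with $q$-probability at least $1-\epsilon$ the proposed $\vartheta$ satisfies $\Omega(\vartheta)/\Omega(\theta)\in[1/K_0,K_0]$ with $K_0$ as in Lemma~\ref{lem:eig_lemma}; and conditionally on such a $(\theta,\vartheta)$, the number of grid points $|W|$ (Poisson with mean $\Omega(\theta,\vartheta)\,t_{end}$) concentrates, so $|W|$ is large with high probability. On that event Lemma~\ref{lem:eig_lemma} applies: the embedded chains with matrices $B(\theta,\vartheta)$ and $B(\vartheta,\theta)$ are geometrically ergodic toward $\pi_\theta$, $\pi_\vartheta$ respectively, at a rate bounded away from $0$.

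\textbf{Step 2 (the forward pass recovers the true marginal).} Since $B(\theta,\vartheta)$ converges to $\pi_\theta$ at a uniform geometric rate and $|W|$ is large, the per-interval likelihood factors (bounded in $[\lb,\ub]$ by Assumption~\ref{asmp:obs_bnd}, after normalizing by the number of observations) make the forward-pass marginal $P(X|W,\theta,\vartheta)$ converge, uniformly, to $P(X|\theta)$ as $|W|\to\infty$; the same for the swapped pair. Hence on the high-probability event of Step~1, $\alpha_{aux}(\theta,\vartheta;W,X)$ is within $\epsilon$ of $\alpha_I(\theta,\vartheta;X)$. Together with Assumption~\ref{asmp:prior}, which controls the prior/proposal ratio with high probability, this shows $P_{aux}(\cdot|\theta)$ and $P_I(\cdot|\theta)$ agree up to $O(\epsilon)$ in the relevant averaged sense for $\|\theta\|>\theta^*$.

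\textbf{Step 3 (drift and small set, then conclude).} For $\|\theta\|>\theta^*$, combine the $O(\epsilon)$ closeness of Step~2 with the ideal drift inequality (Assumption~\ref{asmp:ideal_geom}(ii)) and the moment bounds (Assumption~\ref{asmp:integ_bound} and Corollary~\ref{corol:integ_bound}, which bound the contribution of the low-probability `bad' event where the grid is too coarse), to obtain $\int \Omega(\vartheta)P_{aux}(\vartheta|\theta)\,d\vartheta \le (1-\rho')\Omega(\theta)+L$ for some $\rho'>0$; here $\Omega(\vartheta)^2$ moments are needed precisely to Cauchy–Schwarz away the bad event. For the bounded region $\{\Omega(\theta)\le M\}$, a direct argument (the grid is nonempty with positive probability, the forward pass gives strictly positive likelihoods by Assumption~\ref{asmp:obs_bnd}, and the prior/proposal densities are bounded below on a compact set) shows that $\alpha_{aux}(\theta,\vartheta;W,X)q(\vartheta|\theta)$ averaged over $W$ dominates $\kappa_1'\phi'(\vartheta)$ for a probability measure $\phi'$, i.e.\ $\{\Omega(\theta)\le M\}$ is a $1$-small set for $P_{aux}$. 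Then Theorem~15.0.1 of \cite{meyn2009} yields geometric ergodicity.

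\textbf{Main obstacle.} The crux is Step~2: making quantitative and \emph{uniform} (over all admissible $\theta$ with $\|\theta\|$ large and over the random $W$) the claim that the discretized forward-pass likelihood $P(X|W,\theta,\vartheta)$ approximates the true continuous-time marginal $P(X|\theta)$. This needs Lemma~\ref{lem:eig_lemma} to control how fast the embedded chain forgets, careful handling of the observation-likelihood factors on the random intervals between grid points, and a union bound ensuring enough grid points fall between consecutive observations; the Poisson concentration and the eigenvalue bound are what make this possible, but assembling them into a single uniform estimate is the delicate part.
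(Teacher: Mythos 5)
Your Steps 1 and 2 correctly identify one half of the argument, and they match the paper's route closely: Lemma~\ref{lem:eig_lemma} gives uniform geometric mixing of the embedded chain $B(\theta,\vartheta)$ when $\Omega(\vartheta)/\Omega(\theta)$ is controlled, Lemma~\ref{lem:eigenvalue_lemma} and Proposition~\ref{prop:mix0} turn this into closeness of $\alpha$ and $\alpha_I$ on a high-probability event, and the second-moment Assumption~\ref{asmp:integ_bound} is indeed used exactly as you say, to Cauchy--Schwarz away the bad event in the drift bound for $\mathbb{E}[\Omega(\theta')]$.

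The gap is in Step 3, and it is structural. The sampler is \emph{not} a Markov chain in $\theta$ alone: the retained transition times $T \subseteq W$ are carried from one iteration to the next (the new grid is $W' = T \cup U'$), so the chain lives on $(W,\theta,\vartheta)$, and drift and minorization must be established on that space. Your choice of Lyapunov function $\Omega(\theta)$ then fails, because its sublevel set $\{\Omega(\theta)\le M\}$ is not small: to forget the current state the backward pass must decouple $W'$ from $W$, and the natural way to do this --- forcing an all-self-transition path so that $T=\emptyset$ --- has probability of order $\delta_1^{|W|}$, which is not bounded below uniformly when $|W|$ is unbounded on the set. This is why the paper takes $\lambda_1|W| + \Omega(\theta)$ as the Lyapunov function and the small set $B_{h,M}=\{|W|\le h,\ \theta\in \SM_M\}$, and why the drift argument must also contract $|W|$: it splits $W'=T\cup U'$ and uses a uniform lower bound $\delta_1$ on the posterior self-transition probability (Proposition~\ref{prop:self_tr}, which rests on $k_0>0$ in Assumption~\ref{asmp:unif_rate} and the likelihood bounds) to get $\mathbb{E}[|T|\,|\,W,\theta,\vartheta,X]\le(1-\delta_1)|W|$, together with $\mathbb{E}[|U'|]\le t_{end}(1+\eta_1)\Omega(\theta)$. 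None of this appears in your proposal. A secondary point: even after fixing the state space, the set cannot be $1$-small, because the one-step kernel places the auxiliary coordinate at $\delta_\theta(\vartheta')$ on acceptance, an atom depending on the starting point; the paper composes two steps to obtain a minorizing measure independent of the initial state.
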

\begin{proof}
\noindent This theorem follows from two lemmas we will prove.
Lemma~\ref{lem:small_set} shows there exist small sets 
$\{(W,\theta,\vartheta): \lambda_1|W| + \Omega(\theta) < M \}$ for 
$\lambda_1, M > 0$, within which our sampler forgets its current state with 
some positive probability. Lemma~\ref{lem:drift}
shows that for appropriate $(\lambda_1,M)$, our sampler drifts towards this set whenever
outside. Together, these two results imply geometric
ergodicity~\citep[Theorems 15.0.1 and Lemma 15.2.8]{meyn2009}.
If $\sup_\theta \Omega(\theta) < \infty$, we just need 
the small set $\{(W,\theta,\vartheta: |W| < M \}$ for some $M$.
\end{proof}
For easier comparison with the ideal sampler, we begin an MCMC iteration from step 5 in Algorithm~\ref{alg:MH_improved}. 
Thus, our sampler operates on $(\theta,\vartheta,W)$,  with $\theta$ the current parameter, $\vartheta$ the auxiliary variable, and $W$ the Poisson grid. 
An MCMC iteration updates this by 
(a) sampling states $V$  with a backward pass, 
(b) discarding $\vartheta$ and self-transition times, (c) sampling $\ptheta$ from $q(\ptheta|\theta)$, 
(d) sampling $U'$ given $(\theta,\ptheta, S, T)$, setting $W' = T \cup U'$, and discarding $S$, (e) proposing to swap 
$(\theta,\ptheta)$ and then (f) accepting or rejecting with a forward pass. 
On acceptance, $\theta' = \ptheta$ and $\vartheta' = \theta$, and on rejection, $\theta'=\theta$ and $\vartheta'=\ptheta$, so that the MCMC state at the end of the iteration is $(\theta',\vartheta', W')$. 
We write $(\theta'',\vartheta'',W'')$ for the MCMC state after two iterations.
Recall that step (a) actually assigns states $V$ to $W$. 
$T$ are the elements of $W$ where $V$ changes value, and $S$ are the corresponding elements of $V$. 
The remaining elements $U$ are the elements of $W$ corresponding to self-transitions. 
For reference, we repeat some of our notation in the supplementary material.

We first bound self-transition probabilities of the embedded Markov chain from $0$: 
\begin{proposition}
The posterior probability that the embedded Markov chain makes a
self-transition,
$P(V_i = V_{i + 1} | W, X, \theta, \vartheta) \ge \delta_1 > 0$,
for 
any $\theta,\vartheta, W$.
\label{prop:self_tr}
\end{proposition}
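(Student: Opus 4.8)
The plan is to lower-bound the posterior self-transition probability by relating it to the prior self-transition probability of the embedded chain and then controlling how much the observation likelihood can distort it. First I would note that, conditional on $W$, $\theta$, $\vartheta$ but \emph{not} on $X$, the states $V$ form a discrete-time Markov chain with transition matrix $B(\theta,\vartheta) = I + A(\theta)/\Omega(\theta,\vartheta)$, and by the calculation already used in Lemma~\ref{lem:eig_lemma} the diagonal entries satisfy $B_s(\theta,\vartheta) = 1 - A_s(\theta)/(\Omega(\theta)+\Omega(\vartheta)) \ge 1 - A_s(\theta)/\Omega(\theta) \ge 1 - 1/k_1 > 0$ under Assumption~\ref{asmp:unif_rate} (or more simply $\ge 1/2$ whenever the hypotheses of Lemma~\ref{lem:eig_lemma} hold; in the general case the bound $1-1/k_1$ suffices since $k_1>1$). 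Hence the \emph{prior} probability of a self-transition at any fixed index is bounded below by a universal constant $c_0 = 1 - 1/k_1 > 0$, uniformly in $s$, $W$, $\theta$, $\vartheta$.

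Next I would pass from the prior to the posterior. Write the joint law of $(V,X)$ given $(W,\theta,\vartheta)$ and condition on everything except the pair $(V_i,V_{i+1})$; the posterior probability $P(V_i = V_{i+1} \mid W,X,\theta,\vartheta)$ is obtained by reweighting the prior transition probabilities by the likelihood factors that depend on $V_i$ (the observations in $[w_{i-1},w_i)$ and $[w_i,w_{i+1})$) together with the forward/backward messages carrying the rest of the data. The key point is that Assumption~\ref{asmp:obs_bnd} bounds the product of observation likelihoods over any interval, for any state, within $[\lb,\ub]$, so every such reweighting factor lies in a bounded range; more carefully, one runs the FFBS recursion and observes that the forward message $\alpha_i(\cdot)$ and backward message $\beta_{i+1}(\cdot)$ are, up to normalization, ratios of quantities each sandwiched between $\lb$ and $\ub$ times the corresponding purely-Markov quantity. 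Summing over the conditioning event and using the lower bound $c_0$ on the prior self-transition probability, one gets
\[
P(V_i = V_{i+1} \mid W,X,\theta,\vartheta) \;\ge\; \frac{\lb}{\ub}\, c_0 \;=:\; \delta_1 > 0,
\]
which is the claimed uniform bound. (A cleaner route: compare the posterior law of $(V_i,V_{i+1})$ with the posterior law obtained by forcing a self-transition, and bound the likelihood-ratio Radon–Nikodym factor between the two by $\ub/\lb$ using Assumption~\ref{asmp:obs_bnd}.)

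The main obstacle is the bookkeeping in the second step: one must be careful that the observations falling in the interval straddling $w_i$ are attributed to the correct state, and that ``pinching'' a transition to a self-transition does not change which observations are explained by $V_i$ versus $V_{i+1}$ — but since forcing $V_{i+1}=V_i$ only changes the state label on one side of $w_i$, and Assumption~\ref{asmp:obs_bnd} bounds the likelihood of \emph{any} block of observations under \emph{any} state uniformly, the ratio of the two path-likelihoods is controlled by a bounded number of such factors, giving a constant independent of $W$, $\theta$, $\vartheta$, and of how many observations actually lie in that interval. I would also remark that the bound is deliberately crude: we only need positivity of $\delta_1$, uniform over all $(\theta,\vartheta,W)$, which is exactly what the combination of the structural lower bound $c_0$ (from the choice of uniformization rate) and the observation bound (Assumption~\ref{asmp:obs_bnd}) delivers.
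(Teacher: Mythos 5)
Your proof is correct and follows essentially the same route as the paper's: bound the \emph{prior} self-transition probability of $B(\theta,\vartheta)$ below by a constant obtained from Assumption~\ref{asmp:unif_rate}, then convert this to a posterior bound by paying a factor $\lb/\ub$ via Assumption~\ref{asmp:obs_bnd}, yielding $\delta_1 = (\lb/\ub)\,c_0$. Incidentally, your constant $c_0 = 1-1/k_1$ is the correct consequence of $\Omega(\theta)=k_1\max_s A_s(\theta)+k_0$ with $k_1>1$; the paper's proof writes $1-1/k_0$ at this step, which appears to be a typo since $k_0>0$ need not exceed $1$.
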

\noindent The proof (in the supplement) exploits the bounded likelihood from 
assumption~\ref{asmp:obs_bnd}. A simple by-product of the proof is
the following corollary:
\begin{corollary}
$P(V_{i + 1} = s|V_s=s, W, X, \theta, \vartheta) \ge \delta_1 > 0$,
for 
any $\theta,\vartheta, W,s$.
\label{corol:self_tr}
\end{corollary}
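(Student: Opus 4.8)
The plan is to reduce the bound to the local forward--backward structure of the embedded Markov chain with transition matrix $B \te B(\theta,\vartheta) \te I + A(\theta)/\Omega(\theta,\vartheta)$. Fix a grid point $w_i$ and its successor $w_{i+1}$ in $W$. A priori $V$ is a Markov chain with kernel $B$, and the observations lying in $[w_i,w_{i+1})$ depend only on $V_i$ (the state is constant there), so conditioning on $V_i \te s$ makes $(V_{i+1},V_{i+2},\dots)$ together with all observations at times $\ge w_i$ independent of everything at or before $w_i$. Writing $\beta_{i+1}(\cdot)$ for the FFBS backward message at $w_{i+1}$ (the likelihood of all observations at times $\ge w_{i+1}$ given the state occupied there) and $L_i(s)$ for the local likelihood of the observations in $[w_i,w_{i+1})$ under state $s$ (as in Algorithm~\ref{alg:MH_improved}), Bayes' rule gives
\[
  P(V_{i+1}=s \mid V_i=s,\,W,X,\theta,\vartheta)
  = \frac{L_i(s)\,B_{ss}\,\beta_{i+1}(s)}{\sum_{s'} L_i(s)\,B_{ss'}\,\beta_{i+1}(s')}
  = \frac{B_{ss}\,\beta_{i+1}(s)}{\sum_{s'} B_{ss'}\,\beta_{i+1}(s')},
\]
the factor $L_i(s)$ cancelling.

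I would then bound the two surviving ingredients separately and uniformly. For the diagonal entry, Assumption~\ref{asmp:unif_rate} gives $\Omega(\theta,\vartheta) \te \Omega(\theta)+\Omega(\vartheta) \ge \Omega(\theta) \te k_1\max_s A_s(\theta)+k_0 > k_1 A_s(\theta)$, so $B_{ss}(\theta,\vartheta) \te 1 - A_s(\theta)/\Omega(\theta,\vartheta) > 1-1/k_1 =: \delta_B > 0$ uniformly in $\theta,\vartheta,s$ (the same computation as in Lemma~\ref{lem:eig_lemma}). For the backward messages I would invoke Assumption~\ref{asmp:obs_bnd} exactly as in the proof of Proposition~\ref{prop:self_tr}: each $\beta_{i+1}(s')$ is a convex combination --- an expectation over embedded-chain continuations starting at $s'$ --- of path-likelihoods, and each path-likelihood is a product over the finitely many observations at times $\ge w_{i+1}$ of per-observation likelihoods, each of which Assumption~\ref{asmp:obs_bnd} confines to a fixed positive range. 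Hence there are constants $0 < b_- \le b_+ < \infty$, depending only on the data $X$, with $\beta_{i+1}(s') \in [b_-,b_+]$ for every $s'$, $W$, $\theta$, $\vartheta$. Since $\sum_{s'} B_{ss'} \te 1$, the denominator is at most $\max_{s'}\beta_{i+1}(s') \le b_+$ while the numerator is at least $\delta_B\,b_-$, so $P(V_{i+1}=s \mid V_i=s,W,X,\theta,\vartheta) \ge \delta_B\,b_-/b_+ =: \delta_1 > 0$ for every $\theta,\vartheta,W,s$, which is the claim. (With the same $\delta_1$, Proposition~\ref{prop:self_tr} follows by averaging over $V_i$, which is exactly why the corollary is a by-product of that proof.)

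The one point requiring care is the uniformity over $W$: the number of grid points, and hence the length of the paths summed over in $\beta_{i+1}$, is unbounded, and $B(\theta,\vartheta)$ may be sparse, so $\beta_{i+1}(s)/\beta_{i+1}(s')$ cannot be controlled by a crude single-path or first-state-swapping argument. The resolution --- and the role of Assumption~\ref{asmp:obs_bnd} --- is that a path contributes to $\beta_{i+1}$ only through which state it occupies at each of the finitely many observation times, so its likelihood stays in a fixed positive range regardless of how long the path is, and averaging preserves this range; this is what makes the two-sided bound on $\beta_{i+1}$, and hence the uniform constant $\delta_1$, possible.
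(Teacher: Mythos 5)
Your argument is correct and is essentially the paper's: both proofs rest on the \emph{a priori} bound $B_{ss}(\theta,\vartheta)\ge 1-\tfrac{1}{k_1}$ coming from Assumption~\ref{asmp:unif_rate}, and on Assumption~\ref{asmp:obs_bnd} to control the effect of conditioning on $X$ --- the paper applies Bayes' rule to the full observation likelihood to get $\delta_1=\tfrac{\lb}{\ub}\bigl(1-\tfrac{1}{k_1}\bigr)$, while you obtain the same ratio through the FFBS backward messages $\beta_{i+1}\in[b_-,b_+]$. The difference is purely one of bookkeeping, so there is nothing to add.
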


\begin{lemma}
  For all $M,h > 0$, the set $B_{h,M} =
\left\lbrace (W, \theta, \vartheta) : |W| \leq h, \theta \in \SM_M
\right\rbrace$ is a 2-small set under our proposed sampler. 
Thus, for all $(W,\theta,\vartheta)$ {in $B_{h, M}$},
the two-step transition probability satisfies 
$P(W'', \theta'',\vartheta'' | W, \theta, \vartheta) \ge \rho_1 
\phi_1(W{''}, \theta'',\vartheta'') $ for a constant $\rho_1$ and a 
probability measure $\phi_1$ independent of the initial state.
\label{lem:small_set}
\end{lemma}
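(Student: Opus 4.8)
The plan is to exhibit, within the two MCMC steps, a single ``favourable route'' whose probability, regarded as a function of the terminal state $(W'',\theta'',\vartheta'')$, dominates a fixed sub-probability measure times a constant that is uniform over $B_{h,M}$; normalising that measure produces $\phi_1$. Two steps are genuinely necessary: after one iteration exactly one of $\theta'$ and $\vartheta'$ coincides, as an atom, with the starting parameter $\theta$ --- on acceptance $\vartheta'=\theta$, on rejection $\theta'=\theta$ --- so no measure independent of the initial state can minorise the one-step kernel. A second iteration lets the chain \emph{discard} the copy of $\theta$ carried in the auxiliary slot, after which both output parameters are fresh proposal draws that possess densities.

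Fix $(W,\theta,\vartheta)\in B_{h,M}$, so $|W|\le h$ and $\Omega(\theta)\le M$; under Assumption~\ref{asmp:mono_tail} the set $\SM_M$ is bounded, and fix a compact $\Theta_0$ with nonempty interior. In the first iteration I force: (a) the backward pass over $W$ returns the constant path $\equiv s_0$ for a fixed $s_0$ with $\pi_0(s_0)>0$ --- by Proposition~\ref{prop:self_tr}/Corollary~\ref{corol:self_tr}, the Markov property, and a product over the at most $h$ steps, this has probability at least $\delta_0\delta_1^{|W|}\ge\delta_0\delta_1^{h}$, where $\delta_0>0$ lower bounds the posterior probability of $s_0$ at time $0$ (using the bounded likelihood, Assumption~\ref{asmp:obs_bnd}); (c) the proposal $\ptheta_1\sim q(\cdot\mid\theta)$ falls in $\Theta_0$ --- by regularity of $q$ (as in the examples following Assumptions~\ref{asmp:prior} and~\ref{asmp:omega}), $q(\Theta_0\mid\theta)$ is bounded below uniformly over the compact $\SM_M$, and on $\Theta_0$ the density is at least $g_1(\ptheta_1):=\inf_{\theta\in\SM_M}q(\ptheta_1\mid\theta)>0$; (d) since the path is constant, the thinned intensity $\Omega(\theta)+\Omega(\ptheta_1)-A_{s_0}(\theta)$ is a constant at most $M+\sup_{\Theta_0}\Omega$, so the void event $U'=\emptyset$ has probability bounded below uniformly, and then $W'=T\cup U'=\emptyset$; (f) with $W'=\emptyset$ the marginal likelihoods do not involve the transition matrix, so the acceptance ratio collapses to a bounded multiple of $p(\ptheta_1)q(\theta\mid\ptheta_1)/\bigl(p(\theta)q(\ptheta_1\mid\theta)\bigr)$, which is bounded below on the compact $\SM_M\times\Theta_0$; accept, so $\theta'=\ptheta_1\in\Theta_0$, $\vartheta'=\theta$, $W'=\emptyset$.

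The second iteration, started from $(\emptyset,\ptheta_1,\theta)$, proceeds the same way: (a) the backward pass over the empty grid just draws a single state $v$ on $[0,t_{end}]$, with $P(V_0=v)=\pi_0(v)P(X\mid v)/\sum_{v'}\pi_0(v')P(X\mid v')\ge\delta_0$ not depending on the parameters, and automatically $T'=\emptyset$; (b) discarding $\vartheta'=\theta$ erases all dependence on the initial parameter; (c) the second proposal $\ptheta_2\sim q(\cdot\mid\ptheta_1)$ is again forced into $\Theta_0$ (probability bounded below uniformly over $\ptheta_1\in\Theta_0$, density there $q(\ptheta_2\mid\ptheta_1)$, retained as is); (d) $W''=U''$ is then a homogeneous Poisson process of rate $\Omega(\ptheta_1)+\Omega(\ptheta_2)-A_v(\ptheta_1)>0$; (f) the acceptance ratio is bounded below on $\Theta_0\times\Theta_0$ (Assumption~\ref{asmp:obs_bnd} for the likelihood ratio, regularity of $p,q$ for the rest); accept, so $\theta''=\ptheta_2$, $\vartheta''=\ptheta_1$, $W''=U''$. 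Assembling, for every measurable set $G$ of triples,
\begin{align*}
P\bigl((W'',\theta'',\vartheta'')\in G \mid W,\theta,\vartheta\bigr)
\;\ge\;\rho_1\int\!\!\int\sum_{v}\mathbf{1}_{\Theta_0}(\theta'')\,\mathbf{1}_{\Theta_0}(\vartheta'')\,g_1(\vartheta'')\,q(\theta''\mid\vartheta'')\,
\mu_{v,\theta'',\vartheta''}(G)\,d\theta''\,d\vartheta'',
\end{align*}
where $\mu_{v,\theta'',\vartheta''}$ is the law, placed on the $W''$-coordinate, of a rate-$\bigl(\Omega(\vartheta'')+\Omega(\theta'')-A_v(\vartheta'')\bigr)$ Poisson process on $[0,t_{end}]$, and $\rho_1>0$ absorbs the finitely many uniform lower bounds above ($\delta_0$, $\delta_1^{h}$, the two void and two acceptance constants, the weights $\pi_0(v)$). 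The right-hand measure has finite total mass, since $g_1\le q(\cdot\mid\theta_\ast)$ for any fixed $\theta_\ast$ and $\Theta_0$ is bounded; normalising it yields a probability measure $\phi_1$ independent of $(W,\theta,\vartheta)$, and hence $P^2\bigl((W,\theta,\vartheta),\cdot\bigr)\ge\rho_1\phi_1$ on $B_{h,M}$, which is the claim. The closing sentence of the lemma is the special case where $\SM_M$ is automatically the whole parameter space.

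The main obstacle is the tension between forcing a trivial trajectory and grid and the fact that the proposal can jump to arbitrarily large parameters: both the void probability of the thinned Poisson process and the acceptance probability deteriorate as $\Omega(\ptheta)$ grows, so the argument only closes after confining the proposed parameter to a fixed compact $\Theta_0$ --- which is why we need $q(\Theta_0\mid\theta)$ and the acceptance ratios bounded below uniformly over $\theta\in\SM_M$, a compact set, where the bounded-likelihood Assumption~\ref{asmp:obs_bnd} and the regularity of $q$ supply the bounds. A secondary point is bookkeeping: one must keep the $\vartheta''$-dependence of $q(\vartheta''\mid\theta)$ inside $\phi_1$ (via $g_1$) rather than inside $\rho_1$, so that $\rho_1$ is a genuine constant and $\phi_1$ a genuine, state-independent probability measure.
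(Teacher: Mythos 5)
Your overall skeleton matches the paper's: two steps are needed precisely because one iteration leaves an atom of the initial parameter in one of the two slots; you force the backward pass to return a constant path, paying $\delta_1^{|W|}\ge\delta_1^{h}$ via Corollary~\ref{corol:self_tr}; and you then control the law of the regenerated grid. The paper's proof does the same, except it does not force $W'=\emptyset$: it keeps $W'$ general, uses Poisson superposition to peel off a rate-$\Omega(\theta')$ component whose law survives into $\phi_1$, and bounds the void probability of the leftover rate-$(\Omega(\theta)-A_{s^*}(\theta))$ component by $\exp(-Mt_{end})$.

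The genuine gap is in how you minorise the propose-and-accept step. The paper's source of the measure $\phi$ over the new parameter is Assumption~\ref{asmp:ideal_geom}(i): for $\theta\in \SM_M$, $q(\theta'|\theta)\,\alpha_I(\theta,\theta';X)\ge\kappa_1\phi(\theta')$, transferred to the actual sampler via the bound $\alpha\ge(\lb^2/\ub^2)\,\alpha_I$ obtained from Assumption~\ref{asmp:obs_bnd}. You never invoke this assumption. Instead you manufacture your own minorisation from (i) the claim that $\SM_M=\{\Omega(\theta)\le M\}$ is bounded, (ii) a uniform lower bound on $q(\Theta_0\mid\theta)$ and on the density $q(\cdot\mid\theta)$ over $\theta\in \SM_M$, and (iii) a uniform lower bound on the ratio $p(\ptheta)q(\theta\mid\ptheta)/\bigl(p(\theta)q(\ptheta\mid\theta)\bigr)$ on $\SM_M\times\Theta_0$. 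None of these follows from the paper's hypotheses: Assumption~\ref{asmp:mono_tail} only says $\Omega$ is eventually monotone in $\|\theta\|$, which does not force $\Omega\to\infty$, so $\SM_M$ can be all of the parameter space (the paper explicitly treats $\sup_\theta\Omega(\theta)<\infty$ as a live case in the proof of Theorem~\ref{thm:geom_erg}); and no continuity or positivity of $p$ or $q$ is assumed anywhere, so compactness would not rescue (ii) or (iii) even if $\SM_M$ were compact. In effect you have re-derived a special case of Assumption~\ref{asmp:ideal_geom}(i) under extra unstated regularity conditions rather than using the assumption itself; to close the proof under the paper's hypotheses you must replace your $\Theta_0$-construction by the bound $q(\theta'|\theta)\alpha(\theta,\theta';W',X)\ge(\lb^2/\ub^2)\kappa_1\phi(\theta')$ and carry $\phi$ (not $g_1$) into $\phi_1$. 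A secondary consequence of skipping this is that your second-step acceptance bound on $\Theta_0\times\Theta_0$ also needs to be uniform over the random $W''$ generated before the accept step; the paper handles this with the same $\lb^2/\ub^2$ device.
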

\begin{proof} Recall the definition of $B_M$, and of an $n$-small set from 
  Assumption~\ref{asmp:ideal_geom}. The $1$-step transition probability of our MCMC algorithm
  consists of two terms, corresponding to the proposed parameter being
  accepted and rejected. Discarding the latter, we have 
\begin{align*}
  P(W',\theta',\vartheta'|W,\theta,\vartheta,X)&\geq q(\theta'|\theta) \delta_\theta(\vartheta') \alpha(\theta, \theta', W';X) 
  \sum_{S,T} \left[ P(S,T | W, \theta, \vartheta, X) P(W'| S, T, \theta, \theta') \right]. 
\end{align*}
This follows from steps (c) to (e) in the reordered algorithm.
The summation is over all $(S,T)$ values produced by the backward pass (which are then discarded after sampling $W'$).
We have used the fact that given $(S,T)$, $P(W'|S,T,\theta,\theta',X)$ is independent of  $X$.

We bound the summation over $(S,T)$ by considering only terms with $S$ constant. 
When this constant is  state $s^*$, we write this as $(S=[s^*], T= \emptyset)$. This corresponds to $|W|$ self-transitions 
after starting state $S_0=s^*$. Then the first term in the square brackets becomes
\begin{align*}
  P(S=[s^*], T = \emptyset | W, \theta, &\vartheta, X) =
P(S_0=s^*|X,W, \theta, \vartheta)\prod_{i = 0}^{|W| - 1} 
P(V_{i + 1} = s^* | V_i = s^*,X,W,\theta,\vartheta) \\ 
& \geq P(S_0=s^*|X,W, \theta, \vartheta)\delta_1^{|W|}  
\qquad (\text{from Corollary~\ref{corol:self_tr}}).
\end{align*}
With $S(t)$ fixed at $s^*$, $W'$ is distributed as a Poisson process with rate $\Omega(\theta') + \Omega(\theta) - A_{s^*}(\theta)$.
Write $\PP(W'|R(t))$ for the probability of $W'$ under a rate-$R(t)$ Poisson process on $[0,t_{end}]$, so that $P(W' |  S=[s^*], T = \emptyset, \theta', \theta) = \PP(W'|\Omega(\theta') + \Omega(\theta) - A_{s^*}(\theta))$. Then, from the Poisson superposition theorem, writing $2^{W'}$ for the power set of $W'$, we have 
\begin{align*}
  P(W' |  S=&[s^*], T = \emptyset, \theta', \theta) 
  = \sum_{Z \in 2^{W'}} \PP\left(Z|\Omega(\theta')\right)  
  \PP\left(W'\backslash Z|\Omega(\theta) - A_{s^*}(\theta)\right) \\
            &\geq \PP(W'|\Omega(\theta'))
\PP(\emptyset | \Omega(\theta)-A_{s^*}(\theta) )\\
  & \geq \PP(W' | \Omega(\theta')) \PP(\emptyset | \Omega(\theta) )\\
  & \geq \PP(W' | \Omega(\theta')) \exp(-M t_{end}),
\quad \text{since for $\theta \in B_M$, $\Omega(\theta) \le M$}.
\end{align*}
  Thus we have
  \begin{align*}
    \sum_{S,T} P(S,T,W' | W, \theta, \vartheta, X) & 
    \geq \sum_{{s^*}} P(S{=[s^*]}, T = \emptyset | W, \theta, \vartheta, X)
    P(W' | S{=[s^*]}, T=\emptyset,\theta', \theta)\\
                 &\geq \delta_1^{|W|} \exp(-Mt_{end})
  \PP(W' | \Omega(\theta')). \numberthis
  \label{eq:marg}
  \end{align*}
Next, 
using assumption~\ref{asmp:obs_bnd},
\begin{align*}
\alpha(\theta, \theta', W'; X) &
= 1 \wedge \frac{P(X|W', \theta', \theta) / P(X|\theta')}{P(X|W', \theta,
\theta') / P(X|\theta)} \cdot \frac{P(X | \theta')
q(\theta|\theta')p(\theta')}{P(X | \theta)q(\theta'|\theta)p(\theta)}\\
& \geq 1 \wedge \frac{\lb^2}{\ub^2} \cdot 	\frac{P(X | \theta')
q(\theta|\theta')p(\theta')}{P(X | \theta)q(\theta'|\theta)p(\theta)}
 \geq \alpha_I(\theta, \theta';X)\frac{\lb^2}{\ub^2}.
\numberthis
\label{eq:acc}
\end{align*}
Inside $B_{h,M}$, $|W| \le h$, 
and by assumption~\ref{asmp:ideal_geom}, $q(\theta'|\theta)\alpha_I(\theta,\theta';X) \ge \kappa_1 \phi(\theta')$. Then the three inequalities above let us simplify the equation at the start of the proof:
\begin{align*}
P(W', \theta',\vartheta' | W, \theta, \vartheta)  
  & \geq \frac{\lb^2 }{\ub^2}\delta_1^{h}
\exp(-M t_{end})\delta_\theta(\vartheta')\kappa_1\PP(W'|\Omega(\theta') \phi(\theta') \\
  & \defeq \rho_1 \delta_\theta(\vartheta')\PP(W'| \Omega(\theta') \phi(\theta').
\end{align*}
{Write $F_{Poiss(a)}$ for the CDF of a rate-$a$ Poisson.
The two-step transition satisfies}
\begin{align*}
  P(W'', \theta''&,\vartheta'' | W, \theta, \vartheta)  
       \ge \int_{\SM_{h,M}} P(W'', \theta'',\vartheta'' | W', \theta', \vartheta')
       P(W', \theta',\vartheta' | W, \theta, \vartheta)
       \dif W' \dif\theta' \dif\vartheta' \\
       &\ge \int_{\SM_{h,M}}  \rho_1 \delta_{\theta'}(\vartheta'')\PP(W'' | \Omega(\theta''))\phi(\theta'') \\
       &\qquad \qquad \rho_1 \delta_\theta(\vartheta')\PP(W'|\Omega(\theta')) \phi(\theta')
       \dif W' \dif \theta' \dif \vartheta' \\
       &\ge \rho_1^2 \phi(\theta'')\PP(W''|\Omega(\theta''))
       \int_{\SM_{h,M}} \!\!\!\! \delta_{\theta'}(\vartheta'')
       F_{Poiss(\Omega(\theta'))}(h)\phi(\theta')
       \dif\theta'  \\
       & \ge \rho_1^2 \PP(W''|\Omega(\theta''))\phi(\theta'')\phi(\vartheta'') F_{Poiss(\Omega(\vartheta''))}(h)\delta_{\SM_{h,M}}(\vartheta'') \\
       & \ge \rho_1^2 \PP(W'' | \Omega(\theta''))\phi(\theta'')
       \phi(\vartheta'')\delta_{\SM_{h,M}}(\vartheta'')\exp(-\Omega(\vartheta''))  \numberthis
       \label{eq:density_lowbound}
\end{align*}
The last line uses $F_{Poiss(a)}(h) \ge F_{Poiss(a)}(0) = \exp(-a)\ \forall 
a$, 
and gives our result, 
with $\phi_1(W'',\theta'',\vartheta'') \propto \PP(W'' | \Omega(\theta'')) \phi(\theta'') \phi(\vartheta'')
  \delta_{\SM_{h,M}}(\vartheta'')\exp(-\Omega(\vartheta''))  $.
\end{proof}
\noindent We have established the small set condition: for a point inside $B_{h,M}$ our sampler forgets its state with nonzero probability, sampling a new state from $\phi_1(\cdot)$. 
We next establish a drift condition, showing that outside $B_{h,M}$, the algorithm drifts back towards it (Lemma~\ref{lem:drift}).
We first establish a result needed when $\Mx{\theta}$ is unbounded 
as $\theta$ increases.
This states that the acceptance probabilities of our 
sampler and the ideal sampler can be brought arbitrarily close
outside a small set, so long as $\Omega(\theta)$ and
$\Omega(\theta')$ are sufficiently close.
  \begin{lemma}
  Suppose 
  $\frac{1}{K_0} \le \frac{\Omega(\theta)}{\Omega(\theta')} \leq K_0
  $, for $K_0$ satisfying $(1 + \frac{1}{K_0})k_1 \ge 2$  
  ($k_1$ is from Assumption~\ref{asmp:unif_rate}). Write $\mW$ for the
  minimum number of elements of grid $W$ between any successive pairs of observations.
  For any $\epsilon > 0$, there exist  $w^{K_0}_\epsilon,  \theta_{5, \epsilon}^{K_0} > 0$ such that
  $|P(X| W, \theta, \theta') - P(X | \theta)| < \epsilon$
  for any $(W, \theta)$ with $\mW > w^{K_0}_\epsilon$ and $\| \theta \| > \theta_{5, \epsilon}^{K_0}$.
  \label{lem:eigenvalue_lemma}
  \end{lemma}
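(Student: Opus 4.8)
The plan is to compare the discrete-time embedded chain on the grid $W$ with its continuous-time counterpart, and show that when the gap between observations (measured in number of grid points, $\mW$) is large and $\|\theta\|$ is large, both processes have relaxed essentially to the stationary distribution $\pi_\theta$ between observations, so the two likelihoods agree up to $\epsilon$. Concretely, $P(X\mid W,\theta,\theta')$ is computed by a forward pass that alternates multiplication by the transition matrix $B(\theta,\theta')$ (applied once per grid point) with diagonal reweighting by observation likelihoods, while $P(X\mid\theta)$ does the same with the continuous-time transition kernel $e^{\tau A(\theta)}$ between successive observations. Both update maps are the same when restricted to the segment of grid between two observations, because by Proposition 2 of \cite{RaoTeh13} the uniformized chain applied at the grid points, averaged over the Poisson number of points, reproduces $e^{\tau A(\theta)}$; the only discrepancy is that in the discrete-time case we condition on the \emph{actual} realized number of grid points between observations, rather than integrating over it. So the key is to bound the difference between $B(\theta,\theta')^{m}$ (for the realized $m \ge \mW$) and $\pi_\theta$, and between $e^{\tau A(\theta)}$ and $\pi_\theta$, uniformly.

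First I would invoke Lemma~\ref{lem:eig_lemma}: the hypothesis $\frac{1}{K_0}\le \frac{\Omega(\theta)}{\Omega(\theta')}\le K_0$ with $(1+\frac1{K_0})k_1\ge 2$ is exactly its hypothesis (together with $\|\theta\|>\max(\theta_0,\theta_1)$, which is absorbed into $\theta_{5,\epsilon}^{K_0}$), so $B(\theta,\theta')$ converges geometrically to $\pi_\theta$ at a rate $1-\beta$ with $\beta$ \emph{uniformly} bounded away from $0$. Hence $\|B(\theta,\theta')^m \mu - \pi_\theta\|_{TV}\le C(1-\beta)^m$ for any initial distribution $\mu$, with $C,\beta$ not depending on $\theta,\theta',W$. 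Choosing $\mW$ large enough that $C(1-\beta)^{\mW} < \epsilon'$ makes the discrete-time forward-pass vector at each observation time within $\epsilon'$ (in total variation, hence in $\ell_1$) of $\pi_\theta$, regardless of what happened before. The same reversibilization argument (Definition~\ref{def:mjp_symm}, \cite{fill1991}) applied to $A(\theta)$ itself, using Assumption~\ref{asmp:cond_num} ($\lambda_2^{R_A}(\theta)\ge\mu\max_s A_s(\theta)$), shows $\|e^{\tau A(\theta)}\mu-\pi_\theta\|_{TV}\le C'e^{-\mu\max_s A_s(\theta)\,\tau}$; since $\max_s A_s(\theta)\to\infty$ with $\|\theta\|$ (it is $\ge (\Omega(\theta)-k_0)/k_1$ and $\Omega$ grows), for $\|\theta\|>\theta_{5,\epsilon}^{K_0}$ large this is also below $\epsilon'$ on each inter-observation gap of length $\tau\ge \tau_{\min}>0$. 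Then both $P(X\mid W,\theta,\theta')$ and $P(X\mid\theta)$ are within a small multiple of $\epsilon'$ of the common quantity obtained by resetting the state distribution to $\pi_\theta$ just after each observation and reweighting by the next observation likelihood — a telescoping/Lipschitz argument propagates the per-step $\epsilon'$ errors through the (fixed, finite) number of observations, using Assumption~\ref{asmp:obs_bnd} to keep the normalizing constants and likelihood factors in $[\lb,\ub]$ so the error does not blow up. Taking $\epsilon'$ small enough relative to $\epsilon$, $\lb$, $\ub$ and the number of observations gives the claim.

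The main obstacle is handling the boundary segments carefully: the grid $W$ need not start or end at an observation, and there may be an initial stretch before the first observation and a final stretch after the last, where the forward pass has only the prior $\pi_0$ (not yet relaxed) — but this is benign because after the first long inter-observation gap the distribution is already $\epsilon'$-close to $\pi_\theta$, and the contribution of pre-first-observation transitions only changes the \emph{input} distribution to that gap, which the geometric contraction washes out. The subtler point is that we condition on the realized grid $W$ rather than averaging over it: between two observations separated by real time $\tau$, the discrete chain takes exactly $m$ steps where $m$ is the (random, but now fixed) number of grid points, and $m$ could in principle be small even if $\tau$ is not — this is precisely why the lemma is phrased in terms of $\mW$, the minimum number of grid elements between successive observations, rather than in terms of $\tau$. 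So the statement we actually prove is conditional on $\mW > w_\epsilon^{K_0}$, and the uniform geometric rate from Lemma~\ref{lem:eig_lemma} does the rest; no averaging over $W$ is needed here, that averaging is deferred to Lemma~\ref{lem:drift} where one shows $\mW$ is large with high probability when $|W|$ is large. I would also note the easy degenerate case $\sup_\theta\Omega(\theta)<\infty$ separately if needed, though here $\max_s A_s(\theta)\to\infty$ is what we want, so the continuous-time bound is the one that requires $\|\theta\|$ large.
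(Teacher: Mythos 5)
Your proposal is correct, and for the discrete-time half it is essentially the paper's argument: invoke Lemma~\ref{lem:eig_lemma} for a geometric convergence rate of $B(\theta,\theta')$ that is uniform over the admissible $(\theta,\theta')$, take $\mW$ large enough that the embedded chain re-equilibrates to $\pi_\theta$ between consecutive observations, compare $P(X\mid W,\theta,\theta')$ to the intermediate quantity $P_{st}(X\mid\theta)$ (observation states drawn i.i.d.\ from $\pi_\theta$), and propagate the per-gap error through the finitely many observations using the likelihood bounds of Assumption~\ref{asmp:obs_bnd}. Where you genuinely diverge is the continuous-time half. The paper never bounds $\|e^{\tau A(\theta)}\mu-\pi_\theta\|$ directly: it instead writes $P(X\mid\theta)=\int P(X\mid W,\theta,\theta')\,P(W\mid\theta,\theta')\,dW$ by uniformization, splits this integral over $\{\mW>w_0\}$ and its complement, reuses the already-established discrete-time bound on the first set, and makes the complement have negligible Poisson probability once $\|\theta\|$ (hence $\Omega(\theta)$) is large. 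You instead apply the Fill reversibilization bound to $A(\theta)$ itself via Assumption~\ref{asmp:cond_num}, obtaining an exponential rate $\mu\max_s A_s(\theta)\,\tau_{\min}$ that diverges with $\|\theta\|$. Both routes are valid and both ultimately need $\Omega(\theta)\to\infty$ in the regime where the lemma is applied; yours requires one additional continuous-time spectral estimate but dispenses with the grid-averaging step, while the paper's recycles Lemma~\ref{lem:eig_lemma} so that the only spectral input is the discrete one. One shared caveat: the prefactor in the geometric bound (which in Fill's chi-square estimate scales like $1/\min_s\pi_\theta(s)$) must be uniform over large $\|\theta\|$, whereas Assumption~\ref{asmp:cond_num} only asserts $\min_s\pi_\theta(s)>0$ pointwise; you are inheriting exactly the implicit uniformity the paper already relies on in Lemma~\ref{lem:eig_lemma}, so this is not a new gap in your argument.
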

  \begin{proof}
%
{From lemma \ref{lem:eig_lemma}, for all $\theta, \theta'$ satisfying 
 the lemma's assumptions, 
 the Markov chain with transition matrix $B(\theta, \theta')$ converges 
 geometrically to stationarity distribution $\pi_\theta$  
 at a rate uniformly bounded away from 0.
}
By setting $\mW$ large enough, for all such $(\theta,\theta')$ and for 
any initial state, the Markov chain would have mixed beween each pair of
observations, with distribution over states returning arbitrarily 
close to $\pi_\theta$.


Write $W_X$ for the indices of the grid $W$ containing observations, 
and $V_X$ for the Markov chain state at these times (illustrated 
in Section~\ref{sec:notation} in the supplementary material).
Let $P_B(V_X | W, \theta, \theta')$ be the probability distribution over
$V_X$ under the Markov chain with transition matrix $B$ given 
$W$ and $P_{st}(V_X|\theta)$ be the probability of $V_X$ sampled
independently under the stationary distribution. 
Let $P(X | W, \theta, \theta')$ be the marginal probability of the 
observations $X$ under that Markov chain $B(\theta,\theta')$ given $W$. Dropping $W$ and 
$\theta'$ from notation, $P(X|\theta)$
is the probability of the observations under the rate-$A(\theta)$ MJP.

From the first paragraph, for $\mW > w_{0}$ for 
large enough $w_0$,
$P_B(V_X | W, \theta, \theta')$ and  $P_{st}(V_X | W, \theta)$ can be
brought $\epsilon'$ close.
Then for any $W$ with $\mW > w_0$, we have
\begin{align*}
  |P(X|W , \theta, \theta') - & P_{st}(X | \theta)| = | \sum_{V_X} P(X|V_X, \theta) [P_B(V_X | W, \theta, \theta') -  P_{st}(V_X | \theta) ]|\\
& \leq \sum_{V_X} P(X | V_X, \theta)|P_B(V_X | W, \theta, \theta') -  P_{st}(V_X | \theta)|\le \epsilon'',
\end{align*}
using $P(X|V_X,\theta) \le \ub$ 
(Assumption~\ref{asmp:obs_bnd}), and 
$\sum_{V_X} |P_B(V_X | W, \theta, \theta') -  P_{st}(V_X | \theta)| < \epsilon$.
For large $\theta$, we prove a similar result in the continuous 
case by uniformization. For any $\theta'$,
\begin{align*}
P(X | \theta) 
= \int \dif W P(X | W, \theta, \theta') \PP(W | \Omega(\theta)+\Omega(\theta')). 
\end{align*}
We split this integral into two parts, one over the set $\{\mW > w_0\}$, and 
the second over its complement. On the former, for $w_0$ large enough, 
$|P(X |W, \theta, \theta')-P_{st}(X|\theta)|
\le \epsilon''$. 
For $\theta$ large enough, $\{\mW > w_0\}$ occurs
with arbitrarily high probability for any $\theta'$. Since the likelihood is bounded, the
integral over the second set can be made arbitrarily small (say, $\epsilon''$ again). 
Finally, from the triangle
inequality,
\begin{align*}
|P(X | \theta) - P(X | W, \theta, \theta')| &\leq |P(X | \theta) -P_{st}(X | \theta) | + | P_{st}(X | \theta) -  P(X | W, \theta, \theta')|\\
       & \leq (\epsilon'' + \epsilon'') + \epsilon'' \defeq \epsilon.
\end{align*}
\vspace{-.2in}
\end{proof}
\noindent The previous lemma bounds the difference in probability of observations under the discrete-time and continuous-time processes for $\theta$ and $|W|$ large enough. 
The next result uses this to bound with high probability the different in acceptance probabilities of the ideal sampler, and our proposed sampler with a grid $W$. 
See the supplement for the proof.
\begin{proposition}
  Let $(W, \theta, \vartheta)$ be the current state of the sampler.
Then, for any $\epsilon$, there exists $\theta_\epsilon > 0$ as well as a set $\E_\epsilon \subseteq \{(W', \theta'): |\alpha_I(\theta,\theta';X) - \alpha(\theta,\theta';W',X)| \le \epsilon\}$, such that for $\theta$ satisfying $\| \theta \| > \theta_\epsilon$ and any $\vartheta$, we have
$P(E_\epsilon|W,\theta,\vartheta) > 1-\epsilon$.
\label{prop:mix0}
\end{proposition}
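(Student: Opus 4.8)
The plan is to show that with high probability (over the draw of $\vartheta'$ from $q(\cdot|\theta)$ and the resulting grid $W'$), the state $(W',\theta')$ lands in a set $\E_\epsilon$ on which the acceptance probability $\alpha(\theta,\theta';W',X)$ of our sampler is $\epsilon$-close to the ideal acceptance probability $\alpha_I(\theta,\theta';X)$. The starting point is the algebraic identity already used in equation~\eqref{eq:acc}:
\begin{align*}
\alpha(\theta,\theta';W',X) = 1 \wedge \left( \frac{P(X|W',\theta',\theta)/P(X|\theta')}{P(X|W',\theta,\theta')/P(X|\theta)} \cdot \frac{P(X|\theta')q(\theta|\theta')p(\theta')}{P(X|\theta)q(\theta'|\theta)p(\theta)} \right),
\end{align*}
so that $\alpha$ and $\alpha_I$ differ only through the correction factor $r(W',\theta,\theta') \assign \frac{P(X|W',\theta',\theta)/P(X|\theta')}{P(X|W',\theta,\theta')/P(X|\theta)}$. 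Since $x \mapsto 1\wedge x$ is $1$-Lipschitz, it suffices to show that $|r(W',\theta,\theta') - 1|$ and hence the two clipped ratios can be forced close; more precisely, one bounds $|\alpha - \alpha_I|$ by a constant times $|P(X|W',\theta,\theta') - P(X|\theta)|$ and $|P(X|W',\theta',\theta) - P(X|\theta')|$, using Assumption~\ref{asmp:obs_bnd} to keep all four likelihood quantities bounded away from $0$ and $\infty$.

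Next, I would define $\E_\epsilon$ as the set of $(W',\theta')$ for which three things hold simultaneously: (i) $\frac{1}{K_0} \le \frac{\Omega(\theta')}{\Omega(\theta)} \le K_0$ for a fixed $K_0$ with $(1+\tfrac1{K_0})k_1 \ge 2$; (ii) $\mW > w^{K_0}_{\epsilon'}$, i.e.\ there are enough grid points between each consecutive pair of observations; and (iii) the prior/proposal ratio $\frac{p(\theta')q(\theta|\theta')}{p(\theta)q(\theta'|\theta)}$ is bounded by some $M_{\epsilon'}$. On event (i)$\cap$(ii), Lemma~\ref{lem:eigenvalue_lemma} applied with the roles of $\theta$ and $\theta'$ (and also swapped) gives $|P(X|W',\theta,\theta') - P(X|\theta)| < \epsilon'$ and $|P(X|W',\theta',\theta) - P(X|\theta')| < \epsilon'$, which together with the boundedness of likelihoods (Assumption~\ref{asmp:obs_bnd}) forces $|r - 1|$ small and hence $|\alpha - \alpha_I| < \epsilon$; condition (iii) is only needed to keep the ideal ratio itself under control so that small multiplicative perturbations of $r$ translate into small additive perturbations of the clipped product. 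So $\E_\epsilon \subseteq \{(W',\theta'): |\alpha_I - \alpha| \le \epsilon\}$ as required.

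It then remains to show $P(\E_\epsilon \mid W,\theta,\vartheta) > 1-\epsilon$ whenever $\|\theta\| > \theta_\epsilon$, uniformly in $\vartheta$. For (i), Assumption~\ref{asmp:omega} gives $q(\{\theta': \Omega(\theta')/\Omega(\theta) \in [1/K_0,K_0]\}\mid\theta) > 1 - \epsilon/3$ once $\|\theta\|$ exceeds $\theta^{K_0}_{4,\epsilon/3}$. For (iii), Assumption~\ref{asmp:prior} gives the same kind of bound once $\|\theta\| > \theta_{3,\epsilon/3}$. For (ii), conditionally on $\theta'$ with $\Omega(\theta')/\Omega(\theta) \in [1/K_0,K_0]$, the grid $W'$ is a Poisson process of rate $\Omega(\theta) + \Omega(\theta') \ge \Omega(\theta)(1+1/K_0)$ on $[0,t_{end}]$, so the number of points between any fixed pair of observations is Poisson with a mean growing linearly in $\Omega(\theta)$; a union bound over the finitely many gaps between consecutive observations shows $P(\mW > w^{K_0}_{\epsilon'} \mid \theta, \theta') > 1 - \epsilon/3$ once $\Omega(\theta)$, and hence $\|\theta\|$, is large enough. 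Combining the three bounds with a union bound and taking $\theta_\epsilon$ to be the max of the relevant thresholds (and $\epsilon'$ chosen so that the likelihood-perturbation argument yields final error $\le \epsilon$) completes the argument. The main obstacle is bookkeeping: carefully propagating the $\epsilon'$'s through Lemma~\ref{lem:eigenvalue_lemma} and the Lipschitz bound on $1\wedge(\cdot)$ so that all perturbations — the Poisson-tail event, the $\Omega$-ratio event, the prior/proposal-ratio event, and the likelihood closeness — aggregate into the single clean statement $P(E_\epsilon|W,\theta,\vartheta) > 1-\epsilon$, uniformly over the nuisance variable $\vartheta$ (which enters only through $\Omega(\vartheta)$ in the grid rate, and is harmless because larger $\Omega(\vartheta)$ only makes the grid denser, helping condition (ii)).
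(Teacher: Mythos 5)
Your proposal is correct and follows essentially the same route as the paper's proof: the same decomposition of $\E_\epsilon$ into the $\Omega$-ratio event (Assumption~\ref{asmp:omega}), the prior/proposal-ratio event (Assumption~\ref{asmp:prior}), and the dense-grid event, with Lemma~\ref{lem:eigenvalue_lemma} applied in both directions to control the likelihood perturbations and a union bound over the thresholds. The only cosmetic difference is that you factor the acceptance ratio through a multiplicative correction $r$ and invoke Lipschitzness of $1\wedge(\cdot)$, whereas the paper bounds the difference of the two clipped ratios directly; both reduce to the same estimate.
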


\begin{lemma}(drift condition) There exist $\delta_2 \in (0, 1), \lambda_1 > 0$ and $L > 0$
  such that \\
  $\mathbb{E}\left[\lambda_1|W'| + \Omega(\theta')  | W, \theta, \vartheta, X\right]
  \leq (1 - \delta_2)\left(\lambda_1|W| + \Omega(\theta)   \right) + L$.
\label{lem:drift}
\end{lemma}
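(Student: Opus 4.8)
The plan is to control the two coordinates of the Lyapunov function $G(W,\theta,\vartheta)=\lambda_1|W|+\Omega(\theta)$ separately, exploiting that the new grid $W'$ is produced (step~(d)) \emph{before} the accept/reject, so that $|W'|$ and the acceptance event can be handled independently. Starting an iteration from step~5 of Algorithm~\ref{alg:MH_improved}: a backward pass first draws a path $(S,T)$ from $P(S,T\mid W,\theta,\vartheta,X)$, with jump times $T\subseteq W$; then $\ptheta\sim q(\cdot\mid\theta)$ is proposed and $U'$ is drawn as a Poisson process with rate $\Omega(\theta)+\Omega(\ptheta)-A_{S(t)}(\theta)$, and $W'=T\cup U'$. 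Hence $\mathbb{E}[|W'|\mid S,T,\theta,\ptheta]=|T|+\int_0^{t_{end}}(\Omega(\theta)+\Omega(\ptheta)-A_{S(t)}(\theta))\,dt\le|T|+t_{end}(\Omega(\theta)+\Omega(\ptheta))$, while $\Omega(\theta')\le\max(\Omega(\theta),\Omega(\ptheta))$ with $\theta'=\ptheta$ exactly when the swap is accepted, so $\mathbb{E}[\Omega(\theta')\mid W,\theta,\vartheta,X]=\Omega(\theta)+\mathbb{E}[\alpha(\theta,\ptheta,W';X)(\Omega(\ptheta)-\Omega(\theta))\mid W,\theta,\vartheta,X]$.

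\noindent\emph{Drift in the $|W|$ coordinate.} Writing $|T|=\sum_i\mathbf{1}[V_{i+1}\ne V_i]$ and invoking Proposition~\ref{prop:self_tr}, $P(V_{i+1}\ne V_i\mid W,\theta,\vartheta,X)\le 1-\delta_1$, so by linearity $\mathbb{E}[|T|\mid W,\theta,\vartheta,X]\le(1-\delta_1)|W|$. Together with Corollary~\ref{corol:integ_bound} ($\mathbb{E}[\Omega(\ptheta)\mid\theta]\le\eta_1\Omega(\theta)$ for $\|\theta\|>\theta_2$, and bounded by a constant on the compact complement, using continuity of $\Omega$), this yields $\mathbb{E}[|W'|\mid W,\theta,\vartheta,X]\le(1-\delta_1)|W|+t_{end}(1+\eta_1)\Omega(\theta)+c_W$ for a finite constant $c_W$: the $|W|$ coordinate contracts at rate $\delta_1$, picking up only an $O(\Omega(\theta))$ term.

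\noindent\emph{Drift in the $\Omega(\theta)$ coordinate.} I would first rewrite Assumption~\ref{asmp:ideal_geom}(ii) as $\int q(\ptheta\mid\theta)\alpha_I(\theta,\ptheta;X)(\Omega(\ptheta)-\Omega(\theta))\,d\ptheta\le-\rho\,\Omega(\theta)+L_I$ whenever $\Omega(\theta)>M$. For $\|\theta\|$ large, Proposition~\ref{prop:mix0} supplies a set $E_\epsilon$ of pairs $(W',\ptheta)$ with $P(E_\epsilon\mid W,\theta,\vartheta)>1-\epsilon$ on which $|\alpha(\theta,\ptheta,W';X)-\alpha_I(\theta,\ptheta;X)|\le\epsilon$; thus $\mathbb{E}[\alpha(\theta,\ptheta,W';X)(\Omega(\ptheta)-\Omega(\theta))]$ equals $\int q\,\alpha_I(\Omega(\ptheta)-\Omega(\theta))$ plus an error bounded by $\epsilon(1+\eta_1)\Omega(\theta)$ on $E_\epsilon$ and, on $E_\epsilon^c$, by $\mathbb{E}[\Omega(\ptheta)\mathbf{1}_{E_\epsilon^c}]+\Omega(\theta)P(E_\epsilon^c)\le\sqrt{\eta_0\,\Omega(\theta)^2}\sqrt{\epsilon}+\epsilon\,\Omega(\theta)$, using Cauchy--Schwarz and the second-moment bound of Assumption~\ref{asmp:integ_bound}. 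The total error is $c(\epsilon)\Omega(\theta)$ with $c(\epsilon)\to0$ as $\epsilon\to0$; choosing $\epsilon$ with $c(\epsilon)<\rho/2$ gives $\mathbb{E}[\Omega(\theta')\mid W,\theta,\vartheta,X]\le(1-\rho/2)\Omega(\theta)+L_I$ for $\|\theta\|$ large (Assumptions~\ref{asmp:unif_rate} and~\ref{asmp:mono_tail} with continuity of $\Omega$ let me match ``$\Omega(\theta)>M$'' with ``$\|\theta\|$ large'', so all thresholds hold at once). On the complementary compact region, $\mathbb{E}[\Omega(\theta')\mid W,\theta,\vartheta,X]\le\Omega(\theta)+\mathbb{E}[\Omega(\ptheta)\mid\theta]$ is bounded by a constant, so uniformly $\mathbb{E}[\Omega(\theta')\mid W,\theta,\vartheta,X]\le(1-\rho/2)\Omega(\theta)+L'$.

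\noindent\emph{Assembly.} Combining the two bounds, $\mathbb{E}[\lambda_1|W'|+\Omega(\theta')\mid W,\theta,\vartheta,X]\le\lambda_1(1-\delta_1)|W|+(\lambda_1 t_{end}(1+\eta_1)+1-\rho/2)\,\Omega(\theta)+L''$. Taking $\delta_2=\min(\delta_1,\rho/4)$ and $\lambda_1>0$ small enough that $\lambda_1 t_{end}(1+\eta_1)\le\rho/2-\delta_2$, one has $\lambda_1(1-\delta_1)\le(1-\delta_2)\lambda_1$ and the coefficient of $\Omega(\theta)$ is at most $1-\delta_2$, giving $\mathbb{E}[\lambda_1|W'|+\Omega(\theta')\mid W,\theta,\vartheta,X]\le(1-\delta_2)(\lambda_1|W|+\Omega(\theta))+L$ with $L=L''$, which is the claim. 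The main obstacle is the $\Omega(\theta)$-drift step: one must show the grid-conditioned acceptance probability of our sampler stays uniformly close to the ideal (grid-marginal) one outside a small set, and in particular tame the rare event where the proposed $\ptheta$ has very large $\Omega(\ptheta)$ — this is precisely where the second-moment bound of Assumption~\ref{asmp:integ_bound} (through Cauchy--Schwarz) enters; the rest is routine bookkeeping with the coercivity of $\Omega$ and the self-transition lower bound.
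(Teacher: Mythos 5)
Your proof is correct and follows essentially the same route as the paper's: the same split of $|W'|$ into $|T|$ (contracted via Proposition~\ref{prop:self_tr}) and $|U'|$ (bounded via Corollary~\ref{corol:integ_bound}), and the same treatment of $\mathbb{E}[\Omega(\theta')]$ by comparing $\alpha$ to $\alpha_I$ on the high-probability set from Proposition~\ref{prop:mix0} and using Cauchy--Schwarz with Assumption~\ref{asmp:integ_bound} on its complement — your identity $\mathbb{E}[\Omega(\theta')]=\Omega(\theta)+\mathbb{E}[\alpha\,(\Omega(\ptheta)-\Omega(\theta))]$ is just an algebraic repackaging of the paper's $I_1+\Omega(\theta)I_2$ decomposition. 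Your explicit handling of the compact region $\|\theta\|\le\max(\theta_2,\theta_\epsilon,M)$ (absorbing it into $L$) is a small tidiness improvement over the paper's write-up.
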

\begin{proof}
Since $W'=T\cup U'$, we consider $\mathbb{E}[|T| |W,\theta,\vartheta,X]$
and $\mathbb{E}[|U'| | W, \theta, \vartheta, X]$ separately.
An upper bound of $\mathbb{E}[|T| | W,\theta,\vartheta, X]$ can be derived
directly from proposition~\ref{prop:self_tr}:
\begin{align*}
\mathbb{E}[|T| |W,\theta,\vartheta,X] &= \mathbb{E}[\sum_{i = 0}^{|W|-1}
  \mathbb{I}_{\{ V_{i + 1} \neq V_i \}}| W, \theta, \vartheta, X]
\leq \sum_{i = 0}^{|W| - 1} (1 - \delta_1) = |W|(1 - \delta_1).
\end{align*}
By corollary \ref{corol:integ_bound}, there exist $\eta_1 , \theta_2$ 
such that for 
$ \| \theta \| > \theta_2$, 
 $ \int \Omega(\ptheta) q(\ptheta | \theta)d\ptheta \leq \eta_1 \Omega(\theta) 
 $. Then,
\begin{align*}
\mathbb{E}[|U'| |W, \theta, \vartheta, X] &= 
\mathbb{E}_{S,T, \ptheta}\mathbb{E}[|U'| | S, T, W, \theta, \vartheta, \ptheta, X] = \mathbb{E}_{S,T, \ptheta}\mathbb{E}[|U'| | S, T, W, \theta, \ptheta] \\
& \leq \mathbb{E}_{S,T, \ptheta} \left[t_{end}\Omega(\theta, \ptheta)\right] = t_{end}\int \Omega(\theta, \ptheta) q(\ptheta | \theta) \dif\ptheta\\
& = t_{end} \left[ \left(  \Omega(\theta) +
\int_\Theta \Omega(\ptheta) q(\ptheta | \theta)\dif\ptheta \right) \right] 
 \leq t_{end} (\eta_1 + 1) \Omega(\theta). 
\end{align*}
To bound $\mathbb{E}\left[\Omega(\theta')  | W, \theta, \vartheta, X\right]$,
consider the transition probability over $(W',\theta')$:
\begin{align*}
  P(\dif W', \dif \theta'&| W, \theta, \vartheta)
=\dif\theta' \dif W' \left[q(\theta' | \theta)
  \sum_{S,T} P(S, T | W, \theta, \vartheta, X)P(W' | S, T, \theta, \theta')
\alpha(\theta, \theta' ; W', X)\right. \\
&\left.+ \int q(\ptheta | \theta) \sum_{S,T} P(S, T|W,\theta,\vartheta,
    X)P(W' | S, T, \theta, \ptheta) ( 1 - {\alpha(\theta, \ptheta ; W', X)})\dif\ptheta
    \delta_\theta(\theta')\right].
\end{align*}
With $P(W' | W, \theta, \vartheta, \theta', X) =
\sum_{S,T} P(S, T | W, \theta, \vartheta, X)P(W' | S, T, \theta, \theta')$,
integrate out $W'$:
\begin{align*}
  P(\dif\theta'| W, \theta, \vartheta) &=\dif\theta' \int \dif W'
  \bigg[q(\theta' | \theta)
     P(W' | W, \theta, \vartheta, \theta', X) \alpha(\theta, \theta' ; W', X) + \\
  &\left.  \int q(\ptheta | \theta)  P(W' |  W, \theta, \vartheta, \ptheta,
X) ( 1 - \alpha(\theta, \ptheta ; W', X))d\ptheta
\delta_\theta(\theta')\right] 
\end{align*}
Let  $\int \Omega(\theta') P(\dif\theta'| W, \theta, \vartheta)
  = I_1(W, \theta, \vartheta) + \Omega(\theta) I_2(W, \theta, \vartheta) 
$, with
\begin{align*}
  &I_1(W, \theta, \vartheta) = \int \dif\theta' \Omega(\theta') q(\theta' | \theta)\int \dif W'P(W' | W, \theta, \vartheta, \theta', X)\alpha(\theta, \theta' ; W', X), \\
&I_2(W, \theta, \vartheta) =\int \dif\ptheta  dW'q(\ptheta | \theta)P(W' | W, \theta, \vartheta, \ptheta, X)(1 - \alpha(\theta, \ptheta ; W', X)).
\end{align*}
{Consider the second term $I_2$.
  From Proposition~\ref{prop:mix0}, for any positive $\epsilon$, there
  exists $\theta_\epsilon > 0$ such that the set $E_{\epsilon}$
  (where $|\alpha(\theta, \ptheta ; X,W') - \alpha_I(\theta, \ptheta ; X)| \le
  \epsilon$) has probability greater than $1-\epsilon$.
  Write $I_{2,\E_{\epsilon}}$ for the integral restricted to this set, 
  and $I_{2,\E_{\epsilon}^c}$ for that over the complement, so that
 $I_{2}= I_{2,\E_{\epsilon}}+I_{2,\E_{\epsilon}^c}$.
 Then for $\theta > \theta_\epsilon$,}
\begin{align*}
I_{2,\E_{\epsilon}}(W, \theta, \vartheta) &= \int_{\E_{\epsilon}} d\ptheta  dW'q(\ptheta | \theta)P(W' | W, \theta, \vartheta, \ptheta, X)(1 - \alpha(\theta, \ptheta ; W', X)) \\
&\le \int_{\E_\epsilon}d\ptheta dW' q(\ptheta | \theta)
  P(W' | W, \theta, \vartheta, \ptheta, X)  [ 1 - (\alpha_I(\theta, \ptheta ; X)-\epsilon)] \\
&\le   \int d\ptheta dW'  q(\ptheta | \theta)
  P(W' | W, \theta, \vartheta, \ptheta, X)
  [ 1 - (\alpha_I(\theta, \ptheta ; X)-\epsilon)] \\
  &\le (1+\epsilon)  - \int  q(\ptheta | \theta) \alpha_I(\theta, \ptheta ; X) d\ptheta, \quad \text{and}  \\
  I_{2,\E_{\epsilon}^c}(W, \theta, \vartheta)  &= \int_{\E^c_{\epsilon}} d\ptheta  dW'q(\ptheta | \theta)P(W' | W, \theta, \vartheta, \ptheta, X)(1 - \alpha(\theta, \ptheta ; W', X)) \\
  &\le \int_{\E^c_{\epsilon}} d\ptheta dW'
  q(\ptheta | \theta) P(W' | W, \theta, \vartheta, \ptheta, X) \le \epsilon.
\end{align*}
We similarly divide the integral $I_1$ into two parts, $I_{1,E_\epsilon}$ 
(over $\E_\epsilon$) and $I_{1,E_\epsilon^c}$ (over its complement 
$\E^c_\epsilon$).
For $\|\theta\|$ large enough, we can bound the acceptance probability 
by $\alpha_I(\theta,\theta'; X) + \epsilon$ on the set $E_\epsilon$, and by corollary 
\ref{corol:integ_bound}, we get 
\begin{align*}
  I_{1,E_\epsilon} & \leq \int_{\E_\epsilon} \Omega(\theta')q(\theta' | \theta) (\alpha_I(\theta, \theta'; X) + \epsilon) d\theta' 
 \leq \int \Omega(\theta')q(\theta' | \theta) \alpha_I(\theta, \theta'; X) d\theta' + \eta_1 \epsilon \Omega(\theta).
\end{align*}
For $I_{1,E_\epsilon^c}$, 
from assumption \ref{asmp:integ_bound}, 
we have $\int_\Theta \Omega(\ptheta)^2 q(\ptheta | \theta)d\ptheta \leq \eta_0 \Omega(\theta)^2$
for $\|\theta\| > \theta_2$.
So, by Cauchy-Schwarz inequality and bounding the acceptance probability by one, we have
\begin{align*}
  \left( I_{1,E_\epsilon^c}\right)^2 
& \le \int_{\E_\epsilon^c} q(\theta' | \theta) P(W' |W, \theta, \vartheta, \theta', X)d\theta'dW'  \int_{\E_\epsilon^c}  \Omega(\theta')^2 q(\theta' | \theta) P(W' |W, \theta, \vartheta, \theta', X)d\theta'dW'   \\
& \le \epsilon \int  \Omega(\theta')^2 q(\theta' | \theta)d\theta' 
 \le \epsilon \eta_0 \Omega(\theta)^2 ,
\end{align*}
giving
$
I_{1,E_\epsilon^c}  \le \sqrt{\epsilon \eta_0 }\Omega(\theta) .
$
Putting these four results together, for $\theta$ satisfying $ \| \theta \| >\max(\theta_2, 
\theta_\epsilon, M) $ 
(where $M$ is from Assumption~\ref{asmp:ideal_geom} on 
the ideal sampler), we have 
\begin{align*}
  \int \Omega(\theta') P(d\theta'| W, \theta, \vartheta)
  & \leq \int \Omega(\theta') q(\theta' | \theta)\alpha_I(\theta, \theta'| X) d\theta'  + \Omega(\theta)\int  q(\ptheta | \theta) (1 - \alpha_I(\theta, \ptheta | X)) d\ptheta+ \\
  &\sqrt{\eta_0\epsilon}\Omega(\theta)  +  \eta_1 \epsilon \Omega(\theta) + 2\epsilon \Omega(\theta)\\
  & \leq (1 - \rho) \Omega(\theta) + (\sqrt{\eta_0\epsilon} +  \eta_1 \epsilon + 2\epsilon) \Omega(\theta) + L_I, \quad \text{giving}
\end{align*}
\begin{align*}
\mathbb{E}[\lambda_1 | W'| &+ \Omega(\theta')| W, \theta, \vartheta, X] \le \lambda_1(1 - \delta_1)|W| + \lambda_1 t_{end} (1 + \eta_1)\Omega(\theta) +\\
&  (1 - \rho) \Omega(\theta) + (\sqrt{\eta_0} \sqrt{\epsilon} +  \eta_1 \epsilon + 2\epsilon) \Omega(\theta) + L_I\\
& = (1 - \delta_1)\lambda_1 |W| + [1 - (\rho - \lambda_1 t_{end} (1 + \eta_1) - (2 + \eta_1)\epsilon - \sqrt{\eta_0 \epsilon})]\Omega(\theta) + L_I\\
& \defeq (1 - \delta_1)\lambda_1 |W| + (1 - \delta_2)\Omega(\theta) + L_I
\end{align*}
For $(\lambda_1,\epsilon)$ small enough, $\delta_2 \in (0,1)$, 
      and  $\delta = \min(\delta_1,\delta_2)$ gives the drift condition.
\end{proof}

 \section{Conclusion}
We have proposed a novel Metropolis-Hastings algorithm for parameter 
inference in Markov jump processes. We use 
uniformization to update the MJP parameters with state-values marginalized 
out, though still conditioning on a random Poisson grid. The 
distribution of this grid depends on the MJP parameters, significantly 
slowing down MCMC mixing. We propose a simple symmetrization scheme to get 
around this dependency. In our experiments, we demonstrate the usefulness 
of this scheme, which outperforms a number of competing baselines.
We also derive conditions under which our sampler inherits geometric 
ergodicity properties of an ideal MCMC sampler.

There are a number of interesting directions for future research.
Our focus was on Metropolis-Hastings algorithms for typical settings,
where the parameters are low dimensional. It is interesting to 
investigate how our ideas extend to schemes like Hamiltonian Monte 
Carlo~\citep{Neal2010} suited for higher-dimensional settings. Another 
direction is to develop and study similar schemes for more complicated 
hierarchical models like mixtures of MJPs or coupled MJPs. While we 
focused only on Markov jump processes, it is also of interest to study 
similar ideas for algorithms for more general processes~\citep{RaoTeh12}. 
It is also important to investigate how similar ideas apply to 
deterministic algorithms like variational Bayes~\citep{OpperVarinf, panzharao17}. From 
a theoretical viewpoint, our proof required the uniformization rate to 
satisfy $\Omega(\theta) \ge k_1 \max_s A_s(\theta) + k_0$ for $k_1 > 1$. 
We believe our result still holds for $k_1 = 1$, and for completeness, 
it would be interesting to prove this.

\section{Supplementary material}
\begin{description}
  \item[Appendix] This file includes a summary of notation used in the main text, proofs not included in the main text, details of the \naive\ and particle MCMC algorithms, as well as experimental results not included in the main text. [\texttt{Appendix\_ZhangRao.pdf}]. 
\item[Python code] This includes code implementing the symmetrized MH algorithm, as well as the E Coli dataset.
  {\texttt{README.txt}} includes instructions. The github repository {\url{https://github.com/varao/ZhangRao_JCGS_code}} also contains the code.
 [\texttt{Code\_ZhangRao.tar.gz}].
\end{description}

\section{Acknowledgements}
We thank the anonymous reviewers whose suggestions helped to significantly improve this manuscript.
We acknowledge the National Science Foundation for funding under grants RI/1816499 and DMS/1812197.

\bibliographystyle{./agsm}
\setcitestyle{authoryear}
\bibliography{bibfile}

\newpage
\bigskip
\begin{center}
{\large\bf APPENDIX to ``Efficient Parameter Sampling for Markov
Jump Processes", by Boqian Zhang and Vinayak Rao}
\end{center}

\subsection{Notation}\label{sec:notation}

\begin{figure}[H]
  \centering
  \begin{minipage}[H]{0.65\linewidth}
  \centering
    \includegraphics [width=0.64\textwidth, angle=0]{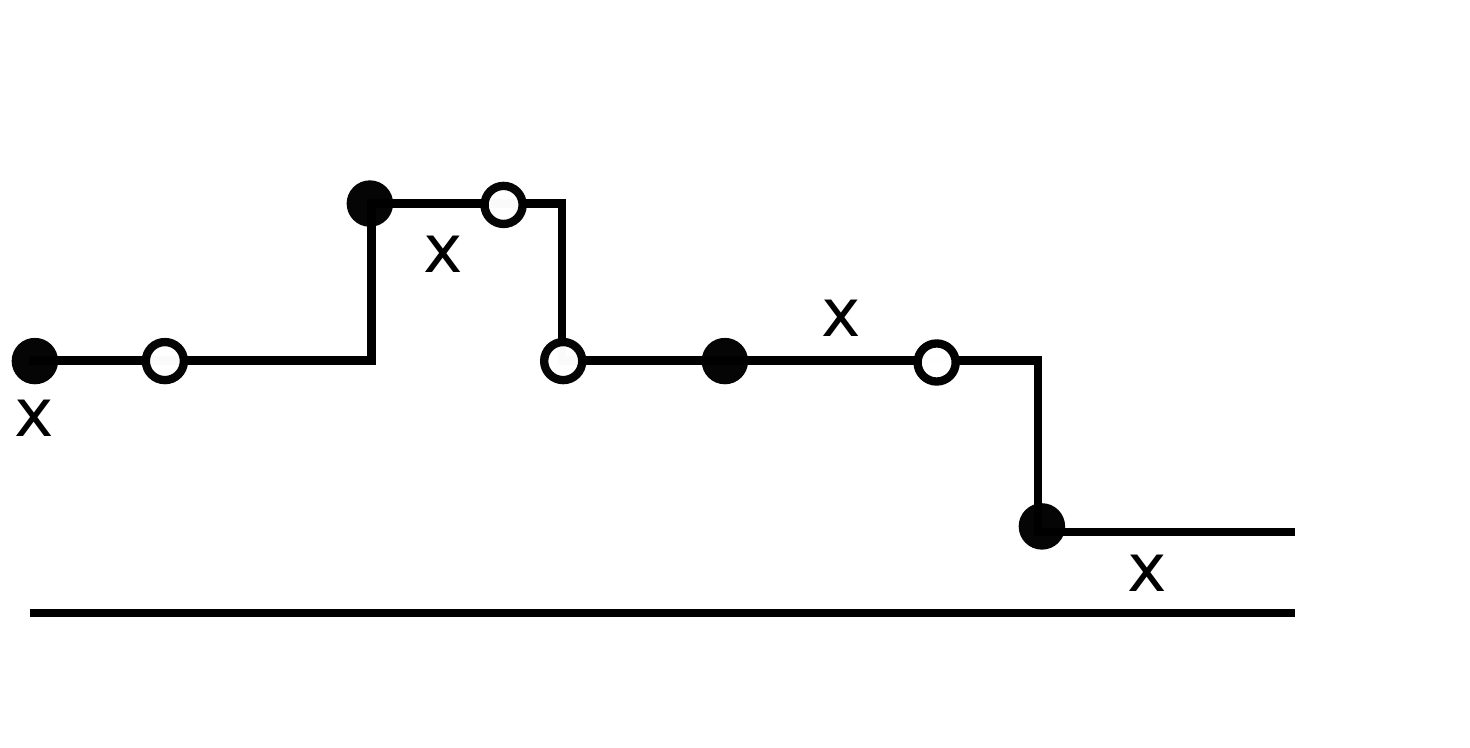}

  \end{minipage}
\caption{an example MJP path}
     \label{fig:vx}
  \end{figure}
{We recall some notation used in our proof. 
  The figure about shows a realization $S(t)$ of an MJP with rate matrix 
  $A(\theta)$ and initial distribution $\pi_0$ over an interval 
  $[0,t_{end}]$. The crosses are observations $X$. $\pi_0$ is the initial 
  distribution over states, and $\pi_\theta$ is the staionary distribution 
  of the MJP.
    $p(\theta)$ is the prior over $\theta$, and $q(\nu|\theta)$ is 
      the proposal distribution.
  \begin{itemize}
    \item The uniformized representation of $S(t)$ is the pair 
      $(V, W)$, with the Poisson grid 
      $W = [w_1, w_2, w_3, w_4, w_5, w_6, w_7]$ and the states 
      $V =[v_0, v_1, v_2, v_3, v_4, v_5, v_6, v_7]$ assigned with through 
      a Markov chain with initial distribution $\pi_0$ and transition 
      matrix $B(\theta,\theta')$. In the figure, the circles (filled and 
      empty) correspond to $W$.
    \item The more standard representation of $S(t)$ is the pair $(S,T)$.
      Here $T$ are the elements of $W$ which are true jump times (when 
      $V$ changes value), and $S$ are the corresponding elements of $V$.
      $U$ are the remaining elements of $W$ corresponding to self-transitions.
      Here, $T = [w_2, w_4, w_7]$ and $U = [w_1, w_3,  w_5, w_6]$. 
    \item The filled circles represent $W_X$, which are the elements of 
      $W$ containing observations. $V_X$ are the states corresponding to 
      $W_X$. In this example, $W_X = [w_2, w_5,w_7] \cup {\{0\}}$ and 
      $V_X = [v_2, v_5, v_7] \cup {\{v_0\}}$. 
    \item We write $\mW$ for the minimum number of elements of $W$ between 
      successive pairs of observations (including start time $0$). 
      In this example, $\mW = \min (3,3,2)= 2$. 
    \item $P(X | W, \theta, \theta')$ is the marginal distribution of 
      $X$ on $W$ under a Markov chain with 
      transition matrix $B(\theta, \theta')$ (after integrating out the 
      state information $V$). Recall that the LHS does not depend 
      on $\theta'$ because of uniformization.
    \item $P(X|\theta)$ is the marginal probability of the 
      observations under the rate-$A(\theta)$ MJP. 
$P(X | \theta) =  \int_W P(X | W, \theta, \theta') P(W|\theta, \theta')dW$.
    \item $P_B(V_X | W, \theta, \theta')$ is the probability 
      distribution over states $V_X$ for the Markov chain with 
      transition matrix $B(\theta, \theta')$ on the grid $W$, with the remaining elements 
      of $V$ integrated out. 
    \item  $P_{st}(V_X|\theta)$ is the probability of $V_X$ when 
      elements of $V_X$ are sampled i.i.d.\ from $\pi_\theta$).
    \item  $P_{st}(X|\theta)$ is the marginal probability of $X$
      when $V_X$ is drawn from $P_{st}(V_X|\theta)$.
  \end{itemize}
} 
\setcounter{theorem}{3}

\subsection{Remaining proofs}
\begin{corollary}
Given the proposal density $q(\ptheta | \theta)$, $\exists \eta_1 > 0, \theta_2 > 0$ such that for $\theta$ 
satisfying $\| \theta \|  > \theta_2$, 
$ \int_\Theta \Omega(\ptheta) q(\ptheta | \theta)d\ptheta \leq \eta_1 \Omega(\theta).$
\end{corollary}
\begin{proof}
From assumption \ref{asmp:integ_bound},  we have $ \int_\Theta \Omega(\ptheta)^2 q(\ptheta | \theta)d\ptheta \leq \eta_0 \Omega(\theta)^2$ for $\theta$ satisfying $\| \theta \|  > \theta_2$.
For such $\theta$, by the Cauchy-Schwarz inequality, we have
\begin{align*}
\left[ \int_\Theta \Omega(\ptheta) q(\ptheta | \theta) d\ptheta \right]^2 &\le \int_\Theta \Omega(\ptheta)^2 q(\ptheta | \theta) d\ptheta \cdot \int_\Theta q(\ptheta | \theta) d\ptheta \le \eta_0 \Omega(\theta)^2.
\end{align*}
So for $\theta$ satisfying $\| \theta \|  > \theta_2$, we have $\int_\Theta \Omega(\ptheta) q(\ptheta | \theta) d\ptheta \le \sqrt{\eta_0} \Omega(\theta).$
\end{proof}

\begin{proposition}
The a posteriori probability that the embedded Markov chain makes a
self-transition,
$P(V_{i + 1} = V_i| W, X, \theta, \vartheta) \ge \delta_1 > 0$,
for 
any $\theta,\vartheta, W$.
\end{proposition}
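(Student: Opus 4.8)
The plan is to combine a uniform \emph{a priori} lower bound on the embedded chain's self-transition probability with the bounded-likelihood Assumption~\ref{asmp:obs_bnd} to argue that conditioning on the observations $X$ cannot destroy that bound. It is convenient to prove the slightly stronger statement $P(V_{i+1}=s\mid V_i=s,W,X,\theta,\vartheta)\ge\delta_1$ for \emph{every} state $s$; this is the Corollary, and the Proposition then follows at once by averaging over the posterior law of $V_i$.

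First I would record the prior bound. The embedded chain has transition matrix $B(\theta,\vartheta)=I+\frac{A(\theta)}{\Omega(\theta)+\Omega(\vartheta)}$, so its self-transition probability in state $s$ is $B_{ss}(\theta,\vartheta)=1-\frac{A_s(\theta)}{\Omega(\theta)+\Omega(\vartheta)}$. By Assumption~\ref{asmp:unif_rate}, $\Omega(\theta)\ge k_1\max_s A_s(\theta)+k_0\ge k_1 A_s(\theta)$ with $k_1>1$, and $\Omega(\vartheta)>0$, hence $\frac{A_s(\theta)}{\Omega(\theta)+\Omega(\vartheta)}\le\frac{A_s(\theta)}{\Omega(\theta)}\le\frac{1}{k_1}$ and $B_{ss}(\theta,\vartheta)\ge 1-\frac{1}{k_1}=:c_0>0$, uniformly in $\theta,\vartheta,s$. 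Note $B(\theta,\vartheta)$ is stochastic, so $\sum_b B_{sb}(\theta,\vartheta)=1$.

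Next I would bring in the observations through the forward--backward structure. Because the embedded chain is Markov on $W$ and the observations in each interval depend only on the state there, conditioning on $V_i=s$ leaves $P(V_{i+1}=b\mid V_i=s,W,X,\theta,\vartheta)\propto B_{sb}(\theta,\vartheta)\,\ell_b$, where $\ell_b:=P\!\left(X_{[w_{i+1},t_{end}]}\mid V_{i+1}=b,W,\theta,\vartheta\right)$ aggregates the likelihood of all observations from $w_{i+1}$ onwards (the per-interval likelihoods $L_j(\cdot)$ together with the FFBS backward message). Using $\sum_b B_{sb}(\theta,\vartheta)=1$ this gives
$$P(V_{i+1}=s\mid V_i=s,W,X,\theta,\vartheta)=\frac{B_{ss}\ell_s}{\sum_b B_{sb}\ell_b}\ge c_0\,\frac{\ell_s}{\max_b \ell_b},$$
so it remains to bound $\ell_s/\max_b\ell_b$ below by a universal constant $c_1>0$.

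This last step is where Assumption~\ref{asmp:obs_bnd} does the work, and it is the main obstacle. I need the tail likelihoods $\ell_b$ to be comparable across states \emph{uniformly in $W$} --- in particular uniformly in $|W|$, which can be arbitrarily large. A crude estimate (e.g.\ lower-bounding $\ell_s$ by the contribution of the trajectory that stays in state $s$) accrues a factor $c_0^{|W|}$ and degrades as the grid grows; the point of the bounded-likelihood assumption is precisely to rule this out, so that the data cannot favour any trajectory over a ``lazy'' one by more than a fixed multiplicative amount, yielding $\ell_s\ge c_1\max_b\ell_b$ with $c_1$ depending only on $\lb/\ub$ and the fixed, finite observation set. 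Granting this, $P(V_{i+1}=s\mid V_i=s,W,X,\theta,\vartheta)\ge c_0c_1=:\delta_1>0$ for every $s$ --- the Corollary --- and averaging over $P(V_i=s\mid W,X,\theta,\vartheta)$ gives $P(V_i=V_{i+1}\mid W,X,\theta,\vartheta)\ge\delta_1$, proving the Proposition.
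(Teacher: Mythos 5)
Your proof is correct in substance and rests on the same two ingredients as the paper's: the \emph{a priori} self-transition bound $B_{ss}(\theta,\vartheta)\ge 1-1/k_1$ from Assumption~\ref{asmp:unif_rate} (you get the constant right, incidentally --- the paper's displayed proof writes $1-1/k_0$ where $1-1/k_1$ is meant), and Assumption~\ref{asmp:obs_bnd} to control the effect of conditioning on $X$. Where you differ is in how the second ingredient is deployed. The paper argues globally: it writes $P(V_i=V_{i+1}=v\mid W,X,\theta,\vartheta)$ as $P(X\mid V_i=V_{i+1}=v,W,\theta,\vartheta)\,P(V_i=V_{i+1}=v\mid W,\theta,\vartheta)/P(X\mid W,\theta,\vartheta)$ and bounds the likelihood ratio below by $\lb/\ub$ directly, since both the numerator and denominator are mixtures of \emph{full-trajectory} likelihoods lying in $[\lb,\ub]$; summing over $v$ then gives $\delta_1=\frac{\lb}{\ub}(1-\frac{1}{k_1})$ in one step, with the corollary as a by-product. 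You instead go through the FFBS decomposition $P(V_{i+1}=b\mid V_i=s,W,X)\propto B_{sb}\ell_b$ and need the backward messages $\ell_b$ to be comparable across states uniformly in $W$. That step is true, but it is also exactly where your proof is least self-contained: you ``grant'' that $\ell_s\ge c_1\max_b\ell_b$, and establishing it requires bounding \emph{partial} (tail) products of observation likelihoods, not just the full product that Assumption~\ref{asmp:obs_bnd} literally controls; the constant $c_1$ then depends on the individual observation likelihoods for the fixed data set rather than on $\lb/\ub$ alone (which you do acknowledge). The paper's global Bayes-ratio argument sidesteps this entirely, which is why it is shorter; your local argument, once the tail-likelihood comparability is written out, yields the corollary for every fixed $s$ as the primary statement rather than as a by-product. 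I would either switch to the paper's global ratio $P(X\mid E)/P(X)\ge\lb/\ub$, or add the two lines showing that every tail likelihood lies in a fixed interval $[\lb',\ub']$ determined by the finite observation set, so that $c_1=\lb'/\ub'$ is explicit.
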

\begin{proof}
  We use $k_0$ from assumption~\ref{asmp:unif_rate}
  to bound {\em a priori} self-transition probabilities:
  \begin{align*}
    P(V_{i + 1}=s|V_i=s, W, \theta,\vartheta) &= B_{ss}(\theta,\vartheta) =
    1 - \frac{A_{s}(\theta)}{\Omega(\theta, \vartheta)}
    \ge 1 - \frac{A_{s}(\theta)}{\Omega(\theta)} \ge 1-\frac{1}{k_0}.
    \intertext{  We then have}
  P(V_i = V_{i + 1} | W, X, \theta, \vartheta) &= \sum_v P(V_i = V_{i + 1}
  = v | W, X, \theta, \vartheta)
 =\sum_v \frac{P(V_i = V_{i + 1} = v, X | W, \theta, \vartheta)}{P(X | W,
 \theta, \vartheta)} \\
&=\sum_v \frac{P(X | V_i = V_{i + 1} = v, W, \theta, \vartheta)P( V_i =
V_{i + 1} = v|W, \theta, \vartheta)}{P(X | W, \theta, \vartheta)}\\
& \geq \frac{\lb}{\ub}\sum_v P(V_i = V_{i + 1} = v | W, \theta, \vartheta)\\
&=  \frac{\lb}{\ub} \sum_v P(V_{i + 1} = v | V_i = v, W, \theta,\vartheta)P(V_i = v | \theta, \vartheta) \\
& \geq \frac{\lb}{\ub} (1-\frac{1}{k_0}) \assign \delta_1 > 0.
\end{align*}
\qed
\end{proof}
\setcounter{theorem}{6}

\begin{proposition}
  Let $(W, \theta, \vartheta)$ be the current state of the sampler.
Then, for any $\epsilon$, there exists $\theta_\epsilon > 0$ as well as a set $\E_\epsilon \subseteq \{(W', \theta'): |\alpha_I(\theta,\theta';X) - \alpha(\theta,\theta';W',X)| \le \epsilon\}$, such that for $\theta$ satisfying $\| \theta \| > \theta_\epsilon$ and any $\vartheta$, we have
$P(E_\epsilon|W,\theta,\vartheta) > 1-\epsilon$.
\end{proposition}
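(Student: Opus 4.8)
The plan is to compare the two acceptance probabilities term by term, isolate the likelihood ratio that controls their difference, and then show that the configurations on which that ratio is far from $1$ carry little probability under the sampler.

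First I would write both probabilities in a common form. From the proof of Lemma~\ref{lem:small_set} (see the identity~\eqref{eq:acc}) we have $\alpha(\theta,\theta';W',X) = 1 \wedge \bigl(r(W',\theta,\theta')\,\beta(\theta,\theta')\bigr)$ while $\alpha_I(\theta,\theta';X) = 1 \wedge \beta(\theta,\theta')$, where $\beta(\theta,\theta') = \tfrac{P(X|\theta')p(\theta')q(\theta|\theta')}{P(X|\theta)p(\theta)q(\theta'|\theta)}$ and $r(W',\theta,\theta') = \tfrac{P(X|W',\theta',\theta)/P(X|\theta')}{P(X|W',\theta,\theta')/P(X|\theta)}$. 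Since $x \mapsto 1\wedge x$ is $1$-Lipschitz, and since $r\beta \ge 1$ whenever $\beta \ge 2$ and $r \ge \tfrac{1}{2}$, a short case split on whether $\beta \le 2$ shows $|\alpha - \alpha_I| \le 2\,|r-1|$ as soon as $r \in [\tfrac{1}{2},\tfrac{3}{2}]$. Using Assumption~\ref{asmp:obs_bnd} to bound all the likelihoods in $[\lb,\ub]$, one then checks that $|r-1|$ is as small as desired whenever both $|P(X|W',\theta,\theta') - P(X|\theta)|$ and $|P(X|W',\theta',\theta) - P(X|\theta')|$ are small. So it suffices to exhibit a high-probability set on which these two differences are small, and Lemma~\ref{lem:eigenvalue_lemma} is exactly the tool for that.

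Next I would define the good set. Fix $K_0$ with $(1+\tfrac{1}{K_0})k_1 \ge 2$, pick a tolerance $\epsilon'$ (to be chosen in terms of $\epsilon,\lb,\ub$), and set $\E_\epsilon = \bigl\{(W',\theta') : \tfrac{1}{K_0} \le \tfrac{\Omega(\theta')}{\Omega(\theta)} \le K_0,\ \|\theta'\| > \theta^{K_0}_{5,\epsilon'},\ \text{and } W' \text{ has more than } w^{K_0}_{\epsilon'} \text{ points in every inter-observation interval}\bigr\}$, with $w^{K_0}_{\epsilon'},\theta^{K_0}_{5,\epsilon'}$ the constants of Lemma~\ref{lem:eigenvalue_lemma}. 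On $\E_\epsilon$, provided also $\|\theta\| > \theta^{K_0}_{5,\epsilon'}$, Lemma~\ref{lem:eigenvalue_lemma} applied as stated gives $|P(X|W',\theta,\theta') - P(X|\theta)| < \epsilon'$; and since the condition on the ratio $\Omega(\theta')/\Omega(\theta)$ is symmetric and $\|\theta'\| > \theta^{K_0}_{5,\epsilon'}$ on $\E_\epsilon$, the same lemma applied with $\theta$ and $\theta'$ interchanged gives $|P(X|W',\theta',\theta) - P(X|\theta')| < \epsilon'$. Together with the previous paragraph this yields $|\alpha - \alpha_I| \le \epsilon$ on $\E_\epsilon$ once $\epsilon'$ is small enough, hence $\E_\epsilon \subseteq \{|\alpha_I(\theta,\theta';X) - \alpha(\theta,\theta';W',X)| \le \epsilon\}$, as required.

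Finally I would show $P(\E_\epsilon \mid W,\theta,\vartheta) > 1 - \epsilon$ for $\|\theta\|$ large, via a union bound over the three defining conditions. The event $\{\Omega(\theta')/\Omega(\theta) \notin [\tfrac{1}{K_0},K_0]\}$ has $q(\cdot|\theta)$-probability below $\epsilon/3$ once $\|\theta\| > \theta^{K_0}_{4,\epsilon/3}$, by Assumption~\ref{asmp:omega}. On the complement of that event $\Omega(\theta') \ge \Omega(\theta)/K_0$, which diverges with $\Omega(\theta)$; since $\Omega$ is finite on bounded sets and monotone in its tail (Assumption~\ref{asmp:mono_tail}), this forces $\|\theta'\|$ out of any fixed ball once $\|\theta\|$ is large, so $\{\|\theta'\| \le \theta^{K_0}_{5,\epsilon'}\}$ has probability below $\epsilon/3$. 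For the third condition: conditioned on the current path $(S,T)$ and any $\theta'$ with $\Omega(\theta') \ge \Omega(\theta)/K_0$, the thinned set $U'$ is a Poisson process of piecewise-constant rate $\Omega(\theta)+\Omega(\theta') - A_{S(t)}(\theta) \ge (k_1-1)\max_s A_s(\theta) + k_0 + \Omega(\theta)/K_0$ by Assumption~\ref{asmp:unif_rate}; this lower bound is independent of $(S,T)$ and diverges with $\Omega(\theta)$, so the number of points of $W' = T \cup U'$ in each fixed-length inter-observation interval dominates a Poisson variable whose mean tends to infinity, and $\{W' \text{ has at most } w^{K_0}_{\epsilon'} \text{ points in some interval}\}$ has probability below $\epsilon/3$ uniformly over such $(S,T,\theta')$. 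Summing the three bounds and taking $\theta_\epsilon$ larger than all the thresholds above gives $P(\E_\epsilon^c \mid W,\theta,\vartheta) < \epsilon$. The main obstacle is this last step: because $W' = T \cup U'$ and the carried-over jump times $T$ are not under our control, the lower bound on the point counts must be extracted from $U'$ alone and must hold uniformly over the random trajectory and the random proposal; this works precisely because the intensity of $U'$ is bounded below by a quantity depending only on $\Omega(\theta)$ and $\Omega(\theta')$.
\qed
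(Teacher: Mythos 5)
Your proof is correct. The construction of the high-probability set and the union bound over its three defining conditions --- the $\Omega$-ratio event from Assumption~\ref{asmp:omega}, the event that $\|\theta'\|$ is large (forced by the ratio condition together with tail-monotonicity of $\Omega$), and the grid-density event driven by the diverging Poisson intensity of $U'$ --- is essentially the paper's argument, including the two-sided application of Lemma~\ref{lem:eigenvalue_lemma} with the roles of $\theta$ and $\theta'$ swapped. Where you genuinely depart from the paper is the reduction of $|\alpha-\alpha_I|$ to the likelihood differences. The paper writes both acceptance ratios as $(\text{likelihood ratio})\times c$ with $c = p(\theta')q(\theta|\theta')/\bigl(p(\theta)q(\theta'|\theta)\bigr)$, applies $|1\wedge a - 1\wedge b|\le |a-b| = c\,|\cdot|$, and therefore must control $c$ by $M_\epsilon$ on a high-probability set --- this is precisely the role of Assumption~\ref{asmp:prior} and the set $E_1^\epsilon$ in the paper's proof. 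Your factorization $\alpha = 1\wedge(r\beta)$, $\alpha_I = 1\wedge\beta$, combined with the observation that both minima saturate at $1$ when $\beta\ge 2$ and $r\ge \tfrac{1}{2}$, gives $|\alpha-\alpha_I|\le 2|r-1|$ with no control on $\beta$ whatsoever, so Assumption~\ref{asmp:prior} is never invoked; this is a sharper bound and effectively proves the proposition under one fewer hypothesis. Two minor points sit at the same level of rigor as the paper: forcing $\|\theta'\|$ out of a fixed ball uses that $\Omega$ is bounded on bounded sets, which is implicit rather than stated (the paper makes the identical implicit use in defining $E_{3,\epsilon}^K$), and your explicit treatment of $W'=T\cup U'$ via a lower bound on the intensity of $U'$ that is uniform in $(S,T)$ is more careful than the paper's one-line justification of the corresponding event $E_{4,\epsilon}^K$.
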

\begin{proof}
  Fix $\epsilon > 0$ and $K > 1$ satisfying $(1 + \frac{1}{K})k_1 \ge 2$.
  \begin{itemize}
    \item  
From assumption \ref{asmp:prior}, there exist 
$M_\epsilon$ and $\theta_{1,\epsilon}$, such that 
$P(\frac{q(\theta | \theta')p(\theta')}{q(\theta' | \theta)p(\theta)}\leq M_\epsilon) > 1 - \epsilon / 2$ 
{for } $\theta$ satisfying $ \| \theta \| > \theta_{1,\epsilon}$.
    Define  $E_1^\epsilon = \{\theta' s.t.\ \frac{q(\theta | \theta')p(\theta')}{q(\theta' | \theta)p(\theta)}\leq M_\epsilon\}$.
    \item Define $E^K_2 = \{\theta' s.t.\ \frac{\Omega(\theta')}{\Omega(\theta)}\in [1/K, K] \}$.
Following assumption \ref{asmp:omega}, define $\theta_{2, \epsilon}^K$ 
such that $P(E^K_2 | \theta) > 1 - \epsilon / 2$ for all $\theta$ satisfying $\| \theta \| > \theta_{2, \epsilon}^K$.
    \item On the set $E^K_2$, $\Omega(\theta') \le K \Omega(\theta)$ (and also
$\Omega(\theta) \le K \Omega(\theta')$).  Lemma~\ref{lem:eigenvalue_lemma}
ensures that there exist $\theta_{3,\epsilon}^K > 0, w_\epsilon^K > 0$, 
such that for $\mW > w_\epsilon^K$, 
$ \| \theta \| > \theta_{3,\epsilon}^K$ and 
$ \| \theta' \| > \theta_{3,\epsilon}^K$, we have
$|P(X | W, \theta' , \theta) - P(X | \theta' )| < \epsilon$, and
$|P(X | W, \theta , \theta') - P(X | \theta )| < \epsilon$.
Define      $E_{3, \epsilon}^K = \{\theta' s.t. \| \theta'\| > {\theta}_{3,\epsilon}^K \}$.
    \item Define $E_{4, \epsilon}^K =\{W s.t.\ \mW >  {w}_{\epsilon}^K\}$.
Set $\theta_{4,\epsilon}^K$, so that for $\| \theta \| > \theta_{4,\epsilon}^K$, 
$ P(E_{4,\epsilon}^K|E^K_2, E_1^\epsilon ) > 1 - \epsilon.
$. This holds since $W$ comes from a Poisson processes, whose rate can 
be made arbitrarily large by increasing $\Omega(\theta)$.
    \item From assumption \ref{asmp:mono_tail}, there exists $\theta_0$, such that $\Omega(\theta)$ increases as $\| \theta \|$ increases, for $\theta$ satisfying $\| \theta \| > \theta_0$.
Set ${\theta}_\epsilon =\max(\theta_0, \theta_{1, \epsilon},\theta_{2, \epsilon}^K,{\theta}_{3, \epsilon}^K, \theta_{4, \epsilon}^K)$. \\
  \end{itemize}
  Now consider the difference
\begin{align*}
|\alpha(\theta, \theta'; W, X) - \alpha_I(\theta, \theta'; X)| &= \ \mid 1 \wedge \frac{P(X | W, \theta' , \theta)q(\theta | \theta')p(\theta')}{P(X | W, \theta , \theta')q(\theta' | \theta)p(\theta)} - 1 \wedge \frac{P(X | \theta')q(\theta | \theta')p(\theta')}{P(X | \theta)q(\theta' | \theta)p(\theta)} \mid \\
& \leq \ \mid \frac{P(X | W, \theta' , \theta)}{P(X | W, \theta , \theta')} - \frac{P(X | \theta')}{P(X | \theta)}\mid  \frac{q(\theta | \theta')p(\theta')}{q(\theta' | \theta)p(\theta)}.
\end{align*}
%
On $E^\epsilon_1$, $\frac{q(\theta | \theta')p(\theta')}{q(\theta' | \theta)p(\theta)} \le M_\epsilon$.
Since ${P(X | W, \theta , \theta')}$ and ${P(X | \theta)}$
are lower-bounded by $\lb$, for any $\epsilon > 0$ we can 
find a $K$ such that on $E^K_2 \cap E^K_{3,\epsilon}$,
\begin{align*}
|\frac{P(X | W, \theta' , \theta)}{P(X | W, \theta , \theta')} - 
   \frac{P(X | \theta')}{P(X | \theta)}| < \epsilon / M_\epsilon.
\end{align*}
This means that on $E^\epsilon_1 \cap E_2^K \cap E^K_{3,\epsilon}$,
$|\alpha(\theta, \theta', W, X) - \alpha_I(\theta, \theta', X)| < \epsilon$.\\
For $\theta > \max (\theta_{1,\epsilon},\theta^K_{2,\epsilon})$ 
we have 
$
P(E^K_2 E^\epsilon_1) \ge P(E^K_2) + P(E_1^\epsilon) - 1 \ge 1 - \epsilon.
$\\
When $E^K_2$ holds, $\Omega(\theta') \ge \Omega(\theta)/K$.
For $\theta$ large enough, we can ensure $ \| \theta' \| > {\theta}^K_{3,\epsilon}$.
So \begin{align*}
P(E_1^\epsilon E_2^K E_{3,\epsilon}^K E_{4,\epsilon}^K) > (1- \epsilon)^2.
\end{align*} 
Finally, set $E_\epsilon \assign E_1^\epsilon \cap E_2^K \cap E_{3,\epsilon}^K \cap E_{4,\epsilon}^K$ , giving us our result.

\end{proof}

\subsection{Particle MCMC for MJP inference}
\label{sec:pmcmc}
\subsubsection{A sequential Monte Carlo algorithm for MJPs inference}
We describe a sequential Monte Carlo algorithm for MJPs inference that underlies particle MCMC. 
Denote by $S_{[t_1', t_2']}$ the MJP trajectory from time $t_1'$ to time $t_2'$. 
Our target is to sample an MJP trajectory $S_{[0, t_{end}]}$ given $n$ noisy observations $X =(x_1, x_2, ... , x_n)$, at time $t_1^X, t_2^X, ..., t_n^X$. 
The initial value of the Markov jump process trajectory can be simulated from its initial distribution over states: $S(0) \sim \pi_0$. 
$S_{[t_i^X, t_{i + 1}^X]} $, its values over any interval $[t_i^X, t_{i+1}^X]$ can be simulated by Gillespie's algorithm as described in section~\ref{sec:intro} 
For the $i$th observation $x_i$ at time $t^X_i$, denote the likelihood for $S(t^X_i)$ as $P(x_i | S(t^X_i))$.



\begin{algorithm}[H]
  \caption{The SMC sampler for MJP trajectories}
   \label{alg:SMC}
  \begin{tabular}{l l}
   \textbf{Input:  } & \text{Prior $\pi_0$, $n$ observations $X$}, 
                       \text{Number of particles $N$}, rate-matrix $A$.\\
   \textbf{Output:  }& \text{New MJP trajectory $S' (t) = (s'_0, S', T')$}.\\
   \hline
   \end{tabular}
   \begin{algorithmic}[1]
\State Define $t^X_0 = 0$ and $t^X_{n+1} = t_{end}$. 
\State 
Sample initial states for N particles $S^k(0)$ from $\pi_0$, $k = 1,...,N$. 
\For{$i = 1, ..., n+1:$}
	\State (a) For $k = 1,2,...,N$, update particle $k$ from $[0,t^X_{i-1}]$ to $[0,t^X_i]$ by forward simulating $S_{[t^X_{i -1},t^X_i]}^k|S^k(t^X_{i-1})$ via Gillespie's algorithm.
	\State (b) Calculate the weights $w^k_i = P(x_i|S^k(t^X_i))$
and normalize 
$W^k_i = \frac{w^k_i}{\sum_{k = 1}^N w^k_i}.$ 
	\State (c) Sample $J_{i}^k \sim \text{Multi}(\cdot| (W^1_{i},\dotsc,W^N_{i}))$ , $k = 1,2,...,N$.
	\State (d) Set $S_{[0, t^X_i]}^k := S_{[0,t^X_i]}^{J^k_i}$.
	\EndFor
\end{algorithmic}
\end{algorithm}


The SMC algorithm gives us an estimate of the marginal likelihood $P_\theta(X_{1:n})$.
$$ \hat{P}_{\theta} = \hat{P}_{\theta}(X_1) \prod_{i = 2}^n \hat{P}_{\theta}(X_i| X_{1: i- 1}) = \prod_{i = 1}^n \left[ \sum_{k = 1}^N  \frac{1}{N} w_i^k \right].$$

\subsubsection{Particle MCMC algorithm for MJPs inference}

\begin{algorithm}[H]
  \caption{The particle marginal MH sampler for MJP trajectories}
   \label{alg:SMC}
  \begin{tabular}{l l}
   \textbf{Input:  } & \text{The observations $X$, the MJP path $S(t) = (s_0, S, T)$},\\
                       &\text{number of particles $N$}, parameter $\theta$ and $\pi_0$,\\
                     & $P(\theta)$ prior of $\theta$, proposal density $q(\cdot|\cdot)$. \\
   \textbf{Output:  }& \text{New MJP trajectory $S' (t) = (s'_0, S', T')$}.\\
   \hline
   \end{tabular}
   \begin{algorithmic}[1]
\State Sample $\theta^* \sim q(\cdot | \theta)$.
\State Run the SMC algorithm above targeting $P_{\theta^*}(\cdot | X_{1:n})$ to sample $S^*(t)$ from $\hat{P}_{\theta^*}(\cdot | X_{1:n})$ and let $\hat{P}_{\theta^*}$ denote the estimate of the marginal likelihood.\\
Accept $\theta^*, S^*(t)$ with probability $$ \mathtt{acc} = 1 \wedge \frac{\hat{P}_{\theta^*} P(\theta^*)}{\hat{P}_{\theta} P(\theta)} \frac{q(\theta | \theta^*)}{q(\theta^* | \theta)}.$$

\end{algorithmic}
\end{algorithm}

\subsection{Algorithm sketch}
\setlength{\unitlength}{0.8cm}
  \begin{figure}[H]
  \centering
  \begin{minipage}[!hp]{0.45\linewidth}
  \centering
    \includegraphics [width=0.70\textwidth, angle=0]{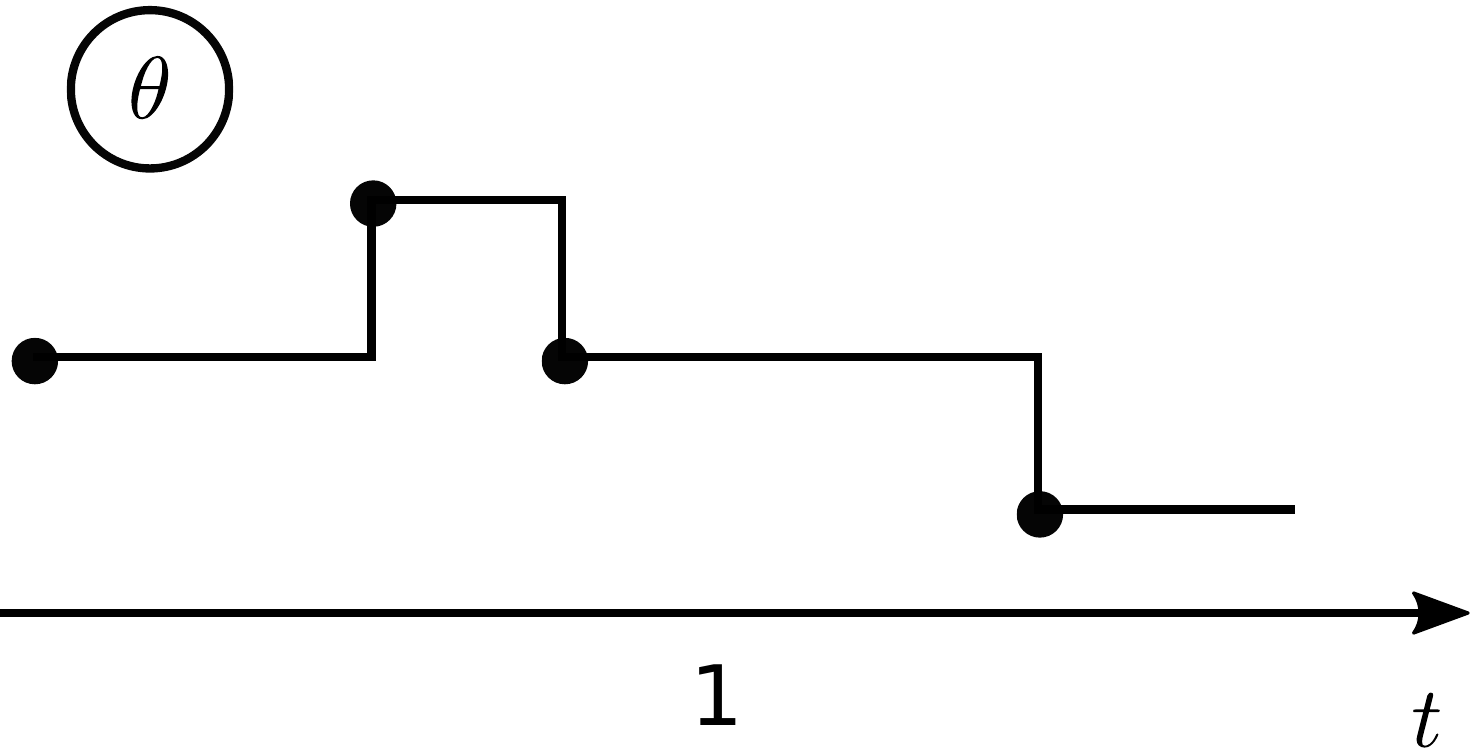}
      \end{minipage}
  \begin{minipage}[!hp]{0.45\linewidth}
  \centering
    \includegraphics [width=0.70\textwidth, angle=0]{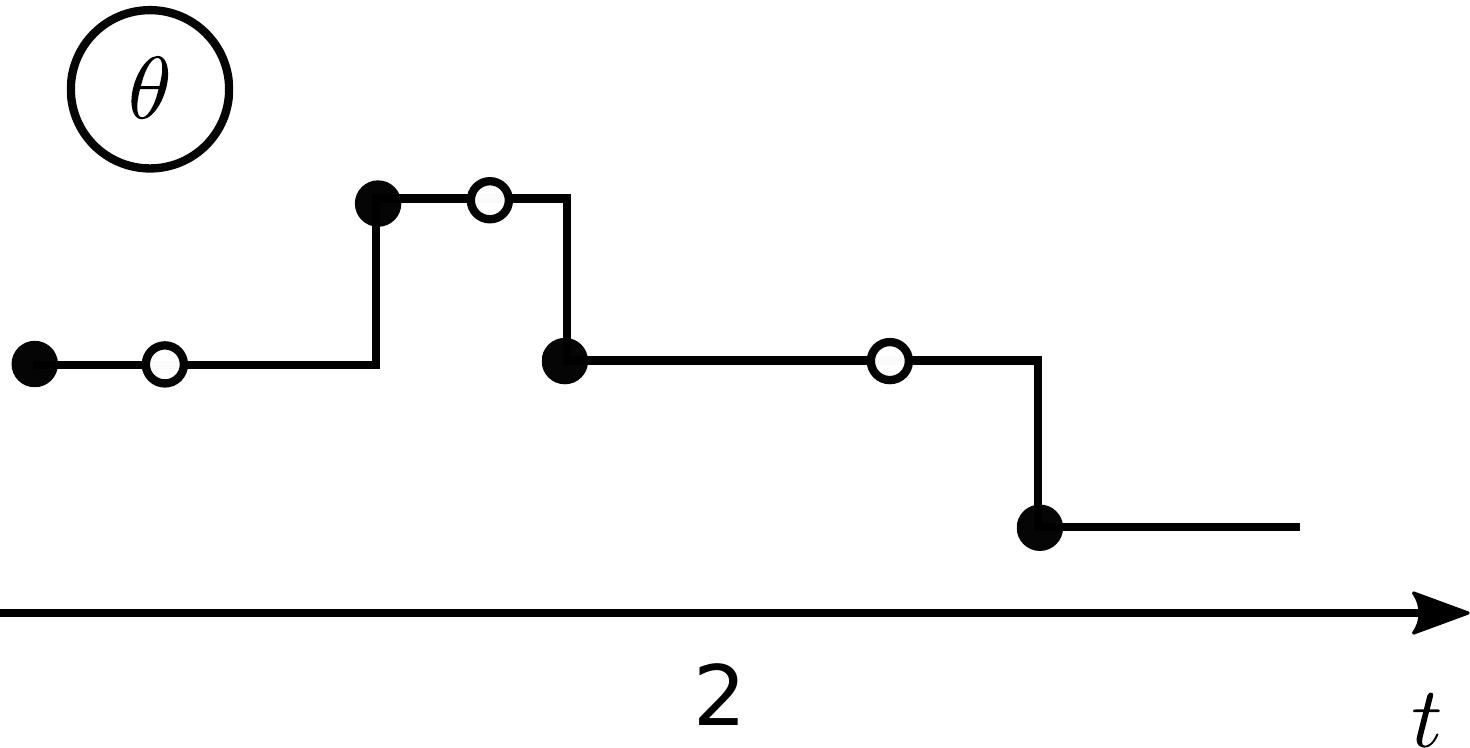}
    \vspace{-0 in}
  \end{minipage}
  \begin{minipage}[!hp]{0.45\linewidth}
  \centering
    \includegraphics [width=0.70\textwidth, angle=0]{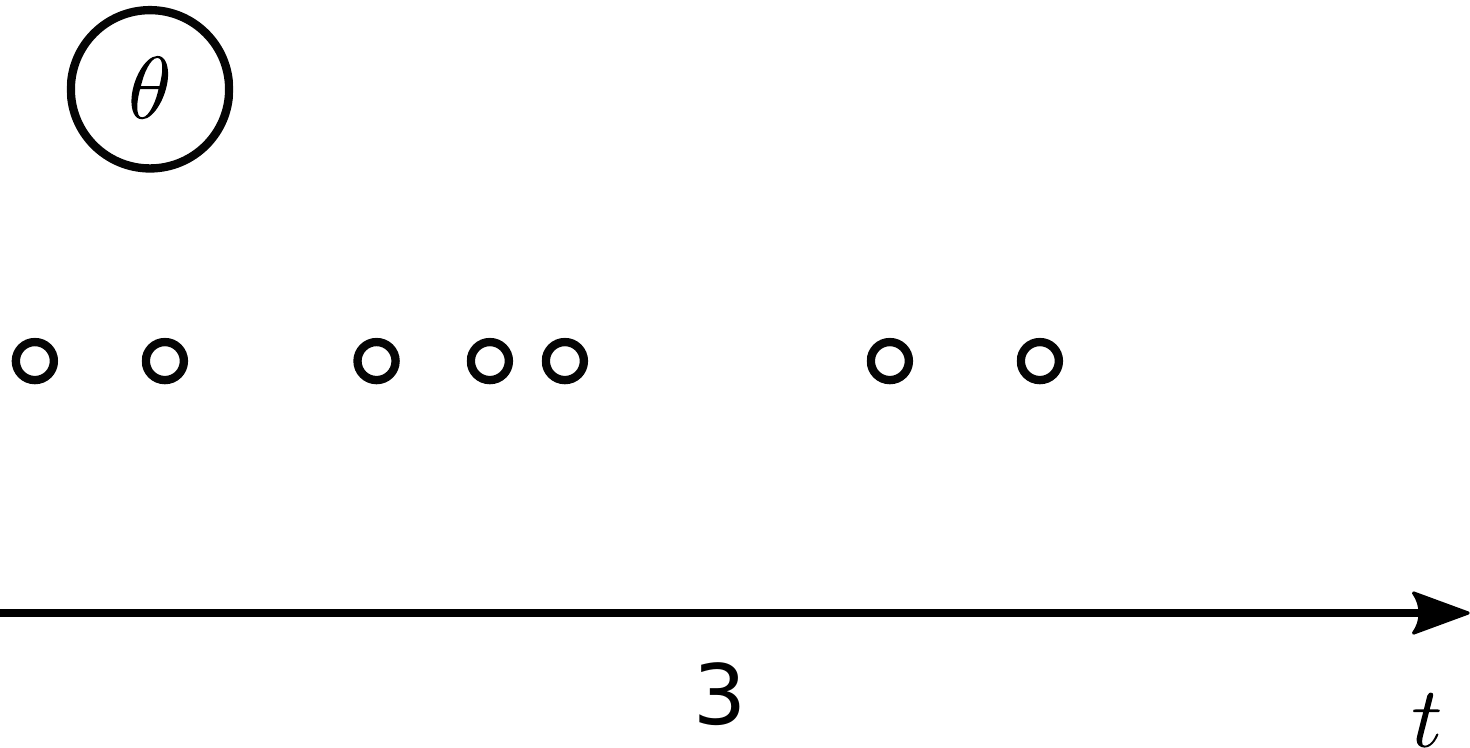}
    \vspace{-0 in}
  \end{minipage}
  \begin{minipage}[!hp]{0.45\linewidth}
  \centering
    \includegraphics [width=0.70\textwidth, angle=0]{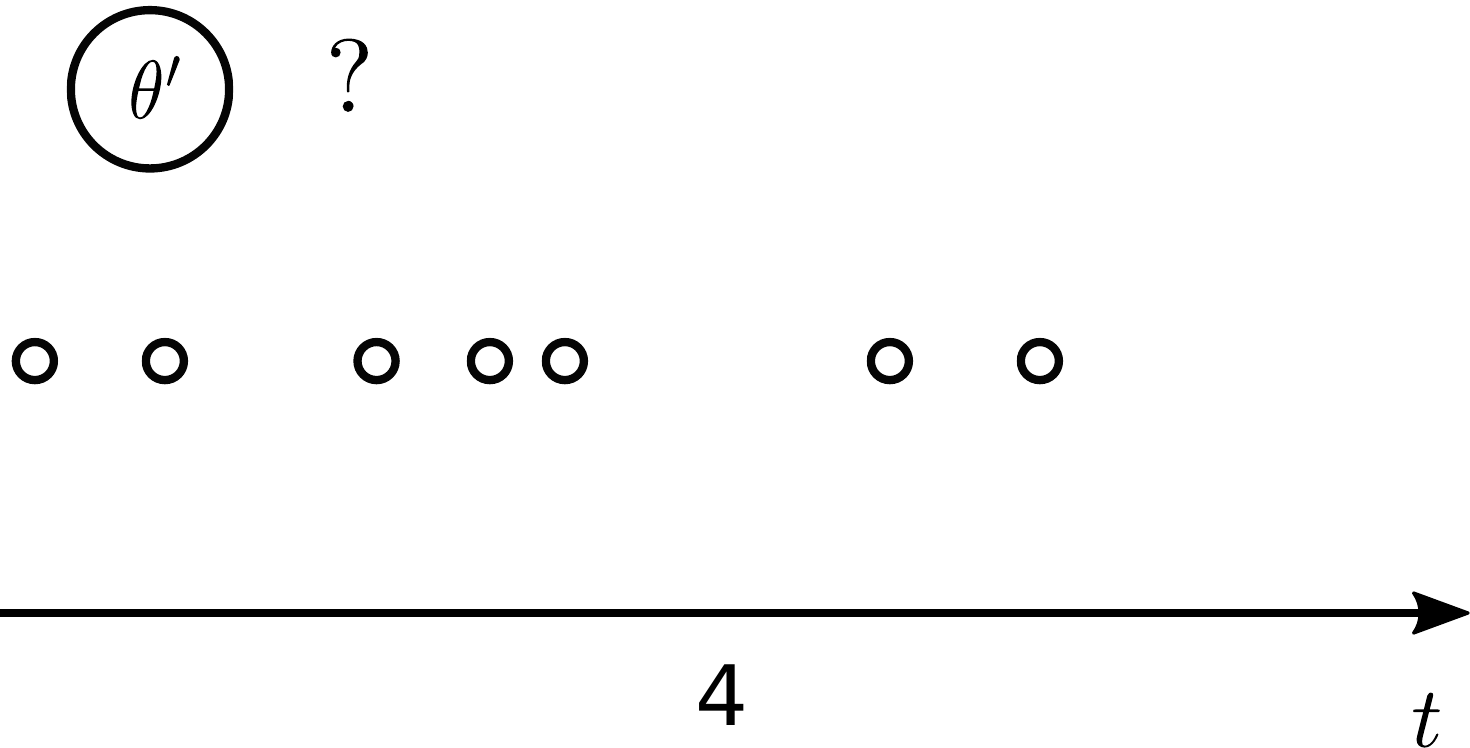}
    \vspace{-0 in}
  \end{minipage}
  \begin{minipage}[!hp]{0.45\linewidth}
  \centering
    \includegraphics [width=0.70\textwidth, angle=0]{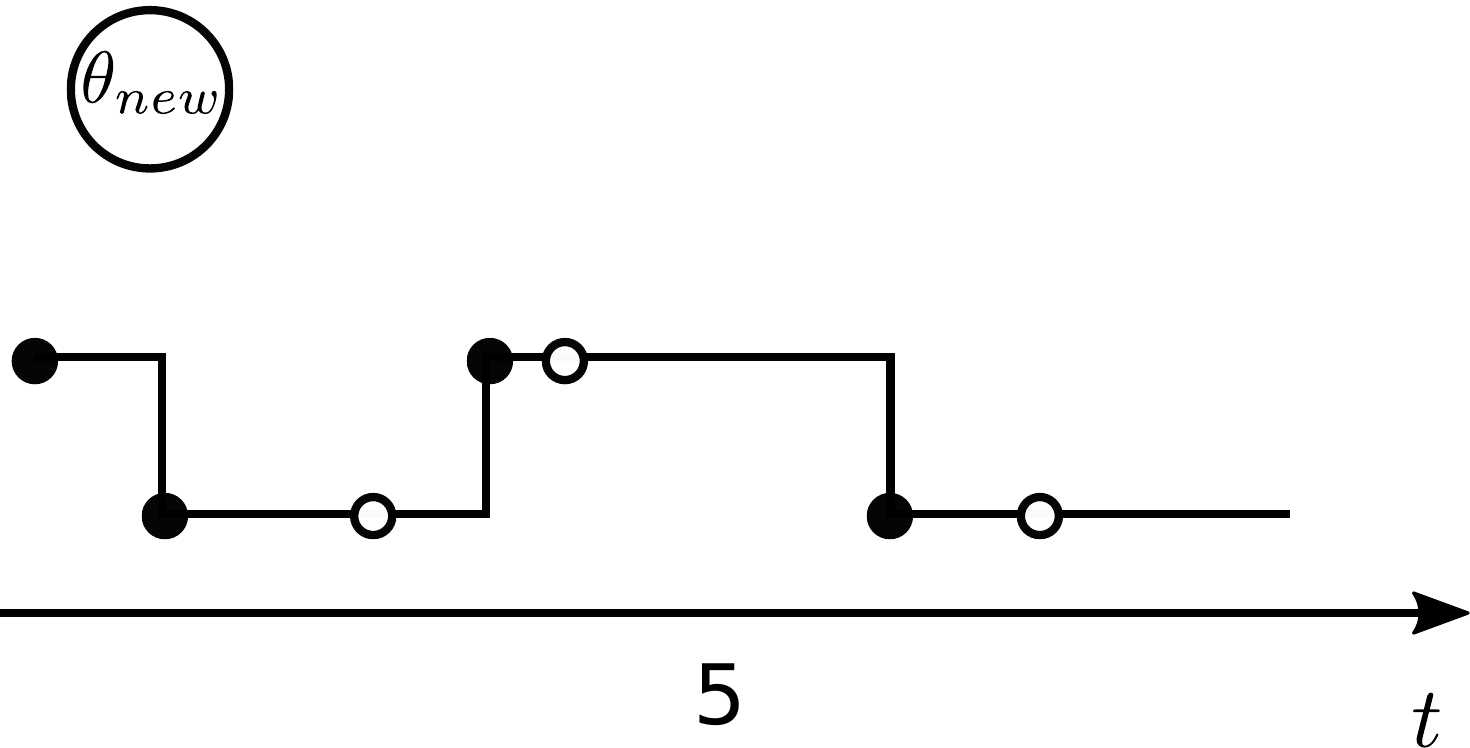}
    \vspace{-0 in}
  \end{minipage}
  \begin{minipage}[!hp]{0.45\linewidth}
  \centering
    \includegraphics [width=0.70\textwidth, angle=0]{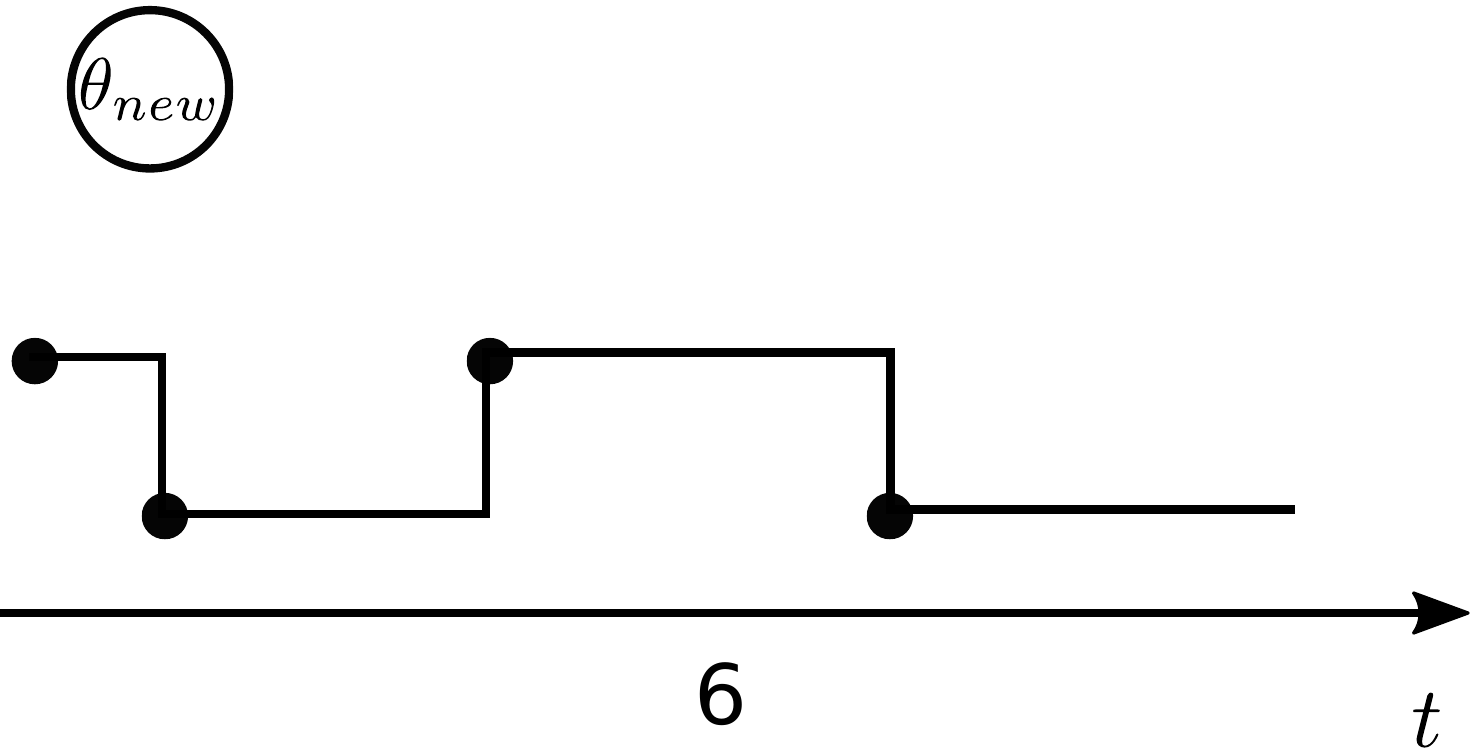}
    \vspace{-0 in}
  \end{minipage}
  \caption{\Naive\ MH-algorithm: Step 1 to 2: sample thinned events
  and discard state information to get a random grid. Step 3: 
propose a new parameter $\theta'$, and accept or reject by making
a forward pass on the grid. Steps 4 to 5: make a backward pass using
the accepted parameter and discard self-transitions to produce a new
trajectory.}
   \label{fig:naive_mh}
  \end{figure}

\subsection{Additional results}
In the following, we evaluate Python implementations of a number of algorithms, focusing our contribution, the symmetrized MH algorithm (algorithm~\ref{alg:MH_improved}), and as well as the \naive\ MH algorithm (algorithm~\ref{alg:MH_naive}).
We evaluate different variants of these algorithms, corresponding to different uniformizing Poisson rates. 
For \naive\ MH, we set $\Omega(\theta) = \kappa \max_s A_s(\theta) $ with $\kappa$  equal to $1.5, 2$ and $3$ (here $\kappa$ must be greater than $1$), 
while for symmetrized MH, where the uniformizing rate depends on both the current and proposed parameters, we consider the settings:
 $\Omega(\theta, \vartheta) = \kappa (\max A(\theta) + \max A(\vartheta))$ 
 ($\kappa = 1$ and $1.5$), and 
$\Omega(\theta, \vartheta) = 1.5 \max(\max A(\theta), \max A(\vartheta))$.
We evaluate two other baselines: Gibbs sampling (Algorithm~\ref{alg:MJP_gibbs}), 
and particle MCMC~\citep[][see also section~\ref{sec:pmcmc} in the appendix]{Andrieu10}. 
Gibbs sampling involves a uniformization step to update the MJP trajectory (step 1 in algorithm~\ref{alg:MJP_gibbs}), for which we use $\Omega(\theta,\vartheta) = \kappa \max_s A_s(\theta)$ for $\kappa=1.5,2,3$. 
Unless specified, our results were obtained from $100$ independent MCMC runs, each of $10000$ iterations.
We found particle MCMC to be more computationally intensive, and limited each run to $3000$ iterations, the number of particles being $5, 10$ and $20$.
For each run of each MCMC algorithm, we calculated the effective sample size (ESS) of the posterior samples of the MJP parameters using the R package \texttt{rcoda}~\citep{Rcoda2006}. 
This estimates the number of independent samples returned by the MCMC algorithm, and dividing this by the runtime of a simulation gives the ESS per unit time (ESS/sec). 
We used this to compare different samplers and different parameter settings. In the following we present the additional results.

  \begin{figure}[H]
  \begin{minipage}[h!]{0.99\linewidth}
    \includegraphics [width=0.24\textwidth, angle=0]{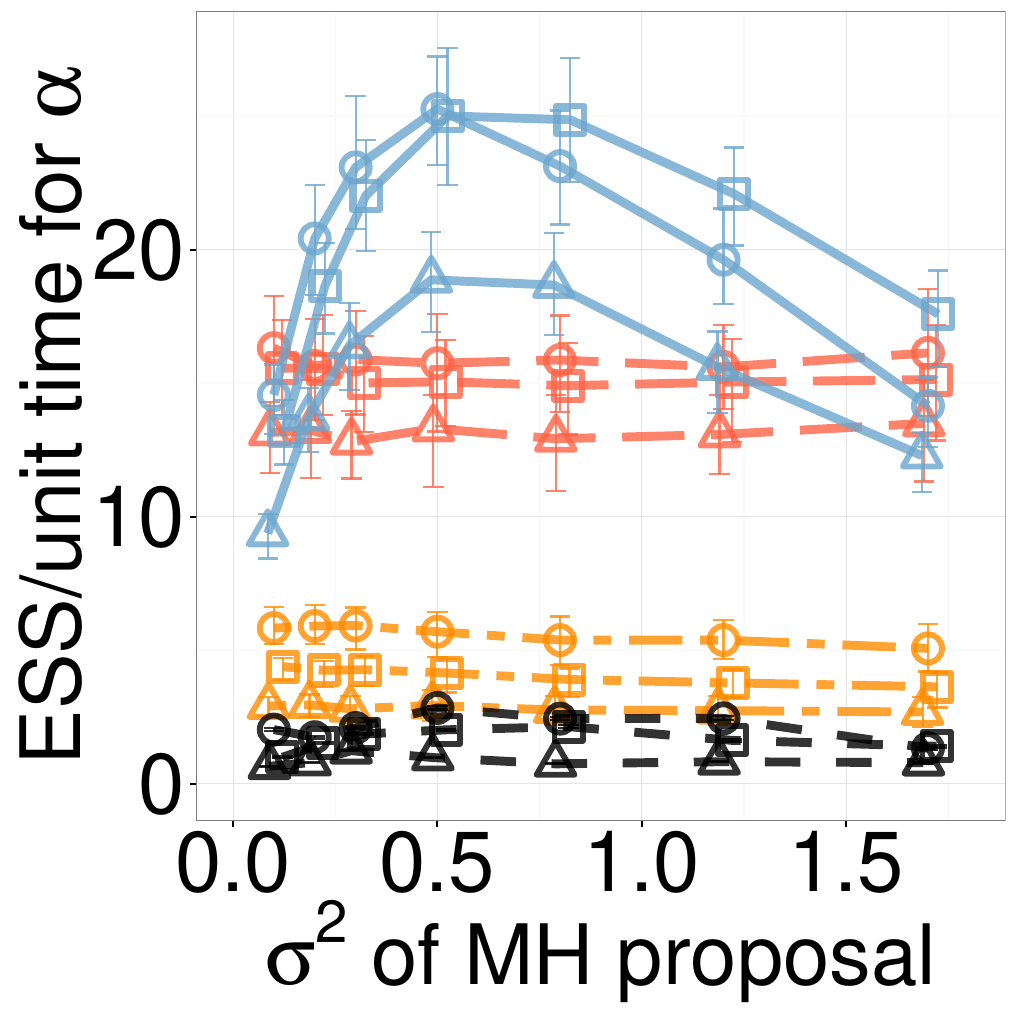}
	\hspace{.6in}
    \includegraphics [width=0.24\textwidth, angle=0]{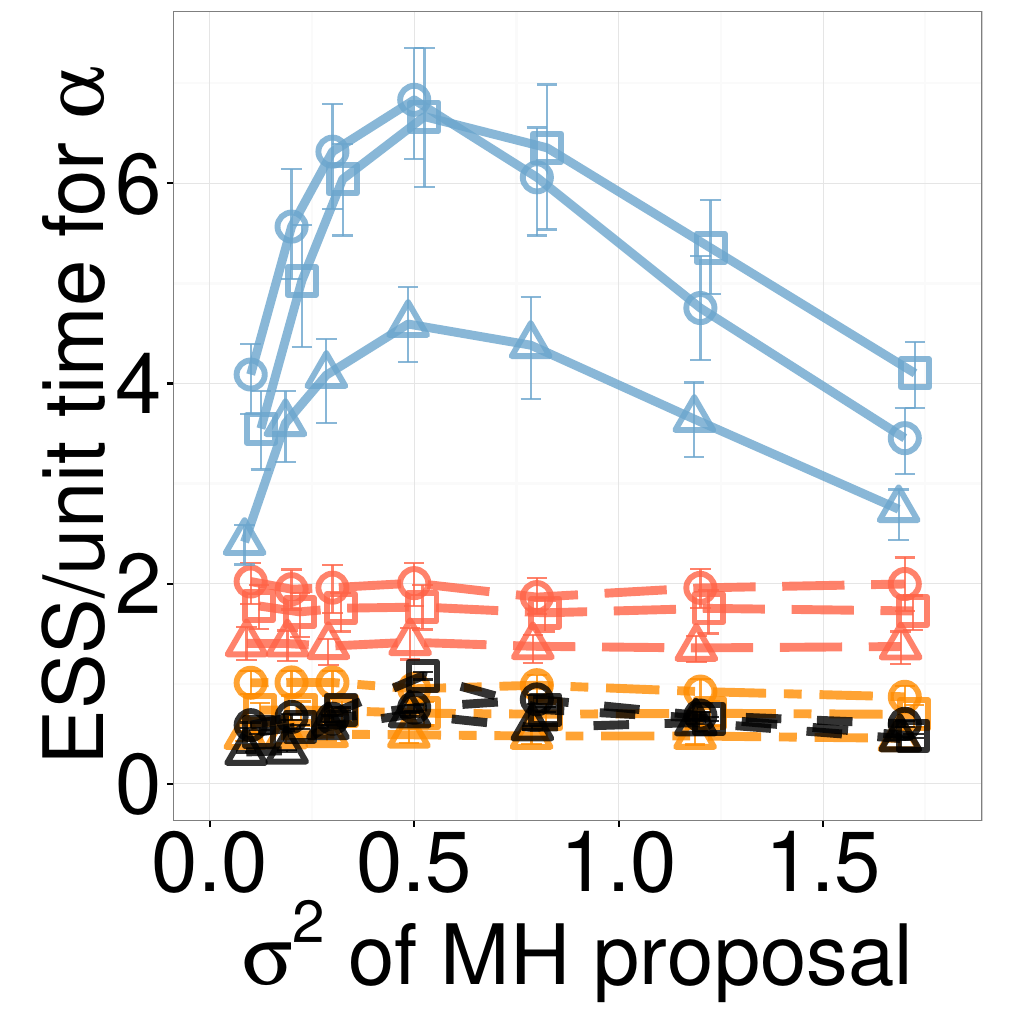}
	\hspace{.6in}
    \includegraphics [width=0.24\textwidth, angle=0]{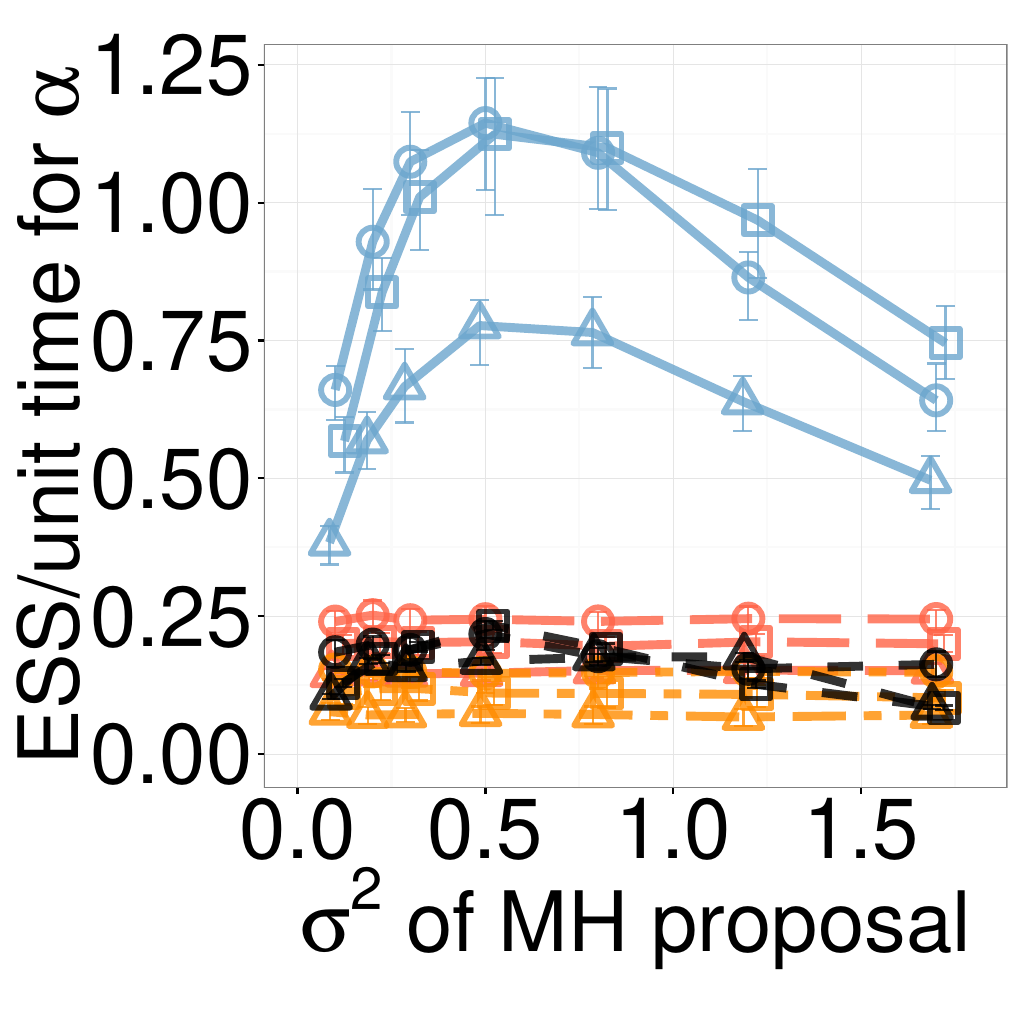}
  \end{minipage}

  \begin{minipage}[h!]{0.99\linewidth}
    \includegraphics [width=0.24\textwidth, angle=0]{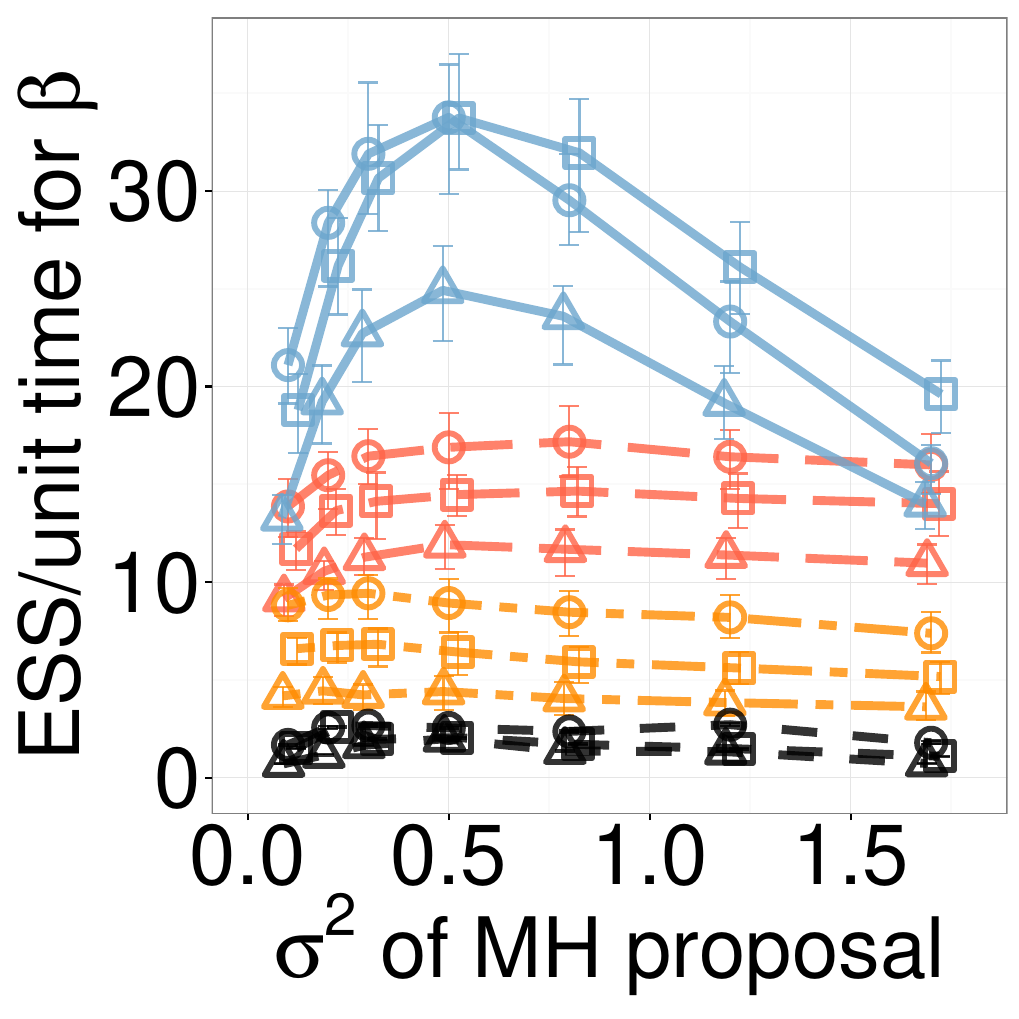}
	\hspace{.6in}
    \includegraphics [width=0.24\textwidth, angle=0]{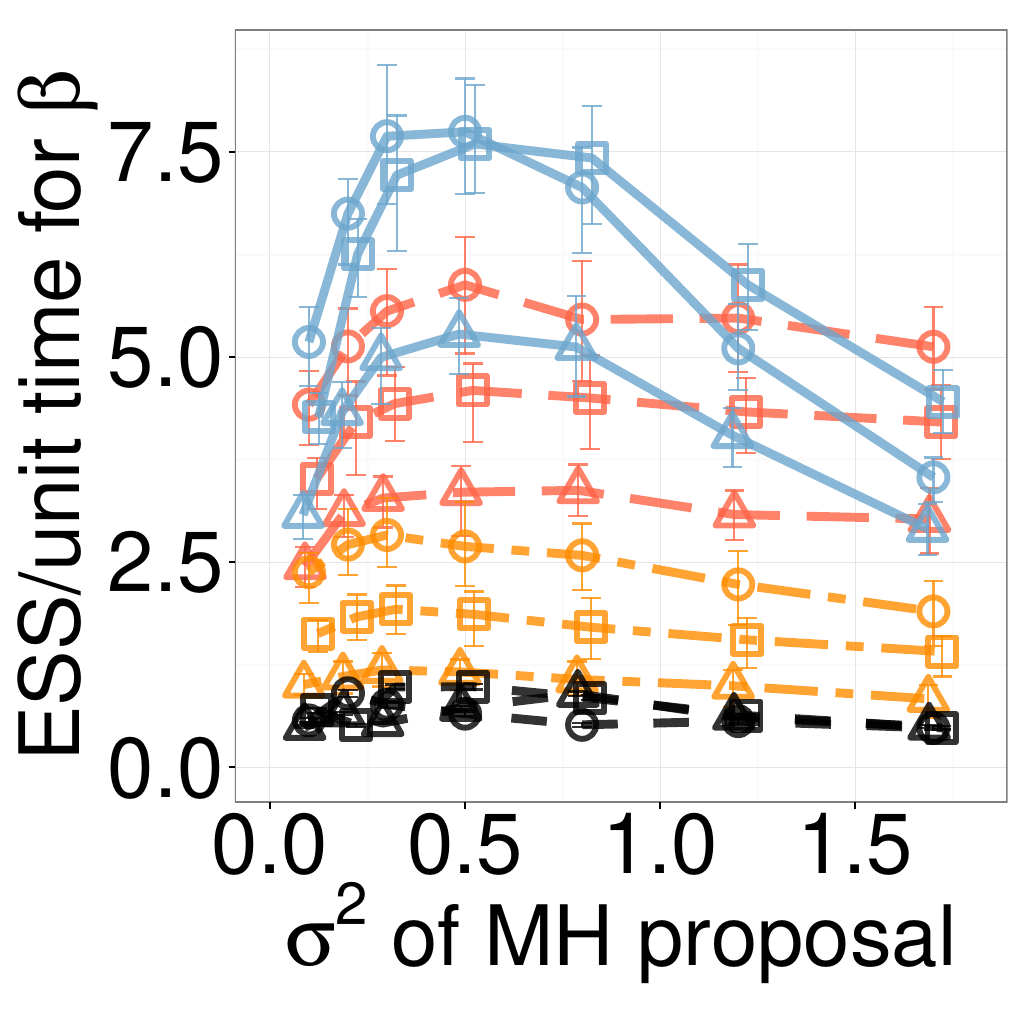}
	\hspace{.6in}
    \includegraphics [width=0.24\textwidth, angle=0]{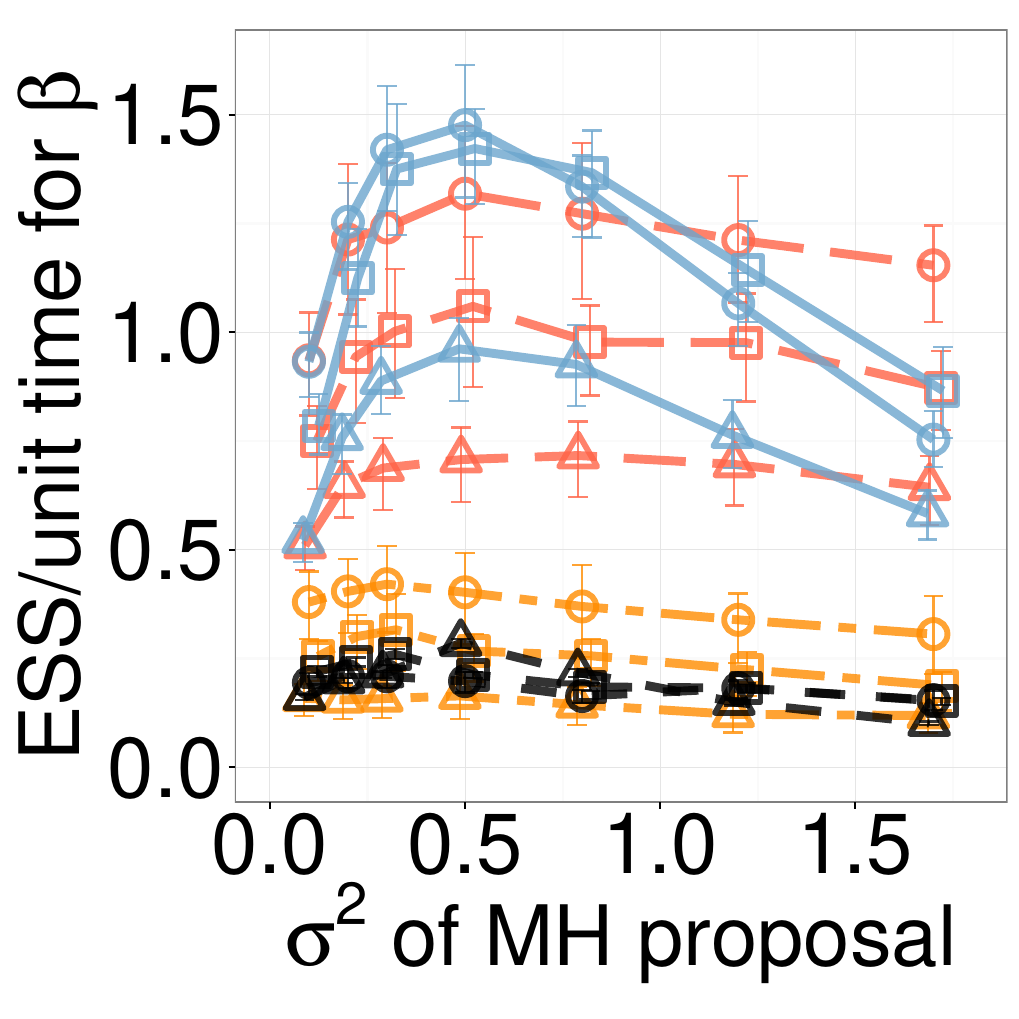}
  \end{minipage}
    \caption{ESS/sec for the synthetic model, the top three are for $\alpha$ for 3 states, 5 states, and 10 states. The bottom three are for $\beta$ for 3 states, 5 states, and 10 states. Blue, yellow, red and black are the symmetrized MH, \naive\ MH, Gibbs and particle MCMC algorithm.  Squares, circles and trianges correspond to $\Omega(\theta,\vartheta)$ set to $(\max_s A_s(\theta) + \max_s A_s(\vartheta))$, $\max(\max_s A_s(\theta), \max_s A_s(\vartheta))$ and  $1.5(\max_s A_s(\theta) + \max_s A_s(\vartheta))$. And for PMCMC, they correspond to 10 particles, 5 particles and 15 particles.}
     \label{fig:ESS_EXP}
  \end{figure}

  \begin{figure}[H]
  \centering
  \begin{minipage}[!hp]{0.99\linewidth}
    \includegraphics [width=0.24\textwidth, angle=0]{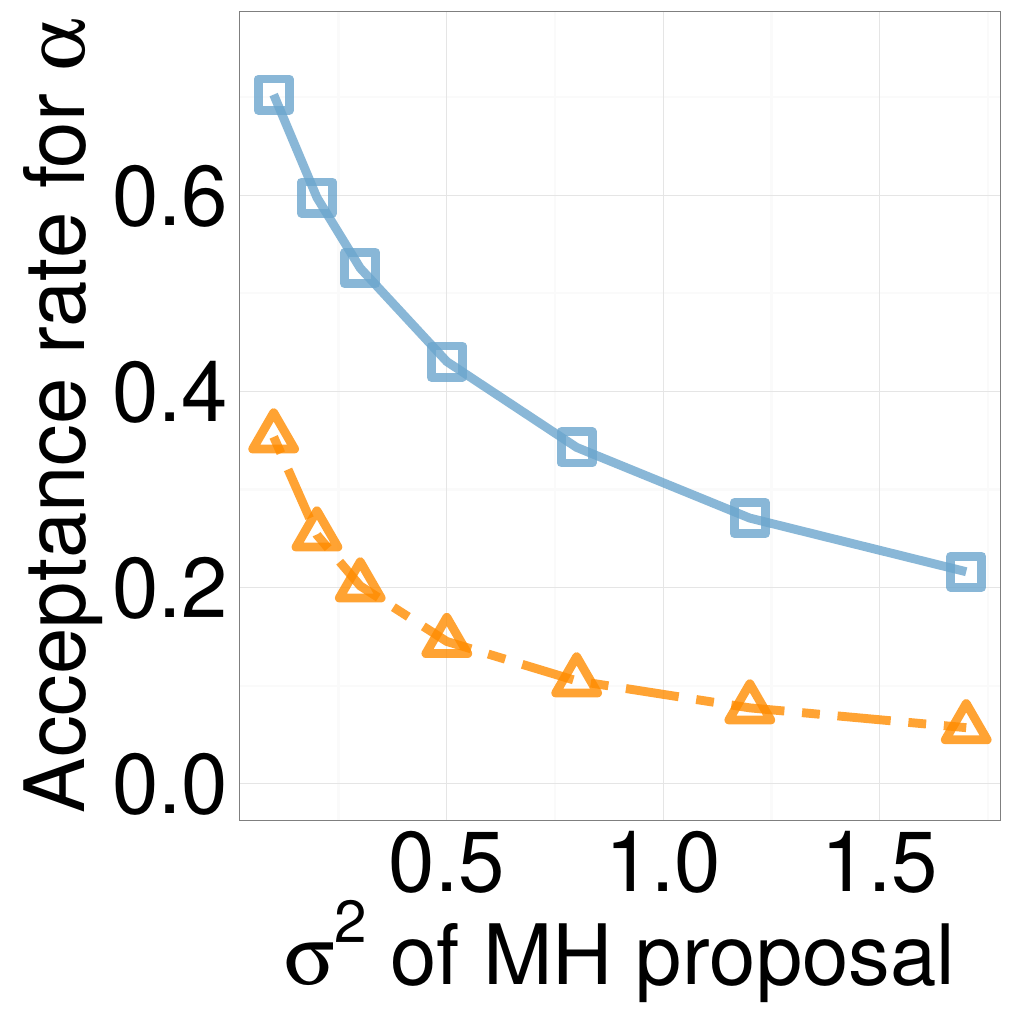}
    \includegraphics [width=0.24\textwidth, angle=0]{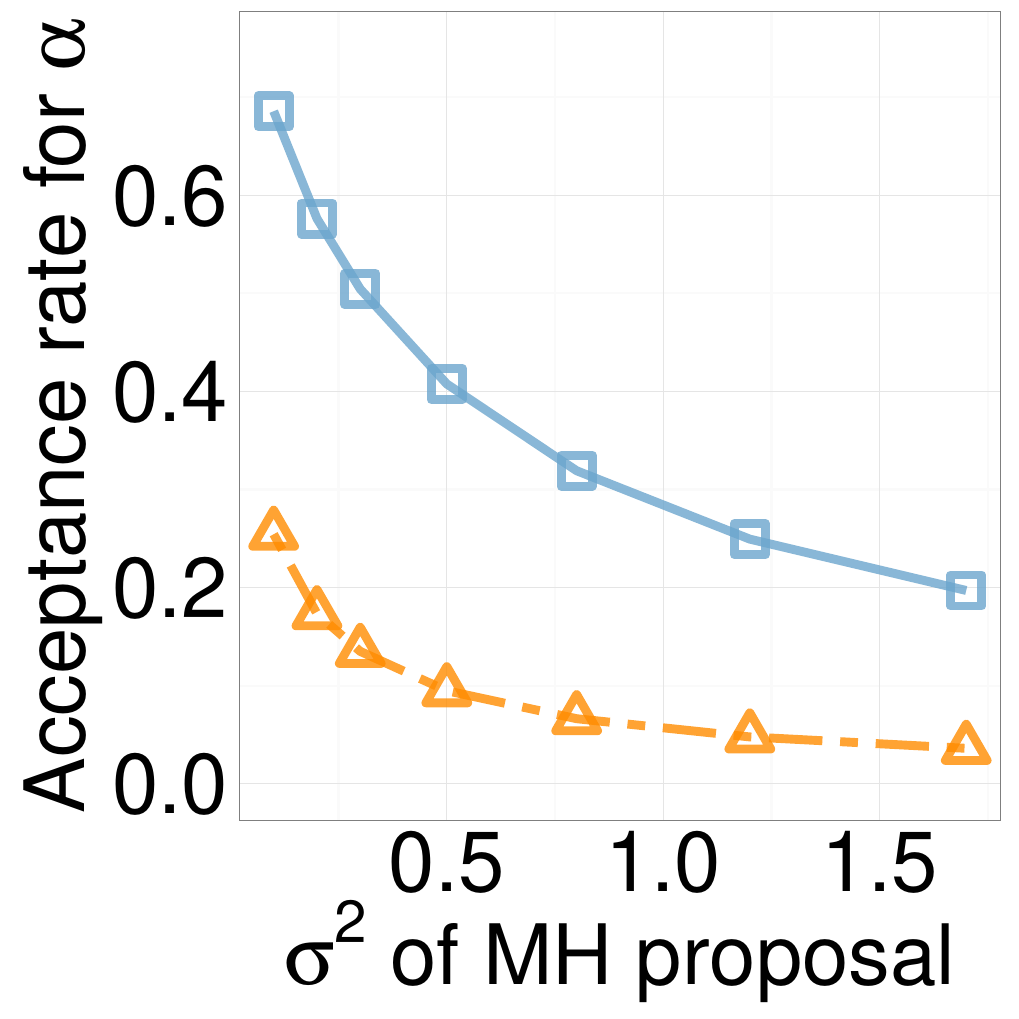}
    \includegraphics [width=0.24\textwidth, angle=0]{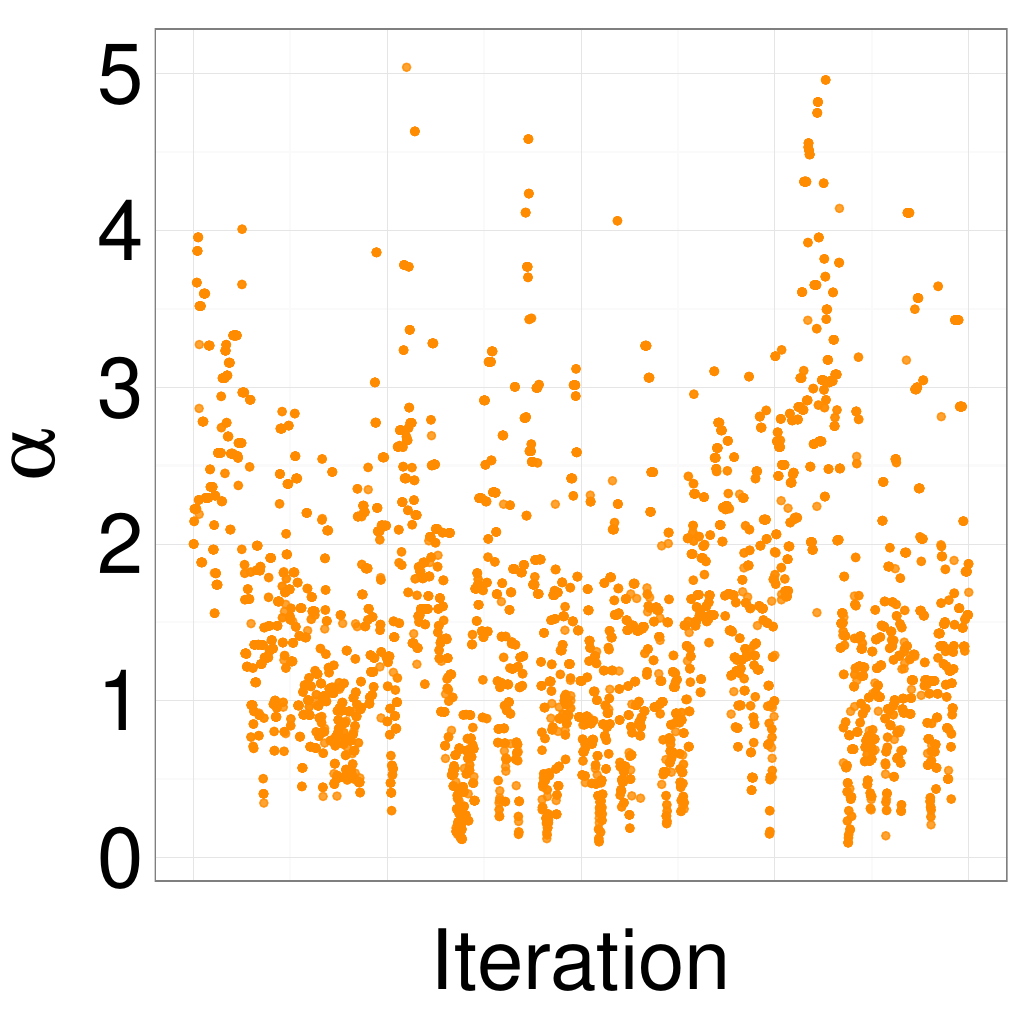}
    \includegraphics [width=0.24\textwidth, angle=0]{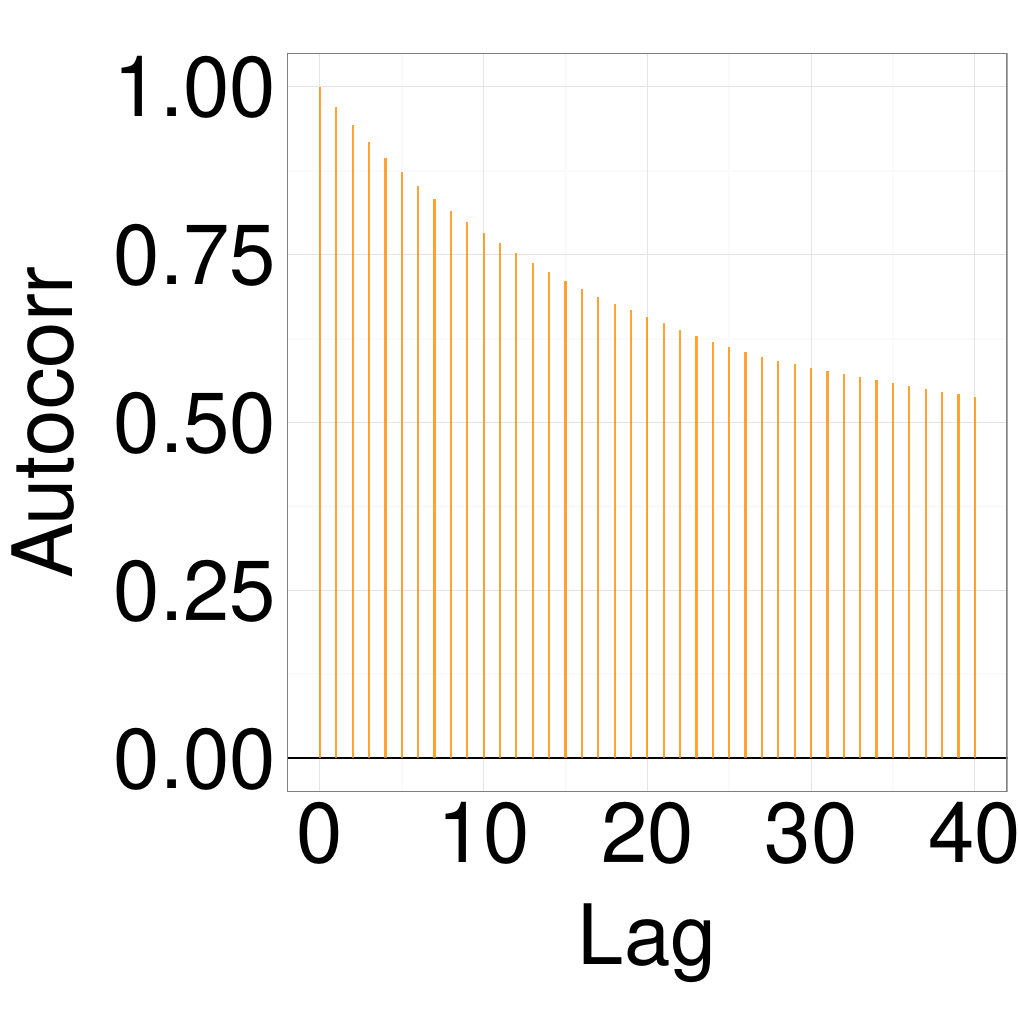}
  \end{minipage}
    \caption{Acceptance Rate for $\alpha$ in the synthetic model (Left two), the first being dimension 3, and the second,dimension 5. Blue square and yellow triangle curves are the symmetrized MH, and \naive\ MH  algorithm. The multiplicative factor is $2$. Trace and autocorrelation plots for \naive\ MH  (right two panels) for the synthetic model with $3$ states.}
     \label{fig:ACC_EXP}
  \end{figure}


  
  \begin{figure}[H]
  \centering
  \begin{minipage}[!hp]{0.99\linewidth}
    \includegraphics [width=0.24\textwidth, angle=0]{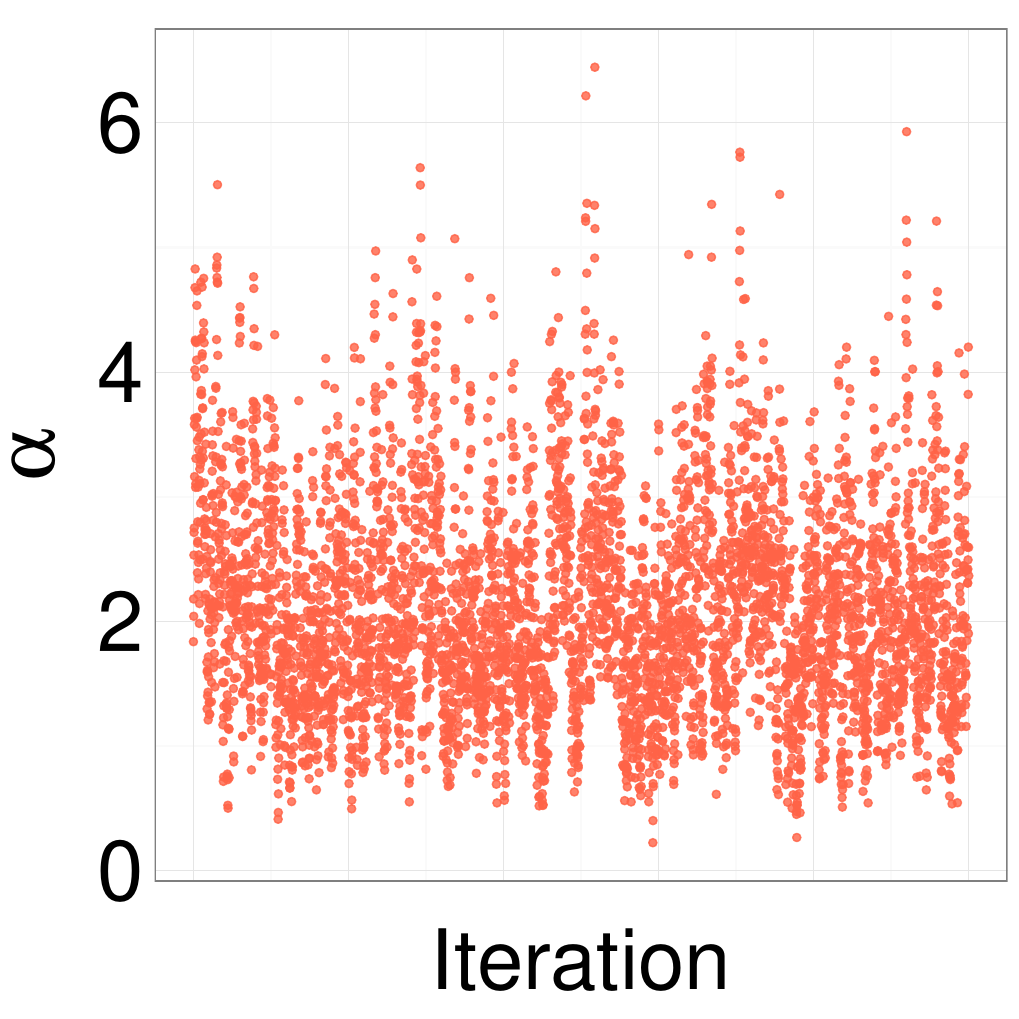}
    \includegraphics [width=0.24\textwidth, angle=0]{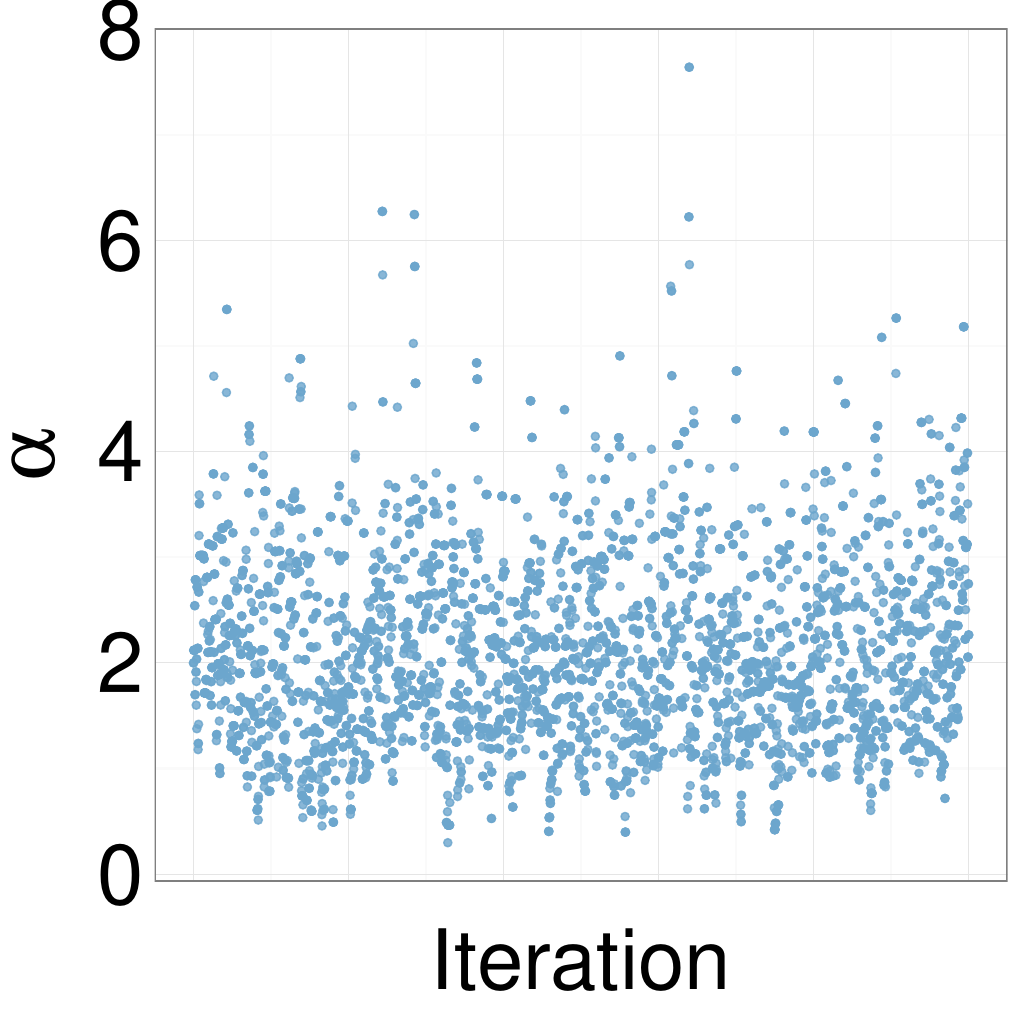}
    \includegraphics [width=0.24\textwidth, angle=0]{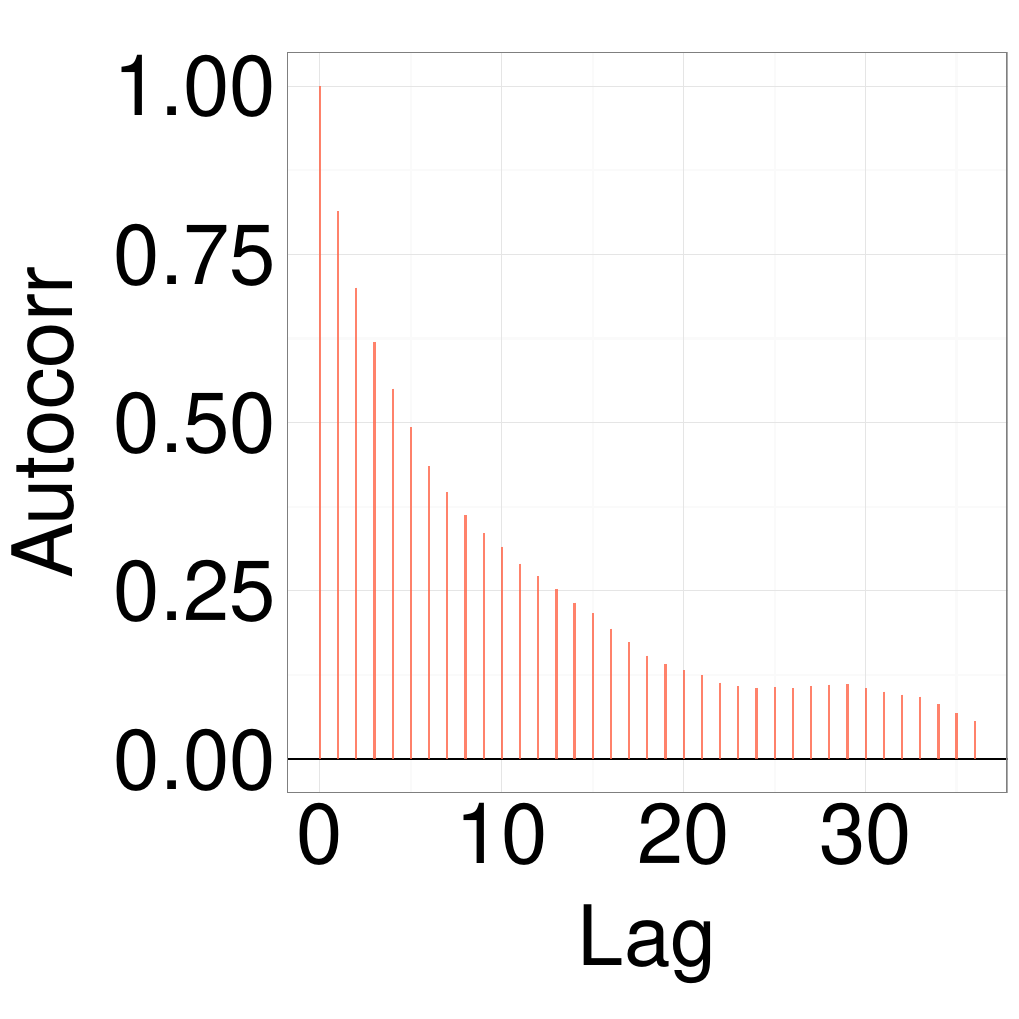}
    \includegraphics [width=0.24\textwidth, angle=0]{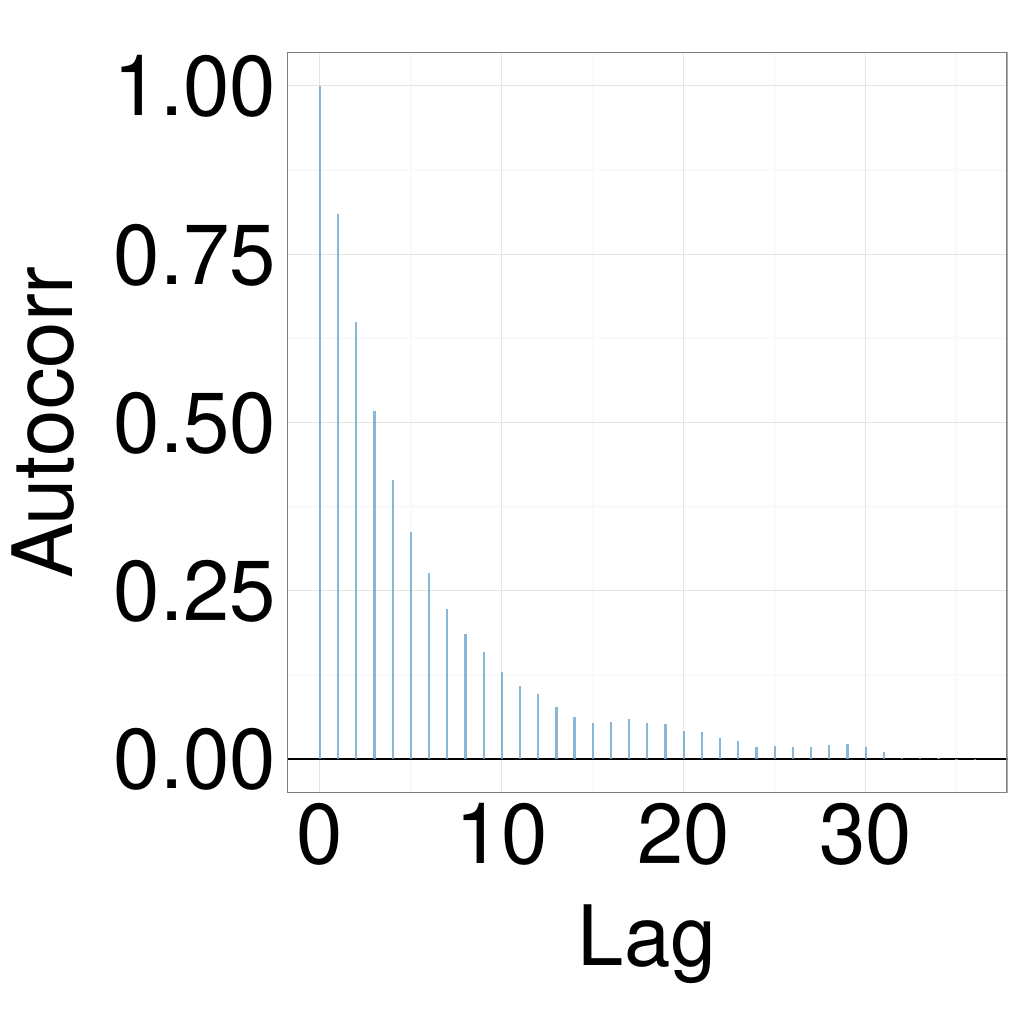}
  \end{minipage}
  
\caption{Trace and autocorrelation plots for Gibbs (left two panels) and symmetrized MH (right two panels). All plots are for the immigration model with $3$ states.}
     \label{fig:TRACE_Q}
  \end{figure}
  
  \begin{figure}[H]
  \centering
  \begin{minipage}[!hp]{0.99\linewidth}
    \includegraphics [width=0.24\textwidth, angle=0]{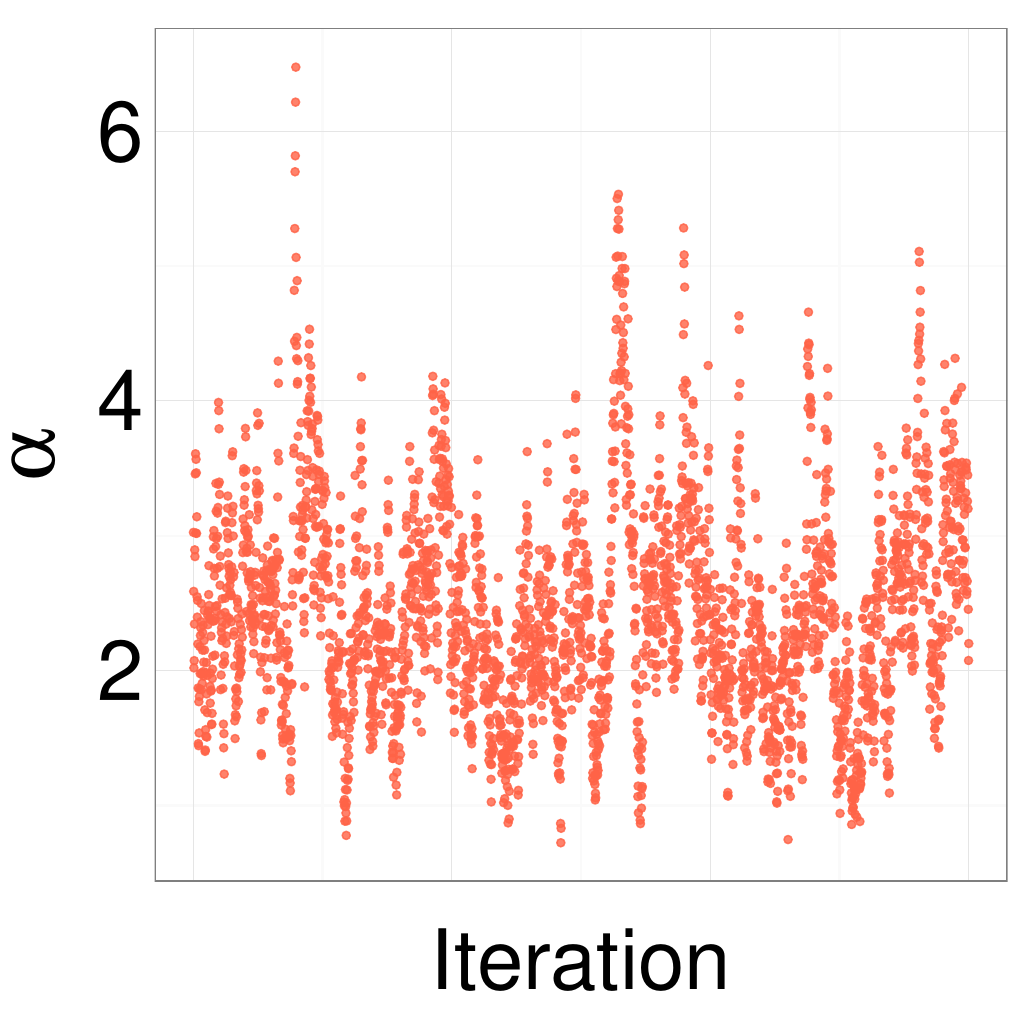}
    \includegraphics [width=0.24\textwidth, angle=0]{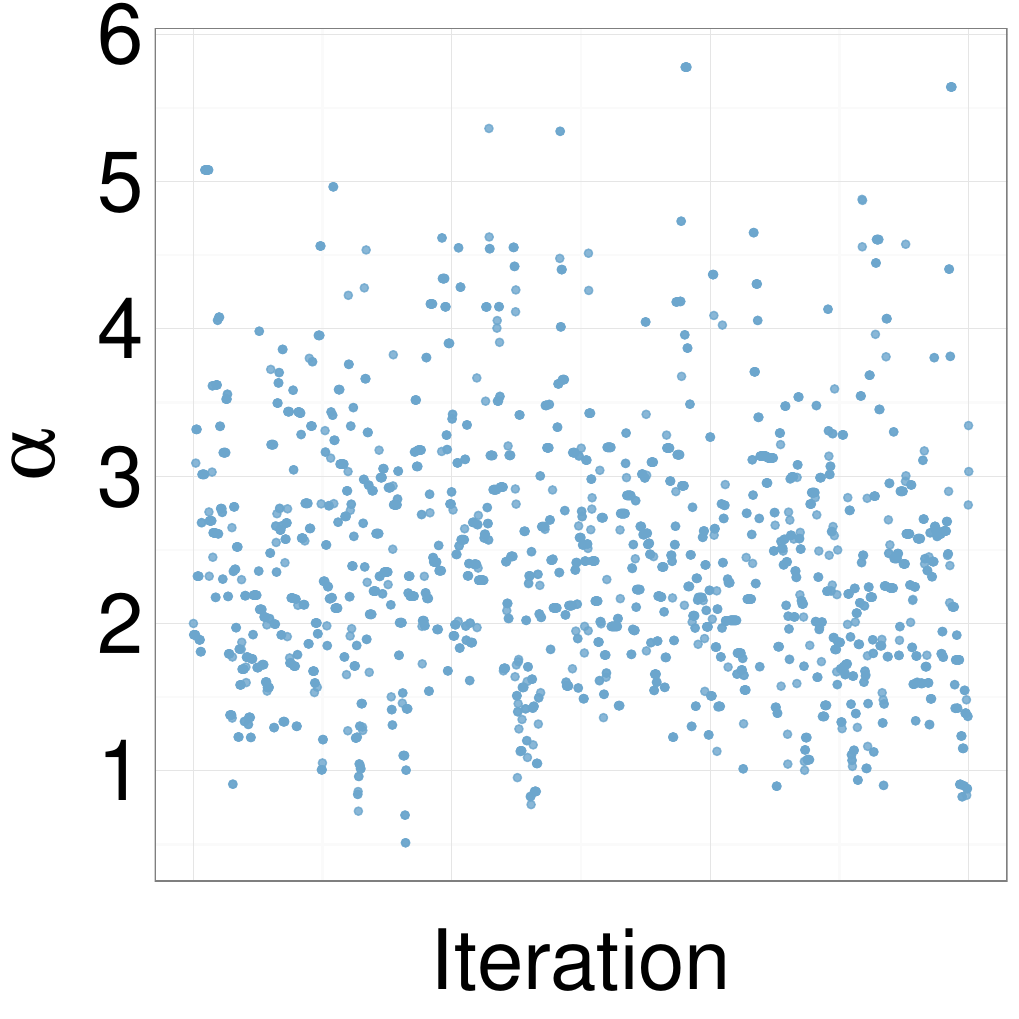}
    \includegraphics [width=0.24\textwidth, angle=0]{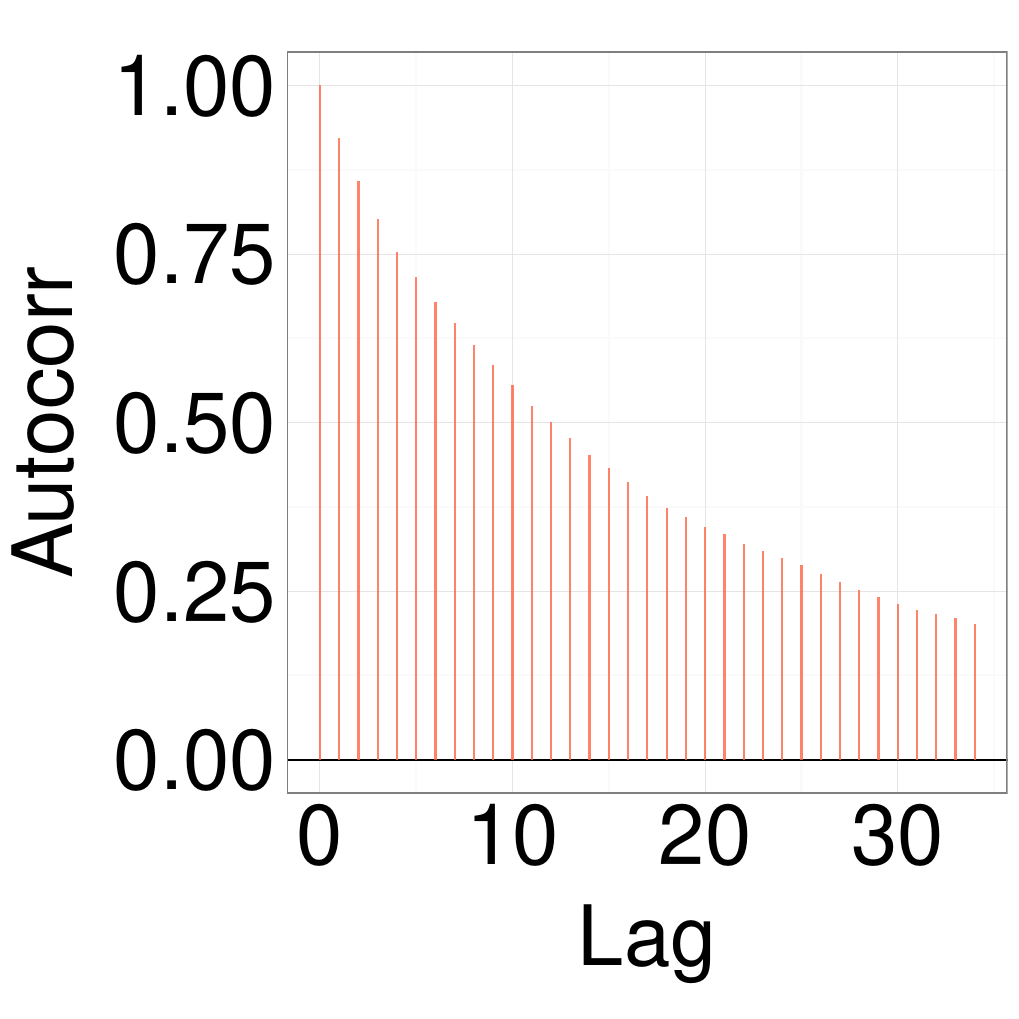}
    \includegraphics [width=0.24\textwidth, angle=0]{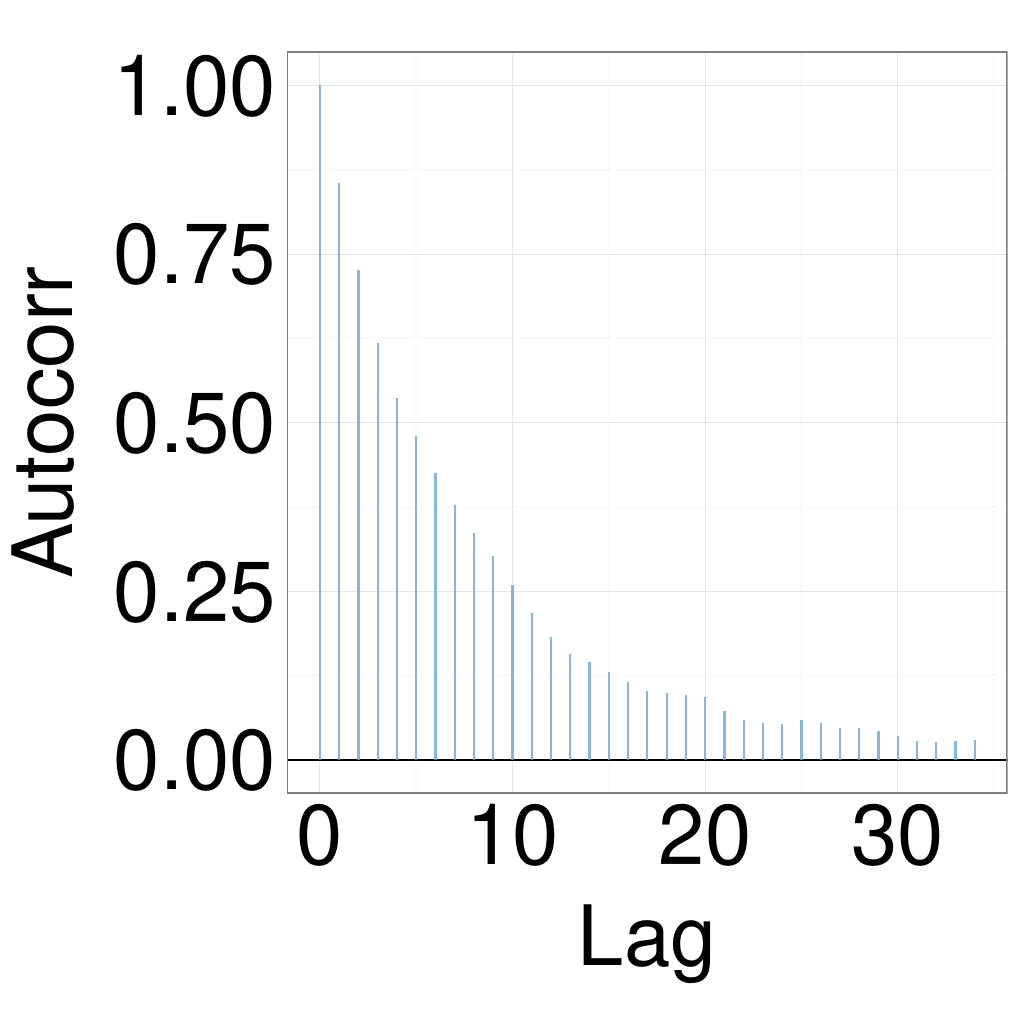}
  \end{minipage}

\caption{Trace and autocorrelation plots for Gibbs (left two panels) and symmetrized MH (right two panels). All plots are for the time-inhomogeneous immigration model with $10$ states.}
 
     \label{fig:TRACE_CQ}
  \end{figure}

  \begin{figure}[H]
  \centering
  \begin{minipage}[h!]{0.99\linewidth}
    \includegraphics [width=0.24\textwidth, angle=0]{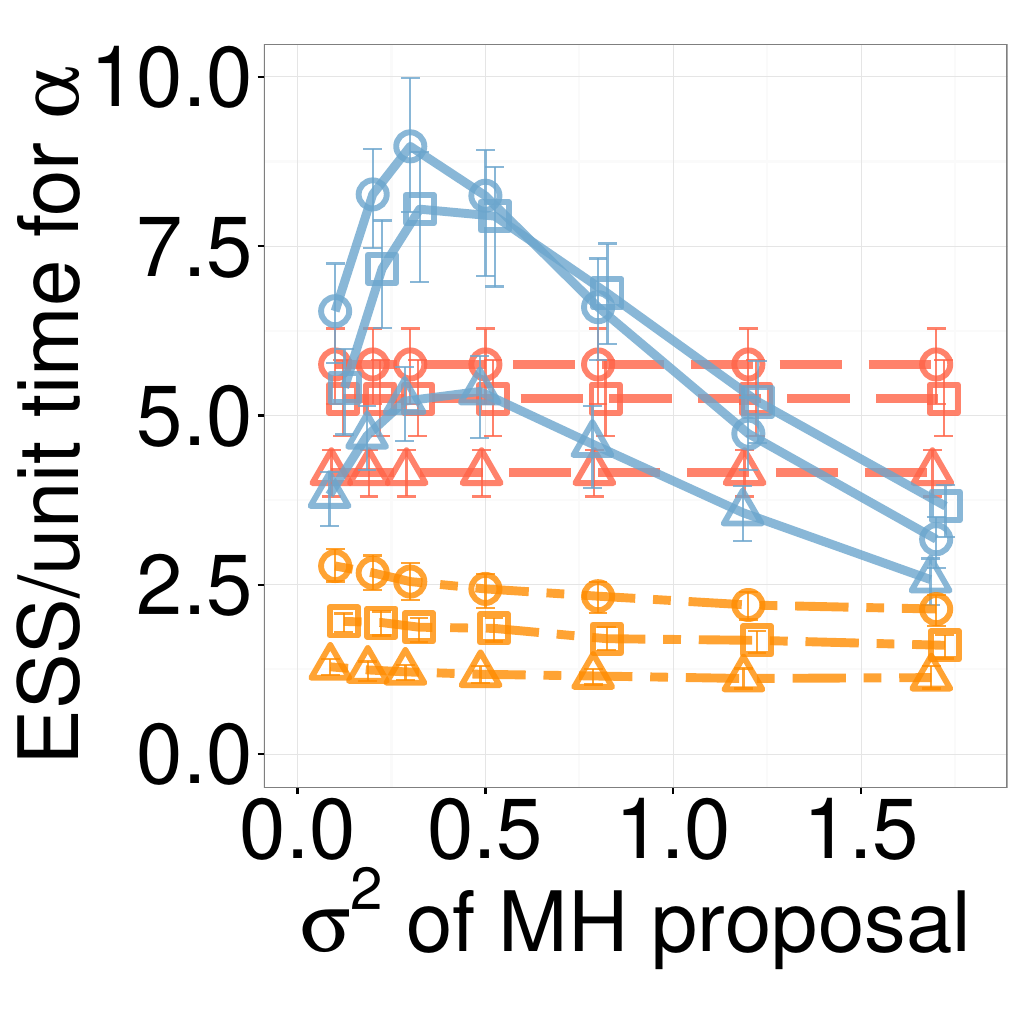}
	\hspace{.6in}
    \includegraphics [width=0.24\textwidth, angle=0]{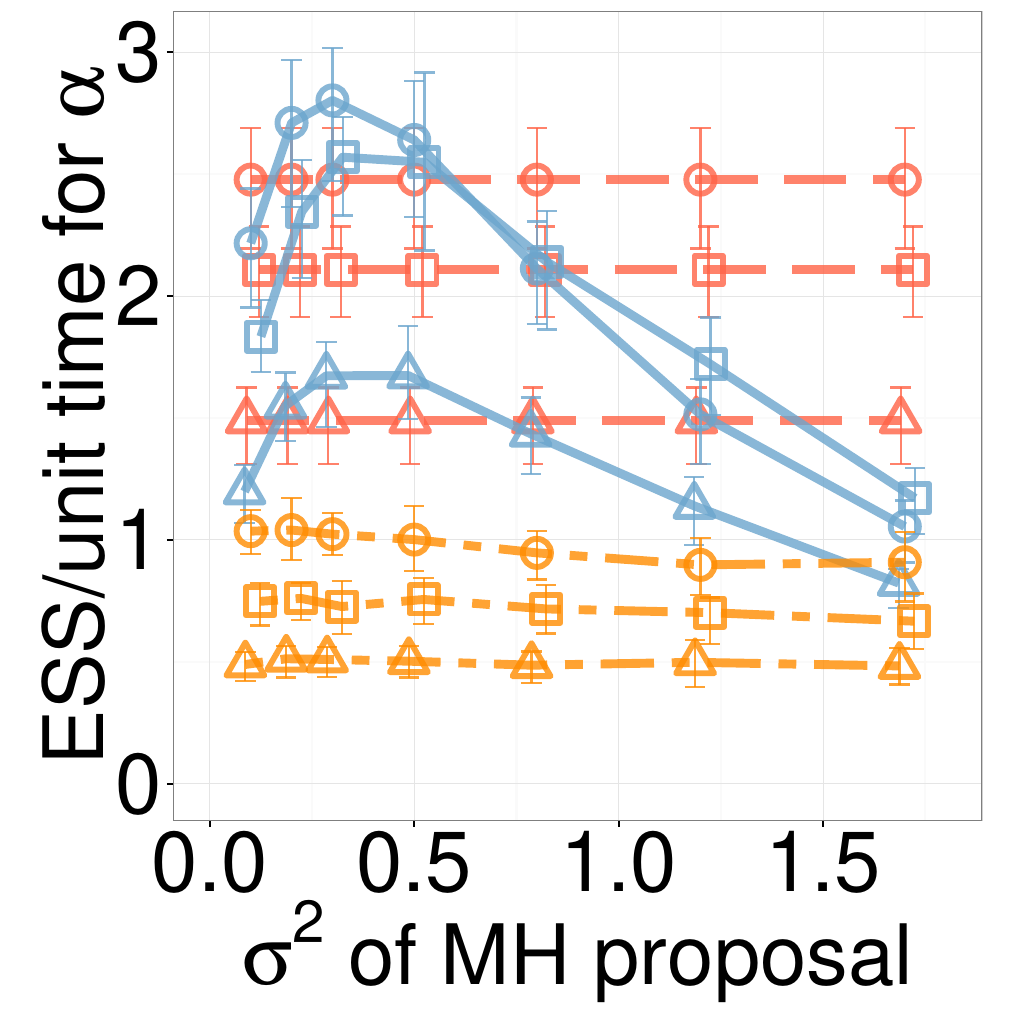}
	\hspace{.6in}
    \includegraphics [width=0.24\textwidth, angle=0]{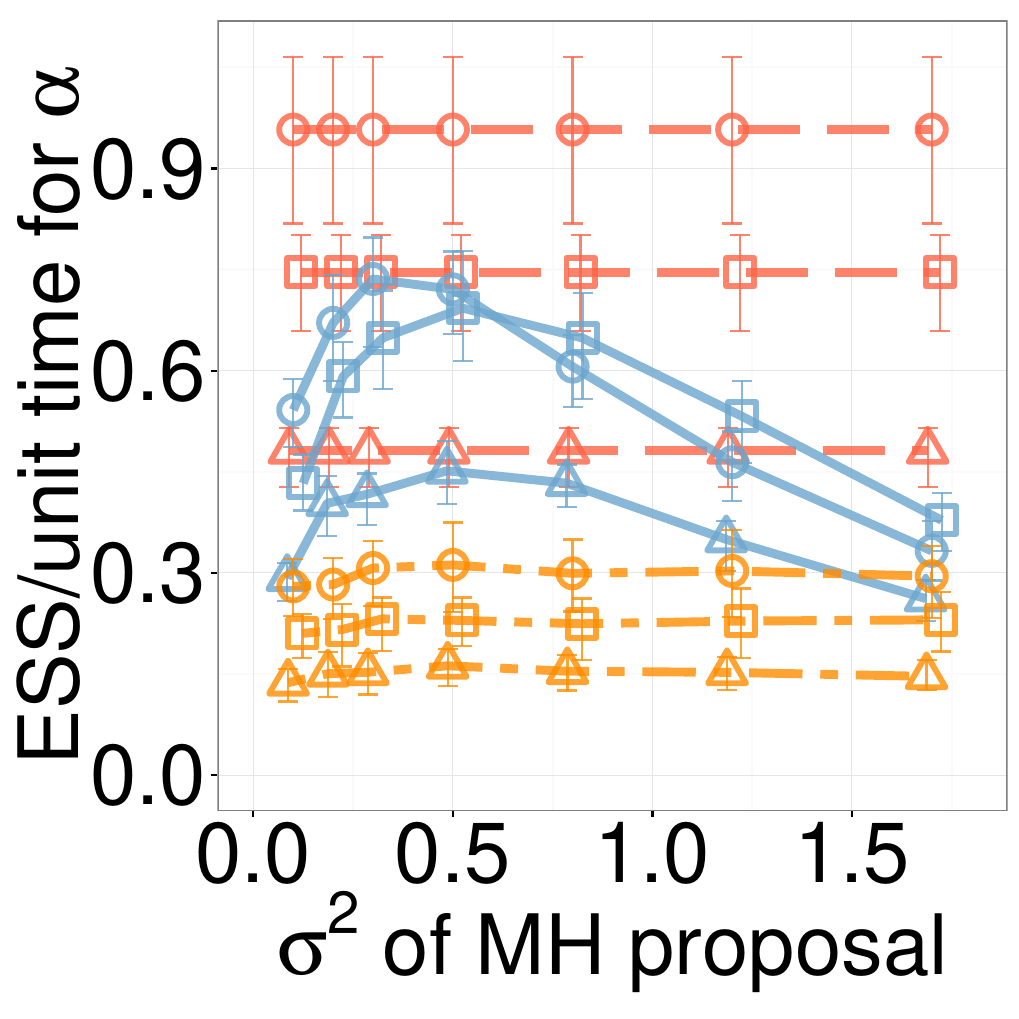}
  \end{minipage}

  \begin{minipage}[h!]{0.99\linewidth}
    \includegraphics [width=0.24\textwidth, angle=0]{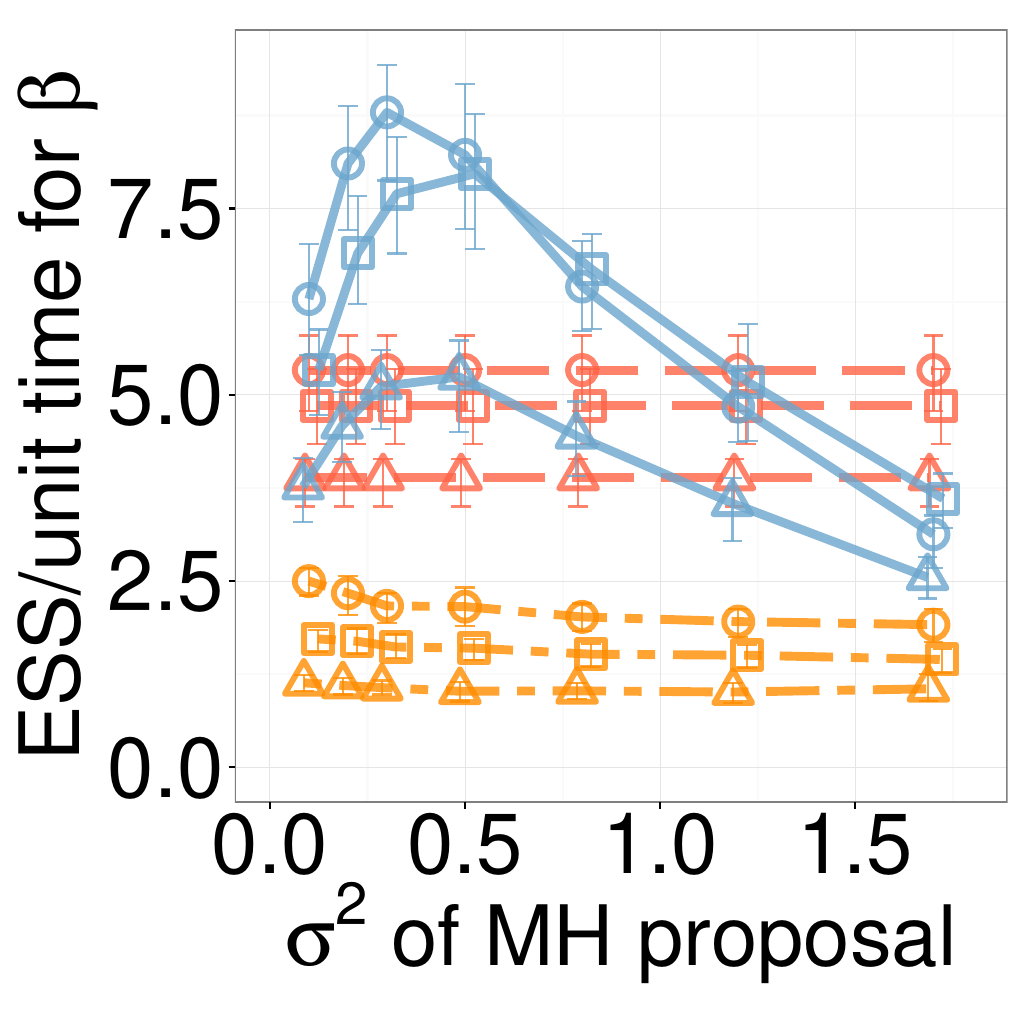}
	\hspace{.6in}
    \includegraphics [width=0.24\textwidth, angle=0]{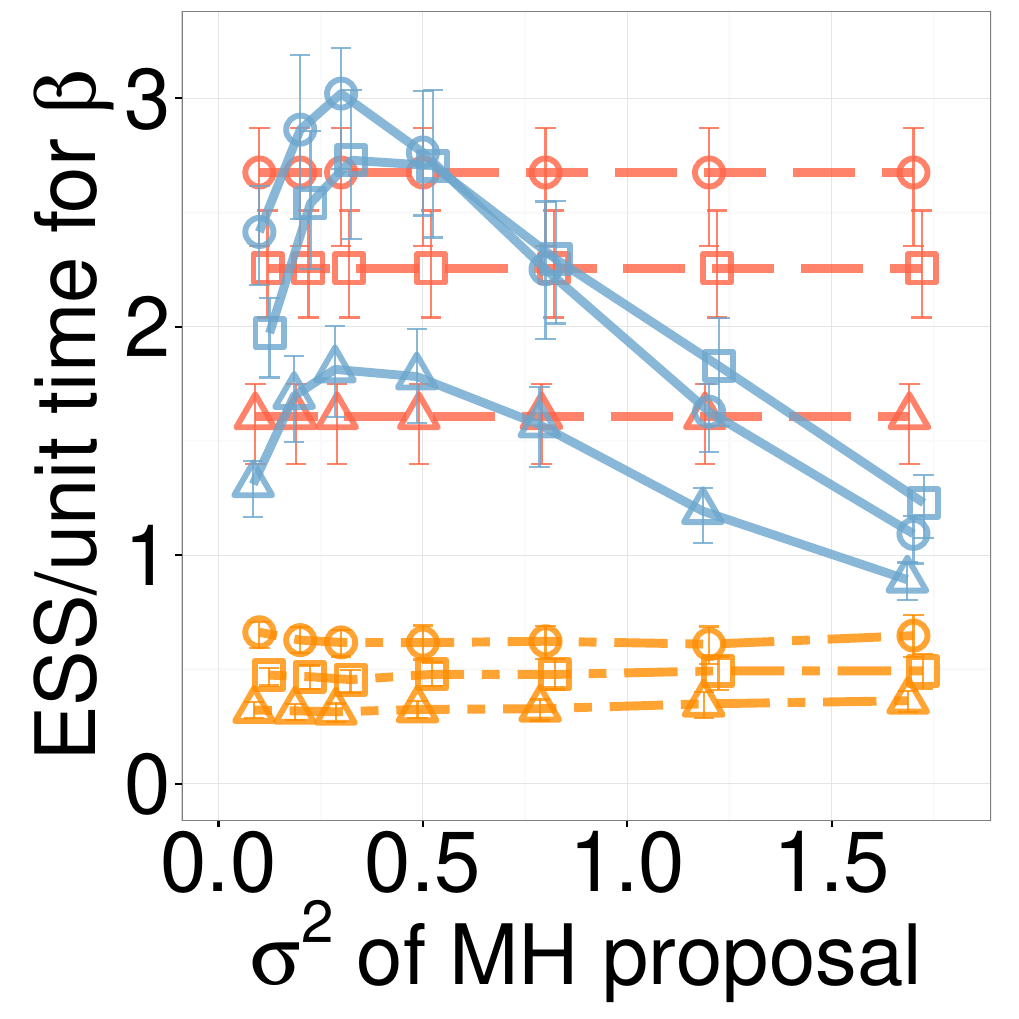}
	\hspace{.6in}
    \includegraphics [width=0.24\textwidth, angle=0]{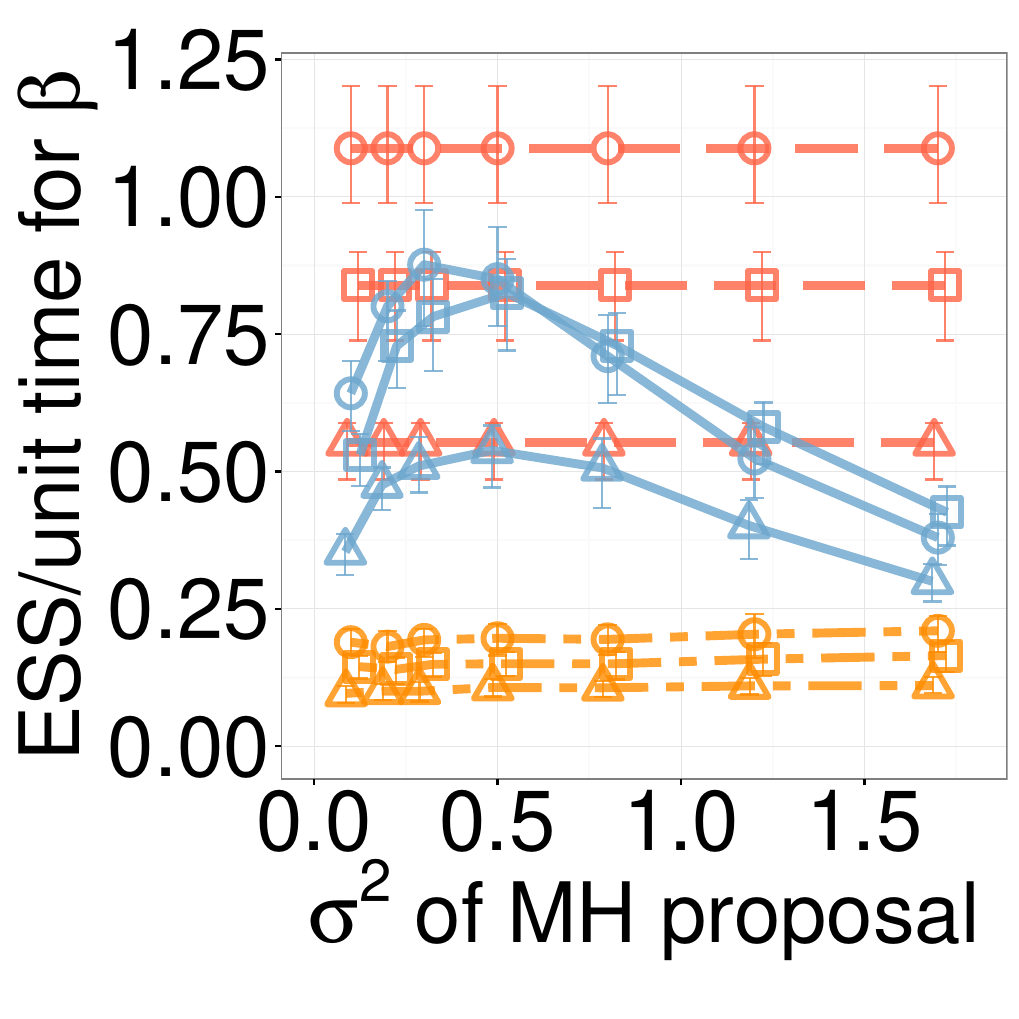}
  \end{minipage}
    \caption{ESS/sec for the immigration model, the top three are for $\alpha$ for 3 states, 5 states, and 10 states. The bottom three are for $\beta$ for 3 states, 5 states, and 10 states. Blue, yellow, and red are the symmetrized MH, \naive\ MH, Gibbs algorithm. Squares, circles and trianges correspond to $\Omega(\theta,\vartheta)$ set to $(\max_s A_s(\theta) + \max_s A_s(\vartheta))$, $\max(\max_s A_s(\theta), \max_s A_s(\vartheta))$ and  $1.5(\max_s A_s(\theta) + \max_s A_s(\vartheta))$.}
     \label{fig:ESS_Q}
  \end{figure}

  \begin{figure}[H]
  \centering
  \begin{minipage}[h!]{0.99\linewidth}
    \includegraphics [width=0.24\textwidth, angle=0]{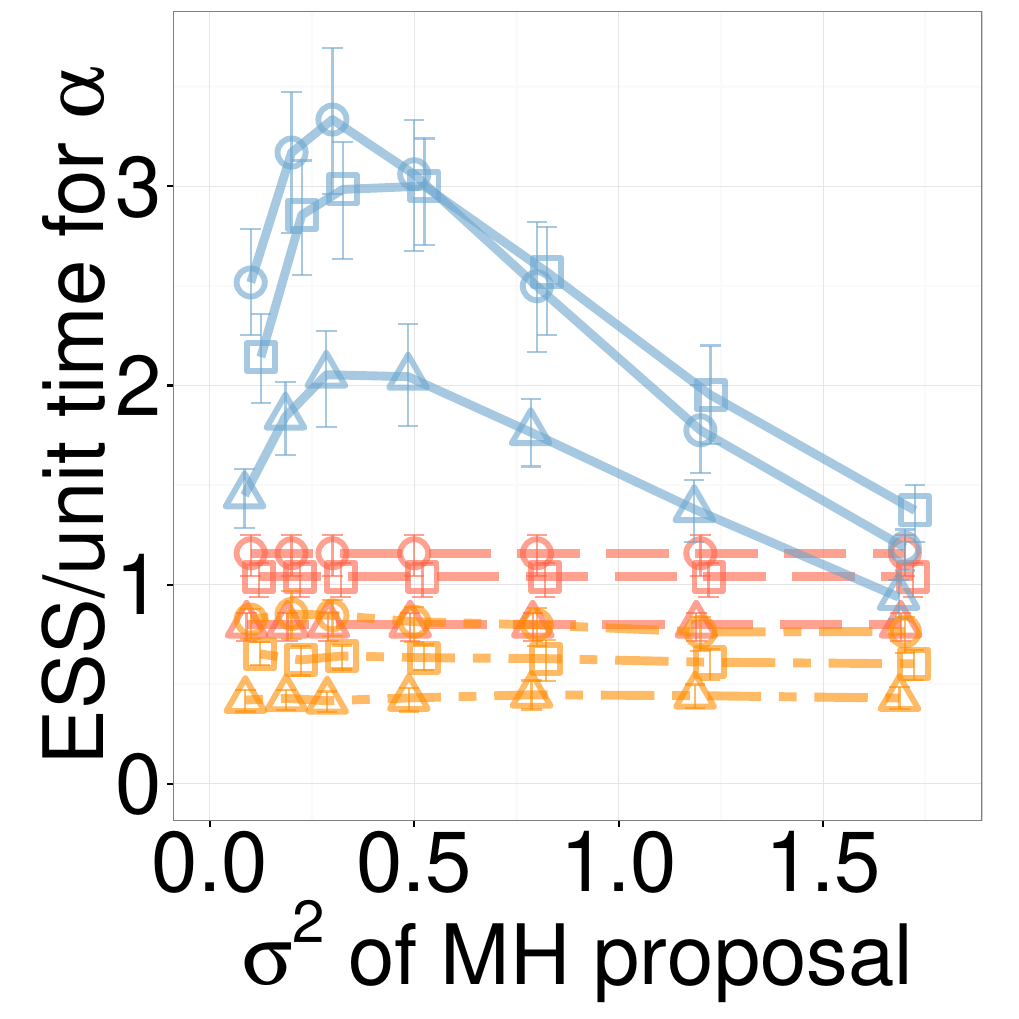}
	\hspace{.6in}
    \includegraphics [width=0.24\textwidth, angle=0]{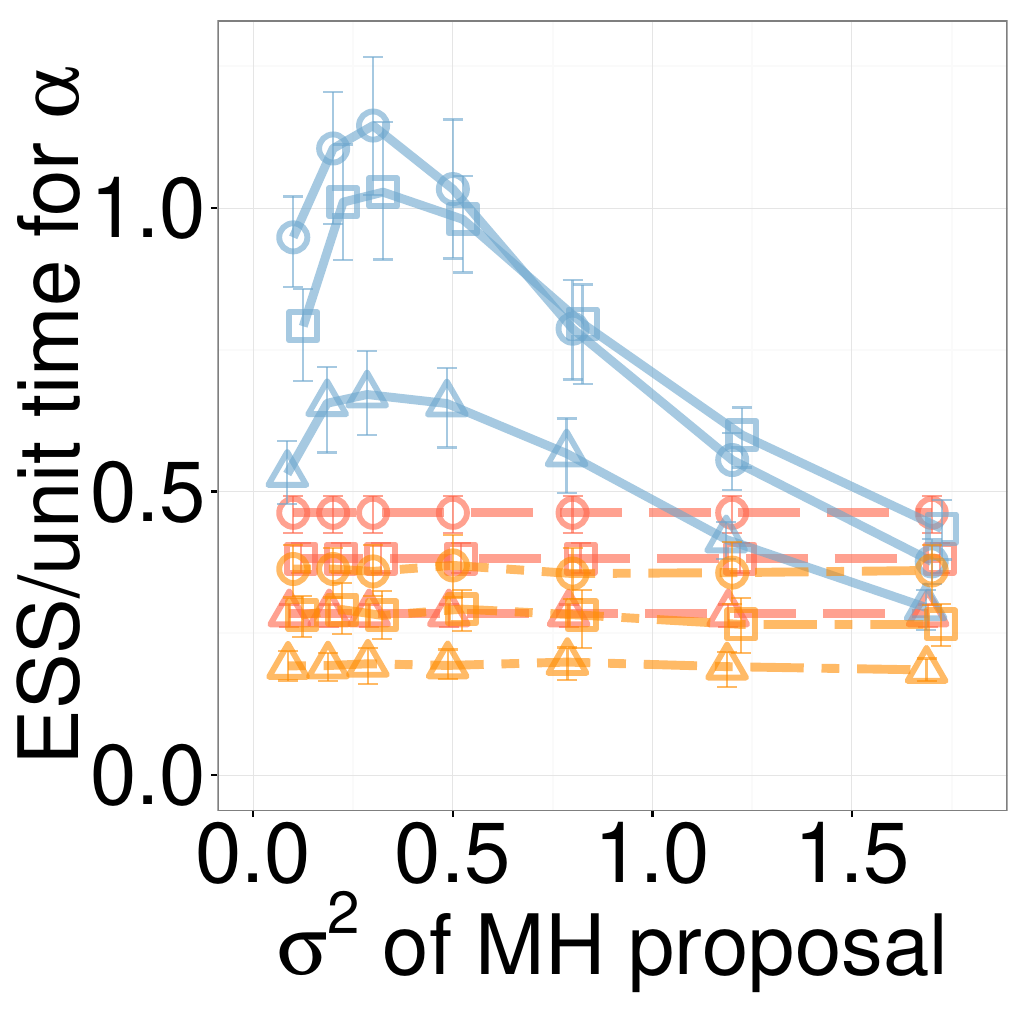}
	\hspace{.6in}
    \includegraphics [width=0.24\textwidth, angle=0]{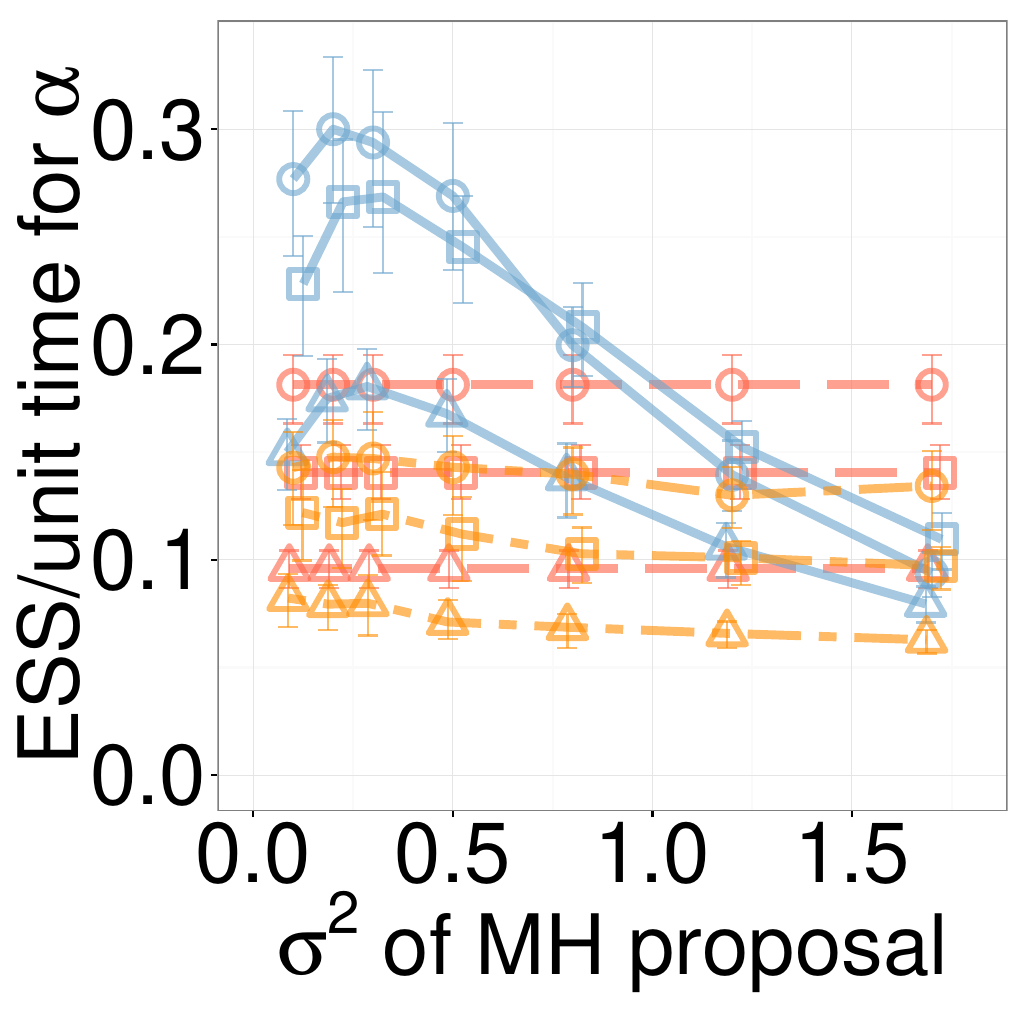}
  \end{minipage}

  \begin{minipage}[h!]{0.99\linewidth}
    \includegraphics [width=0.24\textwidth, angle=0]{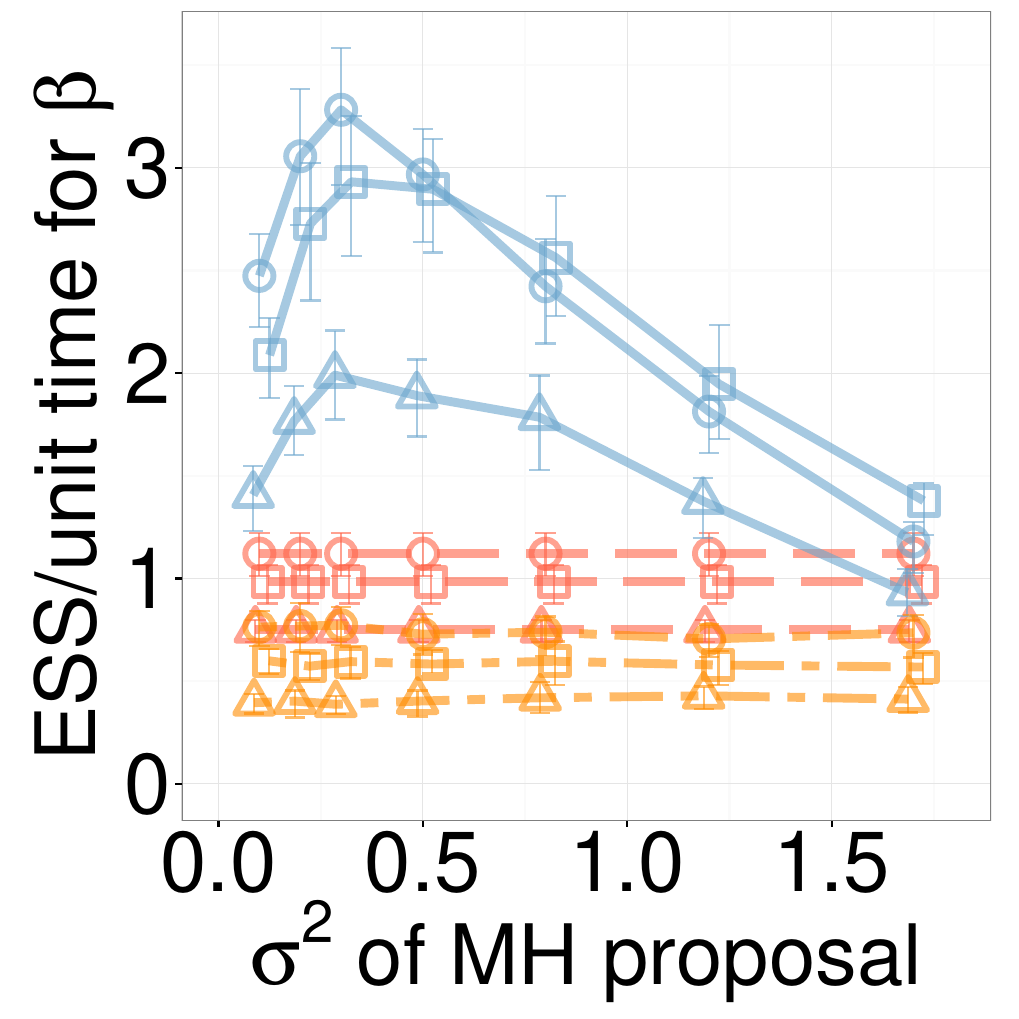}
	\hspace{.6in}
    \includegraphics [width=0.24\textwidth, angle=0]{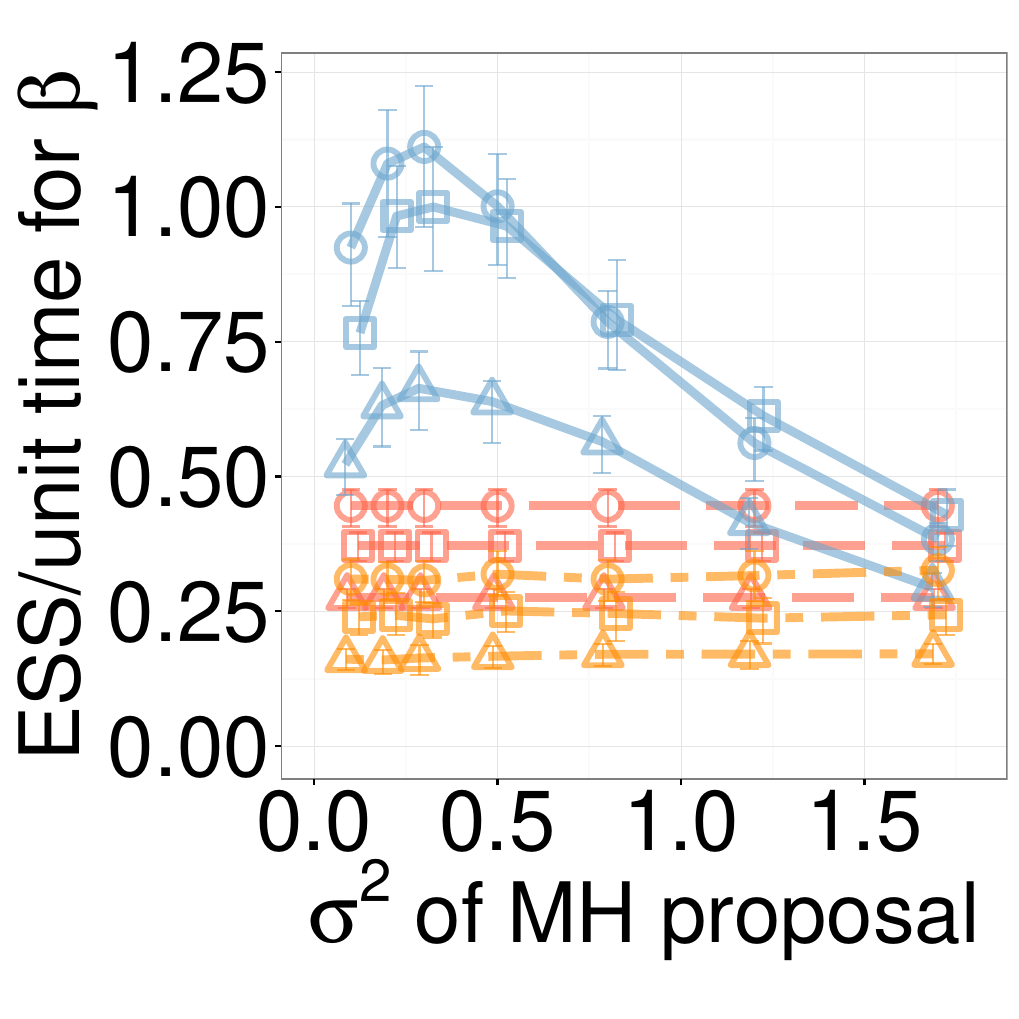}
	\hspace{.6in}
    \includegraphics [width=0.24\textwidth, angle=0]{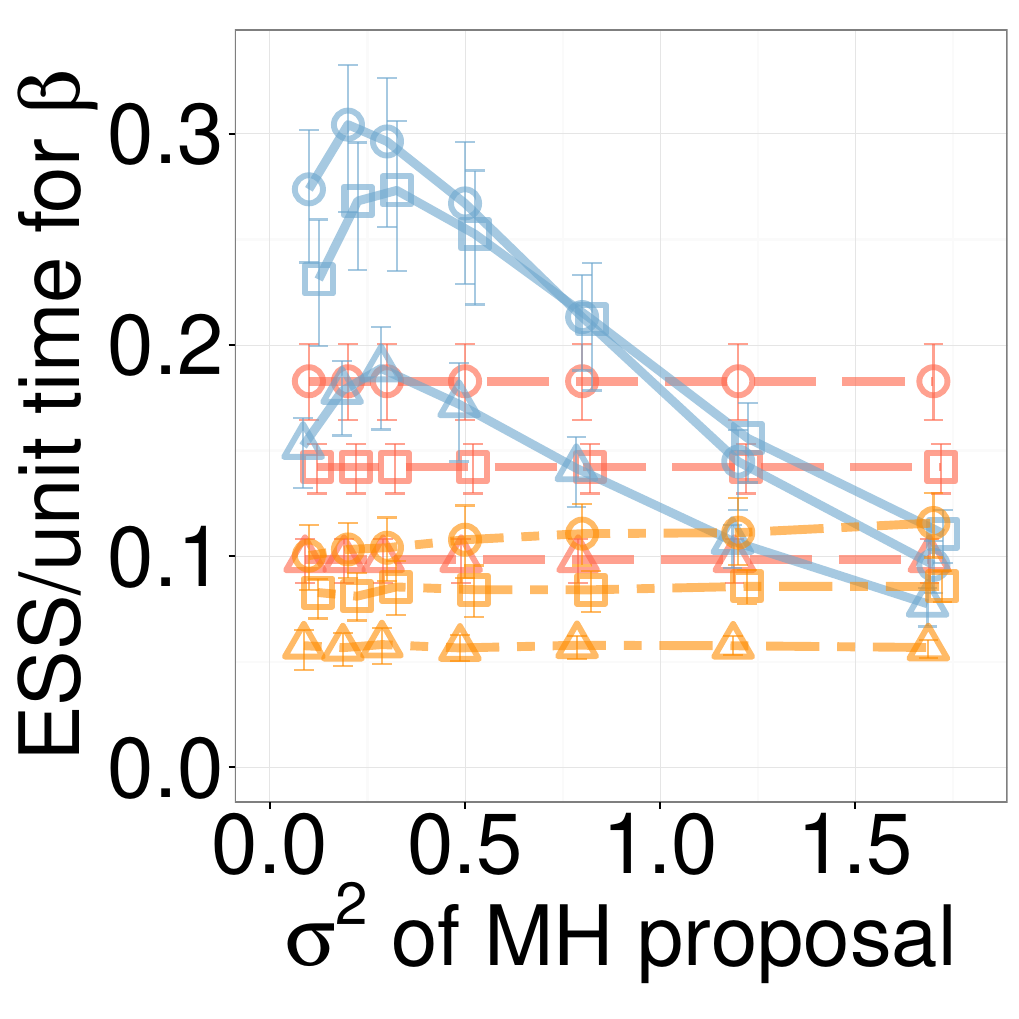}
  \end{minipage}
    \caption{ESS/sec for the time-inhomogeneous immigration model, the top three are for $\alpha$ for 3 states, 5 states, and 10 states. The bottom three are for $\beta$ for 3 states, 5 states, and 10 states. Blue, yellow, and red are the symmetrized MH, \naive\ MH, Gibbs algorithm. Squares, circles and trianges correspond to $\Omega(\theta,\vartheta)$ set to $(\max_s A_s(\theta) + \max_s A_s(\vartheta))$, $\max(\max_s A_s(\theta), \max_s A_s(\vartheta))$ and  $1.5(\max_s A_s(\theta) + \max_s A_s(\vartheta))$.}
     \label{fig:ESS_CQ}
  \end{figure}

	\begin{figure}	
  \centering
  \begin{minipage}[!hp]{0.25\linewidth}
  \centering
    \includegraphics [width=0.99\textwidth, angle=0]{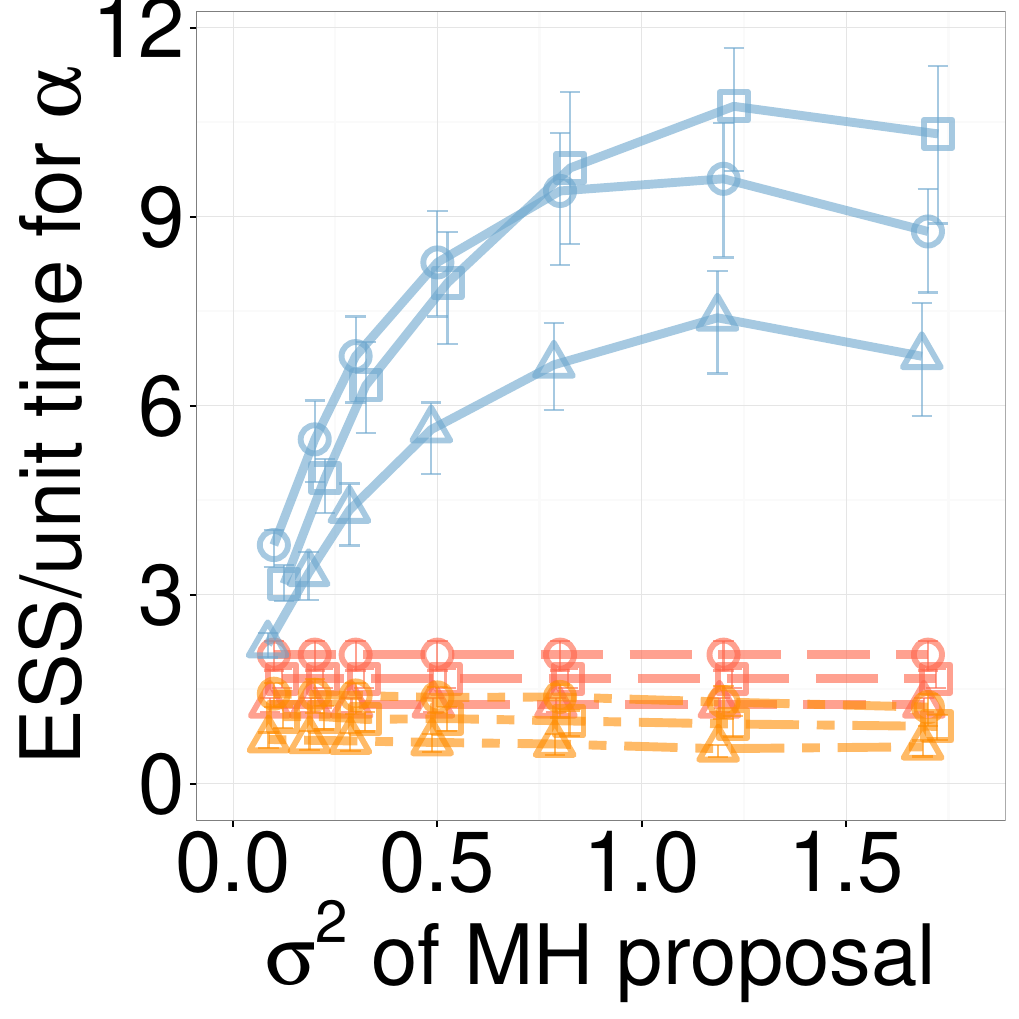}
  \end{minipage}
  \begin{minipage}[!hp]{0.74\linewidth}
    \caption{ESS/sec for the JC immigration model, blue, yellow and red curves are the symmetrized MH,
  \naive\ MH, and Gibbs algorithm.}
     \label{fig:ESS_pc_55}
  \end{minipage}

  \end{figure}

  \begin{figure}[H]
  \centering
  \begin{minipage}[!hp]{0.99\linewidth}
    \includegraphics [width=0.24\textwidth, angle=0]{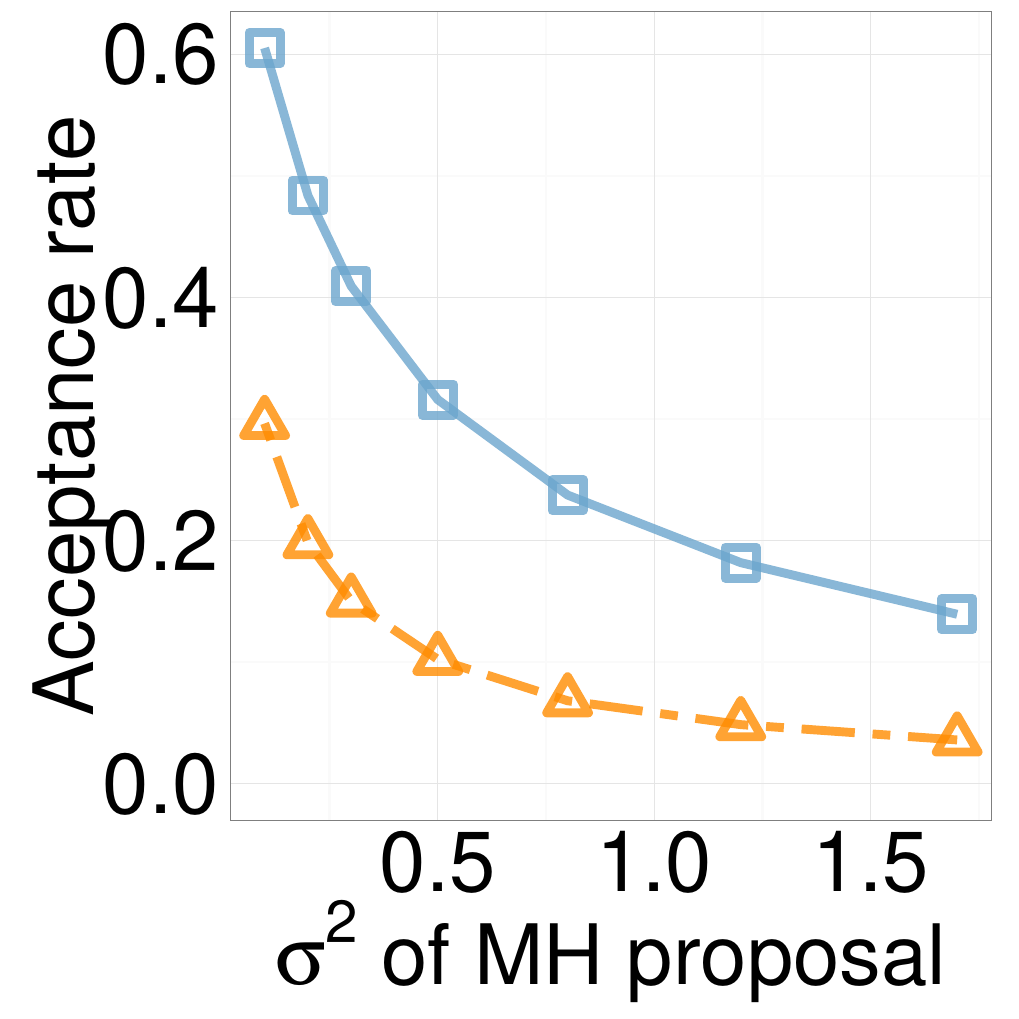}
    \includegraphics [width=0.24\textwidth, angle=0]{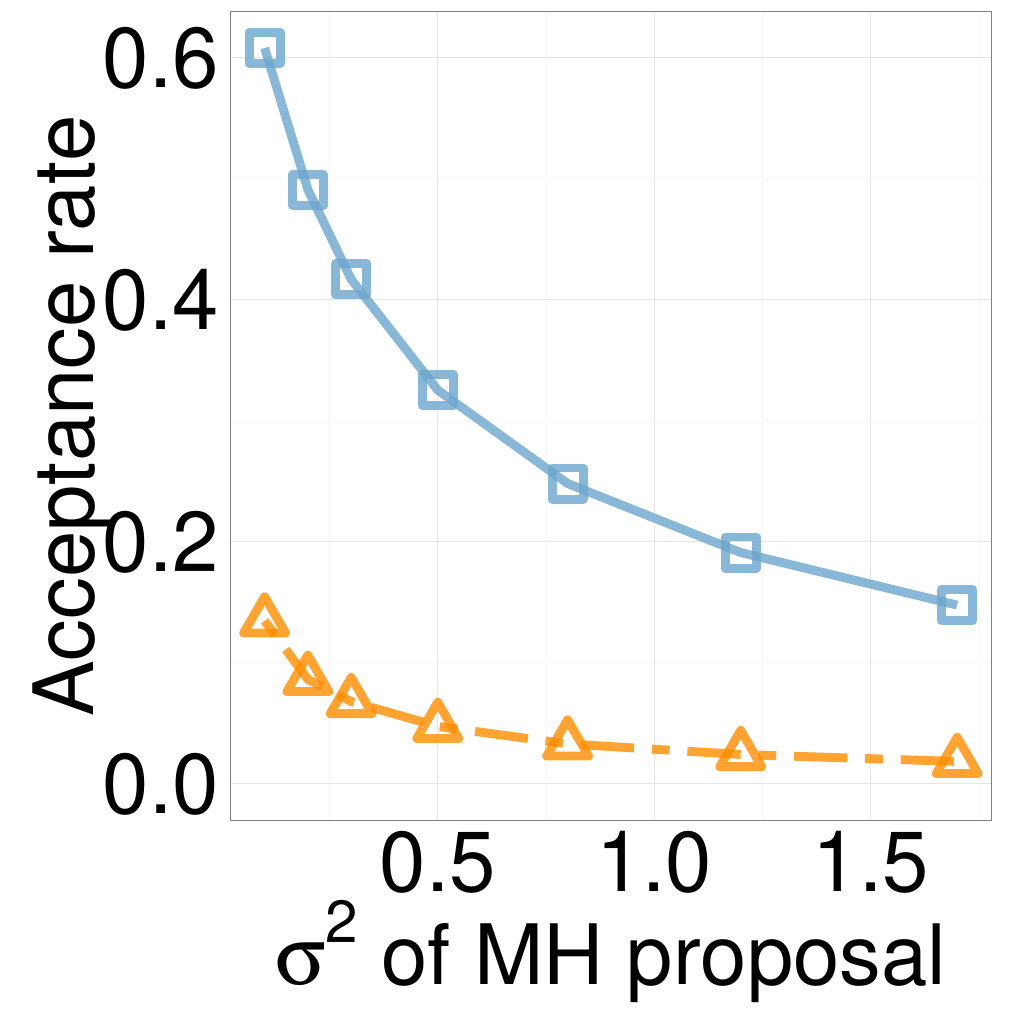}
    \includegraphics [width=0.24\textwidth, angle=0]{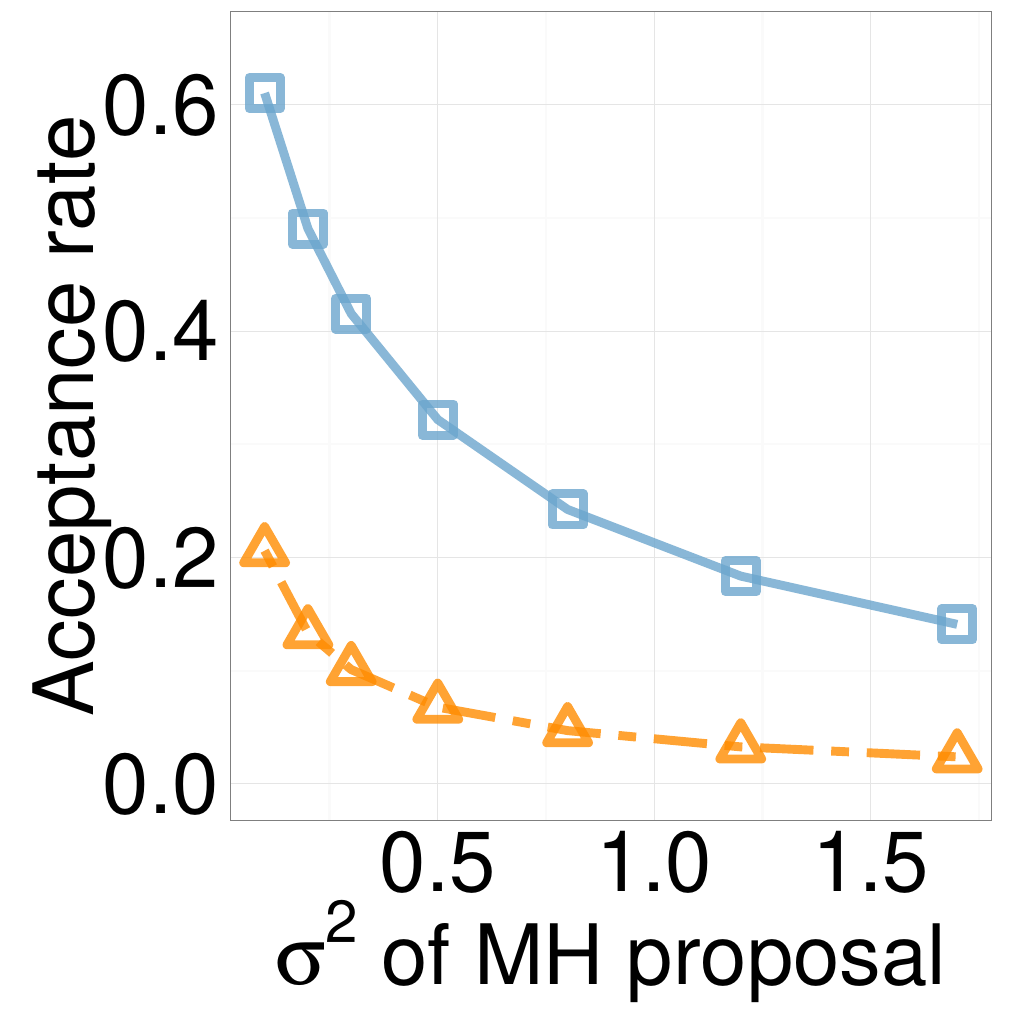}
    \includegraphics [width=0.24\textwidth, angle=0]{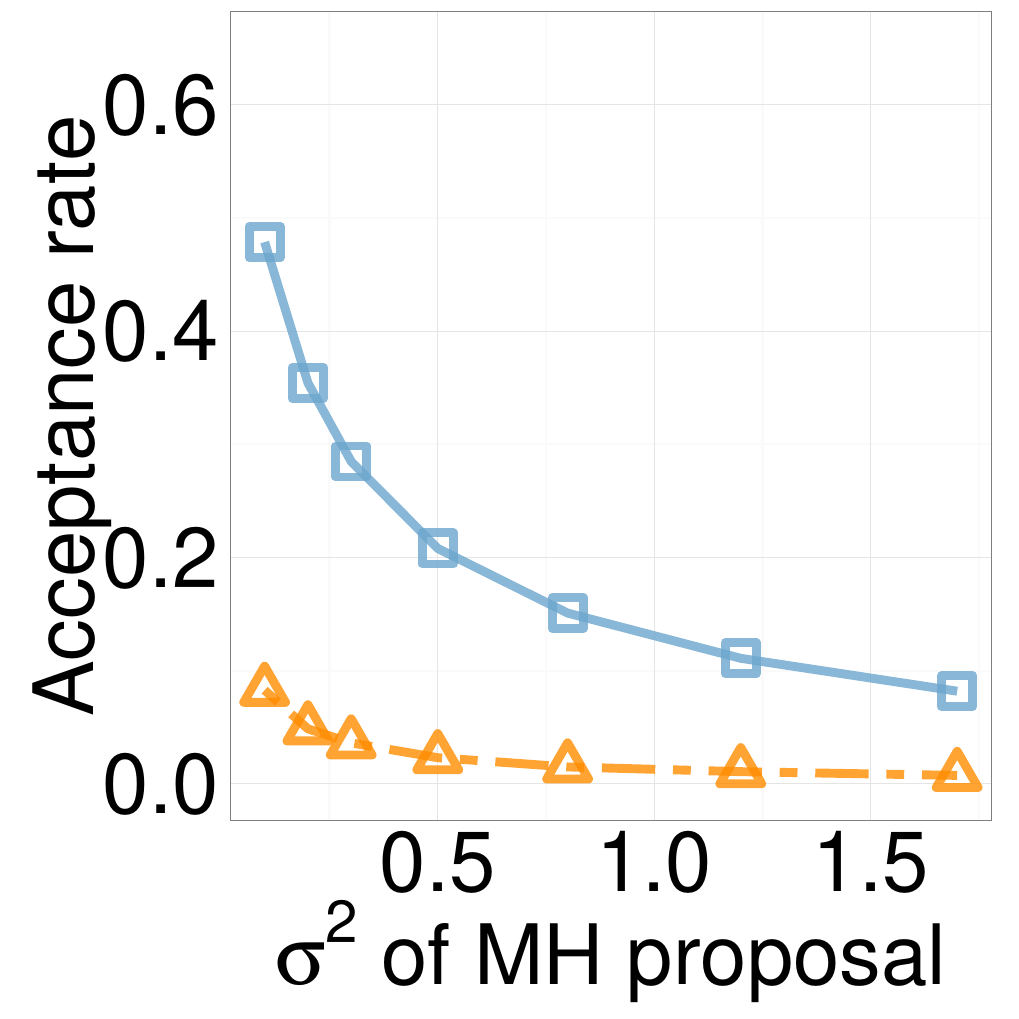}
  \end{minipage}
    \caption{Acceptance Rate for $\alpha$ in the immigration model (left two) and time-inhomogeneous immigration model (right two) , the left two being dimension 3, and the right,dimension 10 and the right two being dimension 3, and the right,dimension 10.  Blue square and yellow triangle curves represent symmetrized MH,
 and \naive\ MH  algorithm. The multiplicative factor is $2$. }
     \label{fig:ACC_Q}
  \end{figure}


  \begin{figure}[H]
  \centering

  \begin{minipage}[!hp]{0.49\linewidth}
    \includegraphics [width=0.49\textwidth, angle=0]{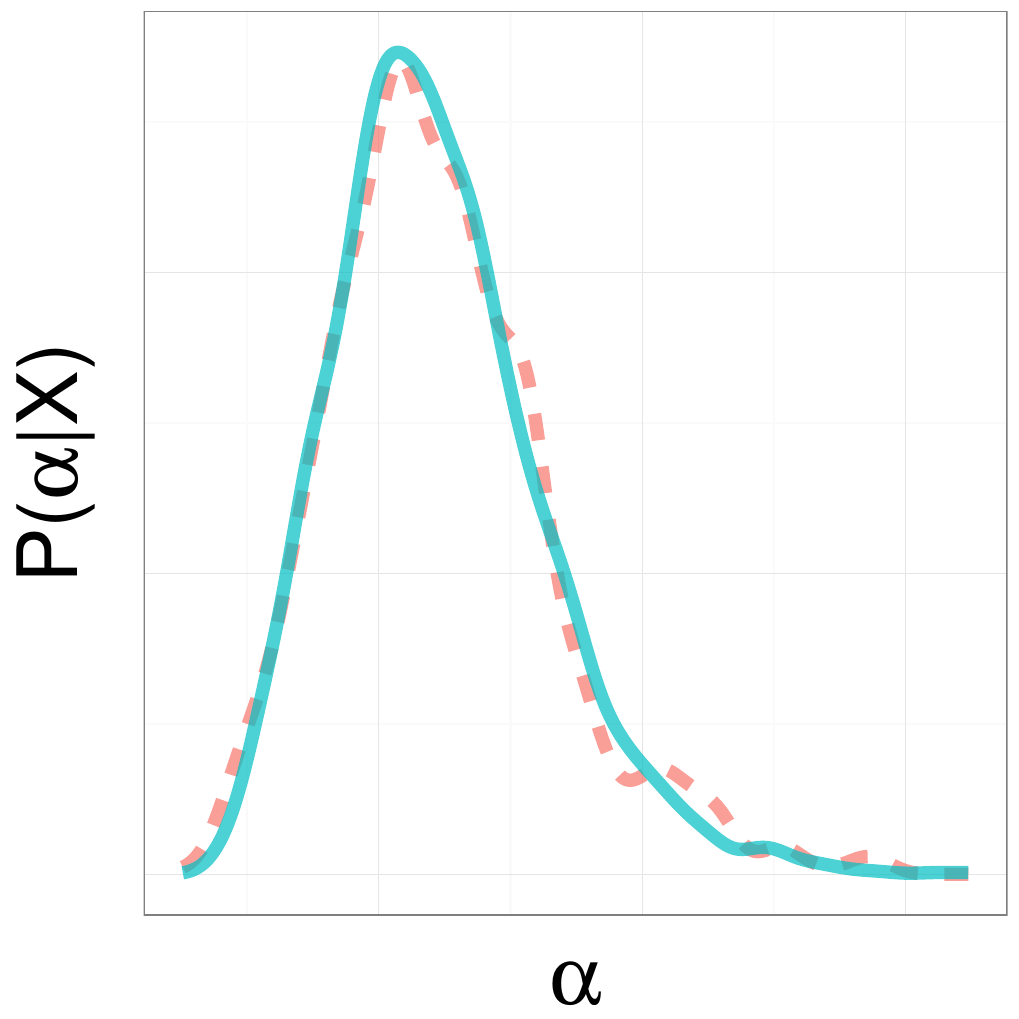}
    \includegraphics [width=0.49\textwidth, angle=0]{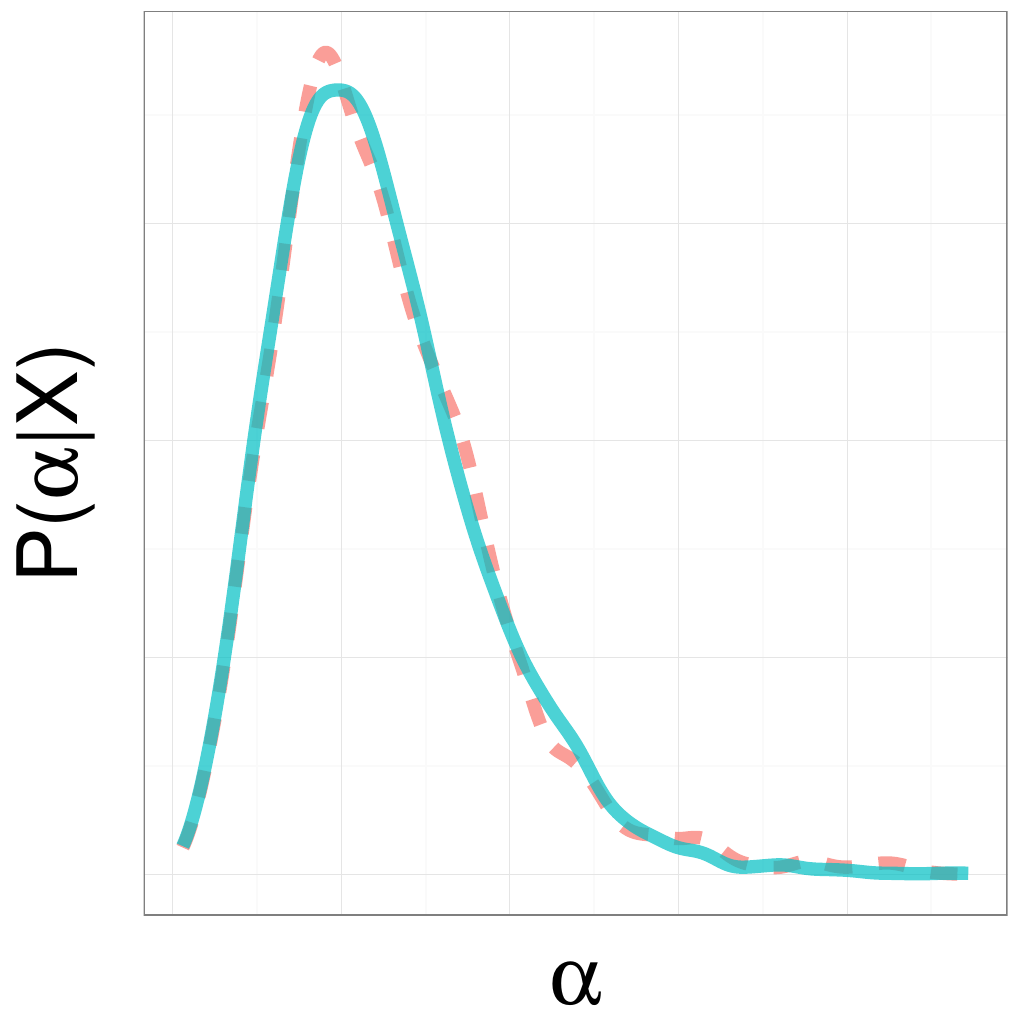}
  \end{minipage}
  \begin{minipage}[!hp]{0.49\linewidth}
    \caption{Posterior $P(\alpha|X)$ from Gibbs (dashed line) and symmetrized MH (solid line) for the immigration model(Left), and time-inhomogeneous immigration model(right)}
     \label{fig:HIST_QCQ}
  \end{minipage}
  \end{figure}

  \begin{figure}[H]
  \centering

  \begin{minipage}[!hp]{0.99\linewidth}
    \includegraphics [width=0.24\textwidth, angle=0]{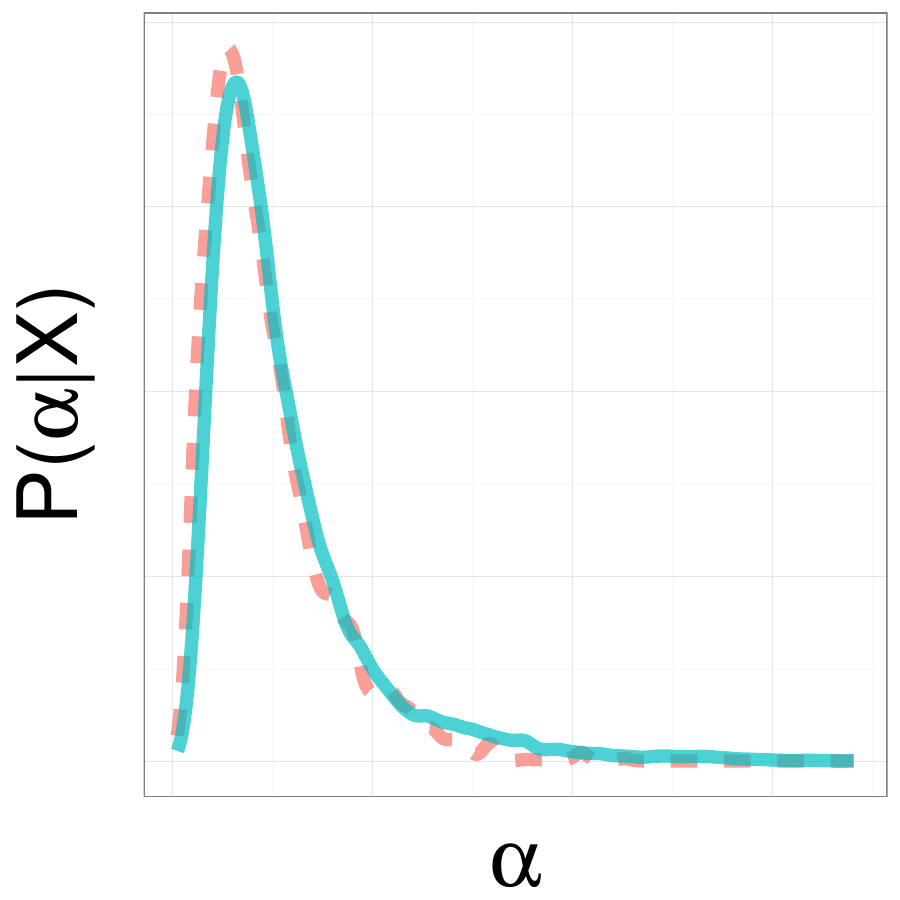}
	\hspace{.6in}
    \includegraphics [width=0.24\textwidth, angle=0]{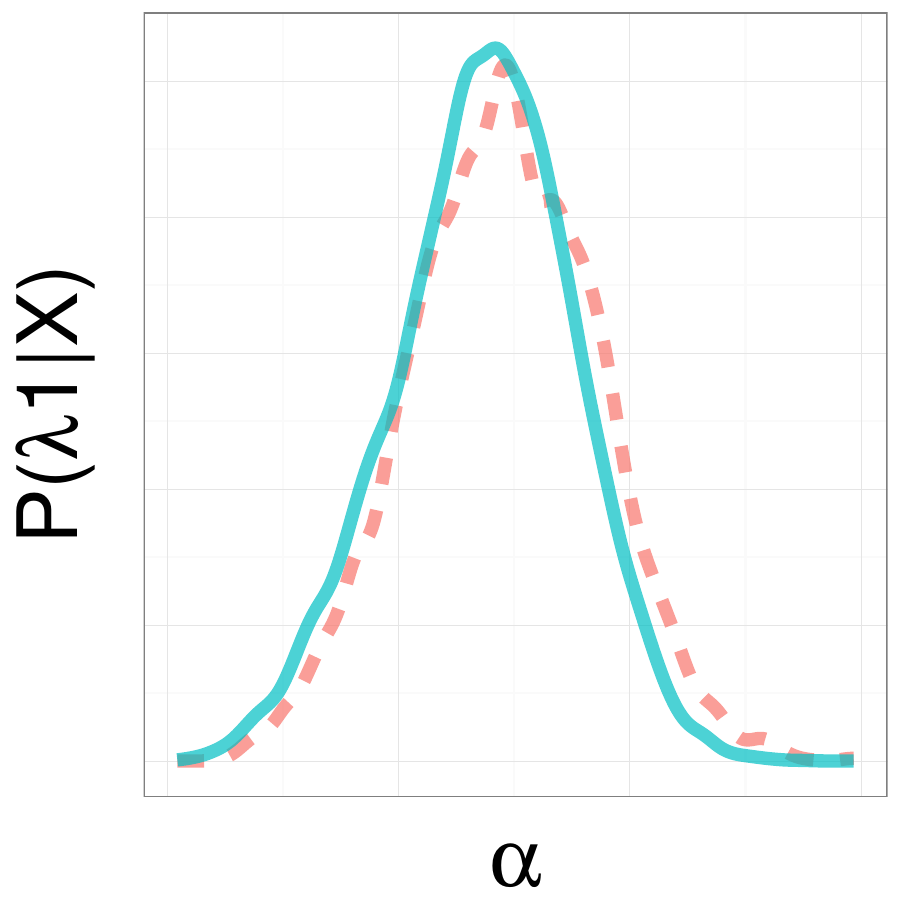}
	\hspace{.6in}
    \includegraphics [width=0.24\textwidth, angle=0]{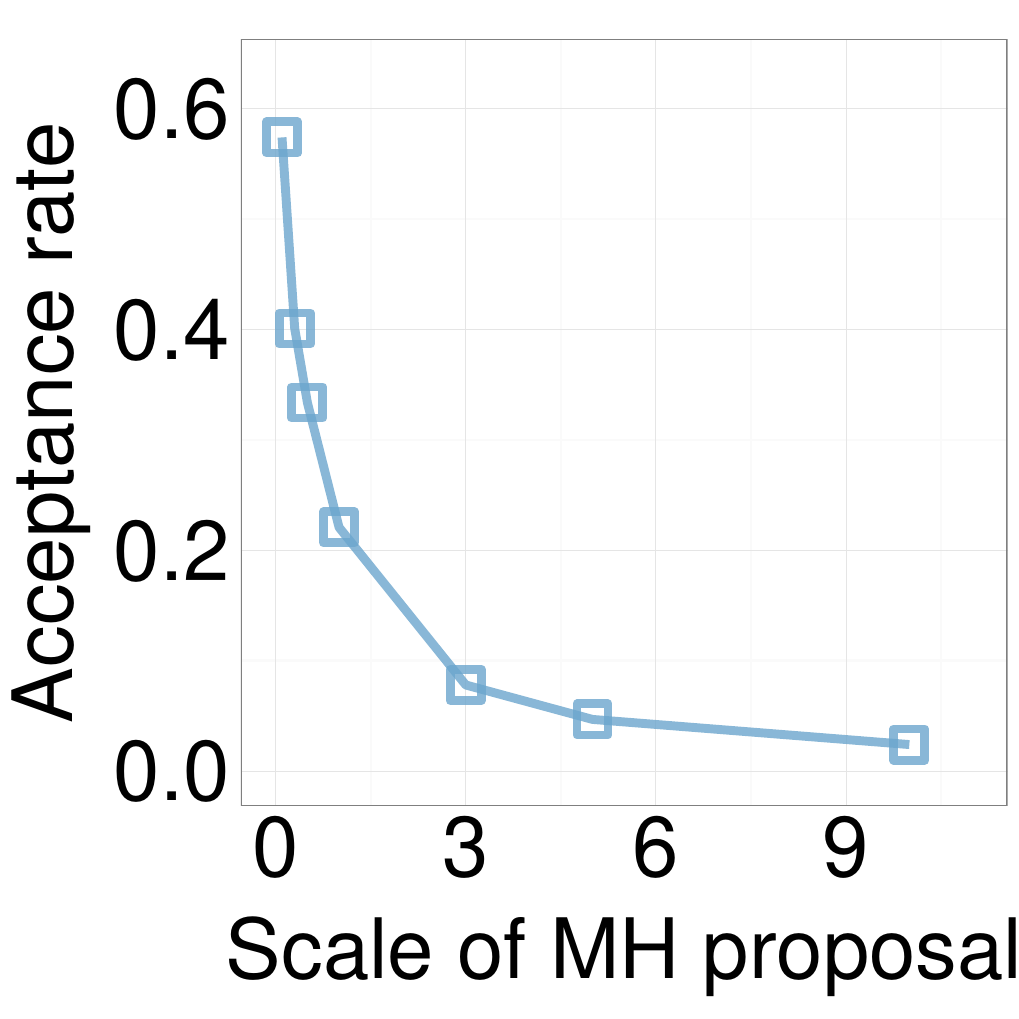}

  \end{minipage}
    \caption{Posterior $P(\alpha|X)$ (left) and $P(\lambda_1|X)$ (middle) from Gibbs (dashed line) and symmetrized MH (solid line) for the E.\ Coli data. Acceptance Rate of $\alpha$ generated by the symmetrized MH algorithm for the E.\ Coli data . The multiplicative factor is $2$. }
     \label{fig:ACC_ECOLI}
  \end{figure}

\end{document}